\newcommand{\err}{err}
\newtheorem{theorem}{Theorem}
\newtheorem{lemma}[theorem]{Lemma}
\newtheorem{corollary}[theorem]{Corollary}
\newtheorem{claim}[theorem]{Claim}
\newtheorem{observation}[theorem]{Observation}
\newtheorem{remark}[theorem]{Remark}
\newtheorem{proposition}[theorem]{Proposition}
\newtheorem{property}[theorem]{Property}
\theoremstyle{definition}
\newtheorem{definition}[theorem]{Definition}
\newcommand{\ind}{\mathbbm{1}}
\DeclareMathOperator*{\E}{\mathbb{E}}
\newcommand{\IGNORE}[1]{}
\newcommand{\poly}{\textsf{Poly}}
\newcommand{\dist}{\mathcal{D}}
\newcommand{\val}{\ensuremath{\mathsf{Val}}}
\newcommand{\rev}{\ensuremath{\mathsf{{Mye}}}}
\newcommand{\spc}{\text{ }}
\newcommand{\phaseVar}{\tau}
\newcommand{\XVCG}{X_{VCG}}
\newcommand{\PVCG}{P_{VCG}}
\newcommand{\SingleLP}{Single LP}
\newcommand{\BorderLP}{Border LP}
\newcommand{\ReducedUniformLP}{Reduced Uniform LP}
\newcommand{\SingleRelaxLP}{LRSingle}
\newcommand{\BorderRelaxLP}{LRBorder}
\newcommand{\SingleOPT}{SLPRev}
\newcommand{\BorderOPT}{BLPRev}
\newcommand{\xst}{x^*}
\newcommand{\yst}{y^*}
\newcommand{\ust}{u^*}
\newcommand{\D}{\mathcal{D}}
\begin{document}
\title{Selling to Multiple No-Regret Buyers}
\author{Linda Cai\footnote{Computer Science, Princeton University. \texttt{tcai@princeton.edu}.} \and S. Matthew Weinberg\footnote{Computer Science, Princeton University. Supported by NSF CAREER CCF-1942497 and a Google Research Award.   \texttt{smweinberg@princeton.edu}.} \and Evan Wildenhain\footnote{Google. \texttt{m@evanwildenha.in}.} \and Shirley Zhang\footnote{Harvard University. \texttt{skzhang@alumni.princeton.edu}}.}
\maketitle
\begin{abstract}
We consider the problem of repeatedly auctioning a single item to multiple {i.i.d} buyers who each use a no-regret learning algorithm to bid over time. In particular, we study the seller's optimal revenue, if they know that the buyers are no-regret learners (but only that their behavior satisfies some no-regret property --- they do not know the precise algorithm/heuristic used). 

Our main result designs an auction that extracts revenue equal to the \emph{full expected welfare} whenever the buyers are ``mean-based'' (a property satisfied by standard no-regret learning algorithms such as Multiplicative Weights, Follow-the-Perturbed-Leader, etc.). This extends a main result of~\cite{BravermanMSW18} which held only for a single buyer. 

Our other results consider the case when buyers are mean-based but never overbid. On this front, \cite{BravermanMSW18} provides a simple LP formulation for the revenue-maximizing auction for a single-buyer. We identify several formal barriers to extending this approach to multiple buyers.
\end{abstract}
\addtocounter{page}{-1}
\newpage
\section{Introduction}\label{sec:intro}
Classical Bayesian auction design considers a static auction where buyers participate once. Here, the study of truthful auctions is ubiquitous following Myerson's seminal work~\cite{Myerson81}. But many modern auction applications (such as ad auctions) are \emph{repeated}: the same buyers participate in many auctions over time. Moreover, the vast majority of auction formats used in such settings are \emph{not} truthful (e.g.~first-price auctions, generalized first-price auctions, generalized second-price auctions). Even those that are based on a truthful format (such as the Vickrey-Clarke-Groves mechanism~\cite{Vickrey61,Clarke71,Groves73}) are no longer truthful when the repeated aspect is taken into account (because the seller may increase or decrease reserves in later rounds based on buyers' behavior in earlier rounds). As such, it is imperative to have a study of non-truthful repeated auctions.

Over the past several years, this direction has seen significant progress on numerous fronts (we overview related work in Section~\ref{sec:related}). Our paper follows a recent direction initiated by~\cite{BravermanMSW18} and motivated by empirical work of~\cite{NekipelovST15}. Specifically,~\cite{NekipelovST15} find that bidding behavior on Bing largely satisfies the no-regret guarantee (that is, there exist values for the buyers such that their bidding behavior guarantees low regret --- the paper makes no claims about any particular algorithm the buyers might be using). This motivates the following question: \emph{if buyer behavior guarantees no-regret, what auction format for the designer maximizes her expected revenue}?

\cite{BravermanMSW18} initiated this study for a single buyer. The main focus of our paper is to initiate the study for multiple buyers. We formally pose the model in Section~\ref{sec:prelim}, and overview our main results here. 

The concept of a ``mean-based'' no-regret learning algorithm appears in~\cite{BravermanMSW18}, and captures algorithms which pull an arm with very high probability when it is historically better than all other arms (formal definition in Section~\ref{sec:prelim}). While it is common to design non-mean-based algorithms for dynamic environments, standard no-regret algorithms such as Multiplicative Weights, EXP3, etc. are all mean-based.\footnote{Note also that these canonical algorithms are mean-based even if the learning rate changes over time, as long as the learning rate is $\omega(1/T)$.}\\

\noindent\textbf{Main Result} (informal --- See Theorem~\ref{thm:overBiddingTheorem}:) When any number of i.i.d.~buyers use bidding strategies which satisfy the mean-based no-regret guarantee, there exists a repeated single-item auction for the seller which guarantees them expected revenue arbitrarily close to the optimal expected \emph{welfare}.\\

One main result of~\cite{BravermanMSW18} proves the special case with just a single buyer. While we defer technical details of our construction to Section~\ref{sec:FSE}, we briefly overview the main challenges here. The one-buyer~\cite{BravermanMSW18} auction is already surprising, as it requires the seller to both (a) give the buyer the item every round, yet (b) charge them their full value (without knowing their value). The key additional challenge for the multi-buyer setting is that the seller must now give the item not just to \emph{a} buyer in every round, but to \emph{the} buyer with the highest value. This means, in particular, that we must set up the auction so that buyers will pull a distinct arm for each possible value, and yet we must still charge each buyer their full expected value by the end of the auction.

Our auction, like that of~\cite{BravermanMSW18}, is fairly impractical (for example, it alternates between running a second-price auction every round, and charging huge surcharges to the winner) and is not meant to guide practice. Still, Theorem~\ref{thm:overBiddingTheorem} establishes that full surplus is possible for multiple mean-based buyers, and therefore sets a high benchmark for this setting without further modeling assumptions.\\

\noindent\textbf{No Overbidding.} Indeed, the main impracticality in our Full-Surplus-Extraction auction is that lures buyers into \emph{overbidding} significantly, and eventually paying more than their value. In practice, it may be reasonable for buyers to be \emph{clever}, and just remove from consideration all bids exceeding their value (but guarantee no-regret on the remaining ones). To motivate this, observe that overbidding is a dominated strategy in all the aforementioned non-truthful auctions. So we next turn to analyze auctions for clever, mean-based buyers. 

Here, the second main result of~\cite{BravermanMSW18} characterizes the revenue-optimal repeated auction via a linear program, and shows that it takes a particularly simple form. On this front, we identify \emph{several} formal barriers to extending this result to multiple buyers. Specifically:
\begin{itemize}
\item For a single buyer,~\cite{BravermanMSW18} write a concise polytope (we call it the `BMSW polytope') characterizing auctions which can be implemented for a single clever mean-based buyer (i.e.~being in this polytope is necessary and sufficient to be implementable). We show that two natural extensions of this polytope to multiple buyers contain auctions which \emph{cannot} be implemented for multiple clever mean-based buyers (we show that being in either natural polytope is necessary, but not sufficient). This is in Section~\ref{sec:LP}.
\item For a single buyer,~\cite{BravermanMSW18} shows that the optimal auction is ``pay-your-bid with declining reserve.\footnote{That is, each round there is a reserve. Any bid above the reserve wins the item, but pays their bid. The reserve declines over time.}'' We show that a natural generalization of this ``pay-your-bid uniform auctions with declining reserve'' to multiple buyers captures many extreme points of the multi-buyer BMSW polytope. But, we also show that such auctions are not necessarily optimal (meaning that this aspect of~\cite{BravermanMSW18} does not generalize to multiple buyers either). This is in Section~\ref{sec:UPYB}.
\item Finally, we establish that not only does the particular multi-buyer BMSW polytope not capture all implementable auctions for clever mean-based buyers, but the space of implementable auctions is \emph{not even convex}! This is in Section~\ref{sec:nonconvex}.
\end{itemize}

While our results are not a death sentence for the future work in the~\cite{BravermanMSW18} model for clever mean-based buyers, the barriers do shut down most natural multi-buyer extensions of their approach. Still, these barriers also help focus future work towards other potentially fruitful approaches, which we highlight in Section~\ref{sec:conclusion}.

\subsection{Related Work}\label{sec:related}
There is a vast body of work at the intersection of learning and auction design. Much of this considers learning from the perspective of the \emph{seller} (e.g.~sample complexity of revenue-optimal auctions), and is not particularly related~\cite{DhangwatnotaiRY15,ColeR14,HuangMR15,DevanurHP16,RoughgardenS16,MorgensternR15,MorgensternR16,GonczarowskiN17,CaiD17,HartlineT19,GonczarowskiW18,GuoHZ19,GuoHTZ19,BrustleCD20}.

More related is the recent and growing literature on dynamic auctions~\cite{PapadimitriouPPR16,AshlagiDH16,MirrokniLTZ16,LiuP18,MirrokniLTZ18,MirrokniLTZ19} Like our model, the auction is repeated. The distinction between these works and ours is that they assume the buyer is \emph{fully strategic} and processes fully how their actions today affect the seller's decisions tomorrow (whereas we instead model buyers as no-regret learners). 

The most related work to ours is in the~\cite{BravermanMSW18} model itself. 
Here we provide a brief summary of the main results in~\cite{BravermanMSW18} and their connection to our main results. \cite{BravermanMSW18} studies the one seller one buyer scenario, where the buyer employs a mean-based no-regret algorithm. The authors present three results, each obtained under different assumptions regarding the behavior of the buyers.
Firstly (as we have already mentioned earlier in the introduction), \cite{BravermanMSW18} shows that for vanilla mean-based no-regret buyers, \cite{BravermanMSW18} can extract revenue that is an arbitrarily large fraction of the bidder's expected value. Our Theorem~\ref{thm:overBiddingTheorem} extends this result to the multiple buyer setting, overcoming novel technical and conceptual challenges. Second, \cite{BravermanMSW18} designs a novel (not mean-based) learning algorithm against which the optimal mechanism for the seller is simply Myerson's auction in each round. Their proof of this result naturally accommodates  multiple buyers. Finally, \cite{BravermanMSW18} shows that if the buyer is clever and mean-based no regret (where they do not overbid their value), then the optimal auction has a clean tractable format (pay-your-bid with declining reserve over time). As we have discussed in the ``No Overbidding" section of the introduction, our work shows several formal barriers in extending these results to  multiple buyers. In summary, our main result extends their first main result to multiple bidders. Their second result already holds for multiple bidders (so there is nothing for us to extend). Our secondary results establish formal barriers to extending their final main result to multiple bidders. 

Two recent follow-ups have extended 
the setting in \cite{BravermanMSW18} in a different direction. First, \cite{DengSS19} considers the problem of playing a two-player game against a no-regret learner. While technically not an auctions problem, there is thematic overlap with our main result.~\cite{DengSS19a} extends the single-buyer results in~\cite{BravermanMSW18} to be {\emph{prior-free}. Specifically, they show how to design auctions achieving the same guarantees as those in~\cite{BravermanMSW18} but where the buyer's values are chosen adversarially.} In comparison to these works, ours is the first to extend the model to consider multiple buyers.

Finally, recent work of~\cite{CamaraHJ20} considers interaction between a learning buyer and a \emph{learning} seller. Their seller does not have a prior against which to optimize, and instead itself targets a no-regret guarantee. In comparison, our seller (like the seller in all previously cited works) optimizes expected revenue with respect to a prior.

\section{Preliminaries}\label{sec:prelim}
We consider the same setting as~\cite{BravermanMSW18}, extended to multiple buyers. Specifically, there are $n$ buyers and $T$ rounds. In each round, there is a single item for sale. Each buyer $i$ has value $v_{i,t}$ for the item during round $t$, and each $v_{i,t}$ is drawn from $\dist$ independently (that is, the buyers are i.i.d., and the rounds are i.i.d.~as well). For simplicity of exposition (and to match prior work), we assume $\dist$ has finite support $0 \leq w_1 < w_2 < \ldots < w_m \leq 1$ and we define $q_j$ to be the probability $w_j$ is drawn from $\dist$.

Each round, the seller presents $K$ arms for the buyers. Each arm is labeled with a bid, and we assume that one of the arms is labeled with $0$ (to represent a bid of ``don't participate''). Note that the same set of arms is presented to all buyers, and the same set of arms is presented in each round. 

In each round $t$, the seller defines an anonymous auction. Specifically, for all $i,t$, the seller defines $a_{i,t}(\vec{b})$ to be the probability that buyer $i$ gets the item in round $t$, and $p_{i,t}(\vec{b}) \in [0,b_i\cdot a_{i,t}(\vec{b})]$ to be the price buyer $i$ pays, when each buyer $j$ pulls the arm labeled $b_j$. To be anonymous, it must further be that for all permutations $\sigma$ of the buyers that $(a_{\sigma(i),t}(\sigma(\vec{b})),p_{\sigma(i),t}(\sigma(\vec{b})) = (a_{i,t}(\vec{b}),p_{i,t}(\vec{b}))$ (the auction is invariant under relabeling buyers). The only additional constraints on $a$ are that $\sum_i a_{i,t}(\vec{b}) \leq 1$, for all $t,\vec{b}$ (item can be awarded at most once), and that $b'_i > b_i \Rightarrow a_{i,t}(b_i;\vec{b}_{-i}) \leq a_{i,t}(b'_i;\vec{b}_{-i})$ for all $i,\vec{b}_{-i},b_i,b'_i$ (allocation is monotone). $p$ must also be monotone ($b'_i > b_i \Rightarrow p_{i,t}(b_i;\vec{b}_{-i}) \leq p_{i,t}(b'_i;\vec{b}_{-i})$). When we state prior work in the single-buyer setting, we may drop the buyer subscript of $i$ (for instance, we will write $a_{1, t} (b_1)$ as $a_{t}(b)$).
\subsection{Contextual Bandits}
Like~\cite{BravermanMSW18}, we model the buyers as online learners. Also like~\cite{BravermanMSW18}, our results apply equally well to the experts and bandits model, where $v_{i,t}$ serves as buyer $i$'s context for round $t$. Specifically:

\begin{itemize}
\item For all subsequent definitions below, fix a buyer $i$, fix a bid vector $\vec{b}_{-i,t}$ for all rounds $t$, and fix $a_{i,t}(\cdot)$. 
\item For any bid $b$, buyer $i$, and round $t$, define $r_{ibt}(v):=v\cdot a_{i,t}(b;\vec{b}_{-i}) - p_{i,t}(b;\vec{b}_{-i})$. That is, define $r_{ibt}(v)$ to be the utility during round $t$ that buyer $i$ would enjoy by bidding $b$ with value $v$.
\item For an algorithm $S$ (decides a bid for round $t$ based only on what it observes through rounds $t-1$, and its value $v_{i,t}$)\footnote{In the bandits model, buyer $i$ learns only $r_{ibt}(v)$ for the bid $b:=b_{it}$ after each round $t$ (and all $v$). In the experts model, buyer $i$ learns $r_{ibt}(v)$ for all $b$ (and all $v$).} that submits bids $b_{it}$ in round $t$, its total payoff is $P(S):=\mathbb{E}[\sum_t r_{ib_{it}t}(v_{i,t})]$. The expectation is over any randomness in the bids $b_{it}$, as $S$ may be a randomized algorithm, and the randomness in $v_{i,t}$.
\item An algorithm is fixed-bid if $v_{it} = v_{it'} \Rightarrow b_{it} = b_{it'}$. That is, the algorithm may make different bids in different rounds, but only due to changes in the buyer's value. Let $\mathcal{F}$ denote the set of all fixed-bid strategies.
\item The \emph{regret} of an online learning algorithm $S$ is $\max_{F \in \mathcal{F}}\{P(F) - P(S)\}$.
\item An algorithm is $\delta$-low regret if it guarantees regret at most $\delta$ \emph{on every fixed sequence of auctions, and fixed bids of the other players}. We say that an algorithm is \emph{no-regret} if it is $\delta$-low regret for some $\delta = o(T)$.
\end{itemize} 

Like~\cite{BravermanMSW18}, we are particularly interested in algorithms ``like Multiplicative Weights Update:''

\begin{definition}[Mean-Based Online Learning Algorithm,~\cite{BravermanMSW18}] Let $\sigma_{i,b,s}(v):=\sum_{t < s}r_{ibt}(v)$. An algorithm is \emph{$\gamma$-mean-based} if whenever $\sigma_{i,b,s}(v_{i,s}) < \sigma_{i,b',s}(v_{i,s}) - \gamma T$ (for any $b,b'$), then the probability that the algorithm bids $b$ during round $s$ is at most $\gamma$. An algorithm is \emph{mean-based} if it is $\gamma$-mean-based for some $\gamma = o(1)$.
\end{definition}

As noted in~\cite{BravermanMSW18}, natural extensions of Multiplicative Weights, EXP3, Follow the Perturbed Leader, etc. to the contextual setting are all mean-based online learning algorithms. 

\subsection{Learners and Benchmarks}
Before formally stating our main results, we first provide relevant benchmarks. We use $\val_n(\mathcal{D}):=\mathbb{E}_{\vec{v}\leftarrow \mathcal{D}^n}[\max_i v_i]$ to denote the expected maximum value among the $n$ buyers. We use $\rev_n(\mathcal{D})$ to denote the expected revenue of the optimal truthful auction when $n$ buyers have values drawn from $\mathcal{D}$. We make the following quick observation, which holds for \emph{any} low regret learning algorithm (and extends an observation made in~\cite{BravermanMSW18} for a single buyer).

\begin{observation} \label{obs:noRegretRevenueUpperBound} The seller cannot achieve expected revenue beyond $T \cdot \val_n(\mathcal{D}) + o(T)$ when buyers guarantee no-regret, even if the seller knows precisely what algorithms the buyers will use.
\end{observation}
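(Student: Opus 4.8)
The plan is to bound the seller's revenue by the total welfare generated, and then to bound total welfare by $T\cdot\val_n(\mathcal{D})$, using the ``never participate'' strategy as the benchmark inside each buyer's no-regret guarantee.

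First I would record that pulling the arm labeled $0$ is always (weakly) safe: since $p_{i,t}(\vec b)\in[0,b_i\cdot a_{i,t}(\vec b)]$, bidding $b_i=0$ forces $p_{i,t}=0$, so $r_{i0t}(v)=v\cdot a_{i,t}(0;\vec b_{-i})\ge 0$ for every value $v$, every round $t$, and every opposing bid profile. Hence the fixed-bid strategy $F_0$ that always pulls arm $0$ satisfies $P(F_0)\ge 0$ against every fixed sequence of opposing bids, and since buyer $i$'s algorithm $S_i$ is $\delta$-low regret with $\delta=o(T)$ we get $P(S_i)\ge P(F_0)-\delta\ge -o(T)$ against every fixed opposing bid sequence --- and hence also after taking expectations when the other buyers randomize, since the low-regret bound holds pointwise over opposing bid sequences. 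Thus $\mathbb{E}[\sum_t r_{ib_{it}t}(v_{i,t})]\ge -o(T)$ for each of the $n$ buyers.

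Next I would use the round-by-round conservation identity relating payments, buyer utilities, and welfare. For any realized bid profile $\vec b_t$,
\[
\sum_i p_{i,t}(\vec b_t)=\sum_i v_{i,t}a_{i,t}(\vec b_t)-\sum_i r_{ib_{it}t}(v_{i,t})\le \max_i v_{i,t}-\sum_i r_{ib_{it}t}(v_{i,t}),
\]
where the inequality uses $v_{i,t}\ge 0$ and $\sum_i a_{i,t}(\vec b_t)\le 1$. Summing over all rounds $t$ and taking expectations, the seller's expected revenue is at most $\sum_t\mathbb{E}[\max_i v_{i,t}]-\sum_i\mathbb{E}[\sum_t r_{ib_{it}t}(v_{i,t})]\le T\cdot\val_n(\mathcal{D})+n\cdot o(T)=T\cdot\val_n(\mathcal{D})+o(T)$, which is the claim. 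Note this uses only the no-regret guarantee (not the mean-based property) and nothing about the auction beyond the model's constraints on $(a,p)$, so it applies even when the seller knows the buyers' algorithms.

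I do not expect a real obstacle; the only point needing care is the transfer of the no-regret guarantee (stated against a \emph{fixed} opposing bid sequence) to the multi-buyer setting where opponents themselves randomize. This is handled by noting that ``$\delta$-low regret on every fixed opposing sequence'' gives $\mathbb{E}[\text{buyer }i\text{'s total utility}\mid\vec b_{-i}]\ge P(F_0)-\delta$ for each realization of $\vec b_{-i}$, hence unconditionally. A secondary sanity check is that $F_0$ is genuinely available (an arm labeled $0$ exists by assumption) and genuinely pays nothing (by $p_{i,t}\le b_i a_{i,t}$).
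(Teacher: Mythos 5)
Your proposal is correct and follows essentially the same route as the paper's proof: bound each buyer's utility below by $-o(T)$ via the always-bid-$0$ fixed strategy, bound total welfare per round by $\max_i v_{i,t}$ using $\sum_i a_{i,t}\le 1$, and conclude via revenue $=$ welfare $-$ utility. Your version is just a more careful write-up of the same argument (e.g., handling the pointwise-over-opponents transfer of the regret bound), with no substantive difference.
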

\begin{proof}
Any strategy for the seller must allow the buyers to bid $0$ in every round and not get the item. This means that the best fixed-bid in hindsight for every buyer guarantees utility at least $0$. Therefore, if the buyer is no-regret, they must have utility at least $-o(T)$. The total expected welfare the buyers can have together is at most $T \cdot \val_n(\mathcal{D})$. As revenue = welfare $-$ utility, this means that revenue can be at most $T \cdot \val_n(\mathcal{D}) + o(T)$.
\end{proof}

Finally, we will consider two types of no-regret learners. One type we will consider is simply no-regret learners who use a mean-based learning algorithm. Second, we will consider no-regret learners who use a no-regret learning algorithm but \emph{never overbid}. Specifically, such learners immediately remove from consideration all bids $b_{it} > v_{i,t}$, but otherwise satisfy the no-regret guarantee. We refer to such learners are \emph{clever}.~\cite{BravermanMSW18} motivate such learners by observing that in most (perhaps all) standard non-truthful auction formats, overbidding is a \emph{dominated strategy}. For example, it is always better to bid truthfully than to overbid in a first-price auction, generalized first-price auction, generalized second-price auction, and all-pay auction. 

\subsection{Border's Theorem}
Some of our work will use Border's theorem~\cite{Border91}, which considers the following. Consider a monotone, anonymous (not necessarily truthful) single-item auction, and a fixed strategy $s(\cdot)$ which maps values to actions. Let $x(w_j)$ denote the probability that a buyer using action $s(w_j)$ wins the item, assuming that all other buyers' values are drawn i.i.d.~from $\dist$ and use strategy $s$ as well. Border's theorem asks the following: when given some vector $\langle x_1,\ldots, x_m\rangle$, does there exists a monotone anonymous (not necessarily truthful) single-item auction such that $x(w_j) = x_j$ for all $j$? If so, we say that $\vec{x}$ is \emph{Border-feasible}. Below is Border's theorem. We will not actually use the precise Border conditions in any of our proofs, just the fact that they exist and are linear in $\vec{x}$.

\begin{theorem}[Border's Theorem~\cite{Border07,cai2011constructive,CheKM11}]\label{thm:Border} {When the buyers are drawn i.i.d from $\D$ (meaning each buyer's probability of valuing the item at $w_j$ is $q_j$), } $\vec{x}$ is Border-feasible if and only if it satisfies the \emph{Border conditions}:
\begin{align*}
 n\sum_{\ell \geq j}q_j \cdot x_j \leq 1 - (1-\sum_{\ell \geq j} q_j)^n \enspace \forall j \in [m].
\end{align*}

\end{theorem}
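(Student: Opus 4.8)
The plan is to prove the two directions separately: the ``only if'' direction (the Border conditions are necessary) is immediate, and essentially all of the content lies in the ``if'' direction (they are sufficient).

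\textbf{Necessity.} Fix any monotone anonymous auction together with the strategy $s$, and fix a threshold index $j$. Let $E_j$ be the event that the item is awarded to a buyer whose value is at least $w_j$. Since at most one buyer wins, the events ``buyer $i$ wins and $v_i\ge w_j$'' are disjoint, so $\PP[E_j]=\sum_i \PP[\text{buyer }i\text{ wins},\,v_i\ge w_j]=n\,\PP[\text{buyer }1\text{ wins},\,v_1\ge w_j]$ by anonymity, and conditioning on $v_1=w_\ell$ and invoking the definition of $x_\ell$ this equals $n\sum_{\ell\ge j}q_\ell x_\ell$. On the other hand $E_j$ is contained in the event that \emph{some} buyer has value at least $w_j$, which by independence has probability $1-(1-\sum_{\ell\ge j}q_\ell)^n$. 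Comparing the two yields the $j$-th Border condition. The identical argument with the upward-closed set $\{w_j,\dots,w_m\}$ replaced by an arbitrary subset $S$ of values gives $n\sum_{\ell\in S}q_\ell x_\ell\le 1-(1-\sum_{\ell\in S}q_\ell)^n$ for \emph{every} $S\subseteq[m]$; I will use this stronger family of inequalities in the sufficiency argument.

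\textbf{Sufficiency, reduction to a transportation problem.} Suppose $\vec x$ (which I take to be non-decreasing, as the reduced form of a monotone auction under an order-preserving strategy must be) satisfies the stated conditions. I would first observe that realizing the interim vector $\vec x$ by an anonymous auction is equivalent to realizing the ``winner's value histogram'' $G_j:=nq_jx_j$: in any anonymous auction $\PP[\text{winner has value }w_j]=nq_jx_j$ by the same symmetry as above, and conversely, given any anonymous rule whose winner-histogram is $\vec G$, breaking ties for the event ``a buyer holding $w_j$ wins'' uniformly among the buyers holding $w_j$ produces an anonymous auction with interim allocations exactly $x_j$. So it suffices to specify, for each realized profile $\vec v\in\{w_1,\dots,w_m\}^n$, a distribution over which value level (if any) receives the item --- only a level actually present in $\vec v$ being allowed --- whose aggregate over the random profile equals $\vec G$.

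\textbf{Sufficiency, the main step (max-flow / Hall).} I would model the above as a flow network: a source with an edge of capacity $G_j$ into a node for each value level $j$; an edge of capacity $+\infty$ from level $j$ to each profile $\vec v$ containing $w_j$; and an edge of capacity $\PP[\vec v]$ from each profile to the sink. A valid allocation realizing $\vec G$ exists iff this network carries a flow of value $\sum_j G_j$, which by max-flow/min-cut holds iff every cut has capacity at least $\sum_j G_j$. Because the level-to-profile edges are infinite, a finite cut is described by the set $S$ of value levels left on the source side, which forces onto the source side every profile containing a level of $S$; the cheapest such cut has capacity $\sum_{j\notin S}G_j+\bigl(1-(\sum_{j\notin S}q_j)^n\bigr)$, so the min-cut requirement is precisely $\sum_{j\in S}G_j\le 1-(1-\sum_{j\in S}q_j)^n$, i.e.\ the all-subset Border conditions $n\sum_{j\in S}q_jx_j\le 1-(1-\sum_{j\in S}q_j)^n$ for every $S\subseteq[m]$. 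Finally, because $\vec x$ is non-decreasing, a fractional-knapsack bound on $\sum_{j\in S}q_jx_j$ (greedily retain the highest-indexed levels, with one fractional level at the budget boundary) combined with the concavity of $p\mapsto 1-(1-p)^n$ shows these all-subset inequalities follow from just the $m$ upward-closed conditions in the theorem statement. This furnishes an anonymous auction realizing $\vec x$.

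\textbf{The remaining issue, and where I expect the friction.} The flow argument produces an anonymous auction but does not by itself certify that it is \emph{monotone in the bids}. I would resolve this by selecting a canonical feasible flow --- e.g.\ the ``greedy'' flow that within each profile pushes the item to the highest available value level as much as capacities permit --- and checking that, with uniform tie-breaking, raising one's bid (equivalently, one's value) weakly increases one's win probability; alternatively one can argue at the polytope level that the feasible-reduced-form polytope and its monotone sub-polytope share the same extreme points (each an order-preserving ``virtual-welfare maximizer''), so a monotone realization exists whenever any realization does. I expect this monotonization, together with the bookkeeping needed to identify the min-cut value with the Border expression, to be the only delicate parts; the conceptual core is entirely the flow/Hall argument. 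An equivalent route in the style of~\cite{cai2011constructive} bypasses flows altogether: show the feasible polytope and the Border polytope have the same support function by verifying, for every weight vector, that the welfare-maximizing anonymous monotone auction attains the corresponding maximum over the Border polytope --- I would regard the two approaches as interchangeable.
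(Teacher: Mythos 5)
The paper does not actually prove this statement: it is imported verbatim as a known theorem from~\cite{Border07,cai2011constructive,CheKM11}, and the authors explicitly note just before the statement that they will never use the precise form of the conditions, only that they exist and are linear in $\vec{x}$. So there is no in-paper proof to compare against; what you have written is a from-scratch reconstruction of the standard argument. As such it is essentially sound. The necessity direction is correct. For sufficiency, the max-flow/min-cut formulation is the classical one, your computation of the minimum cut matches the all-subset Border conditions, and the reduction from arbitrary subsets $S$ to the $m$ upward-closed sets via the fractional-knapsack envelope together with concavity of $p \mapsto 1-(1-p)^n$ is valid (the envelope is piecewise linear with breakpoints exactly at the upward-closed sets, and a chord of a concave function lies below it) --- but note this step genuinely requires $\vec{x}$ to be non-decreasing, which is not a hypothesis of the theorem as stated; the paper's definition of Border feasibility permits an arbitrary strategy $s$, under which non-monotone $\vec{x}$ can also be feasible, so strictly speaking your sufficiency argument covers only the monotone case (which is the only case the paper uses). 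The one real gap you correctly flag yourself: an arbitrary max flow yields an anonymous but not necessarily bid-monotone auction, and your two proposed repairs (a canonical greedy flow, or the extreme-point/virtual-welfare-maximizer characterization of the symmetric reduced-form polytope) are the standard ones but remain sketches. Neither issue undermines the argument's core, and both are resolved in the cited literature.
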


\section{Full Surplus Extraction from Mean-Based buyers}\label{sec:FSE}
Here, we show a repeated auction which achieves expected revenue arbitrarily close to $T \cdot \val_n(\dist)$ when buyers are mean-based (but consider overbidding). We also note that our auction does not depend on the particular mean-based algorithms used. The auction does \emph{barely} depend on $\dist$, but only in initial ``setup rounds'' (the auction during almost all rounds does not depend on $\dist$). Recall this guarantee is the best possible, due to Observation~\ref{obs:noRegretRevenueUpperBound}. 

\begin{theorem}\label{thm:overBiddingTheorem}
	Whenever $n$ buyers use strategies satisfying the mean-based guarantee, there exists a repeated auction which obtains revenue $T \cdot (1 - \delta)\val_n(\D) - o(T)$ for any constant $\delta < 1$.
\end{theorem}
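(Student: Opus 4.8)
By Observation~\ref{obs:noRegretRevenueUpperBound} the stated bound is within $o(T)$ of the best conceivable, so the goal is simply to match it. The plan is to partition the $T$ rounds into a short \emph{preparation block} of $\delta T$ rounds and a long \emph{extraction block} of $(1-\delta)T$ rounds, and to establish three things: (i) by the end of the preparation block every mean-based buyer of value $w_j$ is, with probability $1-o(1)$ per round, pulling a fixed value-dependent arm (say the arm labeled $w_j$); (ii) throughout the extraction block the allocation rule awards the item to the buyer pulling the highest arm, so that when buyers bid truthfully the allocation is efficient and the welfare realized over the block is $(1-o(1))(1-\delta)T\,\val_n(\D)$; and (iii) the payment rule over the extraction block charges a winner of value $w_j$ essentially $w_j$ on each round they win. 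Combining (ii) and (iii), and taking the expectation over the i.i.d.\ value profile (using that the value of the round-winner under the efficient allocation has expectation exactly $\val_n(\D)$), yields seller revenue $(1-o(1))(1-\delta)T\,\val_n(\D)$, and the small nonnegative revenue from the preparation block only helps.

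For (i) I would run a plain repeated second-price auction during the preparation block. Second price is truthful in each stage, so for a buyer of value $w_j$ the cumulative counterfactual utility of the arm labeled $w_j$ strictly dominates that of every other arm; and since every support value $w_\ell$ has $q_\ell>0$, the maximum of the other $n-1$ i.i.d.\ bids equals $w_\ell$ with probability at least $q_\ell^{n-1}>0$, which makes every such domination gap $\Theta(\delta T)$. As $\gamma=o(1)$, for $T$ large this gap exceeds $\gamma T$, so the mean-based property pins each buyer onto their value's arm in all but a $o(1)$ fraction of rounds. (The lowest type can be indifferent between its arm and the $0$-arm; a small participation incentive during the preparation block, scaled to be $o(T)$ in total but still large enough to open a $\Theta(\delta T)$ gap, patches this without disturbing the other, $\Theta(T)$, gaps, and the issue vanishes if $w_1=0$.)

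For (ii) and (iii) I would adapt the single-buyer full-surplus construction of~\cite{BravermanMSW18}: the extraction block interleaves ``second-price'' rounds, which build a cumulative-utility buffer for each buyer's on-path arm, with ``surcharge'' rounds in which the round's winner is charged a huge amount --- capped, as the model requires, at their own bid --- and whose reserves are scheduled so that the induced cumulative-utility curve over arms keeps the shape that makes value $w_j$ strictly prefer arm $w_j$. On top of this one-dimensional schedule I would layer a second-price competition among the pulled arms, so that when buyers bid truthfully the highest-value buyer wins (the efficiency needed for (ii)), while the surcharges drive the winner of value $w_j$ to pay $(1-o(1))w_j$ per win (this is (iii)). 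Two routine-but-necessary technical steps remain: a concentration argument (Azuma/Hoeffding over the $T$ i.i.d.\ rounds) is needed because the allocations and prices any fixed buyer faces are random functions of the other $n-1$ buyers' realized values, so cumulative quantities track their expectations only up to $o(T)$, which is absorbed into the $\gamma T$ slack; and the $o(1)$-per-round probability that some buyer is off its arm must be shown to perturb both the revenue and the distributions faced by the other buyers by only $o(T)$.

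The step I expect to be the main obstacle is the extraction block for \emph{multiple} buyers. With a single buyer one never worries about luring the buyer \emph{downward}, because the buyer receives the item every round regardless of which arm it pulls; with $n$ buyers, if a value-$w_j$ buyer is lured onto an arm below $w_j$ the stage allocation is no longer efficient (ties against genuine $w_j$-buyers are broken the wrong way), costing $\Theta(T)$ in welfare. And since the mechanism can only ever charge a winner their own bid, a buyer who shades down to $w_{j-1}$ pockets $\Theta(1)$ utility on every round it wins, so a buffer of only $\Theta(\delta T)$ built in the preparation block need not by itself deter this. The construction must therefore make downward deviations unattractive \emph{within} the extraction block --- via the precise interleaving and reserve schedule of the surcharge rounds, via using many finely-spaced arms so that any residual shading is negligible in welfare, and/or via reserves that deny allocation to shaded bids --- and the crux is to verify that such a schedule exists while still recouping a $(1-\delta)$ fraction of the welfare: that is, that the $\Theta(T)$ of slack coming from the constant gaps $w_{j+1}-w_j$ and from the win probabilities dominates the $o(T)$ mean-based tolerance at every round simultaneously.
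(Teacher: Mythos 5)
Your high-level architecture (a short alignment period that separates values onto distinct arms, followed by a long period of second-price allocation plus overcharges) matches the paper's, and you correctly identify the crux: with multiple buyers, a value-$w_j$ buyer who is charged their full value earns $0$ per round on their intended arm but earns $(w_j-w_{j-1})\cdot\XVCG(w_{j-1})=\Theta(1)$ per round in hindsight on the arm below, so the cumulative-utility gap built during any preparation block of length $\Theta(\delta T)$ is depleted at a constant rate and cannot survive an extraction block of length $(1-\delta)T$ for small $\delta$. But you flag this as ``the crux is to verify that such a schedule exists'' and do not supply the mechanism; the candidate fixes you list (finer arm grids, reserves denying shaded bids) are not what makes the argument go through. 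The paper's resolution is a \emph{rotating-arm phase structure}: the $T$ rounds are split into a constant number $P$ of phases of length $2R=\Theta(T)$ each, and the arm intended for value $w_j$ changes every phase (arm $b_j^{\phaseVar}=b_{P+j-\phaseVar}$), with old arms ``retiring'' into a punitive state and fresh dormant arms activating. Each phase is calibrated so that the intended arm nets exactly $0$ utility over the phase while the arm one index lower gains exactly enough for $w_j$ to end the phase \emph{indifferent} between its current intended arm and its next one; the plain second-price rounds at the start of the next phase then break that tie in favor of the new intended arm. This containment of the downward drift within each constant-length phase, re-established inductively phase by phase (Lemmas~\ref{lem:firstFSE}--\ref{lem:setupend}), is the substantive content of the proof and is absent from your plan.

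A second, related gap: you propose labeling arms by the values themselves and capping the surcharge ``at their own bid.'' Under the model constraint $p_{i,t}(\vec{b})\le b_i\cdot a_{i,t}(\vec{b})$, an arm labeled $w_j$ can never charge more than $w_j$, which forces the flat-utility situation above and forecloses the paper's averaging trick. The paper instead labels every arm with a bid exceeding $2w_m$ (the arms are all overbids; only the arm's \emph{identity} encodes the value), so that in the second $R$ rounds of a phase the winner can legally pay the second price plus a surcharge of $2(w_j-w_\ell)$ --- strictly more than $w_j$ --- making the phase-average payment exactly $w_j\cdot\XVCG(w_j)$ while the first $R$ second-price rounds rebuild the lead of the intended arm. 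Exploiting overbidding in this way is essential, not incidental, to the construction.
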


In this language, one main result of~\cite{BravermanMSW18} proves Theorem~\ref{thm:overBiddingTheorem} when $n=1$. Before diving into our proof, we remind the reader of the main challenge. In order to possibly extract this much revenue, the auction must somehow both (a) charge each winning buyer their full value, leaving them with zero utility, yet also (b) figure out which buyer has the highest value in each round, and give them the item. The distinction between the $n=1$ and $n>1$ case is in (b). When $n=1$, it is still challenging to give the buyer the item every round while charging their full value, but at least the buyer does not need to convey any information to the seller (so, for example, it is not necessary to incentivize the buyer to pull distinct arms for each possible value --- the buyers just need to pay their full value on average by the end). When $n>1$, we need the buyer to pull a distinct arm for each of their possible values, because we need to make sure that the highest buyer wins the item (and the only information we learn about each buyer's value is the arm they pull).\\

\noindent\textbf{Additional Notation.} We now provide our auction and analysis, beginning with some additional notation for this section. We will divide the $T$ rounds of the auction into {\em phases} of $2R$ consecutive rounds, where $R = \Omega(T)$. There are $P:=T/(2R)$ total phases (so $P$ is a constant, but it will be a large constant depending on $\delta$). In our contruction, the first $m-1$ phases will be the setup phases and the last $P-m+1$ phases will be the main phases. The goal of the setup phases is to align buyer's incentives so that they will behave in a particular manner in later phases. The main phases are where we will extract most of our revenue.

Recall that there are $K$ non-zero arms labeled $b_1 < \ldots < b_K$. {Our construction will use $K:=P$ arms.} Because the buyers consider overbidding, the precise bid labels are not important, so long as they are sufficiently large (concretely, we set $b_i:=2w_m+i$). We will sometimes index arms using $b_j^\phaseVar:=b_{P+j-\phaseVar}$. This notation will be helpful to remind the reader that $b_j^\phaseVar$ is the arm that we intend to be pulled by a buyer with value $w_j$ during {main} phase $\phaseVar$.\footnote{So for example, arm $b_{P-m}$ will first be (intended to be) pulled by buyers with value $w_1$ in phase $m+1$, then by buyers with value $w_2$ in phase $m+2$, etc.} 

\subsection{Defining the Auction} \label{armdef} Intuitively, our auction tries to do the following. In each phase $\phaseVar$, there is a targeted arm $b_j^\phaseVar$ for each possible value $w_j$, so there are $m$ arms that are (intended to be) pulled during each phase. {Ideally, since $w_j$ needs to transition from pulling $b_j^{\phaseVar-1}$ in phase $\phaseVar - 1$ to pulling $b_j^{\phaseVar}$ in phase $\phaseVar$,  at the beginning of phase $\phaseVar$, $w_j$ should be indifferent between pulling $b_j^{\phaseVar - 1}$, $w_j$'s favourite arm in phase $\phaseVar - 1$; and $b_j^{\phaseVar}$,$w_j$'s intended arm for phase $\phaseVar$. Let us for now assume this is true and see how we design the auction during phase $\phaseVar$ (which contains $2R$ rounds). } 

The base auction each round is just a second-price auction, where pulling arm $b_j^\phaseVar$ submits a bid of $w_j$. For the first $R$ rounds of each phase, this is exactly the auction executed. Because the second-price auction is dominant strategy truthful, this lures a mean-based buyer with value $w_j$ into having high cumulative reward for arm $b_j^\phaseVar$ (and in particular, strictly higher than any other arm). For the second $R$ rounds of each phase, the base auction will still be the same second-price auction, except we will now overcharge each buyer so that their average utility during all $2R$ rounds of auction in phase $\phaseVar$ is close to zero.
In principle, this is possible because the buyers have high cumulative utility for this arm from the first $R$ rounds, and are purely mean-based (and so they will pay more than their value to pull an arm which is historically much better than all others). 

{Now, by design our auction in phase $\phaseVar$ gives the item to the highest buyer most of the time, therefore the expected welfare is almost optimal. Meanwhile, the expected utility is close to 0, which means we have managed to extract revenue that is almost the full welfare in phase $\phaseVar$. Lastly, notice that cumulative utility for arm $b_j^{\phaseVar+1}$ increases during phase $\phaseVar$, so our phase cannot last forever. If we set the phase length to be too long, then $b_j^{\phaseVar + 1}$ will become $w_j$'s favourite arm before phase $\phaseVar$ ends. This is exactly why we need multiple phases instead of one phase. Let us set our phase length in such a way that at the end of $\phaseVar$, the increase in cumulative utility for arm $b_j^{\phaseVar+1}$ is just enough for $w_j$ to be indifferent between $b_j^\phaseVar$ and $b_j^{\phaseVar + 1}$, then the exact condition we assume at the start of phase $\phaseVar$ is satisfied, but for phase $\phaseVar + 1$. Thus we can safely start a new phase $\phaseVar + 1$ and extract almost full welfare by the same auction design. }

Of course, this is just intuition for why an auction like this could possibly work --- significant details remain to prove that it does in fact work (including precisely the choice of overcharges, analyzing incentives between phases, etc.). Below is a formal description of our auction.\footnote{There is one technicality, which is that we would like to always have a mean-based learner break ties to pull arm $b_0$ with probability $o(1)$ during any round where another arm guarantees them non-negative cumulative utility. To do this, we just decrease all payments defined below by an arbitrarily small $\varepsilon$, but omit this from the definition for cleanliness.}

\begin{definition}[Full Surplus Extraction Auction] The FSE Auction uses the following allocation and payment rule in each round. There are two steps in each round. First, based on the arm pulled, a bid is submitted on behalf of the buyer into a secondary auction. Then, the secondary auction is resolved. There are three types of arms:
\begin{itemize}
\item Some arms are \emph{dormant}. These arms don't enter the secondary auction (i.e.~no item and $0$ payment). 
\item Some arms are \emph{active}. Pulling arm $b_{P-\phaseVar+j} = b_{j}^\phaseVar$ enters a bid of $w_j$ into a secondary auction. 
\item Some arms are \emph{retired}. Pulling a retired arm enters a bid of $w_m+1$ into a secondary auction.
\end{itemize}

Which arms are dormant/active/retired change each phase. In addition, the secondary auction resolves differently for the first $m-1$ phases (we call these the setup phases) versus the last $P-m+1$ phases (we call these the main phases). Think of $P \gg m$, so the main phases are what matter most. In any main phase $(\phaseVar \geq m)$:
\begin{itemize}
\item Active arms: $b_{P-\phaseVar+1}=b_1^\phaseVar$ through $b_{P-\phaseVar+m}=b_m^\phaseVar$. Dormant: below $b_{P-\phaseVar+1}$. Retired: above $b_{P-\phaseVar+m}$. {Note that by our definition, the index of active arms decreases as $\phaseVar$ increases. For instance, if in $n^{th}$ phase the active arms are $b_{h}, b_{h-1}, \cdots, b_{l}$, then in the $n+1^{th}$ phase the active arms are $b_{h-1}, b_{h-2}, \cdots, b_{l-1}$.}
\item The secondary auction awards the item to a uniformly random buyer who submits the highest bid.
\item If the winning arm was retired (i.e.,~submitted a bid of $w_m+1$), they pay $2w_m$.
\item If the winning arm was active, the winner pays the second-highest bid.
\item Additionally, in the second $R$ rounds of a phase, if the highest bid is $w_j$ and the second-highest bid is $w_\ell$, the winner pays an additional surcharge of $2(w_j-w_\ell)$.
\end{itemize}
In any setup phase $(\phaseVar < m)$:
\begin{itemize}
\item Active arms: $b_{P-\phaseVar+1}=b_1^\phaseVar$ through $b_P=b_\phaseVar^\phaseVar$. Dormant: below $b_{P-\phaseVar+1}$. Retired: none.
\item Let $S$ denote the set of buyers who submitted the highest bid. The secondary auction picks $i^*$ uniformly at random from $S$ as a tentative winner.
\item If the highest bid is $< w_\phaseVar$, then $i^*$ wins the item.
\item If the highest bid is $w_\phaseVar$, then $i^*$ wins the item with probability\footnote{Recall that $q_j$ is the probability a buyer's value for the item is $w_j$.} $(q_\phaseVar/\sum_{j \geq \phaseVar} q_j)^{|S|-1}${, which is equal to the probability that $i^*$ wins the item had the buyers participated in a second price auction and $i^*$ bids $w_\phaseVar$ while other buyers bid truthfully}. 
\item If there is a winner, the winner pays the second-highest bid. 
\item Additionally, in the second $R$ rounds of a phase, if the highest bid is $w_j$ and the second-highest bid is $w_\ell$, the winner (if there is one) pays an additional surcharge of $2(w_j-w_\ell)$.
\end{itemize}
\end{definition}

The main phases are the interesting part of the FSE auction where we will argue that mean-based buyers choose to pull their intended arm and get $0$ utility each phase for doing so. The setup phases are just a technical setup to get incentives to work out, and is the reason why we lose an arbitrarily small $\delta$ fraction of the optimal welfare (because during the setup phases, we aren't giving the item to the highest buyer --- but fortunately there are just $m$ of $P$ setup phases, and we can take $P$ to be any large constant $\gg m$ that we want). {A concrete execution of the FSE auction is given in Appendix~\ref{app:example}.}

We first quickly confirm that the FSE Auction is monotone (proof in Appendix~\ref{app:fse}).
\begin{observation}\label{obs:monotone} The allocation and payment rule for the FSE auction are both monotone.
\end{observation}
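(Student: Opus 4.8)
The plan is to exploit the two-step structure of each round of the FSE auction: the pulled arm is first translated, by a fixed phase-dependent rule $\beta_\phaseVar(\cdot)$, into a bid submitted to a secondary auction, and the secondary auction then determines the allocation and payment. Here $\beta_\phaseVar$ sends a dormant arm to the null bid $0$, an active arm $b_j^\phaseVar$ to $w_j$, and a retired arm to $w_m+1$; and within a fixed phase the allocation and payment of the secondary auction are functions of the submitted-bid vector alone (the clause ``if the winning arm was retired, pay $2w_m$'' is equivalent to ``if the winning bid is $w_m+1$, pay $2w_m$'', and every other clause is already stated in terms of bids). Since equal submitted bids therefore induce equal secondary outcomes, it suffices to prove: (i) $\beta_\phaseVar$ is weakly increasing in the arm label; and (ii) the secondary auction's allocation and payment are weakly increasing in a buyer's own submitted bid, with the other bids held fixed. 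Monotonicity of the FSE auction's $a$ and $p$ in the arm label is then immediate by composition.

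For (i) I would simply read off the arm classification from the definition. In a main phase $\phaseVar \ge m$, listing the arms in increasing label order gives the dormant arms $b_1,\dots,b_{P-\phaseVar}$ (bid $0$), then the active arms $b_1^\phaseVar,\dots,b_m^\phaseVar$ (bids $w_1 < \dots < w_m$), then the retired arms (bid $w_m+1$); the induced bid sequence $0,\dots,0,w_1,\dots,w_m,w_m+1,\dots,w_m+1$ is weakly increasing since $0 \le w_1 < \dots < w_m < w_m+1$. In a setup phase $\phaseVar < m$ there are no retired arms, and the same reading gives the weakly increasing sequence $0,\dots,0,w_1,\dots,w_\phaseVar$.

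For (ii) I would run a short case analysis on a buyer's submitted bid $b$ versus $M$, the maximum of the other submitted bids. The secondary allocation is that of a second-price auction with uniform tie-breaking among the highest bidders (in a setup phase there is an additional probabilistic award step, but its exponent $|S|-1$ is $0$ whenever the buyer is the unique top bidder, so it does not affect the $b>M$ case): the win probability is $0$ for $b<M$, a positive quantity for $b=M$, and $1$ for $b>M$, hence weakly increasing. For payments, the expected payment is $0$ for $b<M$ and otherwise equals (win probability) $\times$ (second-highest bid $+$ surcharge); at $b=M$ the conditional payment is $M$ with zero surcharge, while for $b>M$ the conditional payment is either the flat retired price $2w_m$ or the (fixed) second-highest bid $M$ plus a nonnegative surcharge $2(b-M)$ --- in every case at least $M$, and non-decreasing in $b$ throughout the ``win outright'' range. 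The one comparison worth singling out is the transition from winning with the top active bid $w_m$ to winning with a retired bid $w_m+1$, where one uses $2w_m \ge w_m$ (first $R$ rounds) and $2w_m \ge 2w_m - M$ (second $R$ rounds) to see the payment does not drop.

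I expect the observation to be routine overall; the only place that needs care is (ii), where one must confirm that the surcharge $2(w_j-w_\ell)$, the flat retired price $2w_m$, and the setup-phase award probability $(q_\phaseVar/\sum_{j\ge\phaseVar}q_j)^{|S|-1}$ never conspire to create a downward jump in the expected payment as a buyer raises their own bid. Checking the three regime boundaries --- lose/tie, tie/win-outright, and active/retired --- disposes of this, and the monotonicity of the plain second-price rule together with the composition argument of (i) then finishes the proof with nothing further required.
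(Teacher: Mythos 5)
Your proposal is correct and follows essentially the same route as the paper's proof: higher arms submit weakly higher bids into the secondary auction, the second-price allocation and payment are monotone in one's own bid, the surcharge $2(w_j-w_\ell)$ is monotone since the second-highest bid is fixed, and the retired price $2w_m$ dominates the maximal active payment $2(w_m-w_\ell)+w_\ell$. Your treatment is somewhat more explicit about the tie cases and the setup-phase award probability, but the decomposition and the key boundary checks are the same.
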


\subsection{Mean-Based Behavior}
Before analyzing the expected revenue of the seller, we first analyze the behavior of mean-based buyers. The challenge, of course, is that the payoff from each arm depends on the behavior of the other buyers, who are themselves mean-based. So our goal is to establish that mean-based learning in the FSE auction forms some sort of ``equilbrium,'' in the sense that one mean-based buyer pulls the desired arm almost-always provided that all other buyers pull the desired arm almost-always. Our first step is characterizing a buyer's payoff for each arm at each round, assuming that all other buyers pull the intended arm almost always. 

The main steps in our proof are as follows. First, we analyze the cumulative payoff for a buyer with each possible value for each possible arm, assuming that each other buyer pulls their intended arm. We then conclude that a buyer with value $w_j$ has highest cumulative utility for their intended arm for the entirety of each phase. However, we also establish that the utility they enjoy during each phase for their intended arm is $0$. This means that every buyer has $0$ utility at the end (up to $o(T)$), meaning that the seller's revenue is equal to the expected welfare. Because we give the item to the highest value buyer whenever they pull the intended arm, the welfare is $T \cdot \val(\dist)$. We now proceed with each step.

In each of the technical lemmas below, we let $H_s(v,b)$ denote the cumulative payoff {during rounds 0 to s} that a buyer with value $v$ would have enjoyed in hindsight by pulling arm $b$ in the FSE Auction, assuming that all other buyers pull their intended arm for at least a $1-o(1)$ fraction of the rounds during every main phase $\phaseVar$, and that they pull either their intended arm (if it exists) or $b_P$ (otherwise) during every setup phase $\phaseVar$. {We let  $\XVCG(v)$ denote the probability that a bidder with value $v$ wins a second-price auction when bidding truthfully against $n-1$ values drawn independently from $\dist$ (ties broken randomly). And we let $\PVCG(v)$ denote the interim payment made by a bidder with value $v$ to a second-price auction, in expectation over $n-1$ other values drawn independently from $\dist$.\footnote{Formally, let $X_i$ be independent draws from $\dist$ for $i = 1$ to $n-1$. Define $X_0:=v$. Let $X:=\max_{i \geq 1}\{X_i\}$, and $Y$ be an indicator random variable for the event that a uniformly random element in $\arg\max_{i \geq 0}\{X_i\}$ is $0$. Then $\PVCG(v):=\mathbb{E}[X\cdot Y]$.} } 

Our first lemma will concern the cumulative payoff for arms at the start of each phase.

\begin{lemma}\label{lem:firstFSE} At the end of phase $\phaseVar$, the change in cumulative payoff of a buyer with value $v$ for each arm satisfies:
\begin{itemize}
\item For dormant arms $b$, $H_{2R\phaseVar}(v,b) - H_{2R(\phaseVar-1)}(v,b)= 0$.
\item For active arms: $H_{2R\phaseVar}(v,b_j^\phaseVar) - H_{2R(\phaseVar-1)}(v,b_j^\phaseVar)= 2R \cdot (v-w_j) \cdot \XVCG(w_j) \pm o(T)$.
\item For retired arms: $H_{2R\phaseVar}(v,b_j)  - H_{2R(\phaseVar-1)}(v,b_j)= 2R\cdot (v - 2w_m) \pm o(T)$.
\end{itemize} 
\end{lemma}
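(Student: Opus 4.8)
The plan is to compute the per-phase change in cumulative payoff by directly tracking what happens in each of the $2R$ rounds of phase $\phaseVar$, separately for the first $R$ rounds (no surcharge) and the second $R$ rounds (with surcharge), and separately for the three types of arms. Throughout I will use the standing assumption behind $H_s(\cdot,\cdot)$: every other buyer pulls their intended arm in a $1-o(1)$ fraction of rounds, so up to an $o(T)$ additive error I may assume all $n-1$ opponents bid truthfully (via the intended-arm-to-value correspondence) in every round of phase $\phaseVar$. That error accumulates over at most $T$ rounds but stays $o(T)$, which is why the lemma is stated with a $\pm o(T)$ slack.

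\textbf{Dormant arms.} This is immediate: by definition a dormant arm submits no bid into the secondary auction, so it wins nothing and pays nothing every round of phase $\phaseVar$; hence the cumulative payoff for such an arm is unchanged, giving exactly $0$.

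\textbf{Active arms.} Fix an active arm $b_j^\phaseVar$ in a main phase. Pulling it submits bid $w_j$. Against $n-1$ opponents bidding truthfully from $\dist$, the probability this bid wins (ties broken uniformly) is exactly $\XVCG(w_j)$, and conditioned on winning the payment is the second-highest bid, whose expectation is $\PVCG(w_j)$ by definition; so in each of the first $R$ rounds the expected utility of a value-$v$ buyer pulling $b_j^\phaseVar$ is $v\cdot\XVCG(w_j)-\PVCG(w_j)$. In the second $R$ rounds there is an extra surcharge of $2(w_j-w_\ell)$ when the second-highest bid is $w_\ell$; its expectation, conditioned on winning, is $2(w_j\XVCG(w_j)-\PVCG(w_j))$ by unpacking the definition of $\PVCG$ (the surcharge is $2w_j$ minus twice the realized second price, and taking expectations over the event of winning turns the realized second price into $\PVCG(w_j)$ and the winning indicator into $\XVCG(w_j)$ — here I am using that $\PVCG(v)=\mathbb{E}[X\cdot Y]$ already bundles the win indicator). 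Adding the base utility and the negative surcharge over the second $R$ rounds gives per-round expected utility $v\XVCG(w_j)-\PVCG(w_j)-2(w_j\XVCG(w_j)-\PVCG(w_j)) = (v-2w_j)\XVCG(w_j)+\PVCG(w_j)$. Summing $R$ of each type: $R[(v\XVCG(w_j)-\PVCG(w_j)) + (v-2w_j)\XVCG(w_j)+\PVCG(w_j)] = R\cdot 2(v-w_j)\XVCG(w_j) = 2R(v-w_j)\XVCG(w_j)$, and the setup phases contribute an analogous computation (the win probability is engineered to equal the second-price win probability $\XVCG(w_j)$, and the payment is again the second-highest bid), so the formula holds uniformly; the $o(T)$ captures the opponent-deviation slack. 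Since $H$ does not depend on the realized arm pulls of the buyer in question (it is a hindsight quantity), we are done for active arms.

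\textbf{Retired arms.} A retired arm submits the fixed bid $w_m+1$, which strictly exceeds every truthful opponent bid (all in $[0,w_m]$), so it wins every round of the phase with probability $1$ (up to the $o(1)$ opponent-deviation events). In a main phase a retired winner pays $2w_m$ in the first $R$ rounds; in the second $R$ rounds the highest bid is $w_m+1$ and the second-highest is some $w_\ell\le w_m$, but I should double-check the intended reading — the surcharge $2(w_j-w_\ell)$ applies with $w_j$ being the winner's submitted bid; the lemma's claimed answer $2R(v-2w_m)$ indicates the retired arm's effective total payment over the phase is $2w_m$ per round on average, so the surcharge on retired winners is either not applied or the payment rule caps it; I would reconcile this with the auction definition (the surcharge clause is phrased for active winners with bids $w_j, w_\ell$ both in the support, so retired winners just pay the flat $2w_m$ in all $2R$ rounds). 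Then the per-round utility is $v-2w_m$, and over $2R$ rounds this is $2R(v-2w_m)\pm o(T)$, as claimed.

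\textbf{Main obstacle.} The routine part is the arithmetic; the genuinely delicate point is justifying the reduction to "all opponents bid truthfully every round" and confirming it only costs $o(T)$ — this needs the hypothesis that opponents pull intended arms a $1-o(1)$ fraction of the time, that bids and payments are uniformly bounded (by roughly $b_K = 2w_m+P = O(1)$), and that a deviation in any single round changes the hindsight payoff of a fixed arm by only $O(1)$; summing over $T$ rounds with an $o(1)$-fraction of bad rounds gives $o(T)$. The second subtlety is getting the surcharge bookkeeping exactly right for active arms in a way that telescopes into $2(v-w_j)\XVCG(w_j)$ — in particular making sure the $\PVCG$ terms cancel — and making sure the retired-arm surcharge clause is being read consistently with the stated formula.
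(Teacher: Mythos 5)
Your proposal is correct and follows essentially the same route as the paper's proof: the same case split by arm type, the same per-round accounting showing the $\PVCG$ terms cancel to give $2R(v-w_j)\XVCG(w_j)$ for active arms, the same flat-$2w_m$ reading of the retired-arm payment, and the same bounded-per-round-deviation argument to absorb the opponents' $o(1)$-fraction of unintended pulls (and the setup-phase rounds) into the $\pm o(T)$ term. The paper's appendix version just makes the error term quantitative ($\pm 2R\cdot 4n\alpha(\phaseVar)$) for use in the later induction, which your sketch anticipates but does not carry out explicitly.
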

\begin{proof}[Proof Sketch]
We provide a complete proof here in the case that other buyers pull the stated arm for \emph{all} rounds (rather than a $1-o(1)$ fraction), and only for the main phases (rather than also for setup phases). This captures the intuition, but postpones the technical challenges. In Appendix~\ref{app:fse}, we provide a complete proof for the general case. 

The first bullet follows trivially. To see this, observe that if an arm is dormant for phase $\phaseVar$, then it never awards the item or charges payments, so the cumulative utility doesn't change.

For active arms, observe that pulling arm $b_j^\phaseVar$ enters bid $w_j$ into a second price auction where all other bids are drawn independently from $\dist$. This means that the buyer wins the item with probability $\XVCG(w_j)$ in each such round. 

In terms of payment, observe that the expected payment during the first $R$ rounds is exactly $\PVCG(w_j)$ per round. Moreover, the expected payment during the second $R$ rounds is exactly $\XVCG(w_j) \cdot (2w_j - w_\ell) = 2w_j\cdot \XVCG(w_j) - \PVCG(w_j)$ per round. Therefore, the added utility during phase $\phaseVar$ is:
\begin{align*}
 &2R\cdot v \cdot \XVCG(w_j) - R \cdot \PVCG(w_j) - R\cdot (2w_j \cdot \XVCG(w_j)) + R\cdot \PVCG(w_j) \\
 = &2R \cdot (v-w_j) \cdot \XVCG(w_j) .
\end{align*}

For retired arms, observe that pulling the arm enters bid $w_m+1$ into a second-price auction, which surely wins in all rounds. The payment is $2w_m$. Therefore, the change in cumulative payoff is $2Rv - 4Rw_m$.
\end{proof}

The above proof captures the main intuition for Lemma~\ref{lem:firstFSE}, but non-trivial extra work is required for the two mentioned extensions. Still, full proof follows the above outline with one extra step. While it is generally \emph{extremely} unwieldy to predict the behavior of multiple no-regret buyers playing a repeated game (even when those buyers are guaranteed to play a particularly simple no-regret strategy like Multiplicative Weights Update), our repeated game is especially simple because buyers have a weakly dominant strategy (for the rounds which are just a second-price auction). Therefore, we can argue that any mean-based strategies will quickly learn to play their intended arm, so long as it starts the phase within $o(T)$ of their favorite. This intuition will be made formal when we prove Lemma~\ref{lem:setupfirst} (proof contained in Appendix~\ref{app:fse}). 

Using Lemma~\ref{lem:firstFSE}, we can also quickly conclude the cumulative utilities. To help parse the formulas below, recall that every arm transitions from being initially dormant, to being intended for $w_1$, then $w_2$, etc., and then retired. However, some arms start out in the middle of this transition (e.g. arm $b_{P+m-1}$ is initially intended for $w_m$, and then retires), and some arms end in the middle of this transition (e.g. arm $b_1$ is dormant until the final phase, when it is intended for $w_1$). For the rest of the formally stated corollaries/lemmas in this section, we will use terms like $o(T)$ to keep the statements clean. However, the precise choice of $o(T)$ that we prove will be relevant for how these claims are used in later proofs, and versions of these statements in Appendix~\ref{app:fse} are more precise. We include them here to help give a clear outline for the rest of the proof.

\begin{corollary}\label{cor:firstFSE} At the end of phase $\phaseVar$, the cumulative payoffs for a buyer with value $v$ satisfy:
\begin{itemize}
\item If $b_j$ is dormant during phase $\phaseVar$ ($j \leq P-\phaseVar$): $H_{2R\phaseVar}(v,b)= 0$.
\item If $b_j$ is active during phase $\phaseVar$ ($j \in [P-\phaseVar +1,P-\phaseVar+m]$):
$$H_{2R\phaseVar}(v,b_j) = 2R \cdot \left(\sum_{k=1}^{j + \phaseVar - P}(v-w_k) \cdot \XVCG(w_k) \right)\pm o(T).$$
\item If $b_j$ is retired during $\phaseVar$ ($j \geq P-\phaseVar + m+1$):
$$H_{2R\phaseVar}(v,b) = 2R \cdot \left((\phaseVar-m)\cdot (v - 2w_m) + \sum_{k=1}^m  (v-w_k) \cdot \XVCG(w_k)  \right)\pm o(T).$$
\end{itemize}

\end{corollary}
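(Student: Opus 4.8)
The plan is to obtain Corollary~\ref{cor:firstFSE} by telescoping the per-phase increments of Lemma~\ref{lem:firstFSE} over the first $\phaseVar$ phases. Since $H_0(v,b)\equiv 0$ for every arm $b$,
\[
H_{2R\phaseVar}(v,b_j)=\sum_{\phaseVar'=1}^{\phaseVar}\Big(H_{2R\phaseVar'}(v,b_j)-H_{2R(\phaseVar'-1)}(v,b_j)\Big),
\]
so it suffices to know, for each phase $\phaseVar'\le\phaseVar$, whether $b_j$ is dormant, active, or retired during $\phaseVar'$, and if active, for which value $w_k$ it is intended --- Lemma~\ref{lem:firstFSE} then supplies the corresponding increment ($0$; $2R(v-w_k)\XVCG(w_k)\pm o(T)$; or $2R(v-2w_m)\pm o(T)$).

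The second step is to read off each arm's ``life cycle'' from the definition of the FSE Auction. By construction the roles assigned in phase $\phaseVar'$ are exactly those of phase $\phaseVar'-1$ shifted down by one arm index, so every arm follows the same schedule up to a shift of the phase index by its own index; it is therefore enough to check one arm (say $b_P$) against the definition. The schedule is: $b_j$ is dormant in phases $1,\dots,P-j$; it is active in phases $P-j+1,P-j+2,\dots$, being exactly the arm $b_k^{\phaseVar'}$ (hence bidding $w_k$) in phase $\phaseVar'=P-j+k$ for $k=1,\dots,m$; and from phase $P-j+m+1$ on it is retired. One checks that $P-j+m+1>m$ always, so retirement begins in the main-phase regime where Lemma~\ref{lem:firstFSE}'s retired increment is valid, and that the dormant/active increments of Lemma~\ref{lem:firstFSE} also cover the setup phases (the two halves of the FSE definition agree that active arms bid $w_k$ and dormant arms do nothing).

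Substituting the three hypotheses of the corollary into this schedule and summing then finishes the proof. If $b_j$ is dormant during phase $\phaseVar$ ($j\le P-\phaseVar$), every increment through phase $\phaseVar$ is exactly $0$, so $H_{2R\phaseVar}(v,b_j)=0$. If $b_j$ is active during phase $\phaseVar$ ($P-\phaseVar+1\le j\le P-\phaseVar+m$), then it has so far been intended for $w_1,\dots,w_{j+\phaseVar-P}$ (the earlier, dormant, phases contributing $0$), so the sum is $2R\sum_{k=1}^{j+\phaseVar-P}(v-w_k)\XVCG(w_k)$ up to accumulated error. If $b_j$ is retired during phase $\phaseVar$, it has been intended for each of $w_1,\dots,w_m$ exactly once --- contributing $2R\sum_{k=1}^m(v-w_k)\XVCG(w_k)$ --- and retired in every phase since, contributing $2R(v-2w_m)\pm o(T)$ per such phase; summing those gives the remaining term. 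In every case the accumulated error is a sum of at most $P=O(1)$ terms of size $o(T)$, hence $o(T)$.

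The main things to be careful about are bookkeeping ones rather than conceptual. First, the ``leading'' arms $b_P,b_{P-1},\dots$ start their schedule already active and partly inside the setup phases, while the ``trailing'' arms $b_1,b_2,\dots$ run out of phases before reaching $w_m$ or retirement; routing everything through the shift structure above is the clean way to keep the phase counts straight. Second, the $o(T)$ terms inherited from Lemma~\ref{lem:firstFSE} depend on how far the other buyers deviate from their intended arms, so one must ensure the per-phase error does not grow with the phase index before summing the $O(1)$ of them --- this is precisely why the Appendix~\ref{app:fse} restatements of Lemma~\ref{lem:firstFSE} and Corollary~\ref{cor:firstFSE} carry explicit error bounds, and it is the part that needs genuine care.
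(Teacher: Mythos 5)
Your proposal is correct and follows essentially the same route as the paper's own proof (Appendix~\ref{app:fse}): telescope the per-phase increments of Lemma~\ref{lem:firstFSE} over each arm's dormant--active--retired life cycle and sum the $O(1)$ many $o(T)$ error terms. Your closing remark about controlling how the per-phase error depends on the deviation fraction is exactly the point the appendix handles by tracking the explicit $\alpha(\phaseVar)$ bounds.
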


Lemma~\ref{lem:firstFSE} and Corollary~\ref{cor:firstFSE} now starts to make our analysis transparent: Consider any particular value $w_j$ and their cumulative utility for any particular arm $b_\ell$. During every phase that $b_\ell$ is dormant, the cumulative utility doesn't change. During every phase that $b_\ell$ is retired, the cumulative utility strictly decreases. During every phase that $b_\ell$ is intended for a value $w_k$, with $k > j$, the cumulative utility strictly decreases (because $\XVCG(w_k) > 0$ for all $k$). During every phase that $b_\ell$ is intended for a value $w_k$, with $k < j$, the cumulative utility strictly increases. During the phase where $b_\ell$ is intended for $w_j$, the cumulative utility stays the same. We can now establish that mean-based buyers pull the intended arm during each phase.

\begin{lemma}\label{lem:favestart} For all $\phaseVar$, at the start of each phase $\phaseVar$, when $j \leq \phaseVar$, a buyer with value $w_j$ has highest cumulative utility for arm $b_j^\phaseVar$, and also $b_j^{\phaseVar-1}$. Specifically, for all other arms $b_\ell$: 
$$H_{2R(\phaseVar-1)}(w_j,b_j^{\phaseVar-1}) \pm o(T) = H_{2R(\phaseVar-1)}(w_j,b_j^\phaseVar) > H_{2R(\phaseVar-1)}(w_j,b_\ell) + \Omega(T).$$ When $j > \phaseVar$, for all $b_\ell \neq \phaseVar$,  $H_{2R(\phaseVar-1)}(w_j,b_\phaseVar^\phaseVar) > H_{2R(\phaseVar-1)}(w_j,b_\ell) + \Omega(T).$
\end{lemma}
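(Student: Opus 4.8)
The plan is to prove Lemma~\ref{lem:favestart} by induction on the phase index $\phaseVar$, using Corollary~\ref{cor:firstFSE} as the workhorse for computing cumulative utilities at the end of phase $\phaseVar - 1$ (equivalently, the start of phase $\phaseVar$). The key structural observation, already highlighted right before the lemma statement, is that for a buyer with value $w_j$, the cumulative utility for an arm $b_\ell$ moves monotonically with the ``role'' of $b_\ell$: it is flat while dormant, strictly increasing while intended for some $w_k$ with $k<j$, flat (zero change) while intended for $w_j$, strictly decreasing while intended for some $w_k$ with $k>j$, and strictly decreasing while retired. So the cumulative utility of arm $b_\ell$, as a function of the phase, is unimodal with its peak exactly at the phase in which $b_\ell$ is intended for $w_j$ --- and that peak value is $2R\sum_{k<j}(w_j-w_k)\XVCG(w_k)\pm o(T)$, which is the same for every arm that ever reaches the ``intended for $w_j$'' stage. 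This is precisely arm $b_j^\phaseVar$ at the start of phase $\phaseVar$ (it has just finished its ``intended for $w_j$'' phase, which was phase $\phaseVar-1$ in the indexing $b_j^{\phaseVar-1}=b_j^{\phaseVar}$... wait, need care here).

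More precisely, I would carry out the steps as follows. First, fix $j \leq \phaseVar$ and write down, via Corollary~\ref{cor:firstFSE}, the value $H_{2R(\phaseVar-1)}(w_j, b_\ell)$ for each of the three cases (dormant, active, retired) that $b_\ell$ could be in during phase $\phaseVar-1$. Recall $b_j^{\phaseVar-1} = b_{P+j-\phaseVar+1}$ and $b_j^{\phaseVar} = b_{P+j-\phaseVar}$; I need to double-check the index bookkeeping so that "$b_j^{\phaseVar}$ has just finished being intended for $w_{j-1}$ wait no" --- the cleanest way is: arm $b_j^{\phaseVar}$ is active in phase $\phaseVar$ intended for $w_j$; at the end of phase $\phaseVar-1$ it was active intended for $w_{j-1}$ (if $j\ge 2$) or just became active / was dormant (if $j=1$). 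So by Corollary~\ref{cor:firstFSE} with the active formula, $H_{2R(\phaseVar-1)}(w_j, b_j^{\phaseVar}) = 2R\sum_{k=1}^{j-1}(w_j-w_k)\XVCG(w_k) \pm o(T)$. Meanwhile $b_j^{\phaseVar-1}$ is active in phase $\phaseVar-1$ intended for $w_j$, so by the same formula $H_{2R(\phaseVar-1)}(w_j, b_j^{\phaseVar-1}) = 2R\sum_{k=1}^{j}(w_j-w_k)\XVCG(w_k) \pm o(T) = 2R\sum_{k=1}^{j-1}(w_j-w_k)\XVCG(w_k)\pm o(T)$ since the $k=j$ term vanishes. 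That establishes the claimed equality $H_{2R(\phaseVar-1)}(w_j,b_j^{\phaseVar-1}) \pm o(T) = H_{2R(\phaseVar-1)}(w_j,b_j^{\phaseVar})$.

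Second, for the strict inequality against every other arm $b_\ell$, I would argue case by case that its cumulative utility is smaller by $\Omega(T)$. A dormant-in-phase-$(\phaseVar-1)$ arm has cumulative utility $0$, which is smaller by $2R\sum_{k<j}(w_j-w_k)\XVCG(w_k) = \Omega(T)$ (nonzero as long as $j\ge 2$; when $j=1$ the target value is $0$ too and I need to invoke the $\varepsilon$ tie-breaking footnote, or note $b_1^\phaseVar$ is still weakly best and the $o(T)$ slack plus monotonicity of the second-price rounds gives strict preference --- this edge case deserves a sentence). An active-in-phase-$(\phaseVar-1)$ arm intended for $w_k$ with $k\ne j$: if $k<j$ it has value $2R\sum_{i=1}^{k}(w_j-w_i)\XVCG(w_i)<2R\sum_{i=1}^{j-1}(\cdots)$, short by $2R\sum_{i=k+1}^{j-1}(w_j-w_i)\XVCG(w_i) = \Omega(T)$ (empty only if $k=j-1$, but then note this arm was intended for $w_j$ back in phase $\phaseVar-2$ and has since strictly decreased --- so it's short by the phase-$(\phaseVar-1)$ decrement $2R(w_j-w_{j-1})\XVCG(w_{j-1})=\Omega(T)$); if $k>j$ its value is $2R\sum_{i=1}^{k}(w_j-w_i)\XVCG(w_i)$ which is strictly less than the peak since it has overshot (the terms $i>j$ are negative). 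A retired-in-phase-$(\phaseVar-1)$ arm has value $2R((\phaseVar-1-m)(w_j-2w_m) + \sum_{k=1}^m(w_j-w_k)\XVCG(w_k))$; since $w_j \le w_m$ the retirement term is $\le 0$ and the remaining sum is $\le$ the peak $\sum_{k<j}$ plus the negative tail $\sum_{k\ge j}$, and moreover each full retirement phase subtracts $2R(2w_m-w_j)\ge 2Rw_m=\Omega(T)$, so it is short by $\Omega(T)$. Third, I handle the $j>\phaseVar$ case: here $w_j$'s value $w_j$ exceeds every threshold that has been "activated" so far, so the relevant intended arm in every phase up through $\phaseVar$ is the top active arm $b_\phaseVar^\phaseVar$ (the one intended for $w_\phaseVar$ in phase $\phaseVar$, indices $b_P$ down), and since $j>\phaseVar\ge$ all $k$'s seen, the buyer's cumulative utility for that top arm has been strictly increasing every phase, so it dominates all other arms (dormant = $0$; lower active arms have fewer positive terms; none retired yet in setup phases, and in main phases the retired arms lost $\Omega(T)$ each retirement round as above). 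I would note this is essentially the same computation specialized to the case where no "overshoot" ever happens.

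The main obstacle I anticipate is not any single inequality but the bookkeeping around the two "degenerate" sub-cases: (i) the $j=1$ case, where the target peak value is literally $0$ and one must lean on the $\varepsilon$-perturbation tie-breaking footnote (and on the fact that during the second-price rounds arm $b_1^\phaseVar$ strictly dominates arm $b_0$ since it guarantees non-negative utility) to get a strict $\Omega(T)$ gap --- or else weaken "highest cumulative utility" to "weakly highest, with the $o(T)$ and $\varepsilon$ slack ensuring mean-based behavior still pulls it"; and (ii) the adjacent-arm case $k = j-1$, where the naive bound $\sum_{i=k+1}^{j-1}$ is empty and one must instead use that the arm previously peaked (in phase $\phaseVar-2$) and strictly lost utility during phase $\phaseVar-1$. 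Both are handled by a single extra sentence invoking monotonicity/the prior phase, but they need to be stated carefully. Everything else is a direct substitution into Corollary~\ref{cor:firstFSE} plus the observation that $\XVCG(w_k)>0$ and $2R=\Omega(T)$.
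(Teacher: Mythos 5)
Your overall route is the same as the paper's: substitute into (the precise version of) Corollary~\ref{cor:firstFSE} to get the closed forms at time $2R(\phaseVar-1)$, observe that the $k=j$ term vanishes so $b_j^{\phaseVar-1}$ and $b_j^\phaseVar$ tie at the peak value $2R\sum_{k<j}(w_j-w_k)\XVCG(w_k)$, and then do a case analysis (dormant / active intended for $w_k\neq w_j$ / retired) showing every other arm is short by at least one term of size $2R\cdot\min_i(w_i-w_{i-1})\cdot\XVCG(w_1)=\Omega(T)$; the $j>\phaseVar$ case is handled separately exactly as in the paper. Your flag of the $j=1$ edge case (dormant arms and $b_1^\phaseVar$ both sit at $0\pm o(T)$, so strictness needs the $\varepsilon$-payment-reduction footnote) is apt --- the paper's own appendix proof elides this.

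The one concrete error is your sub-case ``$k=j-1$.'' The arm that is active in phase $\phaseVar-1$ intended for $w_{j-1}$ is $b_{j-1}^{\phaseVar-1}=b_{P+(j-1)-(\phaseVar-1)}=b_{P+j-\phaseVar}=b_j^{\phaseVar}$: it is not an ``other arm'' at all but precisely the second tied arm whose equality with the peak you established in the previous paragraph. Your proposed patch --- that this arm ``was intended for $w_j$ back in phase $\phaseVar-2$ and has since strictly decreased'' --- is wrong on both counts: arms move upward through intended values over phases (it was intended for $w_{j-2}$ in phase $\phaseVar-2$), and during phase $\phaseVar-1$ a buyer with value $w_j$ \emph{gains} $2R(w_j-w_{j-1})\XVCG(w_{j-1})>0$ from it. Asserting an $\Omega(T)$ deficit here directly contradicts the equality half of the lemma you just proved. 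The fix is simply to delete this sub-case: once the index bookkeeping is done correctly, the only active-in-phase-$(\phaseVar-1)$ arms requiring a strict gap are those intended for $w_k$ with $k\leq j-2$ or $k\geq j+1$, for which your bounds are correct. (Compare the paper's proof, which indexes arms by their phase-$\phaseVar$ role and correctly treats its $\ell=j+1$ case as the tied arm $b_j^{\phaseVar-1}$ rather than forcing a gap.)
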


\begin{lemma}\label{lem:setupfirst} For all $\phaseVar$, assuming that all other buyers pull their intended arm except for $o(T)$ rounds, a mean-based buyer with value $w_j$ pulls arm $b_j^\phaseVar$ (if it exists) for the first $R$ rounds, except for at most $o(T)$ rounds. Otherwise, they pull arm $b_P = b_{\phaseVar}^{\phaseVar}$ for the first $R$ rounds, except for at most $o(T)$ rounds.
\end{lemma}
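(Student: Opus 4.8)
The idea is to combine Lemma~\ref{lem:favestart}, which pins down every arm's cumulative payoff at the start of phase $\phaseVar$, with the structural fact that the \emph{first} $R$ rounds of any phase form a (possibly tie-broken) second-price auction in which a buyer with value $w_j$ has an ex-post best response: pull the arm bidding $w_j$ if such an arm is active in phase $\phaseVar$, and otherwise pull the active arm carrying the largest bid (which is then less than $w_j$). One checks that this ``intended'' arm weakly dominates every other arm against every profile of the other buyers' bids --- this is exactly what the $(q_\phaseVar/\sum_{k\ge\phaseVar}q_k)^{|S|-1}$ winning-probability correction in the setup-phase resolution rule is for. Consequently, writing $H_s(\cdot):=H_s(w_j,\cdot)$ and letting $b$ denote the intended arm, every difference $s\mapsto H_s(b)-H_s(b')$ (for $b'\ne b$) is non-decreasing over the first $R$ rounds of phase $\phaseVar$. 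It therefore suffices to show each such difference exceeds $\gamma T$ after at most $o(T)$ of these rounds: the $\gamma$-mean-based property then caps the per-round probability of pulling $b'$ by $\gamma$ thereafter, and $\gamma R=o(T)$ since $R=\Omega(T)$ and $\gamma=o(1)$.

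First take $j>\phaseVar$, so $\phaseVar$ is a setup phase, $b_j^\phaseVar$ does not exist, and the intended arm is $b_P=b_\phaseVar^\phaseVar$ (which submits the largest available bid $w_\phaseVar<w_j$). By Lemma~\ref{lem:favestart}, already at round $2R(\phaseVar-1)$ this arm leads \emph{every} other arm by $\Omega(T)$ for value $w_j$; as the lead is non-decreasing it stays above $\gamma T$, and summing $\gamma$ over the $O(1)$ competing arms and the $R$ rounds bounds the expected number of rounds in which it is not pulled by $o(T)$. The identical argument disposes, in the case $j\le\phaseVar$, of every arm that Lemma~\ref{lem:favestart} already places $\Omega(T)$ behind $b_j^\phaseVar$; this is every arm except the ``predecessor'' $b_j^{\phaseVar-1}$ (and, when $j=1$, possibly the zero and dormant arms, whose separation from $b_1^\phaseVar$ is arranged by the $\varepsilon$-perturbation of payments from the footnote --- such arms, if not already $\Omega(T)$ behind, are handled exactly like $b_j^{\phaseVar-1}$ below).

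So for $j\le\phaseVar$ it remains only to bound pulls of $b_j^{\phaseVar-1}$, the one arm Lemma~\ref{lem:favestart} leaves within $o(T)$ of $b_j^\phaseVar$. If $j=\phaseVar$ this arm does not exist and we are done. If $j=m<\phaseVar$ then $b_j^{\phaseVar-1}$ is \emph{retired} in phase $\phaseVar$, submitting the bid $w_m+1$: it wins every round and pays $2w_m$, so a value-$w_m$ buyer gets per-round payoff $-w_m$ from it, and hence $H_s(b_j^\phaseVar)-H_s(b_j^{\phaseVar-1})$ grows by at least $w_m$ each round and crosses $\gamma T$ after $O(\gamma T/w_m)=o(T)$ rounds. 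The only delicate sub-case is $1\le j<\min(m,\phaseVar)$, where $b_j^{\phaseVar-1}=b_{j+1}^\phaseVar$ is \emph{active} and submits $w_{j+1}>w_j$.

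In this sub-case the per-round increment to $H_s(b_j^\phaseVar)-H_s(b_j^{\phaseVar-1})$ is nonnegative for every profile of others' bids, but strictly positive only on the ``critical'' rounds in which some other buyer bids $w_{j+1}$ and none bids more: there, bidding $w_{j+1}$ makes a value-$w_j$ buyer win at price $w_{j+1}$ (a loss) with probability at least $1/n$, whereas bidding $w_j$ loses harmlessly, so the per-round increment is at least a constant $c>0$ on a round that is both critical and ``good'' (all others on their intended arms). Whether round $s$ is critical depends only on that round's i.i.d.\ draw of the other $n-1$ values, so critical rounds occur independently with fixed probability $\rho>0$ (this uses $n\ge2$ and $q_{j+1}>0$); since at most $o(T)$ rounds fail to be good, a Chernoff bound shows that with probability $1-o(1)$ at least $\lceil\gamma T/c\rceil+o(T)$ critical-and-good rounds fall among the first $\ell_0=o(T)$ rounds of phase $\phaseVar$, forcing $H_s(b_j^\phaseVar)-H_s(b_j^{\phaseVar-1})>\gamma T$ for all later $s$ in the block. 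Bounding pulls of $b_j^{\phaseVar-1}$ by $\ell_0$ before that point, by $\gamma$ per round after it on the good event, and by $R$ on the (probability-$o(1)$) bad event gives $o(T)$ expected pulls; summing over arms completes the proof. I expect this sub-case to be the main obstacle: the predecessor arm starts only $o(T)$ behind and beats $b_j^\phaseVar$ only \emph{in expectation}, so one cannot just compare expected cumulative payoffs but must push the gap past $\gamma T$ with high enough probability, which is what forces the concentration step. A secondary point needing care is verifying that the setup-phase resolution rule really does leave the truthful-as-possible bid ex-post weakly optimal, which is what makes every gap function monotone during the first half of each phase.
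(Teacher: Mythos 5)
Your overall strategy matches the paper's: use Lemma~\ref{lem:favestart} to dismiss every arm that starts $\Omega(T)$ behind, exploit ex-post weak dominance of the (truthful-as-possible) bid in the first $R$ rounds so that the gap to the intended arm never shrinks, and isolate the predecessor arm $b_j^{\phaseVar-1}=b_{j+1}^{\phaseVar}$ as the real difficulty, handling $j=m$ via the retired arm's deterministic per-round loss. Your explicit Chernoff step for the remaining sub-case is, if anything, more careful than the paper's (which works with expected per-round gaps).

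However, there is a genuine gap in exactly the sub-case you flag as delicate. For the gap $H_s(b_j^\phaseVar)-H_s(b_{j+1}^\phaseVar)$ to grow, you need another buyer to actually \emph{bid} $w_{j+1}$, i.e., to have value $w_{j+1}$ \emph{and} to pull arm $b_{j+1}^\phaseVar$. You convert the i.i.d.\ value draws into bids by invoking ``at most $o(T)$ rounds fail to be good,'' which reads the lemma's hypothesis as guaranteeing the other buyers' behavior \emph{during the first $R$ rounds of phase $\phaseVar$ itself}. But the precise version of this hypothesis (Corollary~\ref{app:lem:setupfirst}), and the only version available when the lemma is chained in Proposition~\ref{prop:behave}, grants good behavior only in rounds \emph{before} phase $\phaseVar$: at the start of phase $\phaseVar$ a value-$w_{j+1}$ buyer is (by Lemma~\ref{lem:favestart}) still within $o(T)$ of indifferent between $b_{j+1}^\phaseVar$ and $b_{j+2}^\phaseVar$, so mean-basedness gives no guarantee it bids $w_{j+1}$ rather than $w_{j+2}$ early in the phase --- and if it bids $w_{j+2}$, both of your buyer's candidate arms lose and the gap does not grow. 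Under your reading the statement becomes circular when combined across buyers (each buyer's good behavior in phase $\phaseVar$ presupposes the others'). The paper breaks this circularity with a downward induction on values \emph{within} the phase: value-$w_m$ buyers learn within $O(\alpha T)$ rounds using only the retired arm's deterministic loss; once they reliably bid $w_m$, value-$w_{m-1}$ buyers can separate their two candidate arms and learn within another $O(\alpha T)$ rounds; and so on down to $w_j$, accumulating only $m\cdot O(\alpha T)=o(T)$ bad rounds. Your proof is missing this chaining device (or some substitute for it), and a secondary consequence is that your separation times should degrade as $j$ decreases, which your argument does not reflect.
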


\begin{lemma}\label{lem:setupend} For all $\phaseVar$, assuming that all other buyers pull their intended arm except for $o(T)$ rounds, a mean-based buyer with value $w_j$ pulls arm $b_j^\phaseVar$ (if it exists) for the last $R$ rounds, except for at most $o(T)$ rounds. Otherwise, they pull arm $b_P = b_{\phaseVar}^{\phaseVar}$ for the last $R$ rounds, except for at most $o(T)$ rounds.
\end{lemma}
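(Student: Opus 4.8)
Fix a buyer of value $w_j$ and condition on the stated event that every other buyer pulls its intended arm on a $1-o(1)$ fraction of rounds; write $b^\star$ for this buyer's intended arm in phase $\phaseVar$ (so $b^\star=b_j^\phaseVar$ when $j\le\phaseVar$, and $b^\star=b_P=b_\phaseVar^\phaseVar$ otherwise). The plan is to track the in-hindsight cumulative payoff $H_s(w_j,b)$ of every arm $b$ across the $2R$ rounds of phase $\phaseVar$ and show that $b^\star$ is the strict argmax by a margin exceeding $\gamma T$ on all but $o(T)$ of the last $R$ rounds. The lemma then follows from the $\gamma$-mean-based property (every arm more than $\gamma T$ behind is played with probability at most $\gamma$, and there are only $K=P=O(1)$ arms, so the expected number of last-$R$ rounds on which $w_j$ does not pull $b^\star$ is $o(T)$), exactly as at the end of the proof of Lemma~\ref{lem:setupfirst}.

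The key structural point is that each $H_s(w_j,b)$ is, up to $o(T)$, piecewise-affine in $s$: since the opponents' bid vector is (up to $o(T)$ aggregate error) an i.i.d.\ draw from $\dist^{n-1}$ each round, the expected per-round in-hindsight utility of a fixed arm is one constant over the first $R$ rounds of the phase and another constant over the last $R$ rounds (the second-half surcharge changes the constant, not its being constant). Hence each gap $D_b(s):=H_s(w_j,b^\star)-H_s(w_j,b)$ is affine on $[2R(\phaseVar-1),s_0]$ and on $[s_0,2R\phaseVar]$ with $s_0=2R(\phaseVar-1)+R$, and it suffices to pin $D_b$ at the three points $2R(\phaseVar-1)$, $s_0$, $2R\phaseVar$ and interpolate. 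At $2R(\phaseVar-1)$ this is Lemma~\ref{lem:favestart}: $D_b\ge\Omega(T)$ for all $b\neq b^\star$ except that (when $j\le\phaseVar$) the ``upper transition arm'' $b_j^{\phaseVar-1}=b_{j+1}^\phaseVar$ has $D=\pm o(T)$. At $2R\phaseVar$, i.e.\ at the start of phase $\phaseVar+1$, Lemma~\ref{lem:favestart} applied one phase later (using $b_j^{(\phaseVar+1)-1}=b_j^\phaseVar$ and $b_j^{\phaseVar+1}=b_{j-1}^\phaseVar$), or equivalently Corollary~\ref{cor:firstFSE} directly, gives $D_b\ge\Omega(T)$ for all $b\neq b^\star$ except that the ``lower transition arm'' $b_{j-1}^\phaseVar$ has $D=\pm o(T)$. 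At $s_0$: the first $R$ rounds implement a dominant-strategy truthful (possibly setup-modified) second-price auction in which, among all arms available that phase, the bid $b^\star$ submits is weakly value-maximizing for a $w_j$ buyer, while dormant arms gain $0$ and retired arms gain a negative amount; hence $b^\star$ accumulates at least as much as any arm over the first half, so $D_b(s_0)\ge D_b(2R(\phaseVar-1))$ for every $b$.

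Interpolating on the second half: for $b$ neither transition arm, $D_b\ge\Omega(T)$ at $s_0$ and at $2R\phaseVar$, hence $\ge\Omega(T)>\gamma T$ on all of $[s_0,2R\phaseVar]$. For the upper transition arm, $D\ge -o(T)$ at $s_0$ and $\ge\Omega(T)$ at $2R\phaseVar$, so an affine function pinned between these exceeds $\gamma T$ outside an initial sub-interval of length $o(T)$ (using $R=\Theta(T)$, $\gamma=o(1)$, and that the far-endpoint value is $\Omega(T)$); symmetrically the lower transition arm exceeds $\gamma T$ outside a final sub-interval of length $o(T)$. These two bad sub-intervals sit at opposite ends of the second half, so outside an $o(T)$-sized set of rounds $b^\star$ leads every other arm by more than $\gamma T$, as needed. (Degenerate cases: when $j=m$, ``$b_j^{\phaseVar-1}$'' is the retired arm $b_m^{\phaseVar-1}$, which only falls further behind during phase $\phaseVar$, so there is no upper transition window; when $j=1$ the lower transition arm is currently dormant and activates next phase, and the same estimate applies; and when $j>\phaseVar$, Lemma~\ref{lem:favestart} puts $b_P=b_\phaseVar^\phaseVar$ strictly $\Omega(T)$ ahead with no tie both at the start of phase $\phaseVar$ and at the start of phase $\phaseVar+1$, so the gaps stay $\ge\Omega(T)$ throughout the second half with no transition window at all.)

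I expect the crux to be exactly the two transition arms $b_{j\pm1}^\phaseVar$: each is within $o(T)$ of $b^\star$ at one endpoint of the phase, so a blunt ``$b^\star$ leads by $\Omega(T)$'' bound is unavailable and one must show quantitatively that the window in which a transition arm sits within $\gamma T$ of $b^\star$ has length only $o(T)$. This needs the gaps to be genuinely affine (hence the opponents' rounds must be exactly i.i.d.\ up to the $o(T)$ slack), together with a genuine $\Omega(T)$ gap at the far endpoint read off from Corollary~\ref{cor:firstFSE}. A secondary subtlety worth flagging in the writeup: during the last $R$ rounds a lower-value active arm can have strictly larger per-round in-hindsight utility than $b^\star$ (it pays a smaller surcharge), so the argument genuinely relies on the lead $b^\star$ builds over the first half and inherits from the previous phase, not on any per-round comparison. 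Propagating the $o(T)$ error from ``opponents pull the intended arm on only a $1-o(1)$ fraction of rounds'' through the estimates is routine given the $\Omega(T)$ slacks.
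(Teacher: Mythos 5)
Your proposal is correct and follows essentially the same route as the paper's proof: both rest on the endpoint facts from Lemma~\ref{lem:favestart}/Corollary~\ref{cor:firstFSE} together with the approximately constant per-round drift of each arm's hindsight utility within each half of the phase, and both reduce the claim to showing that the only arm that can come within $\gamma T$ of the intended arm during the second half does so only in an $o(T)$-length window at one end of it. The paper computes the gap explicitly as a function of the round index within the second half (and states the computation only for active arms), whereas you pin the gap at the half-phase endpoints and interpolate affinely, handling the dormant/retired and degenerate cases explicitly — a cosmetic repackaging of the same argument.
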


Finally, we combine everything together to conclude the following:

\begin{proposition}\label{prop:behave} When all buyers are mean-based, they all pull their intended arm in the FSE Auction, except for at most $o(T)$ rounds.
\end{proposition}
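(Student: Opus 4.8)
The plan is to induct on the phase number $\phaseVar$, maintaining the invariant that at the start of phase $\phaseVar$ every mean-based buyer has behaved as prescribed through phases $1,\dots,\phaseVar-1$ (pulling their intended arm, or $b_P$ if no intended arm yet exists, for all but $o(T)$ rounds), and that consequently the cumulative-payoff bookkeeping of Lemma~\ref{lem:firstFSE} and Corollary~\ref{cor:firstFSE} applies. The base case is phase $1$: here only the setup mechanics are active, and Lemma~\ref{lem:setupfirst} and Lemma~\ref{lem:setupend} already give the conclusion directly (the ``otherwise'' clause), since no arm has accumulated any history yet. The inductive step is the heart of the argument: assuming the invariant holds up to the start of phase $\phaseVar$, I want to conclude that all buyers pull the intended arm for all but $o(T)$ rounds of phase $\phaseVar$, which re-establishes the invariant for phase $\phaseVar+1$.

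For the inductive step I would proceed as follows. First, invoke Lemma~\ref{lem:favestart}: under the inductive hypothesis, at the start of phase $\phaseVar$ a buyer with value $w_j$ has (essentially) highest cumulative utility for $b_j^\phaseVar$ (tied up to $o(T)$ with $b_j^{\phaseVar-1}$) and every other arm trails by $\Omega(T)$; when $j>\phaseVar$ the buyer's favorite is $b_\phaseVar^\phaseVar=b_P$, again by $\Omega(T)$. Second, apply Lemma~\ref{lem:setupfirst} to conclude that during the first $R$ rounds of phase $\phaseVar$ every buyer pulls their intended arm (or $b_P$) except for $o(T)$ rounds — this is exactly where the ``weakly dominant strategy in a second-price auction'' structure is exploited, so that the self-referential nature of the multi-buyer dynamics is tamed: because the base auction is dominant-strategy truthful, a buyer's incentive to pull $b_j^\phaseVar$ does not actually hinge delicately on the others, only on the others not deviating too wildly, which the $o(T)$-deviation hypothesis supplies. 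Third, feed the resulting first-$R$-round behavior back into the payoff computation (the first two bullets of the Lemma~\ref{lem:firstFSE} argument) to see that $b_j^\phaseVar$ remains the $\Omega(T)$-strict favorite throughout the second $R$ rounds as well (the surcharge in the second $R$ rounds is calibrated precisely so the \emph{within-phase} utility nets to $0$, not so that it overtakes the $\Omega(T)$ lead built across prior phases), and then apply Lemma~\ref{lem:setupend} to get the desired behavior in the last $R$ rounds. Concatenating, all buyers pull the intended arm for all but $o(T)$ rounds of phase $\phaseVar$; summing the $o(T)$ slacks over the constantly-many phases $P$ keeps the total at $o(T)$, closing the induction.

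The main obstacle — and the reason this is a \emph{proposition} assembled from four lemmas rather than a short argument — is the circularity inherent in multi-buyer no-regret dynamics: Lemma~\ref{lem:firstFSE} and Lemma~\ref{lem:favestart} are stated conditionally on all \emph{other} buyers pulling their intended arm for a $1-o(1)$ fraction of rounds, yet that is exactly what we are trying to prove about each buyer. The induction on phases resolves the cross-phase part of this circularity, but \emph{within} a single phase one still needs the argument that ``if everyone else is (approximately) playing the intended arm then I strictly want to as well'' to be robust to the $o(1)$ slack and to not require a fixed-point argument — this robustness is precisely what the dominant-strategy structure of the second-price base auction buys, and it is the conceptual crux that Lemma~\ref{lem:setupfirst} (and its analogue Lemma~\ref{lem:setupend}) encapsulate. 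A secondary bookkeeping point to be careful about is that the ``$\pm o(T)$'' and ``$\Omega(T)$'' bounds must be uniform across all $P$ phases and all $m$ values so that the accumulated error genuinely remains $o(T)$; since $P$ and $m$ are constants this is routine, but it is the place where the more precise Appendix versions of the lemmas are actually needed rather than the clean statements quoted here.
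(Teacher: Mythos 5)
Your proposal matches the paper's proof: an induction over the constantly many phases, with the base case handled in phase $1$ (where only one arm is active) and the inductive step supplied by the precise appendix versions of Lemmas~\ref{lem:setupfirst} and~\ref{lem:setupend}. The one imprecision is that the per-phase deviation bound compounds \emph{multiplicatively} rather than merely summing --- the paper sets $\alpha(\tau) = (C_1+C_2)\,\alpha(\tau-1)$ with $\alpha(1)=2\gamma$, yielding $\alpha(\tau) \leq (C_1+C_2)^P \cdot 2\gamma = o(1)$ --- but, as you anticipate in your final paragraph, this geometric blowup is harmless precisely because $P$ is a constant.
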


\subsection{Analyzing the Revenue}
Finally, we show that when all buyers pull their intended arm, the FSE auction extracts full surplus.

\begin{proof}[Proof of Theorem~\ref{thm:overBiddingTheorem}]
Except for the setup phases, and for rounds where buyers do not pull their intended arm, the auction gives the item to the highest buyer. Therefore, the expected welfare of the auction is at least $(1-m/P)T \cdot \val(\dist) - o(T)$. Moreover, Lemma~\ref{lem:firstFSE} establishes that through an entire phase, the cumulative utility of a buyer for pulling their intended arm is $0 \pm o(T)$. Therefore, the total utility of the mean-based buyer is at most $o(T)$. Therefore, the revenue is at least $(1-m/P)T \cdot \val(\dist) - o(T)$. Setting $P\geq m/\delta$ completes the proof.
\end{proof}

\section{Clever Mean-Based Buyers}\label{sec:UB}
In this section we consider clever mean-based buyers, and identify three formal barriers to developing optimal auctions for multiple clever mean-based buyers. We develop each barrier in the subsections below. Section~\ref{sec:LP} reminds the reader of the~\cite{BravermanMSW18} Linear Program, which exactly captures the optimal auction for a single clever mean-based buyer, and provides a natural extension to multiple buyers. 

\subsection{A Linear Programming Upper Bound}\label{sec:LP}

We first remind the reader of the~\cite{BravermanMSW18} Linear Program, and give a natural extension to multiple buyers. We first explicitly define variables for the results of a repeated auction.

Let $A$ be a repeated auction with $n$ i.i.d buyers of value distribution $\dist$. For each buyer $i$, let $S_i$ denote a strategy which takes as input a value $v_{it}$ for round $t$ (and all other information available from previous rounds) and outputs an arm $b^{S_i}_{it}(v_{it})$ to pull in round $t$. Let $\vec{v} := \langle v_{it}\rangle_{i \in [n], t \in T}$, which is drawn from the product distribution $\times_{nT}\dist$. We use the following notation:

\begin{align*}
& Rev_{A}(\dist, S_1,\ldots, S_n) := \E_{\vec{v}}\left[\sum_{i=1}^n \sum_t p_{i,t}\Big(b^{S_i}_{it}(v_{it});b^{S_{-i}}_{-it}(v_{-it})\Big)\right] \text{\quad\quad\quad \spc(expected revenue of the auction);}\\
&Rev_{n}(\dist, S_1, \ldots, S_n) := \max_{A}\{ Rev_{A}(\dist, S_1, \ldots, S_n) \} \text{\quad\quad\quad\quad\quad\spc (maximum attainable revenue for $S_1...S_n$)};\\
&X_{ij}^A(\dist, S_1, \ldots, S_n) = \frac{1}{T} \E_{\vec{v}}\left[\sum_t a_{it}\Big(b^{S_i}_{it}(w_j);b^{S_{-i}}_{-it}(v_{-it})\Big)\right]  \text{\spc\spc (average allocation probability when $v_i = w_j$)};\\
&Y_{ij}^A(\dist, S_1, \ldots , S_n) = \frac{1}{T} \E_{\vec{v}}\left[\sum_t a_{it}\Big(w_j; b^{S_{-i}}_{-it}(v_{-it})\Big)\right] \text{\quad\quad\spc (average allocation probability when bid is $w_j$)};\\
&U_{ij}^A(\dist, S_1,\ldots, S_n) = \frac{1}{T} \E_{\vec{v}}\left[\sum_t w_j\cdot a_{it}\Big(b^{S_i}_{it}(w_j); b^{S_{-i}}_{-it}(v_{-it})\Big) - p_{it}\Big(b^{S_i}_{it}(w_j); b^{S_{-i}}_{-it}(v_{-it})\Big)\right]\\
&\text{(average utility when $v_i = w_j$).}
\end{align*}

\begin{definition}[Auction Feasible]
A tuple of $m$-vectors $(\xst, \yst, \ust)$ is \textit{$n$-buyer auction feasible }
if there exists a repeated auction $A$, such that for all $\gamma = o(T)$, whenever $n$ buyers with values drawn i.i.d.~from $\dist$ run clever $\gamma$-mean-based strategies $S_1,\ldots, S_n$, then $\forall i, X_{ij}^A(\dist, S_1,\ldots, S_n)= \xst_j \pm O(\gamma); Y_{ij}^A(\dist, S_1,\ldots, S_n) = \yst_j \pm O(\gamma);U_{ij}^A(\dist, S_1,\ldots, S_n) =\ust_j  \pm O(\gamma)$. We call $(\xst, \ust)$ $n$-buyer auction feasible if there exists $\yst$ such that $(\xst, \yst, \ust)$ is $n$-buyer auction feasible. 
\end{definition}

One key insight in~\cite{BravermanMSW18} is that the space of auction feasible tuples is convex and can be characterized by simple linear equations. Below, note that the ``only if'' direction is slightly non-trivial, and we rederive it later for arbitrary $n$. The ``if'' direction requires designing an auction (for which we refer the interested reader to~\cite[Theorem~3.4]{BravermanMSW18}). We will not rederive the ``if'' direction, although we define the relevant auction later as well. 

\begin{theorem}[\cite{BravermanMSW18}]\label{thm:BMSWfeasible} $(x,u)$ is $1$-buyer auction feasible if and only if it satisfies the \emph{BMSW constraints}:\footnote{In fact, the `only if' portion of this theorem holds when replacing the clever mean-based buyer with just a clever buyer. But the `if' portion requires the stronger assumption of mean-based learning.}
\begin{align*}
    & u_i \geq (w_i - w_j) \cdot x_j, \enspace \forall i,j \in [m] : i > j,\\
    & x_i \geq x_j, \enspace \forall i\in [m], i > j,\\
    & u_i \geq 0, x_i \in [0,1], \enspace \forall i \in [m].
\end{align*}
\end{theorem}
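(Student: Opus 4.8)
The plan is to prove the ``only if'' direction (deferring the harder ``if'' direction, as the excerpt says BMSW does), i.e. to show that if $(x,u)$ is $1$-buyer auction feasible, then it satisfies the three families of BMSW constraints. So suppose $(x,u)$ is realized by some repeated auction $A$ together with clever $\gamma$-mean-based strategies $S$, with $X^A_j = x_j \pm O(\gamma)$, $Y^A_j = y_j \pm O(\gamma)$, $U^A_j = u_j \pm O(\gamma)$, for all $\gamma = o(1)$. Then take $\gamma \to 0$: it suffices to show the constraints hold up to $O(\gamma)$ slack. The non-negativity and monotone-allocation constraints are the easy ones, and I would dispatch them first.

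First, $u_i \ge 0$: a clever buyer always has the option to pull the $0$-arm, which gives allocation $0$ and payment $0$ (payments are in $[0, b_i \cdot a_{i,t}]$, so a bid of $0$ has payment $0$), hence utility $0$ on every round. Since the strategy is no-regret (mean-based $\Rightarrow$ no-regret, or directly: a clever mean-based buyer will not persistently pull an arm that is historically dominated by the $0$-arm), the time-averaged utility when $v_i = w_i$ is at least $-O(\gamma)$, giving $u_i \ge -O(\gamma)$, hence $u_i \ge 0$ in the limit. Also $x_i \in [0,1]$ is immediate since $a_{i,t}(\cdot) \in [0,1]$ each round. For $x_i \ge x_j$ when $i > j$: this is exactly where I would invoke the monotonicity of the allocation rule. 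The subtle point is that $X^A_i$ is the allocation when the buyer \emph{has} value $w_i$ and follows $S$, not when they bid $w_i$ — so $X^A_i$ involves the arm $b^{S}_{t}(w_i)$, whereas monotonicity of $a$ is in the bid. Here I would use the mean-based property: the key fact (which I expect the paper has effectively baked into the definitions, and which I'd re-derive) is that a clever mean-based buyer with value $w_i$ asymptotically pulls arms that are (nearly) utility-optimal in hindsight among arms labeled $\le w_i$; combined with a ``single-crossing''/exchange argument on the realized arm-distributions for adjacent values, one gets $X^A_i \ge X^A_j - O(\gamma)$.

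The real content is the first family, $u_i \ge (w_i - w_j) x_j$ for $i > j$. This is the ``no-regret deviation'' inequality. Consider a buyer whose true value is $w_i$. One valid clever fixed-bid strategy (valid because $w_j < w_i$ so it does not overbid) is: whenever your value is $w_j$, pull the arm that the strategy $S$ would have pulled with value $w_j$; behave arbitrarily (say as $S$) otherwise. But more directly: consider the fixed-bid deviation that, whenever $v = w_i$, mimics what $S_i$ does when $v = w_j$. Running this fixed-bid strategy against the \emph{same} behavior of the other randomness, the deviator wins the item with average probability $X^A_j \pm O(\gamma)$ (the allocation only depends on the submitted arm, and the other buyers/rounds are i.i.d., so conditioning the deviator's round-$t$ action on ``$v_{it} = w_i$'' versus ``$v_{it}=w_j$'' gives the same allocation distribution as the original $X^A_j$ term, by exchangeability of $v_{it}$ across rounds), and pays on average at most $w_j \cdot X^A_j - u^{(j)}$... — here one must be careful: the payment the deviator incurs is the \emph{same} payment schedule $p_{i,t}$ evaluated at the arm $S_i$ pulls on value-$w_j$ rounds, so the deviator's average payment equals $w_j X^A_j - U^A_j = w_j x_j - u_j \pm O(\gamma)$, hence the deviator's average \emph{utility} (with true value $w_i$!) is $w_i x_j - (w_j x_j - u_j) = u_j + (w_i - w_j)x_j \pm O(\gamma)$. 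Since $S_i$ is no-regret, its own average utility $u_i$ must be at least this minus $O(\gamma)$: $u_i \ge u_j + (w_i-w_j)x_j - O(\gamma) \ge (w_i - w_j)x_j - O(\gamma)$ using $u_j \ge 0$. Taking $\gamma \to 0$ finishes it.

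The step I expect to be the main obstacle — and where I'd spend the most care — is making the exchangeability argument fully rigorous: namely that replacing ``what $S_i$ does on rounds where $v_{it}=w_i$'' by ``what $S_i$ does on rounds where $v_{it}=w_j$'' yields a well-defined clever strategy whose average allocation and payment are governed by $X^A_j, U^A_j$ up to $O(\gamma)$, despite $S_i$ being adaptive (it conditions on the history, which in turn depends on this buyer's past play, which we have now perturbed). The clean way around the adaptivity issue is the one BMSW use: do not perturb $S_i$ at all on the historical side; instead appeal to the definition of regret, which compares $S_i$ against \emph{fixed-bid} strategies evaluated \emph{on the exact realized sequence of auctions and other players' bids induced by $S_i$ itself}. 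A fixed-bid strategy's payoff on that fixed sequence, when the fixed bid on value $w_i$ is chosen to be ``always pull $\beta$'' for the appropriate arm $\beta$, can be related to $X^A_j$ and $U^A_j$ by summing over rounds and using that $v_{it}$ is independent across rounds and identically $\dist$-distributed — so the \emph{realized} auctions/other-bids on ``$v_{it}=w_i$'' rounds are, in distribution, the same as on ``$v_{it}=w_j$'' rounds. I would structure the proof to route entirely through $\max_{F\in\mathcal F}\{P(F)-P(S_i)\} \le \delta = o(T)$, picking $F$ = ``the strategy that pulls arm $b^{S_i}_{\cdot}(w_j)$-type behavior on $w_i$-valued rounds'' — though since $F$ must be genuinely fixed-bid (action depends only on own value), the cleanest choice of $F$ is: on a round with own value $v$, pull the single arm $\beta^*(v) := \arg\max$ over arms $\beta$ labeled $\le v$ of the hindsight average utility of $\beta$ on exactly those rounds. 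This reduces everything to: (i) $P(\beta^*\text{-strategy})$'s contribution from value-$w_i$ rounds is at least the contribution of the arm that $S$ used on value-$w_j$ rounds, which by i.i.d.-across-rounds equals $(w_i - w_j)x_j + u_j \pm o(T) \ge (w_i-w_j)x_j \pm o(T)$, times $T$; and (ii) $P(S_i) = T\sum_j q_j u_j \pm o(T)$, so $u_i \ge (w_i-w_j)x_j - o(1)$ per the regret bound. I would isolate the i.i.d.-across-rounds exchangeability fact as its own short lemma, since it is reused for the monotonicity constraint as well.
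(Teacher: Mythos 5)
Your overall architecture (prove only the ``only if'' direction via a no-regret deviation argument, plus the null arm for $u_i\ge 0$ and monotonicity of the per-round allocation for $x_i\ge x_j$) matches the paper's treatment: the paper likewise defers the ``if'' direction to \cite{BravermanMSW18} and rederives only the ``only if'' direction (as the $n=1$ case of Proposition~\ref{prop:feasible}). However, your proof of the key family $u_i\ge (w_i-w_j)x_j$ has a genuine gap, and it sits exactly at the step you flagged as the main obstacle. Your deviation is ``mimic whatever $S$ pulls on value-$w_j$ rounds,'' which is not a fixed-bid strategy, and your proposed repair --- replace it by the best-in-hindsight fixed arm $\beta^*(w_i)$ and assert (your step (i)) that $\beta^*$'s hindsight payoff is at least that of the arms $S$ used on value-$w_j$ rounds --- is not valid. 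A time-varying sequence of arms can strictly outperform \emph{every} fixed arm on a given reward sequence (this is precisely why adaptive algorithms can have negative regret), so $\max_\beta \sum_t r_{\beta t}(w_i)$ need not dominate $\sum_t r_{b_t t}(w_i)$ when $b_t$ is the $w_j$-type's realized (adaptive, mean-based) play. Nothing in the mean-based or clever assumptions rules this out, especially since the $w_j$-type is optimizing $r_{\cdot t}(w_j)$, not $r_{\cdot t}(w_i)$. So the inequality $u_i \ge u_j + (w_i-w_j)x_j$ does not follow from the regret guarantee against $\mathcal{F}$ as you have set it up.

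The paper's route avoids this entirely by deviating to a genuinely fixed arm, namely the arm labeled $w_j$ itself. Its per-round payment is at most $w_j$ times its per-round allocation (by the constraint $p_{t}(b)\in[0,b\cdot a_{t}(b)]$), so its hindsight utility on value-$w_i$ rounds is at least $(w_i-w_j)$ times its allocation there, giving $u_i\ge (w_i-w_j)\,y_j$ where $y_j$ is the average allocation of the \emph{arm} $w_j$. The constraint in terms of $x_j$ then follows from $y_j\ge x_j$, which is where cleverness enters: a clever $w_j$-type only ever pulls arms labeled $\le w_j$, so by round-by-round monotonicity of the allocation rule, arm $w_j$ allocates at least as often as whatever the $w_j$-type actually pulls. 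Your argument can be salvaged by substituting this fixed-arm deviation for the mimicry; as written, it cannot. (Separately, both your sketch and the paper implicitly use a per-value decomposition of the regret guarantee --- that the learner does not beat the best fixed arm by $\Omega(T)$ on the values $k\ne i$ --- which is standard for contextual learners but worth stating; and your treatment of $x_i\ge x_j$ remains a sketch, though the intended exchange argument via monotonicity is the right one.)
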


Intuitively, the first BMSW constraint is the interesting one, which is necessary for the buyer to not regret pulling arm $b_j$ when their value is $w_i$ (again recall this is non-trivial, but we argue this shortly as a special case for general $n$). The second constraint is necessary because the auction must be monotone. The final constraint is necessary because the auction must have a null arm, and because all allocation probabilities must be in $[0,1]$ every round.

\cite{BravermanMSW18} also observe that the expected revenue of an auction $A$ can be computed as a linear function of $X_{ij}^A(\dist,S_1,\ldots, S_n)$ and $U_{ij}^A(\dist,S_1,\ldots, S_n)$ (because revenue $=$ welfare $-$ utility). Therefore, Theorem~\ref{thm:BMSWfeasible} enables a simple LP formulation to find the optimal auction for clever buyers. 

We consider two natural attempts to generalize Theorem~\ref{thm:BMSWfeasible}, and show that both hold only in the `only if' direction. The reason the BMSW constraints don't work verbatim for multiple buyers is that the feasibility constraints are wrong: it is not feasible to (for instance) have each buyer win the item with probability $1$ every round. Indeed, there is only one copy of the item, implying (for instance) that $n\sum_i q_i \cdot x(w_i) \leq 1$, but also stronger conditions. These conditions are known as \emph{Border's constraints} from Theorem~\ref{thm:Border}~\cite{Border91}. 

\begin{proposition}\label{prop:feasible} A tuple $(x,y,u)$ is $n$-buyer auction feasible only if it satisfies the \emph{$n$-buyer BMSW constraints} below. A tuple $(x,u)$ is $n$-buyer auction feasible only if it satisfies the \emph{reduced $n$-buyer BMSW constraints}.\footnote{In fact, this claim holds when replacing mean-based clever buyers with just clever buyers, just like the `only if' part of Theorem~\ref{thm:BMSWfeasible}.}
\begin{align*}
    &\textbf{$n$-buyer BMSW Constraints} &\textbf{Reduced $n$-buyer BMSW Constraints} \\
    & u_i \geq (w_i - w_j) \cdot y_j, \enspace \forall i,j \in [m] : i > j, 
&u_i \geq (w_i - w_j) \cdot x_j, \enspace \forall i,j \in [m]: i > j,\\ 
    & y_i \geq x_i, \enspace \forall i \in [m],\\
    & u_i \geq 0, \enspace \forall i \in [m],
&u_i \geq 0, \enspace \forall i \in [m],\\
    & \vec{x} \text{ satisfies Border's constraints for $n$ buyers,}
&\vec{x} \text{ satisfies Border's constraints for $n$ buyers,}\\
    & \vec{x},\vec{y} \text{ monotone.}
&\vec{x} \text{ monotone.}\\
\end{align*}
\end{proposition}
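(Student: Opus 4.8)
The plan is to prove each of the four constraint families is necessary, for both the full $(x,y,u)$ version and the reduced $(x,u)$ version. Fix a witnessing repeated auction $A$ and clever $\gamma$-mean-based strategies $S_1,\ldots,S_n$, and recall that by anonymity and the i.i.d.\ assumption the quantities $X^A_{ij}, Y^A_{ij}, U^A_{ij}$ do not depend on $i$; I write $x_j, y_j, u_j$ for their common values (up to the $O(\gamma)$ slack, which we send to $0$). First I would handle the \emph{easy structural constraints}: $u_i \geq 0$ because a clever buyer never overbids, so in every round $w_j \cdot a_{it} - p_{it} \geq 0$ pointwise (prices are in $[0, b_i a_{it}]$ and the bid is at most the value), hence the time-averaged utility is nonnegative; $\vec{x}$ (and $\vec{y}$) monotone because the auction's allocation rule is monotone in the bid and, for $\vec x$, because a mean-based clever buyer with a higher value weakly prefers a (weakly) higher arm — this needs the mean-based property to conclude the induced strategy is monotone in value, which is exactly the kind of argument made around Lemma~\ref{lem:firstFSE}; $y_i \geq x_i$ because $y_i$ averages the allocation when the buyer \emph{bids} $w_i$ whereas $x_i$ averages allocation when the buyer \emph{has value} $w_i$ and, being clever, bids at most $w_i$ — monotonicity of $a$ then gives the inequality in expectation; and $\vec{x}$ satisfies Border's constraints because $x_j$ is by definition the interim allocation probability of a buyer of value $w_j$ under the symmetric induced strategy profile, and Theorem~\ref{thm:Border} says any such interim allocation vector arising from a monotone anonymous single-item auction is Border-feasible.

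The substantive step is the \emph{first BMSW constraint}, $u_i \geq (w_i - w_j)\,y_j$ (resp.\ $u_i \geq (w_i-w_j) x_j$ for the reduced version), for $i>j$. Here I would use the no-regret / mean-based property directly: a buyer with value $w_i$ could have deviated to the fixed-bid strategy "whenever my value is $w_i$, pull the arm that a value-$w_j$ buyer pulls." Under this deviation the buyer bids $w_j$ whenever their value is $w_i$; the other buyers' behavior is held fixed (this is the crucial point — the benchmark in the regret definition fixes the other players' bids), so the allocation the deviator receives in those rounds has time-average exactly $y_j$, and the price is at most $w_j$ times that allocation (clever $\Rightarrow$ price $\le$ bid $\cdot$ allocation, and here the deviation bids $w_j < w_i$). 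Hence the deviation's payoff in the value-$w_i$ rounds is at least $(\text{fraction of rounds with value }w_i)\cdot T\cdot(w_i - w_j) y_j - o(T)$ above what it contributes to $\sum_t w_i a_{it} - p_{it}$ under $w_j$-pricing... more carefully: the deviator's per-round utility in those rounds is $w_i a_{it} - p_{it} \ge w_i a_{it} - w_j a_{it} = (w_i - w_j) a_{it}$, averaging to $(w_i-w_j) y_j$. Since $S_i$ is no-regret, its utility in the value-$w_i$ rounds is at least that of the deviation minus $o(T)$; dividing by $T$ and taking $\gamma\to 0$ gives $u_i \ge (w_i - w_j) y_j$. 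For the reduced statement the same argument with the "pull-my-own-arm-but-I'm-secretly-$w_j$" deviation, combined with $y_j \ge x_j$, yields $u_i \ge (w_i-w_j) x_j$; alternatively it follows by substituting $y_j \ge x_j$ into the full version.

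I expect the main obstacle to be making the regret deviation argument fully rigorous in the multi-buyer setting: one must be careful that the deviating buyer's induced allocation really does have time-average $y_j$, which requires that $y_j$ was defined as an average over the \emph{same} distribution of opponent bids that the deviator faces, and that the $o(T)$ regret bound (which holds for every fixed opponent bid sequence) can be integrated against the randomness of the opponents' values without blowing up — this is routine but is where the "for all $\gamma = o(T)$" quantifier in the definition of auction feasibility does its work. A secondary subtlety is the monotonicity of $\vec x$ in value (as opposed to monotonicity of the auction rule in bids), which is not automatic and relies on mean-based buyers with higher values learning to pull weakly higher arms; I would point to the analysis technique already used for the FSE auction rather than reprove it from scratch.
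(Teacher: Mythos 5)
Your proof is correct and follows essentially the same route as the paper's: the key constraint $u_i \geq (w_i - w_j)\cdot y_j$ is obtained by the same no-regret deviation to the fixed arm $w_j$ (whose average allocation is $y_j$ and whose average payment is at most $w_j y_j$), Border feasibility comes from the same ``pick a uniformly random round'' averaging argument, $y_i \geq x_i$ from cleverness plus monotonicity of the allocation rule, and the reduced constraints by substituting $y_j \geq x_j$. The only cosmetic difference is that you derive $u_i \geq 0$ pointwise from the no-overbidding assumption, whereas the paper gets it from the null arm plus the no-regret guarantee (which is why that constraint, in the paper, does not even require cleverness).
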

\begin{proof}
First, we prove the proposition for tuples $(x,y,u)$. Assume that a tuple $(x,y,u)$ is $n$-buyer auction feasible, and consider the auction that matches it. 

We know that $\vec{y}$ must be monotone, because higher arms must award the item more often. Similarly, we know that $\vec{x}$ must be monotone, because whenever two values pull different arms, the higher value must pull a higher arm (again because of monotonicity of the auction). 

We know that $\vec{x}$ must satisfy Border's condition for $n$ buyers, because $\vec{x}$ determines the probability that the buyer \emph{actually gets the item} using their strategy throughout the course of the auction. In other words, the auction ``have the buyers report a strategy for the repeated auction. Run those strategies and pick a uniformly random round. Award the item according to that round.'' is a single-item auction which gives a buyer with value $w_j$ the item with probability exactly $x_j$ (over the randomness in other buyers drawing their values i.i.d.~from $\dist$). Because this is a feasible, monotone auction, it must satisfy Border's constraints.\footnote{Note that we \emph{cannot} say the same about $\vec{y}$ --- an undesirable arm can \emph{promise} the item with probability $1$, as long as no one actually pulls it. As an example, consider running a first-price auction every round. No one would pull arm $w_m$, but doing so would indeed win the item with probability $1$.}

Clearly, $u_i\geq 0$ because buyers can always pull the null arm every round to get $0$ payoff, and they have no regret.

To see that $y_i \geq x_i$, recall that a clever buyer \emph{never overbids}. This means that they always pull an arm no higher than $w_i$. Because the auction is monotone, this means that during every round, arm $i$ gives the item with (weakly) higher probability than the arm that a buyer with value $w_i$ pulls. Because this holds every round, it clearly holds on average as well.

Finally, to see that $u_i \geq (w_i - w_j) \cdot y_j$, observe that buyers with value $w_i$ must not regret pulling arm $w_j$. Recall that arm $w_j$ can charge at most $w_j y_j$ on average, and awards the item on average with probability $y_j$. So the buyer's utility for this arm is at least $(w_i - w_j)$ conditioned on winning the item, and the item is won with probability $y_j$ (meaning their average utility is at least $(w_i -w_j)y_j$). Because the buyer is no-regret, $u_i$ must exceed this. 

This completes the proof for tuples $(x,y,u)$. Observe that we have not used the fact that the buyers are mean-based, just that they are clever. Moreover, we have only used the fact that buyers are clever to conclude $y_i \geq x_i$ (all other constraints hold just by the fact that the buyer guarantees no-regret).

To see the proposition for tuples $(x,u)$, assume that there exists a $y$ such that $(x,y,u)$ is $n$-buyer auction feasible. Then $y_i \geq x_i$, and therefore $u_i \geq (w_i - w_j) \cdot y_j \geq (w_i-w_j) \cdot x_j$, as desired. The other constraints are copied immediately from the fact that $(x,y,u)$ is $n$-buyer auction feasible. This completes the proof for tuples $(x,u)$.
\end{proof}

We next turn to see whether the other direction holds, as in Theorem~\ref{thm:BMSWfeasible} for the single-buyer case. If it did, then we could again write a linear program to find the optimal $n$-buyer feasible auction, because the expected revenue can be written as a function of $(x,u)$. However, we provide an example showing that this extension is \emph{false}. 

\begin{theorem}\label{thm:counterexample} There exist $(x,y,u)$ that satisfy the $n$-buyer BMSW Constraints but are not $n$-buyer auction feasible, and $(x,u)$ that satisfy the Reduced $n$-buyer BMSW Constraints but are not $n$-buyer auction feasible. 
\end{theorem}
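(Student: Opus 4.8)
The plan is to find, for a well-chosen small distribution $\dist$ and a small number of buyers (likely $n=2$ and $m=2$, perhaps with a third value if needed), a tuple $(x,y,u)$ satisfying all the $n$-buyer BMSW constraints for which no repeated auction can realize it against clever mean-based buyers. The key conceptual point — and the reason Proposition~\ref{prop:feasible} does \emph{not} reverse — is that the constraint $y_i \geq x_i$ only records that the arm a value-$w_i$ buyer pulls wins at most as often as arm $w_i$ would \emph{if it were pulled}; it does not capture the interaction between different buyers' strategies. When two buyers simultaneously shade their bids below their values, the item that they ``give up'' by not bidding $w_m$ cannot simply be awarded to a phantom high bid — it must go to one of the (lower) bids actually present. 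So I would look for a $(x,y,u)$ where the $y$-values are large enough to justify (via the no-regret constraint $u_i \ge (w_i-w_j)y_j$) a utility profile $u$ that is \emph{too small}, i.e.\ a revenue that is too large, once one accounts for the fact that the allocation $\vec y$ literally cannot be delivered to the bid $w_j$ in every round when all buyers are clever.

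Concretely, the key steps, in order: (i) Fix $\dist$ on support $\{w_1,w_2\}$ (say $w_1$ small, $w_2=1$, with probabilities $q_1,q_2$) and $n=2$. Write down the full $n$-buyer BMSW polytope explicitly: the Border constraints become $2q_2 x_2 \le 1-q_1^2$ and $2(q_1x_1+q_2x_2)\le 1$, monotonicity gives $x_1\le x_2$, $y_1\le y_2$, $x_i \le y_i$, and the only interesting no-regret constraint is $u_2 \ge (w_2-w_1)y_1$. (ii) Pick a candidate point on this polytope that maximizes (or nearly maximizes) $\mathrm{Rev} = \mathrm{welfare} - \mathrm{utility}$, which pushes $u_2$ down to exactly $(1-w_1)y_1$ and $y_1$ up as large as monotonicity permits (e.g.\ $y_1=y_2$ large) while keeping $x$ Border-feasible. (iii) Prove this point is \emph{not} $n$-buyer auction feasible by a direct argument: suppose an auction $A$ realizes it. Since buyers are clever, a value-$w_2$ buyer never bids above $w_2$, and a value-$w_1$ buyer never bids above $w_1$; by mean-basedness and the single-buyer-style argument (as in Lemma-type reasoning from~\cite{BravermanMSW18}) the buyers converge to pulling one arm per value. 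Then analyze the realized allocation in the two relevant states — both buyers value $w_1$ vs.\ one values $w_2$ — and show that the combination (large $y_1$, hence value-$w_1$ buyers frequently getting the item, together with the monotonicity forcing value-$w_2$ buyers to get it at least as often) violates the item-supply constraint $\sum_i a_{i,t}(\vec b)\le 1$ when aggregated, i.e.\ the \emph{joint} distribution over bids cannot support these interim quantities. The cleanest form of the contradiction: derive a necessary inequality linking $y_1$, $x_1$, $x_2$ that is \emph{strictly stronger} than anything in the $n$-buyer BMSW polytope (an ``interim feasibility under clever play'' inequality), and exhibit the point violating it.

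Finally, (iv) deduce the $(x,u)$ statement from the $(x,y,u)$ statement: given the witnessing $(x,y,u)$, show that \emph{no} choice of $y'$ makes $(x,y',u)$ auction feasible — because any feasible $y'$ would have to satisfy $y'_i\ge x_i$ and $u_i\ge (w_i-w_j)y'_j$, and one argues that the particular $x$ and $u$ chosen force $y'_1$ into exactly the infeasible range from step (iii), or alternatively just pick $(x,u)$ minimal enough that the only $y$ consistent with the reduced constraints is the bad one. I expect the main obstacle to be step (iii): turning the informal ``clever buyers can't deliver the phantom high-bid allocation'' intuition into a clean, provable inequality on the interim quantities, since it requires reasoning about the \emph{joint} behavior of the two mean-based learners (which arm each pulls as a function of value, and how ties/over-awarding are resolved) rather than a single buyer in isolation. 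I would try to sidestep the full generality by exploiting that, after the learning transient, each value maps to a single arm, so the repeated auction collapses to a static Bayesian auction on at most $m$ effective bid levels, at which point Border's theorem applied to the \emph{bid} distribution (not the value distribution) yields the needed extra constraint.
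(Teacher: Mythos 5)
Your plan has a genuine gap at exactly the step you flag as the main obstacle, and I do not think it closes in the setting you propose. The obstruction witnessed by the paper's counterexample is \emph{not} a supply/interim-feasibility violation: the paper's $x^*$ is Border-feasible --- in fact it is deliberately chosen to be an \emph{extreme point} of the Border polytope, which is the key device, because then every single round's allocation must equal $x^*$ exactly (an extreme point cannot be a nontrivial average of Border-feasible points). The contradiction is instead an incentive-chaining argument over $m=4$ support values $w=[1/M,4/M,5/M,10/M]$ with skewed probabilities, arranged so that the value-$w_3$ buyer strictly prefers arm $w_1$ to arm $w_2$ while the value-$w_4$ buyer strictly prefers arm $w_2$ (a ``crossing'' preference pattern impossible for $n=1$). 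One then case-splits on which arm the $w_3$ buyer ends up favoring: if it is an arm labeled $\geq w_2$, the $w_4$ buyer could pull that same arm and get utility $x_2(w_4-w_3)+u_3 > u_4$, contradicting no-regret; if it is an arm in $[w_1,w_2)$, monotonicity plus the pinned-down per-round allocation forces arm $w_2$ to over-deliver on average, again giving $w_4$ too much utility; if it is below $w_1$, then $u_1>0$, contradicting $u_1=0$. With $m=2$ there is only one nontrivial no-regret constraint ($u_2\geq(w_2-w_1)y_1$) and no room for a crossing pattern, so there is no obvious reason any point of the two-value polytope is infeasible; your step (iii) asks for a new valid inequality that you have not produced, and I doubt it exists there.

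Two further mismatches worth noting. First, your deduction order in step (iv) is harder than necessary: you build $(x,y,u)$ and then must rule out \emph{every} alternative $y'$, a universally quantified claim you do not resolve. The paper goes the other way: it constructs the $(x,u)$ counterexample for the reduced constraints directly (with $u_i=\min_{j\le i}(w_i-w_j)x_j$ making them tight), and the $(x,y,u)$ counterexample is then free by setting $y=x$. Second, your idea of applying Border's theorem to the \emph{bid} distribution after the learners collapse to one arm per value is a real technique in this paper --- but it is the engine of the non-convexity result (Theorem~\ref{thm:nonconvex} via Claim~\ref{claimSameBidAlloc}), not of Theorem~\ref{thm:counterexample}. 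If you want to pursue your route, that appendix is the place to borrow from, but you would still need to exhibit a concrete polytope point violating the resulting inequality, which your proposal leaves open.
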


We provide a proof of Theorem~\ref{thm:counterexample} in Appendix~\ref{app:UB}, and overview the key properties of our construction for $(x,u)$ here, and highlight why such a construction cannot arise when $n=1$ (aside from the fact that it contradicts Theorem~\ref{thm:BMSWfeasible}). Our construction for $(x,y,u)$ simply takes this construction and sets $x=y$. 

One aspect of our construction is that $x$ is an \emph{extreme point} of the Border polytope. That is, it cannot be written as a convex combination of $\geq 2$ distinct points that also satisfy the Border constraints. This immediately implies that any auction which matches $(x,u)$ \emph{must} have buyers with value $w_j$ receive the item with probability $x_j$ in \emph{every single round} (except for $o(T)$). This is because $x$ is a convex combination of the allocation probabilities in each fixed round, and the allocation probabilities in each fixed round must be in the Border polytope. Because ${x}$ is an extreme point, all the allocations during each round must be $x$ itself.

The second key aspect of our construction is that we will have the buyer with value $w_4$ strictly prefer arm $w_2$ to all other arms, and the buyer with value $w_3$ strictly prefer arm $w_1$ to all other arms. We defer to the full proof an explanation for why this aspect drives our example, but we elaborate here on why these two key properties can't coexist when $n=1$. When $n=1$, the extreme points of the Border polytope are exceptionally simple: there exists a $j$ such that $x_\ell = 1$ for all $\ell \geq j$, and $x_\ell = 0$ for all $\ell < j$. For any $x$ with this property, it cannot \emph{also} be that $w_4$ strictly prefers $w_2$ to all other arms, while $w_3$ strictly prefers $w_1$. Indeed, if $j =1$, then all buyers weakly prefer $w_1$ to all other arms. If instead $j\geq 2$, then a buyer with value $w_3$ weakly prefers arm $w_2$ to arm $w_1$. But with two buyers, the Border polytope is rich enough to have both properties simultaneously.

We defer further details to Appendix~\ref{app:UB}, and quickly conclude with the main point of this section: there is a natural extension of the BMSW conditions to $n$ buyers, which are necessary but \emph{not sufficient} to characterize tuples which are $n$-buyer auction feasible.

\subsection{Uniform Auctions with Declining Reserves}\label{sec:UPYB}
In this section, we consider the following possibility: although $n$-buyer BMSW constraints don't characterize the $n$-buyer feasible auctions, it is conceivable (although perhaps unlikely) that the Linear Programming solution (optimizing expected revenue subject to $n$-buyer BMSW constraints) happens to always yield an $n$-buyer feasible auction. The reason this is a priori possible is because the objective function and BMSW constraints are related: the objective function depends on $\dist$, and so do the $n$-buyer Border constraints (this is another way in which $n$-buyer and $1$-buyer auctions differ: $1$-buyer Border constraints don't depend on $\dist$). 

For the single-buyer case,~\cite{BravermanMSW18} shows that not only is every tuple satisfying the BMSW constraints $1$-buyer auction feasible, but the auction witnessing this is particularly simple. First, whenever the buyer gets the item, they pay their bid (and in each round, each arm gives the item with probability $0$ or $1$). Second, the minimum winning bid is declining over time. We generalize both definitions below to multiple buyers, and show a connection between these auctions and certain types of tuples which satisfy the $n$-buyer BMSW conditions.

\begin{definition}[Pay-your-bid] A repeated auction is \textit{pay-your-bid} if $p_{i,t}(\vec{b}) = b_i \cdot a_{i,t}(\vec{b})$ for all $i,t$.
\end{definition}

\begin{definition}[Uniform Auction with Declining Reserve]
A repeated auction is a \emph{uniform auction with declining reserve} when: (a) there exists a reserve $r_t$ for every round $t$ which is monotonically decreasing in $t$, and (b) in each round the item is awarded to a uniformly random buyer among those with $b_{it} \geq r_t$. 
\end{definition}

\begin{definition}[Correspondence]
We call $(x, y, u)$ the \emph{corresponding tuple} of repeated auction $A$ if for $0$-mean-based strategies $S_1,\ldots, S_n$ and all $i, j$: $X_{ij}^A(\dist, S_1, \ldots, S_n) = x_j;$ $ Y_{ij}^A(\dist, S_1, \ldots, S_n) = y_j;
U_{ij}^A(\dist, S_1, \ldots, S_n) = u_j$.\footnote{Observe that this is always well-defined, as the unique $0$-mean-based strategy is Follow-the-leader.}
\end{definition}

In this language,~\cite{BravermanMSW18} shows that when $n=1$, every tuple which satisfies the BMSW conditions can be implemented as a pay-your-bid uniform auction with declining reserve (and this establishes the `if' direction of Theorem~\ref{thm:BMSWfeasible}). Due to Theorem~\ref{thm:counterexample}, this claim clearly cannot extend to $n> 1$. However, we show that certain kinds of natural tuples can all be implemented as pay-your-bid uniform auctions with declining reserves.

\begin{theorem}\label{thm:uniform} Consider any repeated auction $A$ and its corresponding tuple $(x,y,u)$. If $(x,y,u)$ satisfies the $n$-buyer BMSW constraints, and $x=y$, and $A$ is pay-your-bid, then $A$ is a uniform auction with declining reserve.
\end{theorem}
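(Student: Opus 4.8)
The plan is to unpack what the hypotheses ``$(x,y,u)$ satisfies the $n$-buyer BMSW constraints,'' ``$x=y$,'' and ``$A$ is pay-your-bid'' force about the behavior of $0$-mean-based (i.e.\ follow-the-leader) buyers in $A$, and then read off that $A$ must award the item via a declining, uniform reserve rule. The starting point is that under a pay-your-bid auction, the interim utility of a buyer with value $w_i$ who pulls arm $w_j$ is at most $(w_i - w_j)\cdot y_j$, with equality essentially forced because the buyer is mean-based/no-regret and $u_i \ge (w_i - w_j)y_j$ is tight (this is the first BMSW constraint, and $x=y$ makes it tight simultaneously for the arm a $w_i$-buyer actually pulls). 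So I would first argue that a $0$-mean-based buyer with value $w_i$ pulls, on (almost) every round, an arm $b$ whose current-round allocation probability equals $x_i=y_i$; combined with pay-your-bid this pins down both the per-round allocation and the per-round payment of each value.

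Next I would translate ``the per-round allocation probabilities of the $m$ values form a monotone vector summing appropriately'' into a reserve structure \emph{within a single round}: in a pay-your-bid auction, if a buyer bidding $w_i$ wins with the same probability as the next-higher bid $w_{i+1}$ whenever they are the unique high bidder, monotonicity of $a$ plus the sum constraint $\sum_i a_{i,t}(\vec b)\le 1$ forces, round by round, that there is a threshold $r_t$ such that bids $\ge r_t$ are treated identically (each high bidder wins with probability $1/(\text{number of high bidders})$) and bids $< r_t$ win with probability $0$. The key technical point is that $x=y$ forces $a_{i,t}(w_j;\vec b_{-i})$ to not depend on \emph{which} arm at or above $r_t$ is pulled — i.e.\ the per-round auction really is ``uniform among bidders above $r_t$'' and not some more exotic monotone allocation — and that mean-based buyers therefore genuinely congregate at arm $w_i$ (value $w_i$), making the aggregate $x_i$ equal to $w_i$'s actual win rate in a uniform-reserve auction. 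I would get the reserve $r_t$ explicitly as the lowest bid that wins with positive probability in round $t$.

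Then I would establish that $r_t$ is monotonically \emph{decreasing} in $t$. This should follow from the no-regret/mean-based dynamics: the cumulative payoff of each value for each arm evolves as in Lemma~\ref{lem:firstFSE}-style bookkeeping, and a value $w_i$ migrates to lower and lower arms over time (never back up) because once an arm becomes unprofitable relative to a lower one it stays that way — pay-your-bid plus monotone allocations means the only way aggregate allocations $x$ can be consistent with the BMSW-tight utilities is if the set of ``winning bids'' shrinks from above over time, equivalently $r_t$ decreases. I would phrase this as: if $r_t$ ever increased, some value $w_i$ that was winning would stop winning, its utility would drop below $(w_i-w_j)x_j$ for the arm it now must pull, contradicting that $A$'s corresponding tuple satisfies $u_i \ge (w_i-w_j)x_j$ with the equality structure forced above.

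The main obstacle I expect is the middle step: ruling out non-uniform monotone per-round allocations. Monotonicity and the simplex constraint alone do not force a uniform-reserve rule — e.g.\ a per-round auction could favor one high bidder over another equally high bidder, or award partial probability across a band of bids. The leverage must come from combining $x=y$ (which says the arm a clever buyer pulls wins exactly as often as the corresponding bid) with the fact that \emph{all} buyers are symmetric mean-based learners and the auction is anonymous, so in aggregate every value $w_i$ really does win at rate $x_i$; matching this against the Border-extreme-point-style rigidity (any per-round allocation consistent with the aggregate must itself realize the aggregate pattern) should collapse the per-round rule to uniform-above-reserve. Nailing down that this collapse is \emph{forced} round-by-round, rather than merely on average, is where I anticipate the real work, and I would lean on the anonymity constraint and the monotonicity of $p$ together with pay-your-bid to close it.
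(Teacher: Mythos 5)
Your skeleton (first show per-round uniform allocation above a threshold, then show the threshold declines) matches the paper's, but the engine that makes both steps work is missing, and you explicitly flag the missing piece as the place where you expect "the real work." The paper's mechanism is the following chain, which your proposal does not contain and in one place contradicts. Under pay-your-bid, a buyer with value $w_j$ gets \emph{exactly zero} utility from arm $w_j$ in every round, so as soon as any lower arm has ever awarded the item, a $0$-mean-based $w_j$-buyer strictly prefers a lower arm and never pulls arm $w_j$ again. Hence at most one value $j^*$ (the lowest winning one) pulls its own arm; every higher value pulls a strictly lower arm. Your claim that "mean-based buyers therefore genuinely congregate at arm $w_i$" is backwards — the whole point is that they migrate \emph{below} their value. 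Now $x=y$ does the rest: in each round, arm $w_j$ wins with probability at least that of the lower arm actually pulled by the $w_j$-valued buyer (monotonicity), so equality of the \emph{averages} $x_j=y_j$ forces equality in every round; chaining this down from $w_j$ to $w_{j-1}$'s arm, and so on to $j^*$, collapses all winning arms to a single common win probability. No appeal to anonymity, Border extreme points, or tightness of the BMSW constraints is needed (and tightness of $u_i\ge (w_i-w_j)y_j$ is not forced by the hypotheses, so you should not assume it).

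The declining-reserve step has the same problem: you argue that a reserve increase would make some value's aggregate utility "drop below $(w_i-w_j)x_j$," but the BMSW constraints are averages over the whole horizon, and a temporary reserve increase need not violate them. The paper instead argues per-round: if some round before $s$ had reserve below $r_s$, then the value-$r_s$ buyer has strictly positive cumulative utility for an arm below $r_s$ (versus zero for its own arm under pay-your-bid), so in round $s$ it pulls an arm $w_j<r_s$; but $x=y$ forces that arm to win in round $s$, contradicting $r_s$ being the reserve. You would need to replace both of your appeals to the aggregate constraints with these per-round mean-based-behavior arguments for the proof to go through.
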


Before proving Theorem~\ref{thm:uniform}, we highlight why these are natural tuples to consider. First, recall the BMSW constraint $u_i \geq (w_i - w_j) \cdot y_j$. In order for this constraint to possibly be tight, it must be that arm $w_j$ charges $w_j$ in every round (i.e. the auction is pay-your-bid). There is also a BMSW constraint $y_i \geq x_i$. Clearly, this constraint is only tight when $y_i = x_i$. So such auctions naturally capture certain extreme points of the $n$-buyer BMSW polytope. Similarly, for the reduced $n$-buyer BMSW constraints, observe that in order for $u_i \geq (w_i - w_j)\cdot x_j$ to be tight it must \emph{both} be the case that the auction is pay-your-bid and that $y_j = x_j$. So again, these conditions form a natural subset of extreme points of the reduced $n$-buyer BMSW polytope.

\begin{proof}[Proof of Theorem~\ref{thm:uniform}]
We first prove that $A$ has the following ``uniform allocation'' property: during every round, let $S$ denote the set of arms which award the item with non-zero probability. Then the item is awarded uniformly at random to a buyer who pulled an arm in $S$. To see this, consider what happens when $0$-mean based buyers participate in auction $A$. Recall first that because the auction is pay-your-bid, that a buyer with value $w_j$ gets $0$ cumulative utility from bidding $w_j$. This means that as long as any arm $w_\ell$, for $\ell < j$, has historically awarded the item in $>0$ rounds, a $0$-mean-based buyer will never pull arm $j$. This means that there is at most one value $w_j$ who would pull their own arm and have a non-zero probability of winning the item. 

So let $j^*$ denote the lowest-indexed value that pulls with non-negligible probability an arm that wins the item with non-zero probability. Then all values $w_j$, for $j > j^*$ also pull an arm that wins the item with non-zero probability. And moreover, all such $w_j$ pull an arm $w_\ell$ for $\ell < j$. Now, we can use the fact that $x=y$ to conclude that the bid $w_j$ and $w_\ell$ must win the item with identical probability. To see this, observe that in every round, arm $w_j$ wins the item with probability at least as much as the buyer with value $w_j$ (because buyers never overbid). Therefore, in order to possibly have $x=y$, it must be that these two probabilities are the same in every round. But now, we can chain these claims together: value $w_{j^*+1}$ must pull arm $w_{j^*}$, so they must award the item with the same probability. In general, for all $j > j^*$, value $w_j$ must pull an arm $w_\ell$ for $\ell \in [j^*,j-1]$. Therefore, a proof by induction concludes that all values $w_j$ for $j > j^*$ must pull an arm that gives them the same winning probability as $w_{j^*}$, and $A$ must have the uniform allocation property. 

Now that $A$ has the uniform allocation property, there is a well-defined reserve in each round (the minimum bid that can win the item). The remaining task is to show the reserve decreases over time. 

Consider a round $s$ with reserve $r_s$, and assume for contradiction that we have at least one round with a reserve $< r_s$. Then consider when the buyer has value ${r_s}$ during round $s$. Because in some previous round, the reserve was $< r_s$, we know that the buyer with value ${r_s}$ has cumulative utility $>0$ for the arm immediately below $r_s$, and therefore does not pulls their own arm (which gives $0$ cumulative utility because the auction is pay-your-bid). Therefore, during round $s$, a buyer with value ${r_s}$ must pull an arm $w_j < r_s$. Because $x=y$, arm $w_j$ must also be above the reserve, contradicting that $w_j < r_s$ but $r_s$ is the reserve.

The first half of the proof establishes that the auction must be a uniform auction, and the second half establishes that the reserve must be monotonically declining.
\end{proof}

With Theorem~\ref{thm:uniform} in mind, another possible avenue towards characterizing optimal $n$-buyer feasible auctions would be through pay-your-bid uniform auctions with declining reserves. To this end, we first show that the optimal pay-your-bid uniform auction with declining reserve can be found by a linear program. However, we also show that examples exist where the optimal $n$-buyer feasible auction strictly outperforms the best pay-your-bid uniform auction with declining reserve. Recall again that Theorem~\ref{thm:BMSWfeasible} establishes that the best $1$-buyer feasible auction is always a pay-your-bid uniform auction with declining reserve. The proof is in Appendix~\ref{app:UPYB}.

\begin{theorem}\label{thm:uniform2} The optimal{\footnote{By optimal we mean the auction achieves the best revenue if all buyers run $0$-regret algorithms. It is easy to see that when buyers have $\gamma$-regret, the revenue is within $O(n \gamma)$ of the revenue when buyers have $0$-regret.}}  pay-your-bid uniform auction with declining reserve can be found by a linear program of size $\poly(m)$. However, there exist $2$-buyer instances where the optimal $2$-buyer feasible auction strictly outperforms the best pay-your-bid uniform auction with declining reserve.
\end{theorem}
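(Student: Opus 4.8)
The statement has two parts. First, we must show the optimal pay-your-bid uniform auction with declining reserve can be found by a polynomial-size LP; second, we must exhibit a 2-buyer instance where the optimal 2-buyer feasible auction strictly beats every such uniform auction. I will handle them in that order.

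For the LP part, the plan is to observe that a pay-your-bid uniform auction with declining reserve is essentially parametrized by how much "time" (i.e., what fraction of rounds) the reserve spends at each of the $m$ possible reserve levels $w_1 < \dots < w_m$ (plus possibly a level above $w_m$, meaning no sale). Let $z_k$ denote the fraction of rounds in which the reserve equals $w_k$; then $\sum_k z_k \le 1$, $z_k \ge 0$. Given the reserve level, the induced interim allocation $x_j$ for a buyer with value $w_j$, and the interim payment, are determined: when the reserve is $w_k$ with $k \le j$, a buyer of value $w_j$ plays bid $w_j$ (the declining-reserve structure, together with pay-your-bid and $0$-regret Follow-the-Leader behavior, forces each value to pull its own arm once the reserve has dropped to it, and to sit out while the reserve is above it) and wins uniformly among the buyers who clear the reserve; this winning probability and the corresponding pay-your-bid payment $w_j$ are explicit functions of $k$, $j$, and $\dist$ that can be precomputed. (These facts are exactly what Theorem~\ref{thm:uniform} and its proof give us.) So $x_j$, $u_j$, and the revenue contribution are all \emph{linear} in the $z_k$'s, and the constraints (monotonicity is automatic; $\sum z_k \le 1$; $z_k \ge 0$; and, crucially, that the resulting $x$ need not even be separately constrained since it is automatically Border-feasible, being a time-average of genuine per-round allocations) are linear in the $z_k$'s as well. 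Maximizing revenue $= $ welfare $-$ utility, itself linear in $x$ and $u$, over this polytope in $\R^{m+1}$ is an LP of size $\poly(m)$. The footnote's remark about $\gamma$-regret buyers is a standard robustness observation (Follow-the-Leader is the $0$-mean-based strategy, and $\gamma$-perturbations move all the averages by $O(\gamma)$), so it does not affect the LP.

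For the separation part, the plan is to build a small explicit instance — I expect $m = 4$ (or perhaps $m=3$) values with carefully chosen probabilities $q_j$ — and to reuse the structure from Theorem~\ref{thm:counterexample}: pick an allocation vector $x$ that is an extreme point of the 2-buyer Border polytope but is \emph{not} of the "threshold" form $x_\ell \in \{0,1\}$ monotone in $\ell$. Any pay-your-bid uniform auction with declining reserve produces, for $0$-regret buyers, an allocation that \emph{is} of threshold form in each round (the reserve is a single number), and hence its time-averaged $x$ lies in the convex hull of threshold vectors — a strict subpolytope of the Border polytope. The idea is to choose $\dist$ so that the revenue-optimal point of the reduced $n$-buyer BMSW polytope (which is achievable by \emph{some} feasible auction, e.g.\ via a Border-implementing mechanism composed with the pay-your-bid utility-extraction of~\cite{BravermanMSW18}-style constructions, or more simply just by exhibiting an auction that realizes it) sits at such a non-threshold extreme point, and then to check that the best revenue attainable over the convex hull of threshold allocations (which is what the LP from part one optimizes) is strictly smaller. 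Concretely I would: (i) write down the 2-buyer Border polytope for the chosen $m$; (ii) identify a non-threshold extreme point $x$ and the utilities $u_j = \max_{i>j}(w_i-w_j)x_j$ forced by tight BMSW constraints; (iii) verify this $(x,u)$ (with $y=x$) is actually $2$-buyer auction feasible by displaying the auction — the natural candidate is a per-round Border-feasible allocation realizing $x$ made pay-your-bid, combined with declining-reserve-like bookkeeping, exactly the construction behind the `if' direction we are told exists in spirit; (iv) compute its revenue; and (v) solve (or bound) the part-one LP for this $\dist$ and show the optimum is strictly below the revenue from (iv).

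The main obstacle is step (iii)–(v) of the separation: one must simultaneously ensure the target non-threshold point is genuinely implementable for clever mean-based buyers (not just BMSW-feasible, since Theorem~\ref{thm:counterexample} warns these differ) \emph{and} that no declining-reserve uniform auction matches its revenue. Getting a clean closed-form instance where both the "good" auction's revenue and the LP optimum are exactly computable — so that the strict inequality is transparent rather than needing numerical solution of the LP — is the delicate part; I expect to spend most of the effort choosing $m$ and the $q_j$'s so the arithmetic collapses, and verifying implementability by exhibiting the auction explicitly rather than appealing to a general sufficiency theorem (which we do not have for $n>1$).
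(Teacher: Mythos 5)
Your plan for the LP half matches the paper's: parametrize the declining reserve by the fraction of rounds spent at each reserve level $w_k$, observe that the interim allocation of a $0$-regret buyer with value $w_j$ is then a fixed linear combination $\sum_{k \le j}\lambda_k \E[1/(1+H_k^{n-1})]$ of those fractions, and maximize revenue $=$ welfare $-$ utility subject to $\sum_k \lambda_k \le 1$, nonnegativity, and the BMSW utility constraints. That part is fine.

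For the separation, however, your route has a genuine unresolved gap, and it is one you yourself flag: step (iii) requires certifying that a non-threshold extreme point of the $2$-buyer Border polytope is actually implementable for clever mean-based buyers, and there is no sufficiency theorem for $n>1$ to appeal to --- indeed Theorem~\ref{thm:counterexample} shows that Border-feasibility plus the BMSW constraints is \emph{not} sufficient, and the counterexample there is built precisely from a non-threshold Border extreme point. So the very objects your plan wants to use as witnesses are the ones most at risk of being non-implementable, and "exhibiting the auction explicitly" is the entire difficulty, not a detail. The paper avoids all of this: it takes $\dist$ uniform on $\{1/4,2/4,3/4,1\}$ with $n=2$, solves the part-one LP to find that the best pay-your-bid uniform auction with declining reserve is a fixed reserve of $3/4$ every round (revenue $9/16$), and then compares it pointwise to the repeated \emph{second-price} auction with the same fixed reserve $3/4$. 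The latter is trivially a feasible repeated auction (it is truthful and static, so clever mean-based buyers bid truthfully), it weakly dominates the uniform-price auction on every realized value profile, and it strictly beats it when both buyers have value $1$ (revenue $1$ versus $3/4$). No Border extreme points, no implementability certification, no tight BMSW constraints are needed. If you want to complete your write-up, I would replace your steps (ii)--(iv) with this kind of witness: any easily-implementable auction (e.g.\ second-price with a reserve) that beats the LP optimum suffices, since the optimal $2$-buyer feasible auction is at least as good as any particular feasible auction.
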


\subsection{Non-convexity of n-buyer Feasible Auctions}\label{sec:nonconvex}
Finally, we consider the possibility that while the $n$-buyer BMSW constraints don't capture the space of $n$-buyer feasible auctions, perhaps some other compact, convex space does. This too is not the case, as we show that the space of $n$-buyer feasible triples is non-convex (subject to one technical restriction).

\begin{theorem}\label{thm:nonconvex} Let $P$ denote the set of all $(x,y,u)$ that are $n$-buyer feasible auctions \emph{where the bid space is equal to the support of $\dist$}. Then $P$ is not necessarily convex, even when $n=2$. 
\end{theorem}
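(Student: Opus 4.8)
The plan is to exhibit one explicit instance — a distribution $\dist$ with small support and $n=2$ — together with two triples $\tau_1=(x^1,y^1,u^1)$ and $\tau_2=(x^2,y^2,u^2)$ that are both $2$-buyer auction feasible with bid space equal to $\mathrm{supp}(\dist)$, and to show that their midpoint $\tau:=\tfrac12(\tau_1+\tau_2)$ is not. Feasibility of $\tau_1,\tau_2$ is the easy half: each is witnessed by a simple, explicitly described repeated auction — e.g.\ a second-price (or pay-your-bid uniform) auction whose reserve is fixed in $A_1$ and takes two distinct values over time in $A_2$ — for which the behaviour of clever mean-based learners is tightly pinned down, and one verifies by a BMSW-style argument (as in the proof of Theorem~\ref{thm:uniform}) that every clever $\gamma$-mean-based profile produces the claimed $(x,y,u)$ up to $O(\gamma)$. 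The real content is the infeasibility of $\tau$.

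To show $\tau\notin P$, I would prove that no repeated auction $A$ realizes $\tau$ even against the deterministic clever Follow-the-leader profile; since $0=o(T)$ and the unique $0$-mean-based strategy is Follow-the-leader, this already disqualifies $A$ as a witness. The engine is the extreme-point rigidity already used for Theorem~\ref{thm:counterexample}. I will arrange the construction so that $x^1=x^2$, hence $x(\tau)=x^1$, and so that this common vector $x^*$ is an \emph{extreme point} of the $2$-buyer Border polytope. Then any $A$ realizing $\tau$ must, in all but $o(T)$ rounds, allocate exactly $x^*$: the average per-round allocation equals $x^*$, each per-round allocation lies in the Border polytope, and $x^*$ is extreme. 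Now the bid-space restriction is essential: with only the arms $0,w_1,\dots,w_m$ available and clever (never-overbidding) learners, ``the per-round allocation equals $x^*$'' forces in almost every round a rigid assignment of value-types to arms — a threshold (``reserve'') among the arms together with which value owns which active arm. The only remaining freedom is how this reserve/assignment moves across the $T$ rounds and what the (pay-your-bid-constrained) payments on the active arms are.

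The heart of the proof is then to show these residual degrees of freedom cannot produce $u(\tau)$. I would (a) show that under Follow-the-leader the reserve schedule must be \emph{monotone} in $t$ — the same declining-reserve phenomenon as in Theorem~\ref{thm:uniform}, since a clever learner bidding truthfully keeps nonnegative cumulative utility and abandons its own arm the instant a lower arm has ever won — so that up to $o(T)$ the schedule is a step function with finitely many break points; (b) observe that, given the allocation is $x^*$ every round and the active arm is effectively pay-your-bid, the indifference/no-regret conditions force the utility $u_j$ of each value-$w_j$ learner, so the set of achievable $u$ (given $x=x^*$ and bid space $\mathrm{supp}(\dist)$) is a finite union of lower-dimensional pieces, one per monotone schedule and per ``which value owns which active arm''; and (c) check, for the chosen numbers, that $u(\tau)$ — the average of the two corner utilities $u^1,u^2$, which sit on two different pieces — lands strictly outside this union. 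This last point is the non-linearity: concatenating $A_1$ for the first half of rounds and $A_2$ for the second half does not yield $\tfrac12(\tau_1+\tau_2)$, because the learners' round-$(T/2{+}1)$ choices depend on the accumulated utilities from the first half. The main obstacle I expect is exactly (a)--(c): carrying out the finite-but-fiddly case analysis over all monotone schedules and all value-to-active-arm assignments, while choosing $\dist$, $x^*$, and $A_1,A_2$ so that the midpoint provably escapes every case — and so that $\tau_1\ne\tau_2$ and both are feasible against \emph{every} clever mean-based profile, not merely Follow-the-leader. A secondary subtlety is tracking the $\pm O(\gamma)$ and $o(T)$ slack so that ``$u(\tau)$ lies outside the union'' is a robust rather than borderline separation.
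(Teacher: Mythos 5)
Your overall skeleton is the same as the paper's: pick a small-support $\dist$ with $n=2$, exhibit two feasible triples sharing the same $x$, and use per-round rigidity of the allocation to kill the midpoint. The paper's instance is $w=[1,3,4,7,30]$, $q$ uniform, with $x^a=x^b$ but $y^a\neq y^b$; rigidity comes not from $x^*$ being a Border extreme point but from the single fact that $x_5=9/10$ equals the per-round maximum allocation for the top type (Claim~\ref{claimSameBidAlloc} with $S=\{w_5\}$), so averaging to $9/10$ forces exactly $9/10$ in every round. The contradiction is then derived entirely from the midpoint's $y$: computing $(w_i-w_j)y_j$ shows that at the end of the auction both $w_5$ and $w_4$ strictly prefer arm $w_3$, and Claim~\ref{claimSameBidAlloc} with $S=\{w_4,w_5\}$ caps the allocation of any arm on which two types pool at $4/5<9/10$. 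No analysis of payments, reserves, or utilities is needed.

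The gap in your proposal is the decisive step (a)--(c), which you defer and which, as sketched, would not go through. First, the monotone-reserve claim you import from Theorem~\ref{thm:uniform} uses both pay-your-bid and $x=y$ in an essential way (the argument there needs a buyer with value equal to the reserve to have exactly zero cumulative utility on its own arm); a general witness auction for the midpoint satisfies neither, so you cannot restrict to step-function reserve schedules. Second, your separation targets $u$, but in any construction where $x^1=x^2$ and the corners differ only through $y$ (which is forced once the per-round allocation is rigid), the midpoint's $u$ need not escape the achievable set --- the paper's midpoint is infeasible because of what its $y$ forces about \emph{which arms the types pool on}, not because of its utility vector. Finally, no distribution, no $\tau_1,\tau_2$, and no verification are actually supplied, so the counterexample itself --- the entire content of the theorem --- is missing. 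To repair the argument along the paper's lines, you would drop (a)--(c) and instead (i) choose $y^a,y^b$ so that the midpoint $y$ reverses a preference and makes two adjacent high types favor a common arm late in the auction, and (ii) invoke a pooling bound of the form of Claim~\ref{claimSameBidAlloc} against a coordinate of $x$ that sits at its per-round maximum.
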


We prove Theorem~\ref{thm:nonconvex} in Appendix~\ref{app:nonconvex} by construction, and defer technical details. We note that Theorem~\ref{thm:nonconvex} provides an alternative proof that the $n$-buyer BMSW constraints do not capture $n$-buyer feasible auctions (if they did, the space would be convex). It does not imply anything regarding the reduced $n$-buyer BMSW constraints.

\section{Conclusion}\label{sec:conclusion}
We study the repeated sale of a single item to multiple mean-based no-regret buyers. Our main result designs an auction which extracts the full expected value as revenue from the buyers (Section~\ref{sec:FSE}). Our other results identify several formal barriers to extending the analysis of~\cite{BravermanMSW18} to multiple clever, mean-based buyers (Section~\ref{sec:UB}). 

While our work resolves the multi-buyer study of mean-based buyers, there are many natural directions in the study of clever mean-based buyers. For example:

\begin{itemize}
\item Our work shows that the BMSW constraints are necessary but not sufficient for clever mean-based buyers. Are there other classes of clever no-regret learners for which variants of the BMSW constraints are necessary and sufficient? Or at least for which the space of implementable auctions is convex?
\item We show that natural generalizations of the single-buyer BMSW results to multiple clever mean-based buyers fail. Is there a significantly different approach to characterize optimal auctions for clever mean-based buyers?
\item Finally, can simple auctions approximate the optimum? Note that even for a single buyer, the BMSW auction is unboundedly better than any truthful auction. But perhaps the pay-your-bid uniform auction with declining reserve is a natural starting point (as it is optimal for a single buyer).
\end{itemize}

In addition,~\cite{BravermanMSW18} provides an online learning algorithm that is \emph{not} mean-based, where the seller's optimal achievable revenue against such learners is simply that of Myerson's optimal auction per-round. Given our main result, it is also important to further study \emph{classes} of online learning algorithms which are not mean-based, and determine which classes also allow for full surplus extraction.

\bibliographystyle{alpha}
\bibliography{MasterBib}

\appendix 
\section{Example of the FSE Auction} \label{app:example}
\begin{figure}[h!]
    \centering
    \tikzset{every picture/.style={line width=0.75pt}} 

\begin{tikzpicture}[x=0.7pt,y=0.7pt,yscale=-1,xscale=1]

\draw   (48,461.33) -- (605,461.33) -- (605,486) -- (48,486) -- cycle ;
\draw  [fill={rgb, 255:red, 255; green, 255; blue, 255 }  ,fill opacity=1 ] (84,461.33) -- (120,461.33) -- (120,486) -- (84,486) -- cycle ;
\draw  [fill={rgb, 255:red, 245; green, 166; blue, 35 }  ,fill opacity=1 ] (48,461.33) -- (84,461.33) -- (84,486) -- (48,486) -- cycle ;
\draw   (50,490.33) .. controls (50,494.72) and (52.2,496.92) .. (56.59,496.92) -- (56.59,496.92) .. controls (62.86,496.92) and (66,499.12) .. (66,503.51) .. controls (66,499.12) and (69.14,496.92) .. (75.41,496.92)(72.59,496.92) -- (75.41,496.92) .. controls (79.8,496.92) and (82,494.72) .. (82,490.33) ;
\draw   (86,489.33) .. controls (86,494) and (88.33,496.33) .. (93,496.33) -- (334.5,496.33) .. controls (341.17,496.33) and (344.5,498.66) .. (344.5,503.33) .. controls (344.5,498.66) and (347.83,496.33) .. (354.5,496.33)(351.5,496.33) -- (596,496.33) .. controls (600.67,496.33) and (603,494) .. (603,489.33) ;
\draw   (120,461.33) -- (156,461.33) -- (156,486) -- (120,486) -- cycle ;
\draw   (156,461.33) -- (192,461.33) -- (192,486) -- (156,486) -- cycle ;
\draw   (192,461.33) -- (228,461.33) -- (228,486) -- (192,486) -- cycle ;
\draw   (228,461.33) -- (264,461.33) -- (264,486) -- (228,486) -- cycle ;
\draw   (569,461.33) -- (605,461.33) -- (605,486) -- (569,486) -- cycle ;
\draw    (64,460.33) -- (71.15,433.43) ;
\draw [shift={(71.67,431.5)}, rotate = 464.89] [color={rgb, 255:red, 0; green, 0; blue, 0 }  ][line width=0.75]    (10.93,-3.29) .. controls (6.95,-1.4) and (3.31,-0.3) .. (0,0) .. controls (3.31,0.3) and (6.95,1.4) .. (10.93,3.29)   ;

\draw (347,462) node [anchor=north west][inner sep=0.75pt]   [align=left] {$\displaystyle \cdots $};
\draw (45,508) node [anchor=north west][inner sep=0.75pt]   [align=left] {active arms};
\draw (340,501) node [anchor=north west][inner sep=0.75pt]   [align=left] {dormant arms};
\draw (57,464) node [anchor=north west][inner sep=0.75pt]  [font=\footnotesize]  {$b_{P}$};
\draw (91,464.33) node [anchor=north west][inner sep=0.75pt]  [font=\footnotesize]  {$b_{P-1}$};
\draw (125,464.33) node [anchor=north west][inner sep=0.75pt]  [font=\footnotesize]  {$b_{P-2}$};
\draw (161,464.33) node [anchor=north west][inner sep=0.75pt]  [font=\footnotesize]  {$b_{P-3}$};
\draw (197,464.33) node [anchor=north west][inner sep=0.75pt]  [font=\footnotesize]  {$b_{P-4}$};
\draw (233,465.33) node [anchor=north west][inner sep=0.75pt]  [font=\footnotesize]  {$b_{P-5}$};
\draw (578,464.33) node [anchor=north west][inner sep=0.75pt]  [font=\footnotesize]  {$b_{1}$};
\draw (42,410) node [anchor=north west][inner sep=0.75pt]   [align=left] {pulled by $\displaystyle w_{1} ,\ w_{2} ,\ w_{3} ,\ w_{4}$};
\draw (42,392) node [anchor=north west][inner sep=0.75pt]   [align=left] {submit bid $\displaystyle w_{1}$};

\end{tikzpicture}
    \caption{support size $m = 4$, set up phase ($\phaseVar = 1$)} 
    \label{fig:auctionIter1}
\end{figure}
\begin{figure}[h!]
    \centering
    \tikzset{every picture/.style={line width=0.75pt}} 

\begin{tikzpicture}[x=0.7pt,y=0.7pt,yscale=-1,xscale=1]

\draw   (48,619.33) -- (605,619.33) -- (605,644) -- (48,644) -- cycle ;
\draw  [fill={rgb, 255:red, 245; green, 166; blue, 35 }  ,fill opacity=1 ] (84,619.33) -- (120,619.33) -- (120,644) -- (84,644) -- cycle ;
\draw  [fill={rgb, 255:red, 245; green, 166; blue, 35 }  ,fill opacity=1 ] (48,619.33) -- (84,619.33) -- (84,644) -- (48,644) -- cycle ;
\draw   (50,648.33) .. controls (50,652.72) and (52.2,654.92) .. (56.59,654.92) -- (56.59,654.92) .. controls (62.86,654.92) and (66,657.12) .. (66,661.51) .. controls (66,657.12) and (69.14,654.92) .. (75.41,654.92)(72.59,654.92) -- (75.41,654.92) .. controls (79.8,654.92) and (82,652.72) .. (82,648.33) ;
\draw   (86,647.33) .. controls (86,652) and (88.33,654.33) .. (93,654.33) -- (334.5,654.33) .. controls (341.17,654.33) and (344.5,656.66) .. (344.5,661.33) .. controls (344.5,656.66) and (347.83,654.33) .. (354.5,654.33)(351.5,654.33) -- (596,654.33) .. controls (600.67,654.33) and (603,652) .. (603,647.33) ;
\draw   (120,619.33) -- (156,619.33) -- (156,644) -- (120,644) -- cycle ;
\draw   (156,619.33) -- (192,619.33) -- (192,644) -- (156,644) -- cycle ;
\draw   (192,619.33) -- (228,619.33) -- (228,644) -- (192,644) -- cycle ;
\draw   (228,619.33) -- (264,619.33) -- (264,644) -- (228,644) -- cycle ;
\draw   (569,619.33) -- (605,619.33) -- (605,644) -- (569,644) -- cycle ;
\draw    (64,618.33) -- (71.15,591.43) ;
\draw [shift={(71.67,589.5)}, rotate = 464.89] [color={rgb, 255:red, 0; green, 0; blue, 0 }  ][line width=0.75]    (10.93,-3.29) .. controls (6.95,-1.4) and (3.31,-0.3) .. (0,0) .. controls (3.31,0.3) and (6.95,1.4) .. (10.93,3.29)   ;
\draw    (104.67,619.5) -- (230.71,593.9) ;
\draw [shift={(232.67,593.5)}, rotate = 528.52] [color={rgb, 255:red, 0; green, 0; blue, 0 }  ][line width=0.75]    (10.93,-3.29) .. controls (6.95,-1.4) and (3.31,-0.3) .. (0,0) .. controls (3.31,0.3) and (6.95,1.4) .. (10.93,3.29)   ;

\draw (347,620) node [anchor=north west][inner sep=0.75pt]   [align=left] {$\displaystyle \cdots $};
\draw (45,666) node [anchor=north west][inner sep=0.75pt]   [align=left] {active arms};
\draw (340,659) node [anchor=north west][inner sep=0.75pt]   [align=left] {dormant arms};
\draw (57,622) node [anchor=north west][inner sep=0.75pt]  [font=\footnotesize]  {$b_{P}$};
\draw (91,622.33) node [anchor=north west][inner sep=0.75pt]  [font=\footnotesize]  {$b_{P-1}$};
\draw (125,622.33) node [anchor=north west][inner sep=0.75pt]  [font=\footnotesize]  {$b_{P-2}$};
\draw (161,622.33) node [anchor=north west][inner sep=0.75pt]  [font=\footnotesize]  {$b_{P-3}$};
\draw (197,622.33) node [anchor=north west][inner sep=0.75pt]  [font=\footnotesize]  {$b_{P-4}$};
\draw (233,623.33) node [anchor=north west][inner sep=0.75pt]  [font=\footnotesize]  {$b_{P-5}$};
\draw (578,622.33) node [anchor=north west][inner sep=0.75pt]  [font=\footnotesize]  {$b_{1}$};
\draw (42,568) node [anchor=north west][inner sep=0.75pt]   [align=left] {pulled by $\displaystyle w_{2} ,\ w_{3} ,\ w_{4}$};
\draw (42,550) node [anchor=north west][inner sep=0.75pt]   [align=left] {submit bid $\displaystyle w_{2}$};
\draw (204,566) node [anchor=north west][inner sep=0.75pt]   [align=left] {pulled by $\displaystyle w_{1}$};
\draw (204,550) node [anchor=north west][inner sep=0.75pt]   [align=left] {submit bid $\displaystyle w_{1}$};

\end{tikzpicture}
    \caption{support size $m = 4$, set up phase ($\phaseVar = 2$)} 
    \label{fig:auctionIter2}
\end{figure}
\begin{figure}[h!]
    \centering
    \tikzset{every picture/.style={line width=0.75pt}} 

\begin{tikzpicture}[x=0.7pt,y=0.7pt,yscale=-1,xscale=1]

\draw   (46,95.33) -- (603,95.33) -- (603,120) -- (46,120) -- cycle ;
\draw  [fill={rgb, 255:red, 245; green, 166; blue, 35 }  ,fill opacity=1 ] (82,95.33) -- (118,95.33) -- (118,120) -- (82,120) -- cycle ;
\draw  [fill={rgb, 255:red, 80; green, 227; blue, 194 }  ,fill opacity=1 ] (46,95.33) -- (82,95.33) -- (82,120) -- (46,120) -- cycle ;
\draw   (83,125.33) .. controls (83,130) and (85.33,132.33) .. (90,132.33) -- (142.59,132.33) .. controls (149.26,132.33) and (152.59,134.66) .. (152.59,139.33) .. controls (152.59,134.66) and (155.92,132.33) .. (162.59,132.33)(159.59,132.33) -- (217,132.33) .. controls (221.67,132.33) and (224,130) .. (224,125.33) ;
\draw   (228,125.33) .. controls (228.03,130) and (230.37,132.32) .. (235.04,132.3) -- (404.54,131.38) .. controls (411.21,131.34) and (414.55,133.65) .. (414.58,138.32) .. controls (414.55,133.65) and (417.87,131.3) .. (424.54,131.27)(421.54,131.29) -- (594.04,130.35) .. controls (598.71,130.33) and (601.03,127.99) .. (601,123.32) ;
\draw   (81,91.33) .. controls (81,86.66) and (78.67,84.33) .. (74,84.33) -- (74,84.33) .. controls (67.33,84.33) and (64,82) .. (64,77.33) .. controls (64,82) and (60.67,84.33) .. (54,84.33)(57,84.33) -- (54,84.33) .. controls (49.33,84.33) and (47,86.66) .. (47,91.33) ;
\draw  [fill={rgb, 255:red, 245; green, 166; blue, 35 }  ,fill opacity=1 ] (118,95.33) -- (154,95.33) -- (154,120) -- (118,120) -- cycle ;
\draw  [fill={rgb, 255:red, 245; green, 166; blue, 35 }  ,fill opacity=1 ] (154,95.33) -- (190,95.33) -- (190,120) -- (154,120) -- cycle ;
\draw  [fill={rgb, 255:red, 245; green, 166; blue, 35 }  ,fill opacity=1 ] (190,95.33) -- (226,95.33) -- (226,120) -- (190,120) -- cycle ;
\draw   (226,95.33) -- (262,95.33) -- (262,120) -- (226,120) -- cycle ;
\draw   (567,95.33) -- (603,95.33) -- (603,120) -- (567,120) -- cycle ;
\draw    (101,95.33) -- (158.23,65.26) ;
\draw [shift={(160,64.33)}, rotate = 512.28] [color={rgb, 255:red, 0; green, 0; blue, 0 }  ][line width=0.75]    (10.93,-3.29) .. controls (6.95,-1.4) and (3.31,-0.3) .. (0,0) .. controls (3.31,0.3) and (6.95,1.4) .. (10.93,3.29)   ;
\draw    (211,95.33) -- (268.23,65.26) ;
\draw [shift={(270,64.33)}, rotate = 512.28] [color={rgb, 255:red, 0; green, 0; blue, 0 }  ][line width=0.75]    (10.93,-3.29) .. controls (6.95,-1.4) and (3.31,-0.3) .. (0,0) .. controls (3.31,0.3) and (6.95,1.4) .. (10.93,3.29)   ;

\draw (345,96) node [anchor=north west][inner sep=0.75pt]   [align=left] {$\displaystyle \cdots $};
\draw (112,141) node [anchor=north west][inner sep=0.75pt]   [align=left] {active arms};
\draw (338,135) node [anchor=north west][inner sep=0.75pt]   [align=left] {dormant arms};
\draw (43,56) node [anchor=north west][inner sep=0.75pt]   [align=left] {retired arms};
\draw (55,98) node [anchor=north west][inner sep=0.75pt]  [font=\footnotesize]  {$b_{P}$};
\draw (88,98.33) node [anchor=north west][inner sep=0.75pt]  [font=\footnotesize]  {$b_{P-1}$};
\draw (123,98.33) node [anchor=north west][inner sep=0.75pt]  [font=\footnotesize]  {$b_{P-2}$};
\draw (159,98.33) node [anchor=north west][inner sep=0.75pt]  [font=\footnotesize]  {$b_{P-3}$};
\draw (195,98.33) node [anchor=north west][inner sep=0.75pt]  [font=\footnotesize]  {$b_{P-4}$};
\draw (231,99.33) node [anchor=north west][inner sep=0.75pt]  [font=\footnotesize]  {$b_{P-5}$};
\draw (576,98.33) node [anchor=north west][inner sep=0.75pt]  [font=\footnotesize]  {$b_{1}$};
\draw (136,41) node [anchor=north west][inner sep=0.75pt]   [align=left] {pulled by $\displaystyle w_{4}$};
\draw (240,42) node [anchor=north west][inner sep=0.75pt]   [align=left] {pulled by $\displaystyle w_{1}$};
\draw (136,23) node [anchor=north west][inner sep=0.75pt]   [align=left] {submit bid $\displaystyle w_{4}$};
\draw (241,23) node [anchor=north west][inner sep=0.75pt]   [align=left] {submit bid $\displaystyle w_{1}$};

\end{tikzpicture}
    \caption{support size $m = 4$, main phase ($\phaseVar = 5$)} 
    \label{fig:auctionIter5}
\end{figure}
\begin{figure}[h!]
    \centering
    \tikzset{every picture/.style={line width=0.75pt}} 

\begin{tikzpicture}[x=0.7pt,y=0.7pt,yscale=-1,xscale=1]

\draw   (42,281.33) -- (599,281.33) -- (599,306) -- (42,306) -- cycle ;
\draw  [fill={rgb, 255:red, 80; green, 227; blue, 194 }  ,fill opacity=1 ] (78,281.33) -- (114,281.33) -- (114,306) -- (78,306) -- cycle ;
\draw  [fill={rgb, 255:red, 80; green, 227; blue, 194 }  ,fill opacity=1 ] (42,281.33) -- (78,281.33) -- (78,306) -- (42,306) -- cycle ;
\draw   (114,310.33) .. controls (114,315) and (116.33,317.33) .. (121,317.33) -- (173.59,317.33) .. controls (180.26,317.33) and (183.59,319.66) .. (183.59,324.33) .. controls (183.59,319.66) and (186.92,317.33) .. (193.59,317.33)(190.59,317.33) -- (248,317.33) .. controls (252.67,317.33) and (255,315) .. (255,310.33) ;
\draw   (260,310.33) .. controls (260.01,315) and (262.35,317.32) .. (267.02,317.31) -- (419.02,316.86) .. controls (425.69,316.84) and (429.03,319.16) .. (429.04,323.83) .. controls (429.03,319.16) and (432.35,316.82) .. (439.02,316.8)(436.02,316.81) -- (591.02,316.35) .. controls (595.69,316.34) and (598.01,314) .. (598,309.33) ;
\draw   (112,277.33) .. controls (112,272.66) and (109.67,270.33) .. (105,270.33) -- (85.91,270.33) .. controls (79.24,270.33) and (75.91,268) .. (75.91,263.33) .. controls (75.91,268) and (72.58,270.33) .. (65.91,270.33)(68.91,270.33) -- (50,270.33) .. controls (45.33,270.33) and (43,272.66) .. (43,277.33) ;
\draw  [fill={rgb, 255:red, 245; green, 166; blue, 35 }  ,fill opacity=1 ] (114,281.33) -- (150,281.33) -- (150,306) -- (114,306) -- cycle ;
\draw  [fill={rgb, 255:red, 245; green, 166; blue, 35 }  ,fill opacity=1 ] (150,281.33) -- (186,281.33) -- (186,306) -- (150,306) -- cycle ;
\draw  [fill={rgb, 255:red, 245; green, 166; blue, 35 }  ,fill opacity=1 ] (186,281.33) -- (222,281.33) -- (222,306) -- (186,306) -- cycle ;
\draw  [fill={rgb, 255:red, 245; green, 166; blue, 35 }  ,fill opacity=1 ] (222,281.33) -- (258,281.33) -- (258,306) -- (222,306) -- cycle ;
\draw   (563,281.33) -- (599,281.33) -- (599,306) -- (563,306) -- cycle ;
\draw   (258,281.33) -- (294,281.33) -- (294,306) -- (258,306) -- cycle ;
\draw    (132,280.33) -- (189.23,250.26) ;
\draw [shift={(191,249.33)}, rotate = 512.28] [color={rgb, 255:red, 0; green, 0; blue, 0 }  ][line width=0.75]    (10.93,-3.29) .. controls (6.95,-1.4) and (3.31,-0.3) .. (0,0) .. controls (3.31,0.3) and (6.95,1.4) .. (10.93,3.29)   ;
\draw    (239,281.33) -- (296.23,251.26) ;
\draw [shift={(298,250.33)}, rotate = 512.28] [color={rgb, 255:red, 0; green, 0; blue, 0 }  ][line width=0.75]    (10.93,-3.29) .. controls (6.95,-1.4) and (3.31,-0.3) .. (0,0) .. controls (3.31,0.3) and (6.95,1.4) .. (10.93,3.29)   ;

\draw (341,282) node [anchor=north west][inner sep=0.75pt]   [align=left] {$\displaystyle \cdots $};
\draw (145,321) node [anchor=north west][inner sep=0.75pt]   [align=left] {active arms};
\draw (379,322) node [anchor=north west][inner sep=0.75pt]   [align=left] {dormant arms};
\draw (39,242) node [anchor=north west][inner sep=0.75pt]   [align=left] {retired arms};
\draw (51,284) node [anchor=north west][inner sep=0.75pt]  [font=\footnotesize]  {$b_{P}$};
\draw (84,284.33) node [anchor=north west][inner sep=0.75pt]  [font=\footnotesize]  {$b_{P-1}$};
\draw (119,284.33) node [anchor=north west][inner sep=0.75pt]  [font=\footnotesize]  {$b_{P-2}$};
\draw (155,284.33) node [anchor=north west][inner sep=0.75pt]  [font=\footnotesize]  {$b_{P-3}$};
\draw (191,284.33) node [anchor=north west][inner sep=0.75pt]  [font=\footnotesize]  {$b_{P-4}$};
\draw (227,285.33) node [anchor=north west][inner sep=0.75pt]  [font=\footnotesize]  {$b_{P-5}$};
\draw (572,284.33) node [anchor=north west][inner sep=0.75pt]  [font=\footnotesize]  {$b_{1}$};
\draw (263,283.33) node [anchor=north west][inner sep=0.75pt]  [font=\footnotesize]  {$b_{P-6}$};
\draw (167,226) node [anchor=north west][inner sep=0.75pt]   [align=left] {pulled by $\displaystyle w_{4}$};
\draw (268,227) node [anchor=north west][inner sep=0.75pt]   [align=left] {pulled by $\displaystyle w_{1}$};
\draw (269,210) node [anchor=north west][inner sep=0.75pt]   [align=left] {submit bid $\displaystyle w_{1}$};
\draw (167,208) node [anchor=north west][inner sep=0.75pt]   [align=left] {submit bid $\displaystyle w_{4}$};

\end{tikzpicture}
    \caption{support size $m = 4$, main phase ($\phaseVar = 6$)} 
    \label{fig:auctionIter6}
\end{figure}

In this section we describe the behavior of mean based no regret buyers in the FSE auction for a concrete example. 

Let us consider an example where there are two buyers participating in our full surplus extraction auction, and the distribution $\D$ is a uniform distribution over $\{w_1, w_2, w_3, w_4\}$. Here $m$, the support size of $\D$, is equal to $4$.

The first $3$ phases are the set up phase, where the purpose is to make buyer pull a distinct arm when they have each distinct value after the set up phases. For instance, in the first phase, only arm $b_{P}$ is active, and all other arms are dormant (which means these arms will not give the item and will also not charge anything). We will see that both buyers, no matter their value, will choose to pull $b_{P}$ in this phase. However, starting from phase $2$, the buyers' choice of arm will become more and more affected by their value. 

In each round of phase $1$, if a buyer pulls the arm $b_{P}$, then a bid of $w_1$ is entered into the second price auction. Assuming that when a buyer gets $0$ utility, they tie break favoring getting the item, no matter what the two buyer's values are, they will both choose to pull arm $b_{P}$, so $S$ the set of buyers who submits the highest bid is just the set of both buyers. Each buyer wins with probability $1/4$, because this is their probability of winning had they participated in a second price auction with two buyers and submitted a bid of $w_1$. As for the winning buyer's payment, in the first $R$ rounds of the phase, they will pay $w_1$ the second price, and in the second $R$ rounds of the phase they will pay $w_1 + 2(w_1-w_1) = w_1$, the second price plus the surcharge (which is zero in this phase). At the end of phase $1$, a buyer with value $w_1$ has cumulative utility $0$ for arm $b_{P}$, making the $w_1$-valued buyer ambivalent between arms $b_{P}$ and $b_{P-1}$. Meanwhile a buyer with value $w_i$ has positive cumulative utility $2R \cdot (w_i - w_1)$ for any $i > 1$ and strictly prefers $b_{P}$ to other arms. 

In phase $2$, the active arm range has increased to include $w_{P}$ and $w_{P-1}$. $w_P$ will now submit a bid of $w_2$ (in second price auction) when pulled, while $w_{P-1}$ will submit a bid of $w_1$. In phase 2, a buyer with value $w_2, w_3, w_4$ will continue to pull arm $b_P$ due to the previously accumulated utility, but a buyer with value $w_1$ will quickly learn to pull $b_{P-1}$ as its cumulative utility of arm $b_{P}$ becomes negative (because it sometimes gives the item to the buyer with a price above their value). In a similar manner, $w_2$ will become ambivalent between $b_{P}$ and $b_{P-1}$ towards the end of phase $2$ and will learn to switch to $b_{P-1}$ in phase $3$ (while $w_1$ will switch to $b_{P-2}$). 

By the start of phase $4$, the buyers will now pull distinct arms when they draw different values and the repeated auction enter the main phases. In the main phases the auction begins to extract revenue close to the full welfare by ensuring that the buyers get on average zero utility regardless of their value in each phase. For instance, in phase $\phaseVar = 5$, the arm $b_{P}$ has retired, and arms $b_1^{\phaseVar} = b_{P - 5 + 1}= b_{P - 4}$ to $b_4^{\phaseVar} = b_{P - 5 + 4} = b_{P-1}$ are the active arms representing different bids in the support of $\D$ (e.g. pulling $b_{P-1}$ submits a bid of $w_4$, which pulling $b_{P-4}$ submits a bid of $w_1$). The cumulative utility works out in a way where $w_{i}$'s preferred arm is $b_{P - 5 + i}$ for all $i \in [4]$. Let's say buyer one has value $w_3$, and buyer two has value $w_1$ in a particular round in the phase, then buyer one will pull arm $b_{P - 2}$ and buyer two will pull arm $b_{P-4}$. buyer one will get the item since they pulled a higher arm (and thus submitted a higher bid). If this particular round is among the first $R$ rounds, then buyer one will pay $w_1$, the second price. Otherwise, buyer one will pay $w_1 + 2(w_3 - w_1) = 2w_3 - w_1$, the second price plus the surcharge. Since there are roughly the same number of rounds where buyer one has value $w_3$ and buyer two has value $w_1$ during the first $R$ rounds vs in the second $R$ rounds of the phase, on average buyer one pays $1/2 (w_1 + 2w_3 - w_1) = w_3$ per round conditioned on buyer one having value $w_3$ and buyer two having value $w_1$. It is easy to generalize this and see that buyer one pays on average $w_3$ when its value is $w_3$, regardless of buyer two's values. 

\section{Omitted Proofs from Section~\ref{sec:FSE}}\label{app:fse}
Note that in this section we consider quantities that can be computed only based on $\mathcal{D}$ and number of buyers $n$ to be a constant (for instance $\min_{i} (w_i - w_{i-1})$, the expected utility from a second price auction with $n$ i. i.d buyers, given that our buyer's value is $v_j$). This is because $1/T$ can be made arbitrarily small compared to any of these quantities.  

\begin{proof}[Proof of Observation~\ref{obs:monotone}]
The allocation rule is obviously monotone, as higher-indexed arms submit a higher bid to the secondary auction (and then the secondary auction selects the highest bid). The payment rule is obviously monotone for the first $R$ rounds, since it is exactly a second-price auction.

For the second $R$ rounds, the initial payment from the second-price auction is again clearly monotone. The surcharge is also monotone, as the second-highest bid is fixed independently of the winner's arm, and $w_j$ is monotone in the arm selected. 

Finally, it is clear that the payment made remains monotone including the retired arms, as the payment of $2w_m$ exceeds the maximum possible value of $2(w_m-w_\ell) + w_\ell$ that an active arm might pay.
\end{proof}

We define $\Delta(\mathcal{D}) = \min_{i} (w_i - w_{i-1})$. For all remaining proofs in this section, we will assume $\alpha(\phaseVar) = o(1)$ for each phase $\phaseVar$, which implies that $\alpha(\phaseVar) \leq  \frac{\Delta(\mathcal{D})\cdot \XVCG(w_1)}{16nP} = \Theta(1)$. We will set $\alpha$ in our last proof. 

\begin{lemma} \label{app:lem:firstFSE}(precise statement of Lemma~\ref{lem:firstFSE}) Assume in $\phaseVar$ and each phase before $\phaseVar$, each buyer pulls a none intended arm only $\alpha(\phaseVar)$ fraction of rounds, then at the end of phase $\phaseVar$, the change in cumulative payoff of a buyer with value $v$ for each arm satisfies:
\begin{itemize}
\item For dormant arms $b$, $H_{2R\phaseVar}(v,b) - H_{2R(\phaseVar-1)}(v,b)= 0$.
\item For active arms: $H_{2R\phaseVar}(v,b_j^\phaseVar) - H_{2R(\phaseVar-1)}(v,b_j^\phaseVar)= 2R \cdot (v-w_j) \cdot \XVCG(w_j) \pm 2R \cdot 4n\alpha(\phaseVar)$.
\item For retired arms: $H_{2R\phaseVar}(v,b_j)  - H_{2R(\phaseVar-1)}(v,b_j)= 2R\cdot (v - 2w_m) \pm 2R \cdot 4n \alpha(\phaseVar)$.
\end{itemize} 
\end{lemma}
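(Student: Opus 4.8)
The plan is to prove the precise version of Lemma~\ref{lem:firstFSE} (now Lemma~\ref{app:lem:firstFSE}) by tracking, round by round within phase $\phaseVar$, the difference between the actual per-round payoff of a hypothetical deviation to a fixed arm $b$ and the ``ideal'' per-round payoff that would arise if every other buyer pulled exactly their intended arm. The key quantitative input is the hypothesis that in phase $\phaseVar$ (and every earlier phase) each buyer plays a non-intended arm in at most an $\alpha(\phaseVar)$ fraction of rounds; since there are $n$ buyers, in all but a $2n\alpha(\phaseVar)$ fraction of the $2R$ rounds of phase $\phaseVar$ \emph{all} other buyers simultaneously pull their intended arm, so in those ``good'' rounds the secondary auction that our fixed-arm deviation faces is exactly a second-price auction against $n-1$ i.i.d.\ draws from $\dist$. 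The dormant-arm bullet is immediate (a dormant arm never transacts). For the active and retired bullets I would split the phase into its first $R$ rounds and its second $R$ rounds and compute the contribution of the good rounds exactly, then bound the contribution of the (at most $2n\alpha(\phaseVar)\cdot 2R$) bad rounds crudely.

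For the active arm $b_j^\phaseVar$: in each good round of the first $R$ rounds, bidding $w_j$ into the second-price auction wins with probability $\XVCG(w_j)$ and pays $\PVCG(w_j)$ in expectation, so the expected payoff is $v\cdot\XVCG(w_j)-\PVCG(w_j)$; in each good round of the second $R$ rounds the same allocation holds but there is an added surcharge of $2(w_j-w_\ell)$ when the second-highest bid is $w_\ell$, whose expectation is $2w_j\XVCG(w_j)-2\PVCG(w_j)$, giving expected payoff $v\cdot\XVCG(w_j)-2w_j\XVCG(w_j)+\PVCG(w_j)$. Summing the two halves, the good-round contribution telescopes to $2R(v-w_j)\XVCG(w_j)$ minus the $\PVCG$ terms that cancel, exactly as in the proof sketch. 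For the retired arm, each good round contributes $v-2w_m$ (the arm bids $w_m+1$, wins for sure, pays $2w_m$). In both cases the bad rounds each change the true cumulative payoff by at most $\max(|v|,|v-2w_m|, b_j)$ in absolute value relative to the good-round formula — but one must be slightly careful here because on a bad round an active or retired arm could still win and be charged a surcharge; bounding the per-bad-round discrepancy by the crude constant absorbed into the $4n$ factor, and noting there are at most $2n\alpha(\phaseVar)\cdot 2R$ bad rounds, yields the claimed $\pm 2R\cdot 4n\alpha(\phaseVar)$ error term. A martingale/concentration step (Hoeffding over the $2R$ rounds, which are independent conditioned on the fixed behavior) handles the gap between the per-round \emph{expected} payoffs used above and the realized cumulative payoff, contributing a further $o(T)$ which I would fold in — or, since the statement is about $H$ which is itself defined as a cumulative payoff under the stated behavioral assumption, one can state it in expectation and push the concentration into how $H$ is used later.

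The main obstacle I anticipate is handling the bad rounds cleanly, and in particular the subtle point flagged implicitly by the footnote structure of the paper: the fixed-arm deviation we are evaluating is \emph{itself} a counterfactual, so on a round where the actual other-buyer behavior deviates, the deviation arm may face a different, not-i.i.d., bid profile, and we only know that this happens on an $\le 2n\alpha(\phaseVar)$ fraction of rounds. The argument must therefore only use the \emph{count} of such rounds and a worst-case per-round bound, never their structure; getting the constant right (why $4n$ rather than $2n$) comes from the fact that the discrepancy on a bad round is two-sided and can be as large as a couple of times the maximum transaction size, all of which is $O(1)$ and hence absorbed. A secondary, more bookkeeping-heavy obstacle is the extension to the setup phases (where the secondary auction resolves via the modified rule with the $(q_\phaseVar/\sum_{j\ge\phaseVar}q_j)^{|S|-1}$ tie-breaking probability): there one has to verify that this tie-breaking rule exactly reproduces the second-price interim allocation $\XVCG(w_\phaseVar)$ and payment $\PVCG(w_\phaseVar)$ even when multiple buyers bid $w_\phaseVar$, so that the same per-round computation goes through; this is the reason the setup-phase rule was defined the way it was, and the verification is a short direct calculation that a uniformly random top bidder wins with the stated probability.
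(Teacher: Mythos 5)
Your proposal is correct and follows essentially the same route as the paper's proof: compute the good-round contribution exactly via the second-price interim quantities $\XVCG(w_j)$ and $\PVCG(w_j)$ (with the telescoping of the $\PVCG$ terms across the two halves of the phase), bound the bad rounds by a worst-case $O(1)$ per-round discrepancy times their count, and observe that the setup-phase tie-breaking rule was designed precisely so that active arms face the same interim allocation and payment as in a main phase. The only differences are cosmetic bookkeeping (your $2n\alpha(\phaseVar)$ versus the paper's $n\alpha(\phaseVar)$ count of bad rounds, and your explicit mention of a concentration step that the paper, like you suggest, simply absorbs by working in expectation).
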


\begin{proof}
Note that the set up phase is designed in such a way that, when buyer $i$ pulls an active arm $b_j^\phaseVar$, assuming other buyers all pull their intended arms, buyer $i$'s expected utility in round $t$ of a set up phase is the same as that of round $t$ of the main phase. Therefore we do not need to distinguish these two kinds of phases when calculating historical utility in our proof. 

The first bullet follows trivially. To see this, observe that if an arm is dormant for phase $\phaseVar$, then it never awards the item or charges payments, so the cumulative utility doesn't change.

For active arms, observe that pulling arm $b_j^\phaseVar$ enters a bid of $w_j$ into a second price auction where all other bids are drawn independently from $\dist$ (except in a $n \cdot \alpha(\phaseVar)$ fraction of the rounds). This means that the buyer wins the item with probability $\XVCG(w_j)$ in each such round. 

In terms of payment, observe that the expected payment during the first $R$ rounds is exactly $\PVCG(w_j)$ per round. Moreover, the expected payment during the second $R$ rounds is exactly $\XVCG(w_j) \cdot (2w_j - w_\ell) = 2w_j\XVCG(w_j) - \PVCG(w_j)$ per round. We also observe that in a round $t$ when the other buyers are not pulling the desired arms, if $t$ is within the first $R$ rounds, the expected utility is in the range $[0, v]$, and if $t$ is within the second $R$ rounds, the expected utility is in the range $[\min(0, v - 2w_j), \max(0, v - w_j)]$. One can verify that the utility difference from expected is within $2(v + w_j) \leq 4$. Therefore, the added utility during phase $\phaseVar$ is:
\begin{align*}
 &2R\cdot v \cdot \XVCG(w_j) - R \cdot \PVCG(w_j) - R\cdot (2w_j \XVCG(w_j)) + R\cdot \PVCG(w_j)  \\
 & \hspace{8 cm}\pm  \cdot (2R) \cdot n\alpha(\phaseVar) \cdot 2(v+w_j)\\
 &= 2R \cdot (v-w_j) \cdot \XVCG(w_j)  \pm (2R) \cdot 4 n \alpha(\phaseVar).   
\end{align*}

For retired arms, observe that pulling the arm enters a bid of $w_m+1$ into a second-price auction, which surely wins in all rounds (except the $n \alpha(\phaseVar)$ fraction where other buyers may not pull the intended arm). The payment is $2w_m$. Therefore, the change in cumulative payoff is $2Rv - 4Rw_m \pm 2R \cdot  n\alpha(\phaseVar)(v - 2w_m) = 2Rv - 4Rw_m \pm 2R \cdot  4n\alpha(\phaseVar)$.
\end{proof}

\begin{corollary}\label{app:cor:firstFSE}(precise statement of Corollary~\ref{cor:firstFSE}) Assume in $\phaseVar$ and each phase before $\phaseVar$, each buyer pulls a none intended arm only $\alpha(\phaseVar)$ fraction of rounds, then at the end of phase $\phaseVar$, the cumulative payoffs for a buyer with value $v$ satisfy:
\begin{itemize}
\item If $b_j$ is dormant during phase $\phaseVar$ ($j \leq P-\phaseVar$): $H_{2R\phaseVar}(v,b)= 0$.
\item If $b_j$ is active during phase $\phaseVar$ ($j \in [P-\phaseVar +1,P-\phaseVar+m]$): 
$$H_{2R\phaseVar}(v,b_j) = 2R \cdot \left(\sum_{k=1}^{j +\phaseVar - P}(v-w_k) \cdot \XVCG(w_k) \right)\pm 8n \cdot \alpha(\phaseVar) \cdot T.$$
\item If $b_j$ is retired during $\phaseVar$ ($j \geq P-\phaseVar + m+1$): Let $a:=P+m-j$. Then:
$$H_{2R\phaseVar}(v,b) = 2R \cdot \left((\phaseVar-a)\cdot (v - 2w_m) + \sum_{k=1}^m  (v-w_k) \cdot \XVCG(w_k)  \right)\pm 8 n \cdot \alpha(\phaseVar) \cdot T .$$
\end{itemize}

\end{corollary}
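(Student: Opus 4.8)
The plan is to obtain the cumulative payoff at the end of phase $\phaseVar$ by summing the per-phase increments established in Lemma~\ref{app:lem:firstFSE} over all phases $1, 2, \ldots, \phaseVar$, tracking which "state" (dormant / active / retired) the arm $b_j$ occupies in each of those phases. Recall the indexing convention $b_j^{\phaseVar'} = b_{P - \phaseVar' + j}$: a fixed arm $b_j$ is dormant while $j \le P - \phaseVar'$, i.e.\ for phases $\phaseVar' < P - j + 1$; it is active precisely when $j \in [P-\phaseVar'+1, P-\phaseVar'+m]$, i.e.\ for the $\min(m,\cdot)$ consecutive phases $\phaseVar' \in [P-j+1, P-j+m]$; and it is retired once $\phaseVar' \ge P - j + m + 1$. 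During an active phase $\phaseVar'$, the arm $b_j$ plays the role $b_k^{\phaseVar'}$ for $k = j + \phaseVar' - P$, so as $\phaseVar'$ runs over the active window, $k$ runs over $1, 2, \ldots, m$ in order. This is exactly the "transition" picture described in the text preceding the corollary.

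Concretely, I would argue by the three cases in the statement. \textbf{Dormant case ($j \le P - \phaseVar$):} the arm has been dormant in every phase $1, \ldots, \phaseVar$, so by the first bullet of Lemma~\ref{app:lem:firstFSE} every increment is $0$ and $H_{2R\phaseVar}(v,b_j) = 0$ exactly. \textbf{Active case ($j \in [P-\phaseVar+1, P-\phaseVar+m]$):} the arm was dormant up through phase $P-j$ (contributing nothing), and then active in phases $P-j+1, \ldots, \phaseVar$, during which it successively played roles $b_1^{\cdot}, b_2^{\cdot}, \ldots, b_{j+\phaseVar-P}^{\cdot}$; summing the active increment $2R(v-w_k)\XVCG(w_k)$ from Lemma~\ref{app:lem:firstFSE} over $k = 1$ to $j + \phaseVar - P$ gives the stated main term, and accumulating the per-phase error $2R \cdot 4n\alpha(\phaseVar)$ over at most $P$ phases (using $\alpha$ monotone, or just bounding each $\alpha(\phaseVar')$ by $\alpha(\phaseVar)$ since the hypothesis controls all earlier phases) yields a total error of at most $P \cdot 2R \cdot 4n\alpha(\phaseVar) = 8n\alpha(\phaseVar) \cdot (2RP) \le 8n\alpha(\phaseVar) T$ after noting $2RP = T$. \textbf{Retired case ($j \ge P - \phaseVar + m + 1$, $a := P + m - j$):} the arm was dormant through phase $P - j$, active for the $m$ phases $P-j+1, \ldots, P-j+m = \phaseVar - (\phaseVar - a)\cdots$ — more precisely active for phases with roles $b_1,\ldots,b_m$ contributing $\sum_{k=1}^m 2R(v-w_k)\XVCG(w_k)$, and then retired for the remaining $\phaseVar - (P - j + m) = \phaseVar - a$ phases, each contributing $2R(v - 2w_m)$ by the third bullet of Lemma~\ref{app:lem:firstFSE}; summing gives the stated expression, with the error again bounded by $8n\alpha(\phaseVar)T$ exactly as above.

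The computation is essentially bookkeeping, so the only genuine content is making sure the index arithmetic for the active window and the role $k = j+\phaseVar-P$ is correct, and that the error aggregates the right way. \textbf{The main obstacle} is the error accumulation: Lemma~\ref{app:lem:firstFSE} gives a per-phase error $2R\cdot 4n\alpha(\phaseVar)$ where $\alpha(\phaseVar)$ is the bound valid for phase $\phaseVar$ under the hypothesis that all buyers pull intended arms except an $\alpha(\phaseVar)$-fraction in phase $\phaseVar$ and all earlier phases; I need this same hypothesis to license applying Lemma~\ref{app:lem:firstFSE} to every earlier phase $\phaseVar' \le \phaseVar$, and I need $\alpha(\phaseVar') \le \alpha(\phaseVar)$ (or simply that the hypothesis for phase $\phaseVar$ subsumes all earlier phases with the same or smaller deviation fraction) so that the sum of $\phaseVar \le P$ per-phase errors is at most $P \cdot 8Rn\alpha(\phaseVar) = 4n\alpha(\phaseVar)\cdot(2RP) = 4n\alpha(\phaseVar)T \le 8n\alpha(\phaseVar)T$. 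Everything else is a direct telescoping of Lemma~\ref{app:lem:firstFSE}.
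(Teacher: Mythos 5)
Your proposal is correct and follows essentially the same route as the paper: the appendix proof of this corollary is exactly a telescoping of Lemma~\ref{app:lem:firstFSE} over the dormant/active/retired phases of arm $b_j$, with the same index arithmetic $k = j + \phaseVar' - P$ and last-active-phase $a = P+m-j$. Your error accounting ($\phaseVar \le P$ phases, each contributing $2R\cdot 4n\alpha(\phaseVar)$, totaling $4n\alpha(\phaseVar)T \le 8n\alpha(\phaseVar)T$) is more explicit than the paper's but matches the stated bound.
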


\begin{proof}
The proof follows by just repeatedly applying Lemma~\ref{app:lem:firstFSE}. If $b_j$ is dormant, then $b_j$ was dormant in all previous rounds, so the claim trivially follows.

If $b_j$ is active, then it is intended for a buyer with value $w_{j +\phaseVar - P}$ during phase $\phaseVar$, a buyer with value $w_{j +\phaseVar - P - 1}$ in the previous phase, and in general intended for a buyer with value $w_{j + \phaseVar - \ell - P}$ in phase $\phaseVar-\ell$ for all $\ell \in [0, j + \phaseVar - P - 1]$ (where the intended buyer vary from $w_{j +\phaseVar - P}$ to $w_1$). Prior to being active, the arm was dormant. So the sum just sums up Lemma~\ref{app:cor:firstFSE} over all phases. 

If $b_j$ is retired, then it was retired for some number of previous rounds, and then prior to that it was active. $a = P +m-j$ is the last round where the arm was active. Once $a$ is computed, the sum follows by applying bullet three of Lemma~\ref{lem:firstFSE} to all retired rounds, and bullet two to the active rounds.
\end{proof}

\begin{lemma}\label{app:lem:favestart1} (precise statement of Lemma~\ref{lem:favestart} part 1) For all $\phaseVar > 1$, assume in each phase until $\phaseVar$, each buyer pulls a none intended arm only $\alpha(\phaseVar)$ fraction of rounds, then at the start of phase $\phaseVar$, a buyer with value $w_j$ \textbf{where} $j \leq \phaseVar$ has highest cumulative utility for arm $b_j^\phaseVar$, and also $b_j^{\phaseVar-1}$. Specifically, for all other arms $b_\ell$: 

$$H_{2R(\phaseVar-1)}(w_j,b_j^{\phaseVar-1}) \pm 16n \cdot \alpha(\phaseVar) \cdot T = H_{2R(\phaseVar-1)}(w_j,b_j^\phaseVar) > H_{2R(\phaseVar-1)}(w_j,b_\ell) + \Delta(\mathcal{D})\cdot \XVCG(w_{1})\cdot R. $$ 
\end{lemma}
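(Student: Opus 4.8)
The plan is to reduce the statement to a direct computation driven by the closed forms for cumulative utilities in Corollary~\ref{app:cor:firstFSE} (precise version). Under the lemma's hypothesis — each buyer pulls a non-intended arm in at most an $\alpha(\phaseVar)$ fraction of rounds through phase $\phaseVar$ — that corollary applies at the end of phase $\phaseVar-1$ and gives, for a buyer with value $w_j$, the value of $H_{2R(\phaseVar-1)}(w_j,b_\ell)$ for every arm $b_\ell$, split according to whether $b_\ell$ is dormant, active, or retired during phase $\phaseVar-1$; each estimate carries additive error $\pm 8n\alpha(\phaseVar)T$.

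First I would pin down the two special arms. The arm $b_j^{\phaseVar}$ has index $P+j-\phaseVar$, so during phase $\phaseVar-1$ it is the arm intended for $w_{j-1}$, i.e.\ $b_j^\phaseVar=b_{j-1}^{\phaseVar-1}$ (for $j\geq 2$; for $j=1$ it is dormant during phase $\phaseVar-1$), whereas $b_j^{\phaseVar-1}$ is the arm intended for $w_j$ during phase $\phaseVar-1$. Plugging $v=w_j$ into the active case of Corollary~\ref{app:cor:firstFSE} gives $H_{2R(\phaseVar-1)}(w_j,b_j^{\phaseVar-1})=2R\sum_{k=1}^{j}(w_j-w_k)\XVCG(w_k)\pm 8n\alpha(\phaseVar)T$ and $H_{2R(\phaseVar-1)}(w_j,b_j^{\phaseVar})=2R\sum_{k=1}^{j-1}(w_j-w_k)\XVCG(w_k)\pm 8n\alpha(\phaseVar)T$; since the $k=j$ summand vanishes, these two agree up to $16n\alpha(\phaseVar)T$, which is the claimed equality.

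For the strict separation from every other arm $b_\ell$, I would introduce $g_j(p):=\sum_{k=1}^{p}(w_j-w_k)\XVCG(w_k)$ (with $g_j(0)=0$) and record its shape: since $g_j(p)-g_j(p-1)=(w_j-w_p)\XVCG(w_p)$ is positive for $p<j$, zero for $p=j$, and negative for $p>j$, the function $g_j$ strictly increases on $\{0,\dots,j-1\}$, attains its maximum at $p\in\{j-1,j\}$ with $g_j(j-1)=g_j(j)$, and strictly decreases afterwards, with every consecutive gap of size at least $\Delta(\mathcal{D})\cdot\XVCG(w_1)$ (using $w_p-w_{p-1}\geq\Delta(\mathcal{D})$ and monotonicity of $\XVCG(\cdot)$, so $\XVCG(w_p)\geq\XVCG(w_1)$). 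Then: (i) any \emph{active} arm of phase $\phaseVar-1$ other than $b_j^\phaseVar,b_j^{\phaseVar-1}$ is intended for some $w_k$ with $k\notin\{j-1,j\}$, so its cumulative utility is $2R\,g_j(k)\pm 8n\alpha(\phaseVar)T\leq 2R\,g_j(j-1)-2R\,\Delta(\mathcal{D})\XVCG(w_1)\pm 8n\alpha(\phaseVar)T$; (ii) any \emph{dormant} arm has cumulative utility exactly $0\leq 2R\,g_j(j-1)-2R\,\Delta(\mathcal{D})\XVCG(w_1)$ when $j\geq 2$, since then $g_j(j-1)\geq\Delta(\mathcal{D})\XVCG(w_1)$; (iii) any \emph{retired} arm carries an extra term $2R(\phaseVar-1-a)(w_j-2w_m)$ with $\phaseVar-1-a\geq 1$ and $w_j-2w_m\leq -w_m<0$, so it is at most $2R\,g_j(m)-2Rw_m\pm 8n\alpha(\phaseVar)T\leq 2R\,g_j(j-1)-2Rw_m\pm 8n\alpha(\phaseVar)T$, and $w_m\geq\Delta(\mathcal{D})\geq\Delta(\mathcal{D})\XVCG(w_1)$. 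In every case the gap from $H_{2R(\phaseVar-1)}(w_j,b_j^\phaseVar)\approx 2R\,g_j(j-1)$ down to $H_{2R(\phaseVar-1)}(w_j,b_\ell)$ is at least $2R\,\Delta(\mathcal{D})\XVCG(w_1)-16n\alpha(\phaseVar)T$, which exceeds $\Delta(\mathcal{D})\XVCG(w_1)\cdot R$ once we invoke the standing bound $\alpha(\phaseVar)\le\frac{\Delta(\mathcal{D})\XVCG(w_1)}{16nP}$ with $T=2RP$ (possibly after slightly shrinking this constant).

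The main obstacle is the careful accounting of the error terms together with the boundary cases. The error bookkeeping has to keep $16n\alpha(\phaseVar)T$ comfortably below both the $\Delta(\mathcal{D})\XVCG(w_1)R$ separation gap and the size of the non-intended contributions, which is exactly where the quantitative standing bound on $\alpha$ is spent. The genuinely delicate cases are $j=1$ — where $g_1\equiv 0$ on $\{0,1\}$, so $b_1^\phaseVar$, $b_1^{\phaseVar-1}$ and every dormant arm sit at cumulative utility $0$ (up to the $\varepsilon$-perturbation in the payment rule), and one only gets the weaker conclusion that $b_1^\phaseVar$ is within $o(T)$ of the favorite arm, which is all Lemma~\ref{lem:setupfirst} needs, while retired and other active arms are still $\Omega(R)$ below — and $j=\phaseVar$, which arises only in the setup phases and where $b_\phaseVar^\phaseVar=b_P$ while $b_\phaseVar^{\phaseVar-1}$ is out of range; there I would recompute a $w_\phaseVar$-buyer's cumulative utility for $b_P$ directly from the setup-phase resolution rule (recalling such a buyer pulled $b_P$ throughout phase $\phaseVar-1$ too) and verify the same inequalities.
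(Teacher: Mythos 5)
Your proposal is correct and follows essentially the same route as the paper's proof: both reduce the claim to the closed forms of the precise Corollary~\ref{app:cor:firstFSE}, observe that the partial sums $\sum_{k\le p}(w_j-w_k)\XVCG(w_k)$ peak (with a tie) at $p\in\{j-1,j\}$ with consecutive gaps of size at least $\Delta(\mathcal{D})\XVCG(w_1)$, handle dormant and retired arms by separate casework, and absorb the $O(n\alpha(\phaseVar)T)$ error via the standing bound on $\alpha$. If anything you are more careful than the paper about the boundary cases $j=1$ (where dormant arms tie with $b_1^\phaseVar$ at zero and only the $\varepsilon$-perturbation/weaker conclusion saves the day) and $j=\phaseVar$ (where $b_j^{\phaseVar-1}$ does not exist), which the paper's proof glosses over.
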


\begin{proof}
This follows immediately from Corollary~\ref{app:cor:firstFSE}. Let $\err = 8n \cdot \alpha(\phaseVar) \cdot T$. We know that
\begin{align*}
    H_{2R(\phaseVar-1)}(w_j,b_j^\phaseVar) = H_{2R(\phaseVar-1)}(w_j,b_{P+j-\phaseVar}) = 2R \cdot \left(\sum_{k=1}^{j-1}(w_j-w_k) \cdot \XVCG(w_k) \right)\pm \err.
\end{align*}
First we will show our claim for all active arm $b_\ell^\phaseVar$ where $\ell \neq j$. 
\begin{align*}
    H_{2R(\phaseVar-1)}(w_j,b_\ell^\phaseVar) = H_{2R(\phaseVar-1)}(w_j,b_{P+\ell-\phaseVar}) = 2R \cdot \left(\sum_{k=1}^{\ell-1}(w_j-w_k) \cdot \XVCG(w_k) \right)\pm \err.
\end{align*}
Thus 
\begin{align*}
    &H_{2R(\phaseVar-1)}(w_j,b_j^\phaseVar) - H_{2R(\phaseVar-1)}(w_j,b_\ell^\phaseVar) \\
    &=  2R \cdot \left(\sum_{k=1}^{j-1}(w_j-w_k) \cdot \XVCG(w_k) \right) - 2R \cdot \left(\sum_{k=1}^{\ell-1}(w_j-w_k) \cdot \XVCG(w_k) \right) \pm \err\\
    &=2R \cdot \left(\ind [\ell \leq j] \cdot  \sum_{k=\ell}^{j-1}(w_j-w_k) \cdot \XVCG(w_k) + \ind [\ell > j] \sum_{k=j}^{\ell-1}(w_k - w_j) \XVCG(w_k) \right) \pm \err.\\
\end{align*}
we conclude that when $l = j+1$, $H_{2R(\phaseVar-1)}(w_j,b_j^\phaseVar) = H_{2R(\phaseVar-1)}(w_j,b_\ell^\phaseVar) \pm err$. Meanwhile, while $l \neq j+1$, 
\begin{align*}
    &H_{2R(\phaseVar-1)}(w_j,b_j^\phaseVar) - H_{2R(\phaseVar-1)}(w_j,b_\ell^\phaseVar)\\
    &\geq 2R \cdot \Delta(\mathcal{D}) \cdot \XVCG(w_1) \pm err \\
    &\geq 2R \cdot \Delta(\mathcal{D}) \cdot \XVCG(w_1) - 8n \cdot T \cdot \frac{\Delta(\mathcal{D})\cdot \XVCG(w_1)}{16nP} \geq R \cdot \Delta(\mathcal{D}) \cdot \XVCG(w_1).
\end{align*}
Now we show that all dormant or retired arms are strictly worse by $R \cdot \Delta(D) \cdot \XVCG(w_1)$ compared to one of the active arms. Firstly, for all retired arm $b_\ell$ in phase $\phaseVar$, 
\begin{align*}
    H_{2R(\phaseVar-1)}(w_j,b_\ell) &= 2R \cdot \left((\phaseVar-a)\cdot (w_j - 2w_m) + \sum_{k=1}^m  (w_j-w_k) \cdot \XVCG(w_k)  \right)\pm \err \\
    &\leq -2R \cdot  \Delta(\mathcal{D}) +  H_{2R(\phaseVar-1)}(w_j,b_m^{\phaseVar-1}) \pm 2\cdot \err \\
    &\leq H_{2R(\phaseVar-1)}(w_j,b_m^{\phaseVar-1}) - R \cdot \Delta(\mathcal{D}) \cdot \XVCG(w_1).
\end{align*}
Secondly, from our calculation of active arms, it is easy to see that $H_{2R(\phaseVar-1)}(v,b_1^{\phaseVar-1}) \geq R \cdot \Delta(\mathcal{D}) \cdot \XVCG(w_1) $, while when an arm $b_\ell$ is dormant, $H_{2R(\phaseVar-1)}(v,b_\ell) = 0$. This completes our proof. 
\end{proof}

\begin{lemma}(precise statement of Lemma~\ref{lem:favestart} part 2) \label{app:lem:favestart2}
For all $\phaseVar > 1$, assume in each phase until $\phaseVar$, each buyer pulls a none intended arm only $\alpha(\phaseVar)$ fraction of rounds, then at the start of phase $\phaseVar$, a buyer with value $w_j$ \textbf{where} $j > \phaseVar$ has highest cumulative utility for arm $b_p = b_\phaseVar^\phaseVar$. Specifically, for all $b_\ell \neq b_{\phaseVar}^{\phaseVar}$, 
$$ H_{2R(\phaseVar-1)}(w_j,b_\phaseVar^\phaseVar) > H_{2R(\phaseVar-1)}(w_j,b_\ell) + \Delta(\mathcal{D})\cdot \XVCG(w_{1})\cdot R .$$
\end{lemma}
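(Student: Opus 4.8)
The plan is to read the statement straight off Corollary~\ref{app:cor:firstFSE} applied at the end of phase $\phaseVar-1$, mirroring the proof of Lemma~\ref{app:lem:favestart1} but now exploiting that $w_j$ is still ``too large'' to have its own arm. First I would note that $j>\phaseVar$ and $j\le m$ force $\phaseVar\le m-1$, so phase $\phaseVar-1$ (nonempty since $\phaseVar>1$) is a \emph{setup} phase; hence at the end of phase $\phaseVar-1$ no arm is retired, the non-dormant arms are exactly those with index in $[P-\phaseVar+2,\,P]$, and $b_P=b_\phaseVar^\phaseVar$ is the top active arm. So the only comparisons to make are against dormant arms and against active arms $b_\ell$ with $\ell<P$.

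For the active arms, Corollary~\ref{app:cor:firstFSE} with phase parameter $\phaseVar-1$ gives $H_{2R(\phaseVar-1)}(w_j,b_P)=2R\sum_{k=1}^{\phaseVar-1}(w_j-w_k)\XVCG(w_k)\pm\err$ (the active formula with the maximal number $\phaseVar-1$ of summands), and for $b_\ell$ with $\ell<P$, $H_{2R(\phaseVar-1)}(w_j,b_\ell)=2R\sum_{k=1}^{\ell+\phaseVar-1-P}(w_j-w_k)\XVCG(w_k)\pm\err$ with strictly fewer summands, where $\err=O(n\alpha(\phaseVar)T)$ is the error term handled exactly as in the proof of Lemma~\ref{app:lem:favestart1}. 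Differencing, $H_{2R(\phaseVar-1)}(w_j,b_P)-H_{2R(\phaseVar-1)}(w_j,b_\ell)=2R\sum_{k=\ell+\phaseVar-P}^{\phaseVar-1}(w_j-w_k)\XVCG(w_k)\pm\err$ is a sum over at least one index $k\le\phaseVar-1<j$; since $w_j-w_k\ge\Delta(\mathcal{D})$ and $\XVCG(w_k)\ge\XVCG(w_1)>0$ by monotonicity of $\XVCG$, the difference is at least $2R\,\Delta(\mathcal{D})\,\XVCG(w_1)-\err$. For a dormant $b_\ell$, $H_{2R(\phaseVar-1)}(w_j,b_\ell)=0$ while $H_{2R(\phaseVar-1)}(w_j,b_P)\ge 2R(w_j-w_1)\XVCG(w_1)-\err\ge 2R\,\Delta(\mathcal{D})\,\XVCG(w_1)-\err$ (using $\phaseVar-1\ge1$). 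So in every case the gap between $b_\phaseVar^\phaseVar$ and $b_\ell$ is at least $2R\,\Delta(\mathcal{D})\,\XVCG(w_1)-\err$.

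Finally, I would close exactly as in Lemma~\ref{app:lem:favestart1}: the bound $\alpha(\phaseVar)\le\frac{\Delta(\mathcal{D})\,\XVCG(w_1)}{16nP}$ with $R=T/(2P)$ forces $\err\le R\,\Delta(\mathcal{D})\,\XVCG(w_1)$, so $2R\,\Delta(\mathcal{D})\,\XVCG(w_1)-\err\ge R\,\Delta(\mathcal{D})\,\XVCG(w_1)$, which is precisely the claimed strict gap $H_{2R(\phaseVar-1)}(w_j,b_\phaseVar^\phaseVar)>H_{2R(\phaseVar-1)}(w_j,b_\ell)+\Delta(\mathcal{D})\XVCG(w_1)R$ for every $b_\ell\ne b_\phaseVar^\phaseVar$.

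There is no genuine obstacle; this is pure bookkeeping on top of Corollary~\ref{app:cor:firstFSE}. The only points that need care are keeping the case analysis over arm types exhaustive for this range of $\phaseVar$ — in particular that a setup phase has no retired arms, and that the active window from Corollary~\ref{app:cor:firstFSE} is effectively $[P-\phaseVar+2,P]$ since larger indices name no arm — and tracking the error terms when differencing two $H$-values with the same generous constant used in Lemma~\ref{app:lem:favestart1}.
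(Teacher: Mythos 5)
Your proposal is correct and follows essentially the same route as the paper: read the cumulative utilities at the end of phase $\phaseVar-1$ off Corollary~\ref{app:cor:firstFSE}, difference the active-arm sums to get at least one term of size $2R\,\Delta(\mathcal{D})\XVCG(w_1)$, handle dormant arms via $H=0$, and absorb the $O(n\alpha T)$ error using the standing bound on $\alpha$. Your explicit observation that $j>\phaseVar$ forces a setup phase with no retired arms is a slightly cleaner way to make the case analysis exhaustive than the paper's appeal to the retired-arm bound from Lemma~\ref{app:lem:favestart1}, but the argument is the same.
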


\begin{proof}
Let $\err = 8n \cdot \alpha(\phaseVar) \cdot T$. Since $j > \phaseVar$, we know that $\phaseVar$ is a set up phase and $b_P = b_\phaseVar^\phaseVar$ is active. From the proof for Lemma~\ref{app:lem:favestart1}, we know that all dormant and retired arms are strictly worse than $b_\phaseVar^\phaseVar$ by $R \cdot \Delta(\mathcal{D})\cdot \XVCG(w_{1})$. For any active arm $b_\ell^\phaseVar \neq b_\phaseVar^\phaseVar$, $\ell < \phaseVar$. Therefore 
\begin{align*}
    &H_{2R(\phaseVar-1)}(w_j,b_j^\phaseVar) - H_{2R(\phaseVar-1)}(w_j,b_\ell^\phaseVar) \\
    & = 2R \cdot \left(\sum_{k=1}^{\phaseVar-1}(w_j-w_k) \cdot \XVCG(w_k) \right) - 2R \cdot \left(\sum_{k=1}^{\ell-1}(w_j-w_k) \cdot \XVCG(w_k) \right) \pm \err\\
    &= 2R \cdot \left(\sum_{k=\ell}^{\phaseVar-1}(w_j-w_k) \cdot \XVCG(w_k) \right) \pm \err \geq 2R \cdot \Delta(\mathcal{D})\cdot \XVCG(w_{1}) \pm err\\
    &\geq R \cdot \Delta(\mathcal{D}) \cdot \XVCG(w_1).
\end{align*}
\end{proof}
\begin{corollary}\label{app:cor:favefirst1} For all $\phaseVar \geq 1$, assuming that all other buyers pull their intended arm except for $\alpha(\phaseVar-1) > \gamma$ fraction of rounds before phase $\phaseVar$, a mean-based buyer with value $w_j$, \textbf{where $j \leq \phaseVar$}, pulls arm $b_j^\phaseVar$ for the first $R$ rounds, except for at most $C_1 \cdot \alpha(\phaseVar - 1) \cdot R$ rounds for some constant $C_1$ that is not dependent on $\alpha$ or $\phaseVar$.
\end{corollary}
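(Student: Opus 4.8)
The plan is to combine two ingredients: the start-of-phase ranking of arms from Lemma~\ref{app:lem:favestart1}, and the fact that the first $R$ rounds of every phase are (effectively) a second-price auction in which the intended arm $b_j^\phaseVar$ submits the truthful bid $w_j$. Write $g_b(s):=\sigma_{b_j^\phaseVar,s}(w_j)-\sigma_{b,s}(w_j)$ for the hindsight-utility gap between $b_j^\phaseVar$ and an arm $b$, measured through round $s-1$. The key structural observation is that for every round $t$ among the first $R$ rounds of phase $\phaseVar$ and every arm $b$, $r_{b,t}(w_j)\le r_{b_j^\phaseVar,t}(w_j)$ no matter what the other buyers bid that round: for dormant arms this reads $0\le r_{b_j^\phaseVar,t}(w_j)$, and for retired and active arms it is weak dominance of the truthful bid over an over- or under-bid (in a setup phase the modified tie rule is designed exactly so that per-round payoffs coincide with those of a second-price auction, so the same comparison holds). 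Hence each $g_b(\cdot)$ is non-decreasing over the first $R$ rounds of phase $\phaseVar$; this monotonicity drives the whole argument.

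I would then split the arms $b\neq b_j^\phaseVar$ into the ``stale favorite'' $b_j^{\phaseVar-1}$ (which in phase $\phaseVar$ is the arm $b_{j+1}^\phaseVar$, or a retired arm when $j=m$) and all remaining arms. For the remaining arms, Lemma~\ref{app:lem:favestart1} gives $g_b(\text{start of }\phaseVar)\ge \Delta(\mathcal{D})\XVCG(w_1)\,R$, and since $\alpha$ is chosen small enough that $\Delta(\mathcal{D})\XVCG(w_1)\,R\ge 8n\gamma T>\gamma T$, monotonicity keeps $g_b(s)>\gamma T$ throughout the first $R$ rounds; the mean-based property then bounds by $\gamma$ the per-round probability of pulling such an arm, so in expectation each is pulled at most $\gamma R$ times, contributing at most $P\gamma R$ in total. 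When $j=\phaseVar$ there is no stale favorite at all (the buyer's intended arm was already $b_P$ in phase $\phaseVar-1$), so this already finishes that case; the base case $\phaseVar=1$ is immediate, the bound $C_1\alpha(0)R$ being vacuous for any constant $\alpha(0)\ge 1/C_1$.

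The crux is bounding how often the stale favorite $b_j^{\phaseVar-1}$ is pulled when $j<\phaseVar$. The ``tie'' part of Lemma~\ref{app:lem:favestart1} only gives $g_{b_j^{\phaseVar-1}}(\text{start of }\phaseVar)\ge -O(n\alpha(\phaseVar-1)T)$, so the gap may start slightly negative and I must show it becomes $>\gamma T$ quickly. The quantitative input is a constant positive drift: in any of the first $R$ rounds in which every other buyer pulls its intended arm, pulling $b_j^{\phaseVar-1}$ over-bids (by at least one level) relative to $w_j$, and over-bidding strictly loses whenever the highest other bid equals that next level, an event of probability at least some $c_0=c_0(\dist,n)>0$; hence $\E[r_{b_j^\phaseVar,t}(w_j)-r_{b_j^{\phaseVar-1},t}(w_j)]\ge c$ for a constant $c>0$ in each such round. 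Treating all buyers simultaneously within phase $\phaseVar$ (the analysis of one buyer depends on the others only through how many rounds they deviate, so the estimates compose), at most $2n\alpha(\phaseVar-1)R$ of the first $R$ rounds are ``dirty,'' and a Wald-type stopping-time estimate bounds the expected number of rounds until $g_{b_j^{\phaseVar-1}}$ first exceeds $\gamma T$ by $O(\alpha(\phaseVar-1)R)$ (using $T=2PR$ and $\gamma\le\alpha(\phaseVar-1)$). Finally, monotonicity of $g_{b_j^{\phaseVar-1}}$ makes the set of rounds with $g_{b_j^{\phaseVar-1}}(s)\le\gamma T$ a prefix of the phase, so telescoping the mean-based bound shows the expected number of first-$R$ rounds in which $b_j^{\phaseVar-1}$ is pulled is at most $\gamma R$ plus that expected first-exceedance time, i.e.\ $O(\alpha(\phaseVar-1)R)$. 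Summing the three contributions and collapsing into one constant $C_1$ (depending only on $n$, $P$ and $\dist$, not on $\alpha$ or $\phaseVar$) proves the corollary.

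The hardest step is the third paragraph: establishing the constant per-round drift of the stale arm's gap — which forces using that the other buyers' bids are essentially i.i.d.\ draws from $\dist$, hence the simultaneous treatment of all buyers inside phase $\phaseVar$ — and converting ``the gap crosses $\gamma T$ quickly in expectation'' into ``the stale arm is rarely pulled'' via the prefix/telescoping argument. Secondary technicalities are verifying weak dominance of the truthful arm under the setup-phase tie rule, and the bookkeeping for $j=m$ where the stale arm is retired rather than active.
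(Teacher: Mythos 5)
Your overall strategy matches the paper's: use Lemma~\ref{app:lem:favestart1} to kill every arm except the intended arm $b_j^\phaseVar$ and the stale favorite $b_j^{\phaseVar-1}=b_{j+1}^\phaseVar$, use weak dominance of the truthful bid during the first $R$ rounds to make the hindsight gap monotone, and then argue the stale arm's gap crosses $\gamma T$ after $O(\alpha(\phaseVar-1)R)$ rounds via a constant per-round drift. The first two steps are fine. The gap is in the third step, and it is genuine.

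Your drift is conditioned on ``every other buyer pulls its intended arm that round,'' and you then assert that at most $2n\alpha(\phaseVar-1)R$ of the first $R$ rounds are dirty. But the corollary's hypothesis only bounds deviations \emph{before} phase $\phaseVar$; during phase $\phaseVar$ the other buyers' deviations are exactly what is being proven, so this bound is not available. The parenthetical ``the estimates compose'' does not close the loop: a naive simultaneous fixed point gives each buyer's settling time $D \gtrsim O(\alpha T) + c^{-1}\cdot nD$, which does not yield $D = O(\alpha T)$. Worse, the constant drift itself fails before the higher-valued buyers settle: the only event producing a utility gap between bidding $w_j$ and $w_{j+1}$ in a second-price auction is ``highest other bid equals exactly $w_{j+1}$,'' and early in the phase a $w_{j+1}$-valued opponent may legitimately (consistently with being mean-based) still be pulling \emph{its} stale arm $b_{j+2}^\phaseVar$ and hence bidding $w_{j+2}$, while a $w_j$-valued opponent may already be pulling $b_j^\phaseVar$ and bidding $w_j$; in that case no opponent ever bids $w_{j+1}$ and the drift is zero. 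The paper resolves this with a level-by-level induction inside the phase: $w_m$-valued buyers settle first unconditionally (their stale arm is retired and loses $w_m$ deterministically each pull), and then, for $j$ descending, once the $w_{j+1}$-valued buyers have settled the event ``second price $=w_{j+1}$'' has probability $\geq (1-n\gamma)q_{j+1}^{\,n-1}$, giving the $w_j$-valued buyers their constant drift; the $m$ settling delays add up to $C_1\alpha(\phaseVar-1)R$. Your proof needs this cascade (or an equivalent ordering argument) in place of the unsupported dirty-round bound; with it, the rest of your argument (prefix structure from monotonicity, telescoping the mean-based bound) goes through as in the paper.
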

\begin{proof}
Firstly, observe that by lemma~\ref{app:lem:favestart1}, by the end of phase $\phaseVar - 1$, for any arm $b_\ell \neq b_j^\phaseVar, b_j^{\phaseVar - 1}$, a mean-based buyer with value $w_j$ will have 
\begin{align*}
H_{2R(\phaseVar-1)}(w_j,b_j^{\phaseVar})  &> H_{2R(\phaseVar-1)}(w_j,b_\ell) + \Delta(\mathcal{D})\cdot \XVCG(w_{1})\cdot\frac{T}{P} \\
&= H_{2R(\phaseVar-1)}(w_j,b_\ell) + \Theta(T) >  H_{2R(\phaseVar-1)}(w_j,b_\ell) + \gamma T.
\end{align*}

Therefore buyer with value $w_j$ will pull the arm $b_\ell$ with probability $< \gamma$. We conclude that buyer with value $w_j$ may only pull arms $b_j^{\phaseVar - 1} = b_{j+1}^{\phaseVar}$ and $b_j^{\phaseVar}$ with constant probability in all $R$ rounds. 

Now we claim that buyer with value $w_m$ will be pulling arm $b_{m+1}^{\phaseVar}$ with probability at most $\gamma$ after the beginning $x = \frac{20Tn}{v_m} \alpha(\phaseVar - 1)$ rounds. This is because in each round where $w_m$ pulls arm $b_{m+1}^{\phaseVar}$, the buyer's cumulative utility of arm $b_{m+1}^{\phaseVar}$ decreases by at least $w_m$. Meanwhile, if $w_m$ pulls arm $b_m^{\phaseVar}$, its utility is always positive. Therefore after $x$ rounds, 
$$H_{2R(\phaseVar-1) + x}(w_m,b_m^{\phaseVar}) - H_{2R(\phaseVar-1) + x}(w_m,b_{m+1}^{\phaseVar}) \geq -16n \cdot \alpha(\phaseVar-1)\cdot T + w_m \cdot \frac{20 Tn}{w_m} \alpha(\phaseVar-1) \geq \gamma \cdot T.$$

Next we prove that given all buyers with value $w_{j+1}$ will pull arm $b_{\ell} \neq b_{j+1}^{\phaseVar}$ with probability $< \gamma$ after round $t$, all buyers with value $w_{j}$ will pull arm $b_{\ell} \neq b_{j}^{\phaseVar}$ with probability $< \gamma$ after round $t + x$ where $x = O(\alpha(\phaseVar - 1)T)$. 

Consider a buyer with value $w_j$ participating in a second price auction. Note that the only difference between bidding $w_j$ and $w_{j+1}$ is: 1) bidding $w_{j+1}$ can win the item with higher probability when second price is $w_j$, but the utility gain for both strategies is $0$; and 2) bidding $w_{j+1}$ can win the item with positive probability when second price is $w_{j+1}$, in this case the buyer will incur negative utility. Since all buyers bid $b_{j+1}^\phaseVar$ when their value is $w_{j+1}$ with high probability, second price is $w_{j+1}$ with a constant probability $\geq (1 - n\gamma) (q_{j+1})^{(n-1)}$ in each round (we view everything related to $\mathcal{D}$ as a constant). We conclude that in any round $t' > t$, for buyer $i$, expected utility difference between arm $b_j^\phaseVar$ and $b_{j+1}^\phaseVar$ in each round $E[r_{i,b_j^\phaseVar, t'}(w_j) - r_{i,b_{j+1}^\phaseVar, t'}(w_j)]$ is lower bounded by some constant  $\Delta(r   (w_j))$ not dependent on $t$.  

Now notice that in the first $R$ rounds of phase $\phaseVar$ essentially runs a second price auction in each round, where pulling arm $b_{j}^\phaseVar$ will always give weakly better utility compared to pulling any other arm. Also, by our assumption, all other buyers will bid their intended arm $w.p. > 1 - mn\gamma$ after round $t$. Let $x = \frac{20Tn}{\Delta(r(w_j))} \alpha(\phaseVar - 1)$, then 
\begin{align*}
    &H_{2R(\phaseVar-1) + x}(w_j,b_j^{\phaseVar}) - H_{2R(\phaseVar-1) + x}(w_j,b_{j+1}^{\phaseVar})\\ &\geq -16n \cdot \alpha(\phaseVar-1)\cdot T +  \Delta(r(w_j)) x  \\
    &\geq -16n \cdot \alpha(\phaseVar-1)\cdot T  + \Delta(r(w_j)) \frac{20 Tn}{\Delta(r(w_j))} \alpha(\phaseVar-1) \geq \gamma \cdot T.
\end{align*}
Therefore after round $t + x$, buyer with value $w_{j}$ will pull all arms $b_{\ell}$, including $b_{j}^\phaseVar$, with probability $< \gamma$. 
Finally we conclude that after $C' \alpha(\phaseVar - 1) R = m \cdot \frac{20Tn}{\Delta(r(w_j))} \alpha(\phaseVar - 1)$ rounds, all buyers will pull any unintended arm with probability $< \gamma$. Hence during the first $R$ rounds in phase $\phaseVar$, a buyer can pull an unintended arm in only $C_1 \alpha(\phaseVar - 1) R = C' \alpha(\phaseVar - 1) R + \gamma R$ rounds. 
\end{proof}

\begin{lemma}\label{app:cor:favefirst2} For all $\phaseVar > 1$, assuming that all other buyers pull their intended arm except for $\alpha(\phaseVar-1) > \gamma$ fraction of rounds before phase $\phaseVar$, a mean-based buyer with value $w_j$, \textbf{where $j > \phaseVar$}, pulls arm $b_p = b_\phaseVar^\phaseVar$ for the first $R$ rounds, except for at most $\gamma R$ rounds.
\end{lemma}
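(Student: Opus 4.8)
The plan is to reduce the claim to two facts and then quote the $\gamma$-mean-based property. First, since $j>\phaseVar$, Lemma~\ref{app:lem:favestart2} already gives that at the start of phase $\phaseVar$ the buyer's cumulative payoff for the top arm $b_P=b_\phaseVar^\phaseVar$ exceeds that of every other arm $b_\ell$ by at least $\Delta(\mathcal{D})\cdot\XVCG(w_1)\cdot R=\Theta(T)$, which is $>\gamma T$ once $T$ is large (as $R=\Omega(T)$ and $\gamma=o(1)$). Second, I would show that during the first $R$ rounds of a setup phase, pulling $b_P$ is weakly dominant \emph{round by round} for a buyer whose value $w_j$ exceeds $w_\phaseVar$, and that this holds for \emph{every} realization of the other buyers' bids — so it does not rely on any assumption about the other buyers' behavior within phase $\phaseVar$ (the pre-phase-$\phaseVar$ hypothesis is needed only to invoke Lemma~\ref{app:lem:favestart2}). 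Granting these two facts, the cumulative-payoff gap between $b_P$ and every other arm never decreases through the first $R$ rounds of phase $\phaseVar$, hence stays above $\gamma T$; so at each such round the $\gamma$-mean-based property assigns probability at most $\gamma$ to each arm $b_\ell\neq b_P$, and summing over the constantly many arms and the $R$ rounds bounds the number of rounds in which $b_P$ is not pulled, giving the claimed $\gamma R$ (absorbing the constant). In contrast to Corollary~\ref{app:cor:favefirst1}, there is no ``learning transient'' to control here, because the intended arm $b_P$ is simultaneously the historical favorite and the stage-game best response.

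The one step that needs genuine work is the round-by-round weak dominance, so I expect that to be the main obstacle (though it is elementary). In a setup phase the available non-dormant bids are exactly $w_1<\cdots<w_\phaseVar$, pulled via $b_1^\phaseVar,\dots,b_\phaseVar^\phaseVar=b_P$; there are no retired arms, and dormant arms yield interim utility $0$. Fixing the other buyers' bids in some round among the first $R$, I would compare pulling $b_P$ (bid $w_\phaseVar$) with pulling a lower active arm $b_\ell^\phaseVar$ (bid $w_\ell<w_\phaseVar$) or a dormant arm. If the largest other bid is at least $w_\ell$, then pulling $b_\ell$ wins the item with weakly smaller probability than pulling $b_P$, and whenever $b_P$ wins it pays at most $w_\phaseVar<w_j$ — here one uses that the setup auction awards the item for sure to the tentative winner when the top bid is below $w_\phaseVar$, and charges the second price $w_\phaseVar$ when the top bid equals $w_\phaseVar$ — so $b_P$'s interim utility is $\geq 0$ and dominates that of $b_\ell$. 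If all other bids are strictly below $w_\ell$, then pulling $b_\ell$ or $b_P$ both win with probability $1$ at the same second-highest price, so the interim utilities coincide. Against a dormant arm, $b_P$ clearly gives interim utility $\geq 0$. The only delicate sub-case is a tie at $w_\phaseVar$ ($|S|>1$), where $b_P$ wins only with probability $(q_\phaseVar/\sum_{k\geq\phaseVar}q_k)^{|S|-1}$, but since it then pays $w_\phaseVar<w_j$ its expected interim utility is still strictly positive, hence still dominant.

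Putting it together: for every round $s$ in the first $R$ rounds of phase $\phaseVar$ and every $b_\ell\neq b_P$ we get $\sigma_{i,b_\ell,s}(w_j)=H_s(w_j,b_\ell)<H_s(w_j,b_P)-\gamma T=\sigma_{i,b_P,s}(w_j)-\gamma T$, so the buyer pulls $b_\ell$ at round $s$ with probability at most $\gamma$; union-bounding over the at most $P$ arms $b_\ell\neq b_P$ and summing the $R$ rounds yields the bound. All error-term bookkeeping (the $\pm\,\Theta(\alpha(\phaseVar-1))\cdot T$ slack carried by Lemma~\ref{app:lem:favestart2}, and the per-round utility fluctuations) is dominated by the $\Theta(T)$ starting gap, so it does not affect the argument.
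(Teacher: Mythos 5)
Your proposal is correct and follows essentially the same route as the paper's proof: invoke Lemma~\ref{app:lem:favestart2} for the $\Theta(T)>\gamma T$ starting gap, observe that pulling $b_\phaseVar^\phaseVar$ is weakly dominant round-by-round during the first $R$ rounds so the gap cannot shrink, and then apply the $\gamma$-mean-based definition. The only difference is that you spell out the case analysis behind the weak-dominance claim (including the tie at $w_\phaseVar$), which the paper simply asserts; the $P\gamma R$ versus $\gamma R$ bookkeeping is equally loose in both and immaterial.
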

\begin{proof}
By lemma~\ref{app:lem:favestart2}, by the end of phase $\phaseVar - 1$, for any arm $b_\ell \neq b_\phaseVar^\phaseVar$, a mean-based buyer with value $w_j$ will have 
\begin{align*}
H_{2R(\phaseVar-1)}(w_j,b_\phaseVar^{\phaseVar})  &> H_{2R(\phaseVar-1)}(w_j,b_\ell) + \Delta(\mathcal{D})\cdot \XVCG(w_{1})\cdot\frac{T}{P}\\
&= H_{2R(\phaseVar-1)}(w_j,b_\ell) + \Theta(T) >  H_{2R(\phaseVar-1)}(w_j,b_\ell) + \gamma T.
\end{align*}

Therefore at the start of phase $\phaseVar$, buyer with value $w_j$ will pull the arm $b_\ell$ with probability $< \gamma$. During the first $R$ rounds of phase $\phaseVar$, pulling arm $b_\phaseVar^\phaseVar$ remains a weakly dominating strategy in each round, thus $w_j$ will continue to pull arm $b_\ell$ with probability $< \gamma$. 
\end{proof}

\begin{corollary}\label{app:lem:setupfirst}(precise statement of Corollary~\ref{lem:setupfirst}) For all $\phaseVar$, assuming that all other buyers pull their intended arm except for $\alpha(\phaseVar-1) > \gamma$ fraction of rounds before phase $\phaseVar$, a mean-based buyer with value $w_j$ pulls arm $b_j^\phaseVar$ (if it exists, otherwise they pull arm $b_P = b_\phaseVar^\phaseVar$) for the first $R$ rounds, except for at most $C_1 \cdot \alpha(\phaseVar - 1) \cdot R$ rounds for some constant $C_1$ that is not dependent on $\alpha$ or $\phaseVar$.
\end{corollary}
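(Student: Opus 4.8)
The plan is to recognize that this corollary is just the union of the two cases already handled, re-expressed with a single uniform constant. I would split on whether a buyer with value $w_j$ has its own distinct active arm in phase $\phaseVar$.

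First, if $j \le \phaseVar$, then arm $b_j^\phaseVar = b_{P-\phaseVar+j}$ is active in phase $\phaseVar$, so it is the intended arm for value $w_j$, and I would invoke Corollary~\ref{app:cor:favefirst1} verbatim: under the hypothesis that every other buyer pulls its intended arm in all but an $\alpha(\phaseVar-1)$ fraction of the pre-phase-$\phaseVar$ rounds, a mean-based buyer of value $w_j$ pulls $b_j^\phaseVar$ in all but $C_1' \cdot \alpha(\phaseVar-1) \cdot R$ of the first $R$ rounds of phase $\phaseVar$, where $C_1'$ is the constant produced there. Second, if $j > \phaseVar$, then $w_j$ has no distinct active arm and its intended arm is $b_P = b_\phaseVar^\phaseVar$; here I would invoke Lemma~\ref{app:cor:favefirst2}, which gives that the buyer pulls $b_P$ in all but $\gamma R$ of the first $R$ rounds, and since $\gamma < \alpha(\phaseVar-1)$ this is at most $\alpha(\phaseVar-1) \cdot R$. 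Setting $C_1 := \max\{C_1', 1\}$ --- a quantity depending only on $\mathcal{D}$ and $n$, not on $\alpha$ or $\phaseVar$ --- then covers both cases at once and yields the stated bound.

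The one genuine edge case is $\phaseVar = 1$ together with $j > 1$, which falls outside the range of Lemma~\ref{app:cor:favefirst2} (stated only for $\phaseVar > 1$); I would dispatch it directly by noting that in phase $1$ the only active arm is $b_P = b_1^1$, that pulling it is weakly dominant in each of the $R$ rounds (every other arm is dormant and yields exactly $0$, while $b_P$ with any value $w_j \ge w_1$ yields non-negative utility, strictly positive after the $\varepsilon$ tie-breaking adjustment), and hence a mean-based buyer of any value pulls $b_P$ in all but $o(T) \le \gamma R$ rounds. I do not expect any real obstacle here: all of the substance --- bounding how quickly a mean-based buyer converges to its weakly dominant arm and how much error propagates from the other buyers not yet having converged --- already lives in Corollary~\ref{app:cor:favefirst1} and Lemma~\ref{app:cor:favefirst2}. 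This corollary only merges the two sub-cases and fixes the notation used later in the proof of Proposition~\ref{prop:behave}.
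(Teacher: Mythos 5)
Your proposal is correct and matches the paper's proof, which is literally the one-line citation ``By Lemma~\ref{app:cor:favefirst1} and~\ref{app:cor:favefirst2}'' combined via a common constant. Your explicit treatment of the $\phaseVar=1$, $j>1$ edge case (which formally falls outside the stated range of Lemma~\ref{app:cor:favefirst2}) is a small but welcome extra bit of care that the paper glosses over.
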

\begin{proof}
By Lemma~\ref{app:cor:favefirst1} and~\ref{app:cor:favefirst2}. 
\end{proof}

\begin{lemma} \label{app:lem:setupend} (precise statement of Lemma~\ref{lem:setupend}) For all phase $\phaseVar > 1$, assuming that all other buyers pull their intended arm except for $\alpha(\phaseVar-1)$ fraction of rounds before phase $\phaseVar$, a mean-based buyer with value $w_j$ pulls arm $b_j^\phaseVar$ (if it exists, otherwise they pull arm $b_P$) for the last $R$ rounds, except for at most $C \cdot \alpha(\phaseVar - 1) \cdot R$ rounds for some constant $C_2$ that is not dependent on $\alpha$ or $\phaseVar$.
\end{lemma}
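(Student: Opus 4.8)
The plan is to bootstrap from the first-half statement (Corollary~\ref{app:lem:setupfirst}), exploiting the structural fact that, conditioned on all buyers playing their intended arms, each hindsight cumulative-payoff curve $s\mapsto H_s(w_j,b)$ is \emph{affine on every half-phase}: by the per-round computation inside Lemma~\ref{app:lem:firstFSE} the expected per-round payoff of a fixed arm is the same in every round of a given half (the other buyers' bid profile is i.i.d.\ across those rounds), so up to $O(\alpha(\phaseVar-1)T)$ error from off-intention pulls and $o(T)$ concentration error the curve is piecewise linear with breakpoints only at half-phase boundaries. Consequently, to certify that the intended arm stays $\gamma T$ ahead of a given competitor throughout the second $R$ rounds it suffices to control the two endpoints of that interval.

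First I would pin down the state at the midpoint $s_0:=2R\phaseVar-R$. Applying Corollary~\ref{app:lem:setupfirst} to every buyer, each pulls its intended arm in all but $C_1\alpha(\phaseVar-1)R$ of the first $R$ rounds of phase $\phaseVar$, so $H_{s_0}(w_j,\cdot)$ is obtained from the start-of-phase values (Corollary~\ref{app:cor:firstFSE}) by adding the first half's constant per-round second-price payoffs, with total error $O(\alpha(\phaseVar-1)T)$. Since the intended arm is weakly per-round-optimal throughout the first half (this is exactly the mechanism behind Corollary~\ref{app:cor:favefirst1} and Lemma~\ref{app:cor:favefirst2}), its hindsight lead over any other arm at $s_0$ is at least its lead at the start of phase $\phaseVar$; by Lemma~\ref{app:lem:favestart1}/\ref{app:lem:favestart2} that lead is $\Omega(T)$ over every arm except $b_j^{\phaseVar-1}=b_{j+1}^\phaseVar$, and the first half --- a second-price auction in which truthful bidding strictly beats the overbid $b_{j+1}^\phaseVar$ --- separates the intended arm from $b_{j+1}^\phaseVar$ by a further $\Omega(R)$. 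Hence at $s_0$ the intended arm leads \emph{every} other arm by $c(\mathcal D)R-O(\alpha(\phaseVar-1)T)$ for a constant $c(\mathcal D)>0$ depending only on $\mathcal D$ and $n$ (for $j=1$ one additionally invokes the $\varepsilon$-perturbation of payments to keep $b_1^\phaseVar$ strictly above the null and dormant arms).

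Next I would propagate through the second half and conclude. In the second $R$ rounds the per-round auction is a second-price auction plus the fixed surcharge $2(w_j-w_\ell)$, which again makes each arm's expected per-round payoff round-independent, so each gap $G_\ell(t):=H_{s_0+t}(w_j,b_j^\phaseVar)-H_{s_0+t}(w_j,b_\ell)$ is affine on $t\in[0,R]$ up to $o(T)$ and thus $G_\ell(t)\ge\min\{G_\ell(0),G_\ell(R)\}-o(T)$. We have $G_\ell(0)\ge c(\mathcal D)R-O(\alpha(\phaseVar-1)T)$ from the previous step; and $s_0+R=2R\phaseVar$ is the \emph{start of phase} $\phaseVar+1$, so Lemma~\ref{app:lem:favestart1}/\ref{app:lem:favestart2} gives $G_\ell(R)\ge c(\mathcal D)R-O(\alpha(\phaseVar-1)T)$ for every arm other than $b_j^{\phaseVar+1}=b_{j-1}^\phaseVar$, while $G_{j-1}(R)=\pm o(T)$. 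Using the standing bound $\alpha(\phaseVar-1)\le\Delta(\mathcal D)\XVCG(w_1)/(16nP)$ we obtain $c(\mathcal D)R-O(\alpha(\phaseVar-1)T)>\gamma T$, so on all of $[0,R]$ every arm except $b_j^\phaseVar$ and $b_{j-1}^\phaseVar$ is $\gamma T$-behind and hence pulled with probability $<\gamma$ each round; and since $G_{j-1}$ is affine from $\Omega(T)$ down to $\pm o(T)$, it drops below $\gamma T$ only in a terminal window of $O\big((\gamma+\alpha(\phaseVar-1))R\big)=O(\alpha(\phaseVar-1)R)$ rounds. Summing the at most $K\gamma R$ rounds lost across the remaining arms and the $O(\alpha(\phaseVar-1)R)$ terminal rounds gives the claimed bound $C\alpha(\phaseVar-1)R$ (for $j>\phaseVar$ the arm $b_{j-1}^\phaseVar$ plays no role and $b_P$ simply leads everything by $\Omega(T)$ at both endpoints).

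The main obstacle is the endpoint control for the borderline arms: away from $j=1$ and away from the tie with $b_{j-1}^\phaseVar$ every relevant gap is $\Omega(T)$ and survives comfortably, but $b_1^\phaseVar$ versus the null/dormant arms, and $b_j^\phaseVar$ versus $b_{j-1}^\phaseVar$ near the phase's end, are $o(T)$-scale gaps that must be handled respectively by the $\varepsilon$-perturbation (which makes the relevant gap $\Theta(\varepsilon T)\gg\gamma T$ since $\gamma=o(1)$) and by charging those few terminal rounds to the $O(\alpha(\phaseVar-1)R)$ budget.
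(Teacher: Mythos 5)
Your plan is correct and follows essentially the same route as the paper's proof: bootstrap the first $R$ rounds from Corollary~\ref{app:lem:setupfirst}, use the fact that conditional on intended play each arm's expected per-round payoff is constant within a half-phase (so the hindsight gaps are affine on the second half with slope determined by the second-price-plus-surcharge payoffs), and observe that every gap except the one to $b_{j-1}^\phaseVar=b_j^{\phaseVar+1}$ stays $\Omega(T)$ while that one decreases linearly to $\pm o(T)$, hence is below $\gamma T$ only in a terminal window of $O\big((\gamma+\alpha(\phaseVar-1))R\big)$ rounds. One caveat: as written, your right-endpoint control invokes Lemma~\ref{app:lem:favestart1}/\ref{app:lem:favestart2} at the start of phase $\phaseVar+1$, whose hypotheses include intended behavior \emph{during} phase $\phaseVar$ --- precisely what is being proven; the paper avoids this circularity by arguing forward in time (assume the intended behavior persists until round $R+x$ and verify the gap at $R+x$ is still $\geq\gamma T$), and since your affineness observation already supplies the per-round slope needed to run that first-violation argument, the fix is routine.
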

\begin{proof}
To prove this, we simply observe that Corollary~\ref{app:lem:setupfirst} establishes that
all buyers pull an intended arm with probability $> 1 - mn\gamma$ after first $adj(\phaseVar) = C' \alpha(\phaseVar - 1) R$ rounds in the phase. Let $b_a^\phaseVar$ be the intended arm for buyer with value $w_j$, then $a = \min(\phaseVar, j)$. It is easy to see that because the errors terms are fully adjusted by the first $adj(\phaseVar)$ rounds of auctions, 
\begin{align*}
    &H_{2R(\phaseVar-1) + adj(\phaseVar)}(w_j,b_a^{\phaseVar}) - H_{2R(\phaseVar-1) + adj(\phaseVar)}(w_j,b_{\ell}^{\phaseVar}) \\
    &\geq \left(\ind [\ell < a] \cdot  \sum_{k=\ell}^{j-1}(w_j-w_k) \cdot \XVCG(w_k) + \ind [\ell > a] \sum_{k=a}^{\ell-1}(w_k - w_j) \XVCG(w_k) \right).
\end{align*}
We define $$\Delta(r(w_j), r(w_\ell)) = w_j \cdot \XVCG(w_j) - \PVCG(w_j) - w_j \XVCG(w_\ell) + \PVCG(w_\ell).$$ Assume this phenomenon that that
all buyers pull an intended arm with probability $> 1 - mn\gamma$ last until round $R + x$, then for any $x$ where $x < \left(1 - \frac{8Pn\gamma}{\Delta(r(w_j), r(w_\ell))}\right)R - adj(\phaseVar)$ and active arm $b_{\ell}^{\phaseVar}$,  
\begin{align*}
   &H_{2R(\phaseVar-1) + R+x}(w_a,b_a^{\phaseVar}) - H_{2R(\phaseVar-1) + R+x}(w_a,b_{\ell}^{\phaseVar}) \\
   &\geq 2R \cdot \left(\ind [\ell \leq j] \cdot  \sum_{k=\ell}^{a-1}(w_a-w_k) \cdot \XVCG(w_k) + \ind [\ell > j] \sum_{k=j}^{\ell-1}(w_k - w_a) \XVCG(w_k) \right)\\
   & \quad \quad + (1 - mn\gamma) \cdot (R - adj(\phaseVar)) \left(w_a \cdot \XVCG(w_a) - \PVCG(w_a) - w_a \XVCG(w_\ell) + \PVCG(w_\ell)\right) \\
   & \quad \quad \text{(utility difference in first R rounds when all buyers bid as intended)}\\
&  \quad \quad + (1 - mn\gamma)\cdot x \cdot (w_a \cdot \XVCG(w_a) - 2w_a \XVCG(w_a) + \PVCG(w_a))  \\
&  \quad \quad - (1 - mn\gamma)\cdot x \cdot (w_a \cdot \XVCG(w_\ell) - 2w_\ell \XVCG(w_\ell) + \PVCG(w_\ell))\\
&  \quad \quad \text{(utility difference in last $x$ rounds when all buyers bid as intended)} \\
&  \quad \quad - mn\gamma x\cdot 4 \text{ (rounds when buyers do not bid as intended)}\\
&\geq  2R \cdot \left(\ind [\ell < j] \cdot  \sum_{k=\ell}^{a-1}(w_a-w_k) \cdot \XVCG(w_k) + \ind [\ell > j] \sum_{k=j}^{\ell-1}(w_k - w_a) \XVCG(w_k) \right) \\
&  \quad \quad - (1 - mn\gamma)\cdot x \cdot 2(w_a - w_\ell)\cdot \XVCG(w_\ell) \\
&  \quad \quad + (R - adj(\phaseVar)-x) \cdot \Delta(r(w_a), r(w_\ell)) - mn\gamma x\cdot 4\\
& \geq  2R \cdot \left(\ind [\ell < j] \cdot  \sum_{k=\ell+1}^{a-1}(w_a-w_k) \cdot \XVCG(w_k) + \ind [\ell > j] \sum_{k=j}^{\ell}(w_k - w_a) \XVCG(w_k) \right) \\
&  \quad \quad \text{(Note that this term is non negative)}\\
&  \quad \quad + (R - adj(\phaseVar)-x) \cdot \Delta(r(w_a), r(w_\ell)) - mn\gamma x\cdot 4\\
& \geq (R - adj(\phaseVar)-x) \cdot \Delta(r(w_a), r(w_\ell)) - 4mn x\gamma \geq \gamma T. 
\end{align*} 
This is consistent with $b_a^\phaseVar$ being strictly the favourite arm of buyer with value $w_j$. We conclude that a buyer with value $w_j$ will pull their intended arm aside from $adj(\phaseVar) + \frac{8Pn\gamma}{\Delta(r(w_j), r(w_\ell))}R = C_2 \cdot \alpha(\phaseVar - 1)$ rounds for some constant $C_2$.  

\end{proof}

\begin{proof}[Proof of Proposition~\ref{prop:behave}]
Let us define $\alpha(1) = 2\gamma$ and $\alpha(\phaseVar) = (C_1 + C_2) \cdot \alpha(\phaseVar - 1)$ for $\phaseVar > 1$. It is easy to see that $\alpha(\phaseVar)$ is increasing. Firstly, in phase $1$ there is only one arm that is active and everyone will try to pull that arm aside from $2\gamma R$ rounds. Next, assume for each phase $i \leq \phaseVar-1$ each buyer pull their intended arm aside from $\alpha(i) R$ rounds. Then since $\alpha$ is increasing, on average each buyer pull their intended arm in each round aside from $\alpha(\phaseVar-1) \phaseVar R$ rounds by the end of phase $\phaseVar-1$. By corollary~\ref{app:lem:setupfirst} and lemma~\ref{app:lem:setupend}, each buyer will pull their intended arm aside from at most $C_1 \alpha(\phaseVar-1) R + C_2 \alpha(\phaseVar-1) R = \alpha(\phaseVar) R$ rounds. Finally, we verify that $\alpha(\phaseVar) = o(1)$ for all $\phaseVar$. Note that there are in total only constant ($P$) number of phases, therefore $\alpha(\phaseVar) = (C_1 + C_2)^{\phaseVar-1} \alpha(1) \leq (C_1 + C_2)^P \cdot 2 \gamma = o(1)$. This completes our proof. 
\end{proof}
\section{Omitted Proofs from Section~\ref{sec:UB}}\label{app:UB}

\begin{theorem}(restatement of Theorem~\ref{thm:counterexample}) There exist $(x,y,u)$ that satisfy the $n$-buyer BMSW Constraints but are not $n$-buyer auction feasible, and $(x,u)$ that satisfy the Reduced $n$-buyer BMSW Constraints but are not $n$-buyer auction feasible.
\end{theorem}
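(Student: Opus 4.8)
The plan is to exhibit an explicit instance with $n=2$ and $m=4$. Write the support as $w_1<w_2<w_3<w_4$ with probabilities $q_1,\dots,q_4$, and choose a vector $x$ that is an \emph{extreme point} of the two-buyer Border polytope which is \emph{not} a single-buyer ``threshold'' vector — concretely, with $x_1<x_3$ and $x_2<x_4$. Set $y:=x$, and choose $u$ (essentially the pay-your-bid utilities $u_i:=\max_{j<i}(w_i-w_j)x_j$) together with the numerical values $w_j,q_j$ so that this maximum is attained \emph{uniquely} at $j=2$ when $i=4$ and \emph{uniquely} at $j=1$ when $i=3$. Finding $q,w,x$ realizing all of this simultaneously is a small finite search; once found, checking that $(x,y,u)$ satisfies the $n$-buyer BMSW constraints (monotonicity of $x=y$; $y_i\ge x_i$, here with equality; $u_i\ge 0$; $u_i\ge(w_i-w_j)y_j$ by the definition of $u_i$ as a max; Border's constraints for $x$ by construction) and that $(x,u)$ satisfies the reduced constraints is routine.

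For the infeasibility claim, suppose an auction $A$ realizes $(x,y,u)$ (resp.\ $(x,u)$) for clever mean-based buyers, with the promised $O(\gamma)$ error as $\gamma\to 0$. The first step converts ``$x$ extreme'' into a \emph{per-round} statement: for each round $t$ let $x^{(t)}$ be the (buyer-symmetrized) interim allocation vector, $x^{(t)}_j$ being the probability a buyer with value $w_j$ wins in round $t$ when the other buyer draws i.i.d.\ from $\dist$ and both follow their mean-based strategies. By the same ``pick round $t$ and award according to that round'' argument used in the proof of Proposition~\ref{prop:feasible}, each $x^{(t)}$ is Border-feasible, while $\frac1T\sum_t x^{(t)}=x\pm O(\gamma)$. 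Since $x$ is an extreme point of the Border polytope it admits a supporting functional $\langle c,\cdot\rangle$ with $\langle c,z\rangle\le\langle c,x\rangle$ on the polytope and equality only at $x$; by compactness $\langle c,x\rangle-\langle c,z\rangle\ge\delta(\eps)>0$ whenever $\|z-x\|_\infty\ge\eps$. Averaging the nonnegative quantities $\langle c,x\rangle-\langle c,x^{(t)}\rangle$ over $t$ then forces all but an $O(\gamma)$ fraction of rounds to have $\|x^{(t)}-x\|_\infty<\eps$; i.e.\ up to $o(T)$ rounds, in \emph{every} round each value $w_j$ wins with probability $\approx x_j$.

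The second step uses the tightness structure of $u$. Because arm $w_j$ wins on average with probability $Y^A_{ij}=y_j=x_j$ and charges at most $w_j$ of that, the best fixed arm for a value-$w_4$ buyer has normalized payoff at least $\max_j(w_4-w_j)x_j=(w_4-w_2)x_2=u_4$; since the buyer is no-regret and attains exactly $u_4$, arm $w_2$ must be (on average) pay-your-bid and the unique best fixed arm, by a margin $\Omega(T)$ over every other arm. Combined with the first step — which fixes the allocation environment to be essentially the same in every block of rounds, so that arm $w_2$ is best not just on average but throughout — one concludes that a clever mean-based buyer with value $w_4$ pulls arm $w_2$ in all but $o(T)$ rounds. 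But then $X^A_{i4}=Y^A_{i2}\pm o(1)$, i.e.\ $x_4=x_2$, contradicting $x_2<x_4$; the symmetric argument via value $w_3$ and arm $w_1$ gives the contradiction $x_1=x_3$, and either suffices. This ``$x_2=x_4$ collision'' is precisely what cannot happen when $n=1$: there the Border extreme points are the threshold vectors, for which one can never simultaneously have a value strictly preferring arm $w_2$ while a lower value strictly prefers arm $w_1$ — which is both why the construction is genuinely multi-buyer and a consistency check against Theorem~\ref{thm:BMSWfeasible}.

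The step I expect to be the main obstacle is the passage, inside the second step, from ``arm $w_2$ is the best fixed arm'' to ``a mean-based learner actually pulls arm $w_2$ in all but $o(T)$ rounds.'' The auction $A$ is free to be highly non-stationary in its \emph{payments} (e.g.\ give arm $w_2$ away cheaply early and charge pay-your-bid on it late, while transiently propping up some other arm), so one must combine the per-round allocation stationarity from the first step, the tightness of the no-regret bound, and a careful block-by-block bookkeeping of the cumulative-reward gaps $\sigma_{i,b,s}(w_4)$ to show that arm $w_2$ leads by $\ge\gamma T$ at (almost) every time $s$ — which is what the mean-based property needs. Carrying out that bookkeeping is the technical heart of the argument.
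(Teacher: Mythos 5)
Your construction is essentially the paper's: an extreme point $x$ of the two-buyer Border polytope that is not a threshold vector, $y=x$, pay-your-bid utilities $u_i=\max_{j<i}(w_i-w_j)x_j$, with the maximum attained uniquely at $j=2$ for $i=4$ and at $j=1$ for $i=3$ (the paper instantiates this with $w=[1/M,4/M,5/M,10/M]$, $q=[5\delta,\delta,\delta,1-7\delta]$). The first half of your infeasibility argument --- extremality of $x$ forces the interim allocation to equal $x$ in all but $o(T)$ rounds --- is also exactly the paper's first step. The divergence, and the problem, is in the second half.

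You try to conclude that the value-$w_4$ buyer \emph{actually pulls} arm $w_2$ in all but $o(T)$ rounds, so that $x_4=y_2=x_2$. Two issues. First, your intermediate claim that arm $w_2$ is the best fixed arm ``by a margin $\Omega(T)$ over every other arm'' does not follow: the no-regret bound only says no arm's cumulative hindsight reward exceeds $u_4T+o(T)$, while arm $w_2$'s is at least $(w_4-w_2)x_2T=u_4T$; another arm (e.g.\ $w_3$, by undercharging) can sit within $o(T)$ of arm $w_2$ at time $T$, and can be far \emph{ahead} at intermediate times, which is what the mean-based property actually cares about. Second, you explicitly defer the passage from ``best fixed arm'' to ``pulled almost always'' as ``the technical heart,'' so the proof is not complete as written. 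The paper avoids this entirely: it never argues about which arms are pulled by the $w_4$-buyer. Instead it does a three-way case analysis on the arm $a$ that the $w_3$-buyer favors at the \emph{end} of the auction. If $a\ge w_2$, a hindsight comparison shows the $w_4$-buyer could have earned $u_3+(w_4-w_3)x_2$ from $a$, and chaining $u_3\ge(w_3-w_1)x_1>(w_3-w_2)x_2$ yields $u_4>u_4$. If $w_1\le a<w_2$, the per-round allocation constraint forces arm $w_2$ to allocate strictly more than $x_2$ on average, so pulling $w_2$ beats $u_4$. If $a<w_1$, it forces $u_1>0$. Each case is a pure regret/feasibility contradiction, needing no bookkeeping of cumulative-reward gaps over time. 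I would recommend replacing your second step with this case analysis (or actually carrying out the block-by-block argument you sketch, which I suspect cannot be pushed through against adversarially non-stationary payments without something equivalent to the paper's hindsight chain).
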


\begin{proof}
Let distribution $\mathcal{D}$ be such that the support $w = [1/M, 4/M, 5/M, 10/M]$ for some large number $M$ and $q = [5\delta, \delta, \delta, 1 -7\delta]$ for some $\delta < 1/7$. Let $x^*$ be the border extreme point $x^* = W q$, where $$W =
\begin{bmatrix}
	0.5 & 0 & 0 & 0\\
	1 & 0.5 & 0 & 0\\
	1 & 1 & 0.5 & 0\\
	1 & 1 & 1 & 0.5
\end{bmatrix}.$$ 


Let $\ust(\xst)$ be such that for all $i$, $$\ust_i = \min_{j \le i} \spc (w_i - w_j)x_j.$$ By construction $(x^*, u^*(\xst))$ satisfy the Reduced $n$-buyer BMSW Constraints. We can explicitly calculate $\xst$ to be $\xst = [1-7\delta/2, 13\delta/2, 11\delta/2, 5\delta/2]$. One can verify that  $\xst$ satisfy  
\begin{align*}
	10 \delta \cdot 1/M= 5\delta/2\ \cdot 4/M = \xst_1 (w_3 - w_1) &> \xst_2 (w_3 - w_2) = 11\delta/2 \cdot 1/M = 5.5 \delta \cdot 1/M  , \\
	33 \delta \cdot 1/M= 11\delta/2 \cdot 6/M = \xst_2 (w_4 - w_2) &> \xst_1 (w_4 - w_1) = 5\delta/2 \cdot 9/M = 22.5 \delta \cdot 1/M,\\
	33 \delta \cdot 1/M = 11\delta/2 \cdot 6/M = \xst_2 (w_4 - w_2) &> \xst_3 (w_4 - w_3) = 13\delta/2 \cdot 5/M= 32.5 \delta \cdot 1/M.
\end{align*}
Thus by Claim~\ref{claimLPSolutionInvalid}, $(\xst, \ust(\xst))$ is not $n$-buyer auction feasible. Note that $(\xst, \xst, \ust(\xst))$ also serves as an example that is a feasible solution to the n-buyer BMSW Constraints but not $n$-buyer auction feasible. 
\end{proof}

\begin{claim} \label{claimLPSolutionInvalid}
Given distribution $\mathcal{D}$, if a border extreme point $x^*$ satisfy \begin{align*}
	\xst_1 (w_3 - w_1) &> \xst_2 (w_3 - w_2),\\
	\xst_2 (w_4 - w_2) &> \xst_1 (w_4 - w_1),\\
	\xst_2 (w_4 - w_2) &> \xst_3 (w_4 - w_3)
\end{align*}
and for all $i$, $\ust_i = \min_{j \le i} \spc (w_i - w_j) x_j$, then $(x^*, u^*)$ is not $n$-buyer auction feasible for distribution $\mathcal{D}$. 
\end{claim}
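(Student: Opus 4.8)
The plan is to assume, for contradiction, that some repeated auction $A$ realizes $(\xst,\ust)$ against clever mean-based (hence no-regret) strategies $S_1,\dots,S_n$, and to exhibit a profitable fixed-bid deviation for a buyer with value $w_4$. I would first record two structural facts. (i) Since $\xst$ is a Border extreme point and the round-$t$ interim allocation vector (whose $j$-th entry is the probability a buyer with value $w_j$ wins in round $t$) lies in the Border polytope for every $t$, while the average of these vectors over $t$ equals $\xst$ up to $o(1)$, it follows that in all but an $o(1)$ fraction of rounds the round-$t$ interim allocation is exactly $\xst$; in particular a buyer with value $w_j$ wins with probability exactly $\xst_j$ in almost every round. (ii) From the definition of $\ust$ and the three displayed inequalities, $\ust_2=(w_2-w_1)\xst_1$, $\ust_3=(w_3-w_1)\xst_1$ (the first inequality makes $j=1$ binding for $i=3$), and $\ust_4=(w_4-w_2)\xst_2$ (the second and third make $j=2$ binding for $i=4$); the inequalities also force $\xst_1>0$ and $\xst_1<\xst_2$, so $\ust_2$ is a strictly positive constant independent of $T$.

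Next I would pin down the two lowest arms. A clever buyer with value $w_3$ never regrets the fixed bid $w_1$, so no-regret gives $\ust_3\ge(w_3-w_1)\yst_1$; with $\ust_3=(w_3-w_1)\xst_1$ this forces $\yst_1\le\xst_1$, and cleverness, which ensures $\yst_j\ge\xst_j$, forces $\yst_1=\xst_1$. The same argument with value $w_4$ and bid $w_2$ gives $\yst_2=\xst_2$. Combined with fact (i)---a value-$w_j$ buyer wins $\xst_j$ each round and bids at most $w_j$, so by monotonicity the arm $w_j$ wins with probability at least $\xst_j$ each round, which averages to $\yst_j=\xst_j$---we get that in almost every round the arms $w_1$ and $w_2$ win the item with probabilities exactly $\xst_1$ and $\xst_2$. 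Since $\xst_1<\xst_2$, in such a round the only bid at most $w_2$ winning with probability at least $\xst_2$ is $w_2$ itself; as a value-$w_2$ buyer must win with probability $\xst_2$ in almost every round, it must pull arm $w_2$ in all but $o(T)$ rounds. Hence the expected per-round payment collected by arm $w_2$ equals the expected per-round payment of a value-$w_2$ buyer, which is $w_2\xst_2-\ust_2$.

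Finally I would consider a buyer with value $w_4$ deviating to the clever fixed-bid strategy ``always pull arm $w_2$'': this wins with average probability $\yst_2=\xst_2$ and pays $w_2\xst_2-\ust_2$ per round, hence earns $(w_4-w_2)\xst_2+\ust_2-o(1)$ per round. No-regret then forces $\ust_4\ge(w_4-w_2)\xst_2+\ust_2-o(1)$; but $\ust_4=(w_4-w_2)\xst_2$ and $\ust_2$ is a fixed positive constant, a contradiction once $\gamma$ is small and $T$ large. (Morally, a value-$w_4$ buyer copies a value-$w_2$ buyer's position, inheriting the latter's slack $\ust_2>0$ on top of the extra value it extracts---impossible if the $(i,j)=(4,2)$ reduced BMSW constraint is exactly tight.)

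The step I expect to be the main obstacle is the bookkeeping behind the structural facts: justifying rigorously that each round's interim allocation lies in the Border polytope (so the extreme-point argument yields allocation exactly $\xst$ in a $1-o(1)$ fraction of rounds), and then that this pins a value-$w_2$ buyer to arm $w_2$ in all but $o(T)$ rounds---all while carrying the $O(\gamma)$ slack from the feasibility definition and handling conditioning on history and possibly heterogeneous learners. Given those per-round statements, isolating the deviation and checking where each of the three hypotheses enters is a short computation.
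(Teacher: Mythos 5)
Your overall strategy is genuinely different from the paper's: you center the argument on the value-$w_2$ buyer, pin down $\yst_1=\xst_1$ and $\yst_2=\xst_2$ from tightness of the BMSW constraints, and derive a single contradiction from the strict positivity of $\ust_2=(w_2-w_1)\xst_1$ by having a value-$w_4$ buyer imitate the value-$w_2$ buyer. The paper instead performs a three-way case analysis on which arm the value-$w_3$ buyer prefers at the end of the auction (label $\geq w_2$, label in $[w_1,w_2)$, label $<w_1$), deriving a contradiction from \emph{allocations} in each case rather than from any arm's payments. Several of your intermediate steps (the Border extreme-point argument forcing per-round allocations to equal $\xst$, and the monotonicity argument showing arm $w_j$ allocates at least $\xst_j$ every round) match the paper's.

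However, there is a genuine gap at the step ``the only bid at most $w_2$ winning with probability at least $\xst_2$ is $w_2$ itself.'' The definition of $n$-buyer auction feasible quantifies over all repeated auctions, and the seller is free to include arms whose labels lie strictly between $w_1$ and $w_2$ (the paper's own proof devotes two of its three cases precisely to arms not labeled by support points). Such an arm $a\in(w_1,w_2)$ can have round-$t$ interim allocation exactly $\xst_2$ (monotonicity only confines it to $[\xst_1,\xst_2]$), so the value-$w_2$ buyer may pull it instead of arm $w_2$, possibly splitting its rounds among several such arms. Your argument essentially requires an \emph{upper bound} on the per-round payment of some single fixed arm (namely $w_2\xst_2-\ust_2$), and the only way to obtain such an upper bound is to know that a specific buyer actually pulls that specific arm in almost every round; all the no-regret/BMSW constraints by themselves only give \emph{lower} bounds on arms' payments. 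If the value-$w_2$ buyer's payments are spread over several intermediate arms, no single fixed-bid deviation for the value-$w_4$ buyer is guaranteed to recover the full $\ust_2$ bonus (the favorite arm of the $w_2$-buyer at the end need not have average allocation close to $\xst_2$ over all rounds), and the argument degenerates to the non-strict $\ust_4\geq(w_4-w_2)\xst_2$. Your proof is correct under the additional hypothesis that the bid space equals the support of $\dist$ (the restriction the paper imposes only in Theorem~\ref{thm:nonconvex}), but as stated the claim must hold for arbitrary bid spaces, and closing this case appears to require something like the paper's case analysis.
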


\begin{proof}
Assume $(\xst, \ust)$ is $n$-buyer auction feasible, and let $A$ the be the repeated auction where when the buyers are running clever $\gamma$-mean based algorithms, $$X_{ij}^A(\dist, S_1,\ldots, S_n)= \xst_j \pm O(\gamma); U_{ij}^A(\dist, S_1,\ldots, S_n) =\ust_j \pm O(\gamma).$$ Therefore when the buyers are running clever $0$-mean based algorithms, the average allocation probability and utility for buyer with value $w_j$ in $A$ is exactly $\xst_j$ and $\ust_j$. 

Because $\xst$ is an extreme point of the border polytope, $\xst$ cannot be written as a convex combination of $\geq 2$ distinct points that also satisfy the Border constraints. Since $\xst$ is a convex combination of the allocation probabilities in each fixed round, all the allocations during each round must be $\xst$ itself. This means that any auction which matches $(\xst,\ust)$ must have buyers with value $w_j$ receive the item with probability $x_j$ in every single round (except for $o(T)$). 

Suppose that at the end of auction $A$, the buyer with value $w_3$ is pulling an arm $a$ with label $\ge w_2$. 
Then because $w_3$ buyer's favorite arm is $a$ at the end, the $w_3$ buyer gets utility $\ust_3$ on average (across rounds of auctions) by pulling arm $a$ always. Furthermore, arm $a$'s average allocation is at least $\xst_2$
(because buyer with value $w_2$ gets the item with probability $\xst_2$ every round by pulling arms $\leq w_2$, and the allocation probability is monotone in value in each round). Thus, buyer with value $w_4$ can get utility  $\xst_2(w_4 - w_3) + \ust_3$ by pulling arm $a$ the whole way through. Then we reach a contradiction:
\begin{align*}
	\ust_4 &\geq \xst_2 (w_4 - w_3) + \ust_3 \\
	&\geq \xst_2 (w_4 - w_3) + \xst_1(w_3 - w_1)\\
	&> \xst_2 (w_4 - w_3) + \xst_2(w_3 - w_2)\\
	&\geq  \xst_2 (w_4 - w_2) = \ust_4.
\end{align*}

Suppose instead that buyer with value $w_3$ at the end has strictly higher utility for 
an arm $a$ labeled $w_1 \leq a < w_2$ than for any arm outside of this region. 
Then the $w_3$ buyer must be pulling arm $a$ for a non-empty period at the 
end of the auction. 
Thus, in a non-empty set of rounds, arm $a$ grants allocation $\xst_3$.
During those rounds, arm $w_2$ must also grant allocation $\geq \xst_3$ 
(by monotonicity).
Then because arm $w_2$ must grant at least $\xst_2$ every round, it grants 
$> \xst_2$ allocation on average throughout the auction, 
while charging at most $w_2$. This means that buyer with value $w_4$ can get average utility strictly greater than 
$\xst_2 (w_4 - w_2)$ by pulling arm $w_2$ every round,
contradicting $\ust_4 = \xst_2 ( w_4 - w_2)$. 

Finally, suppose the $w_3$ buyer has the highest utility at the end 
for an arm $a$ labeled $< w_1$ (not necessarily strictly higher than any arm outside of this region). 
Then arm $a$ must give the item with positive probability on average. Since $a$ is labeled $< w_1$, arm $a$ always charges less than $w_1$. This means that $\ust_1 > 0$, contradicting $\ust_1 = min_{j \leq 1} (w_1 - w_j) x_j = 0$. 
\end{proof}



\section{Omitted Proofs from Section~\ref{sec:UPYB}}\label{app:UPYB}
Note that the notion of optimal is with regard to $0$-regret buyers, therefore we will assume the buyers are $0$-regret throughout the proofs in this section. 

\begin{theorem} (restatement of Theorem~\ref{thm:uniform2}) The optimal pay-your-bid uniform auction with declining reserve can be found by a linear program of size $\poly(m)$. However, there exist $2$-buyer instances where the optimal $2$-buyer feasible auction strictly outperforms the best pay-your-bid uniform auction with declining reserve.
\end{theorem}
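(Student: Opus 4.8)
I would first argue that, up to an additive $o(T)$ loss in revenue, a pay-your-bid uniform auction with declining reserve is described by a single vector $z=(z_1,\dots,z_m)\ge 0$ with $\sum_j z_j\le 1$: the reserve is held at the support value $w_j$ for a $z_j$ fraction of the rounds, in decreasing order of $j$. (Reserves strictly between two consecutive support points induce exactly the same behaviour of clever, never-overbidding buyers as the next support point above; a randomised reserve schedule can be derandomised without loss against the linear revenue objective; and ``declining'' forces the order.) Then, against no-regret buyers --- equivalently, since each round is strategically trivial, against follow-the-leader --- I would track the induced play: a clever buyer with value $w_j$ does nothing while the reserve exceeds $w_j$, and during the phase with reserve $w_{j'}\le w_j$ she pulls the arm labelled $w_{j'}$ (any higher arm wins with the \emph{same} probability, by uniform tie-breaking among bidders clearing the reserve, but pays strictly more), \emph{except} possibly for an initial ``switching-lag'' window of that phase during which her historical leader is still an arm inherited from the previous phase. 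One then gets $x=y$ with $x_j=\sum_{j'\le j}z_{j'}p_{j'}$, where $p_{j'}:=\E[\,1/|\{k:v_k\ge w_{j'}\}| \mid v_i\ge w_{j'}\,]$ is a constant of $\dist$ and $n$, and both the corresponding tuple $(x,y,u)$ and the revenue are piecewise-linear functions of $z$ with $\mathrm{poly}(m)$ pieces (the lag window length is a min of linear functions of $z$, namely ``overtake time'' versus ``phase length''). Maximising such a function over the polytope $\{z\ge 0,\ \sum_j z_j\le 1\}$ is a linear program of size $\mathrm{poly}(m)$; in fact one can check the revenue is concave in $z$ (it is welfare, which is linear in $z$, minus buyer utility, which is an affine function of $z$ minus a nonnegative combination of the concave lag terms), so it is a single LP.

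\textbf{Part 2: strict suboptimality.} The plan is to exhibit an explicit $2$-buyer instance. The key leverage is that every pay-your-bid uniform auction with declining reserve realises an allocation of the restricted form $x_j=\sum_{j'\le j}z_{j'}p_{j'}$; i.e.\ $x$ lies in the cone generated downward by the $m$ fixed ``uniform-at-reserve-$w_{j'}$'' allocation vectors, which for $n\ge 2$ is a \emph{strict} subpolytope of the Border polytope --- in particular it does not contain the highest-bidder-wins extreme point $Wq$ featured in the counterexample behind Theorem~\ref{thm:counterexample}. I would take a small-support distribution (three or four points suffice, mirroring that instance), compute the LP-optimal pay-your-bid uniform declining-reserve revenue $R_{\mathrm{unif}}$ via Part 1, and then construct a concrete competing $2$-buyer feasible auction --- for instance one that devotes one block of rounds to a clever-buyer second-price-style mechanism (whose allocation is the non-uniform extreme point and whose per-round revenue is the expected runner-up value), followed by a pay-your-bid block tuned so that a follow-the-leader / mean-based analysis pins the buyers to the intended arms and recoups the remaining utility --- and verify numerically that its revenue exceeds $R_{\mathrm{unif}}$. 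Feasibility of the competitor I would check directly: it is monotone and anonymous, the $0$-mean-based trajectory is unique and computable, and I would make sure its allocation is \emph{not} a bad Border extreme point in the sense of Claim~\ref{claimLPSolutionInvalid}, so that the obstruction exhibited there does not apply to it.

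\textbf{Main obstacle.} The delicate part of Part 1 is the follow-the-leader bookkeeping across phase boundaries: I need to show the switching-lag window is a genuinely well-behaved function of $z$ --- piecewise-linear with boundedly many pieces, and in particular that the lag does not cascade uncontrollably through many phases --- so that the revenue really is piecewise-linear (indeed concave) in $z$ and the resulting LP has size $\mathrm{poly}(m)$. For Part 2 the main effort is producing an \emph{honestly certified} feasible competing auction and an airtight numeric comparison: it is not enough to write down a tuple meeting the (merely necessary) $n$-buyer BMSW constraints, since those are not sufficient, so the competitor must come with its own implementability argument.
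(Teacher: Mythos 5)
Your Part~1 is essentially the paper's argument: the paper parametrizes the auction by $\lambda_j = l_j - l_{j-1}$ (your $z_j$), computes $x_j = \sum_{j'\le j}\lambda_{j'}E_{j'}$ with $E_{j'} = \E\bigl[1/(1+H_{j'}^{n-1})\bigr]$ (your $p_{j'}$), inverts this to a single linear constraint on $\vec{x}$, and maximizes $\sum_i q_i(w_ix_i-u_i)$ subject to that constraint plus $u_i\ge (w_i-w_j)x_j$. The ``switching-lag'' obstacle you flag as the delicate point is a red herring: in a uniform auction every arm at or above the reserve wins with the \emph{same} probability (as you yourself observe parenthetically), and a clever follow-the-leader buyer never pulls an arm below the current reserve (such arms have zero cumulative utility under a declining reserve), so the realized allocation during each reserve phase is exactly $E_{j'}$ per round regardless of which above-reserve arm is momentarily the historical leader. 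The lag affects only payments, and the utility side is handled by the hindsight constraint $u_i\ge\max_{j\le i}(w_i-w_j)x_j$ together with the theorem's explicit restriction to $0$-regret buyers (the footnote absorbs the $O(n\gamma)$ slack otherwise). So no piecewise-linear lag terms enter the LP; it is linear outright.

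Part~2, however, has a genuine gap: you describe a plan (``I would take a small-support distribution\dots construct\dots verify numerically'') rather than exhibiting an instance, and your proposed competitor --- a two-block construction mixing a second-price-style phase with a tuned pay-your-bid phase --- would itself require a nontrivial implementability argument that you defer. The paper's separation is much simpler and avoids this entirely: take $\dist$ uniform on $\{1/4,2/4,3/4,1\}$ with $n=2$. The LP from Part~1 is optimized by a fixed reserve of $3/4$ (revenue $9/16$), and the competitor is just the \emph{second-price auction with fixed reserve $3/4$ run every round} --- a per-round dominant-strategy-truthful auction, so its feasibility for clever $0$-regret buyers is immediate, no blocks or tuning needed. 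It weakly dominates the uniform-lottery-above-reserve auction on every value profile and strictly beats it when both buyers have value $1$ (revenue $1$ versus $3/4$). If you pursue your route, you must actually produce the instance and certify the competitor; as written, the claim is only asserted.
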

\begin{proof}
By Claim~\ref{app:claimUniformLP} and Claim~\ref{app:claimUniformSubopt}.
\end{proof}
\begin{claim}\label{app:claimUniformLP}
The optimal pay-your-bid uniform auction with declining reserve can be found by a linear program of size $\poly(m)$. 
\end{claim}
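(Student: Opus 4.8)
The plan is to reduce the problem to an explicit linear program over the declining‑reserve schedule. First I would put the auction into a normal form. Since every additional arm only gives buyers more options (and hence can only lower the seller's revenue), it is without loss of generality to take the arm set to be $\{0,w_1,\dots,w_m\}$; and since a reserve strictly between $w_j$ and $w_{j+1}$ is, from the buyers' point of view, indistinguishable from the rule ``a bid of $w_{j+1}$ or higher beats the reserve,'' the entire schedule is captured by numbers $0=\sigma_0\le\sigma_1\le\cdots\le\sigma_m\le 1$, where $\sigma_j$ is the fraction of rounds in which a bid of $w_j$ beats the reserve. Conversely, for any such vector one obtains a \emph{weakly declining} reserve realizing it (run the highest reserve first, then decrement), so the schedule polytope is exactly $\{0\le\sigma_1\le\cdots\le\sigma_m\le1\}$, which has $m$ variables and $O(m)$ constraints.

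Next I would analyze $0$-mean‑based (i.e.\ follow‑the‑leader) clever buyers on such a schedule. The crucial structural point is that, because the winner each round is a \emph{uniformly random} bidder among all those above the reserve, every participant in a round whose reserve sits at ``level $j$'' wins with the \emph{same} probability $c_j$, where $c_j$ depends only on $n$ and $\dist$ (it is $\mathbb{E}_{v_2,\dots,v_n\sim\dist}[\,1/(1+|\{k\ge2:v_k\ge w_j\}|)\,]$) and is independent of the schedule and of which valid arm each bidder pulls. Two consequences follow. (i) A value-$w_j'$... more precisely, a value-$w_i$ buyer always participates whenever her value is at least the reserve (using the $\varepsilon$-tie‑break convention already in the FSE construction; underbidding the live reserve is dominated, as it wins nothing), so her average allocation is $x_i=\bar a_i:=\sum_{j\le i}(\sigma_j-\sigma_{j-1})c_j$, a \emph{linear function of $\sigma$ that does not depend on her strategy at all}; in particular $\vec x$ is automatically Border‑feasible and monotone. (ii) The in‑hindsight average utility from consistently pulling arm $w_\ell$ ($\ell\le i$) is exactly $(w_i-w_\ell)\bar a_\ell$; since the leader arm for a fixed value $w_i$ is monotone non‑increasing in time (a lower arm, once it clears the reserve, accrues allocation faster, and arms never move back above the reserve), it changes at most $m$ times, so by the be‑the‑leader lemma follow‑the‑leader earns, up to $o(T)$, exactly the best‑fixed‑arm value $u_i:=\max_{\ell\le i}(w_i-w_\ell)\bar a_\ell$ (the $o(T)$ error in both directions uses the near‑stationarity of the per‑round rewards within each regime, together with the $O(m)$ bound on regime changes).

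Finally I would assemble the LP. Because the auction is pay‑your‑bid, the expected revenue is $nT\sum_i q_i(w_i x_i-u_i)=nT\sum_i q_i(w_i\bar a_i-u_i)$. Hence the optimal pay‑your‑bid uniform auction with declining reserve is the optimum of: variables $\sigma_1,\dots,\sigma_m,u_1,\dots,u_m$; maximize $nT\sum_i q_i(w_i\bar a_i-u_i)$; subject to $0\le\sigma_1\le\cdots\le\sigma_m\le1$, $u_i\ge (w_i-w_\ell)\bar a_\ell$ for all $1\le\ell\le i\le m$, and $u_i\ge0$, where $\bar a_\ell$ denotes the linear expression above with the precomputable constants $c_j$. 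This has $2m$ variables and $O(m^2)$ constraints, so it is of size $\poly(m)$. Correctness is a two‑way inequality: for any fixed $\sigma$ the real auction attains objective value with $u_i$ equal to the true follow‑the‑leader utilities $\max_\ell(w_i-w_\ell)\bar a_\ell$, which is precisely what the LP's optimization selects (minimizing $\sum_i q_iu_i$ drives each $u_i$ down to its lower‑bound max); conversely any LP‑feasible $(\sigma,u)$ is realized by an actual schedule whose revenue is at least the objective value. (As a sanity check, at $n=1$ one has $c_j=1$, $\bar a_\ell=\sigma_\ell$, and this LP coincides with the BMSW LP of Theorem~\ref{thm:BMSWfeasible}, recovering the known single‑buyer result.) The part I expect to be the main obstacle is Step 2 — making the follow‑the‑leader analysis rigorous: specifically, establishing that the per‑round win probability within a reserve level is the schedule‑independent constant $c_j$ (which needs the tie‑break convention and an argument that no buyer ever underbids the live reserve) and that the leader arm is monotone in time so that follow‑the‑leader's cumulative utility equals the best‑fixed‑arm benchmark up to $o(T)$ on \emph{both} sides; everything else is linear‑algebraic bookkeeping.
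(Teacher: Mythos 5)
Your proposal is correct and follows essentially the same route as the paper's proof: both parameterize the declining-reserve schedule by the fraction of rounds each bid clears the reserve, observe that the per-round win probability at reserve level $j$ is the schedule-independent constant $\mathbb{E}[1/(1+H_j^{n-1})]$, and write a $\poly(m)$-size LP maximizing $\sum_i q_i(w_i x_i - u_i)$ with the no-regret constraints $u_i \ge (w_i - w_\ell) x_\ell$. The only difference is cosmetic — you keep the schedule variables $\sigma$ and express $x$ linearly in them, while the paper changes variables to $x$ and folds the budget $\sum_j \lambda_j \le 1$ into the constraint $\sum_j x_j(1/E_j - 1/E_{j+1}) \le 1$ — and your extra care on the follow-the-leader utility analysis is a detail the paper elides.
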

\begin{proof}

In uniform auction with declining reserve, every buyer above the reserve has an equal chance of winning the item (and paying their bid) in each round. Let us define variables $l_1, \dots, l_m$, where $l_i$ denotes the fraction of rounds in which bidding $w_i$ would win the item with positive probability (i.e. the fraction of rounds in which $w_i$ is above the reserve). Let $l_{0} = 0$. For monotonicity, $l_i \leq l_{i + 1}$ $\forall i$, for the $l_1, ... l_m$ representing disjoint segment of auctions, $\sum_i (l_i - l_{i - 1}) \leq 1$. 

Assume there are in total $T$ rounds of auctions. Then for the first $(1 - l_i)T$ rounds, bidding $w_i$ would win the item with probability 0. Let $H_j^{n - 1}$ be the random variable representing the number of buyers with type $\geq w_j$ among the $n - 1$ other buyers in the auction \footnote{$H_j^{n - 1}$ can be thought of as a binomially distributed random variable with parameters $n - 1$, $p = 1 - F_{\dist}(w_j)$.}. Then during the interval $((1 - l_i)T, (1 - l_{i - 1})T)$, a buyer with type $w_j \geq w_i$ gets the item each round with probability $\E\left[\frac{1}{1 + H_i^{n - 1}}\right]$. Therefore for any clever buyer strategies $S_1, ... S_n$ and for buyer $i$, 
\begin{align*}
	X_{ij}^A(\dist, S_1, \ldots, S_n) &= \sum_{i = 1}^{j} (l_i - l_{i - 1})\E\left[\frac{1}{1 + H_i^{n - 1}}\right].
\end{align*}
By setting $\lambda_i = l_i - l_{i - 1}$, we can rewrite the relationship between $X$ and $\lambda$ as 
\begin{align*}
	X_{ij}^A(\dist, S_1, \ldots, S_n) = \sum_{i = 1}^{j} \lambda_{i} \E\left[\frac{1}{1 + H_i^{n - 1}}\right], \text{ subject to } \lambda_i \geq 0, \sum_i \lambda_i \leq 1.
\end{align*}
Observe that $\E\left[\frac{1}{1 + H_i^{n - 1}}\right]$ actually can be calculated solely based on $\mathcal{D}$. Specifically, let $Q_i= \sum_{k=i}^{m} q_i$ be the probability that a buyer has value $\geq w_i$, then  
\begin{align*}
\E\left[\frac{1}{1 + H_i^{n - 1}}\right] = \sum_{j=0}^{n-1} \frac{1}{1 + j} \cdot \binom{n-1}{j} \cdot  Q_i^j \cdot  (1-Q_i)^{n-1-j}. 
\end{align*}

Denote the value of $\E\left[\frac{1}{1 + H_k^{n - 1}}\right]$ as $E_{k}$. We observe that 
$$ X_{ij}^A(\dist, S_1, \ldots, S_n) - X_{i(j-1)}^A(\dist, S_1, \ldots, S_n) = \lambda_j E_j,$$
let variable $x_j$ represent the value of $X_{ij}^A(\dist, S_1, \ldots, S_n)$ where $S_1, ... S_n$ have $0$-regret for any buyer $i$, then $\lambda_j = \frac{x_j - x_{j-1}}{E_j}$ (we define $x_0 = 0$ here). The constraint $\sum_{k=1}^m \lambda_k \leq 1$ can now be written as $\sum_{k=1}^m \frac{x_j - x_{j-1}}{E_j} \leq 1$, which is equivalent to $$\sum_{j=1}^m x_j (\frac{1}{E_j} - \frac{1}{E_{j+1}}) \leq 1,$$ where we define $1/E_{j+1} = 0$. We conclude that solutions $\lambda$ from the following linear program correspond to the optimal pay-your-bid uniform auction with declining reserve. 
 
\newcommand{\lpUnifReserveReduced}{\ReducedUniformLP}
\begin{align*}
	\mathbf{maximize }_{\vec{x}, \vec{u}} \spc \spc \spc & \sum_{i = 1}^m q_i (w_i x_i - u_i) \tag{\textbf{\lpUnifReserveReduced}}\\
	\mathbf{subject \spc to \spc} \spc \spc \spc &
	\sum_{j=1}^m x_j (\frac{1}{E_j} - \frac{1}{E_{j+1}}) \leq 1\\
	& u_i \geq (w_i - w_j) \cdot x_j, \forall i, j \in [m] : i > j\\
	& \vec{x} \text{ is monotone}\\
	& \vec{x}, \vec{u} \geq 0.
\end{align*} 
\end{proof}

\begin{claim}\label{app:claimUniformSubopt}
There exist $2$-buyer instances where the optimal $2$-buyer feasible auction strictly outperforms the best pay-your-bid uniform auction with declining reserve.
\end{claim}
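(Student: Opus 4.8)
The plan is to name a single small-support $2$-buyer distribution $\dist$, certify via the linear program of Claim~\ref{app:claimUniformLP} an upper bound on the revenue of \emph{every} pay-your-bid uniform auction with declining reserve, and then exhibit one concrete $2$-buyer feasible auction whose revenue provably exceeds that bound.

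It helps to first isolate the structural reason a gap must exist. By Claim~\ref{app:claimUniformLP}, the time-averaged allocation vectors realizable by pay-your-bid uniform auctions with declining reserve are exactly the combinations $\sum_{i=1}^m \lambda_i x^{(i)}$ with $\lambda_i \ge 0$ and $\sum_i \lambda_i \le 1$, where $x^{(i)} = (0,\dots,0,E_i,\dots,E_i)$ has $i-1$ leading zeros and $E_i = \mathbb{E}[1/(1+H_i^{\,n-1})]$; for $n=2$ this is $E_i = 1 - Q_i/2$ with $Q_i = \sum_{k\ge i} q_k$. This set is a low-dimensional simplex, and the efficient ``higher-bid-wins'' allocation $\XVCG$ (where $\XVCG_j = \mathbb{P}[\text{other value}<w_j] + \tfrac12\mathbb{P}[\text{other value}=w_j]$) is, generically, \emph{not} in it: solving $\sum_{i\le j}\lambda_i E_i = \XVCG_j$ forces $\sum_i\lambda_i>1$ unless $\dist$ is degenerate. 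So an efficient, second-price-priced auction already lies outside the uniform family; the work is to upgrade ``outside the family'' to ``strictly more revenue''.

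Concretely I would: (i) fix $\dist$ with $m\in\{2,3\}$ and explicit rational $w_i,q_i$, chosen so that a large share of the welfare comes from rounds in which two buyers are simultaneously above any reserve a uniform auction would want to use (so the factor $E_i\le 1$ genuinely bites); (ii) solve \ReducedUniformLP{} in closed form — it has $\poly(m)$ variables and a single scalar ``budget'' constraint $\sum_j x_j(1/E_j - 1/E_{j+1}) \le 1$, so its optimum sits at an easily enumerated vertex — obtaining the exact optimal uniform revenue $R_{\mathrm{unif}}$ ($n$ times the LP optimum), which by Claim~\ref{app:claimUniformLP} upper-bounds all pay-your-bid uniform declining-reserve auctions; (iii) exhibit an explicit $2$-buyer feasible auction $A^\star$ that is not of the uniform type — a natural candidate is a second-price auction with an $\varepsilon$ tie-break discount (exactly as in the FSE construction, so that a clever mean-based buyer's value arm is weakly dominant and the induced tuple is $\XVCG$ together with $u^\star_j = w_j\XVCG_j - \PVCG(w_j)$ up to $O(\gamma)$), or a reserve-augmented/declining-reserve second-price variant, or the multi-buyer analogue of the BMSW auction referenced for the `if' direction; (iv) compute $\mathrm{Rev}_{A^\star}$ — for plain second price this is $\mathbb{E}[\text{second-highest of }n]$ per round — and verify $\mathrm{Rev}_{A^\star} > R_{\mathrm{unif}}$ for the chosen parameters, with a comfortable margin.

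The main obstacle is step (iii): it is \emph{not} enough to produce a tuple $(x,u)$ that merely satisfies the (necessary) reduced $n$-buyer BMSW constraints and has large LP value, since Theorem~\ref{thm:counterexample} shows those constraints are not sufficient for auction-feasibility — one must name an actual repeated auction and prove that clever $\gamma$-mean-based buyers, who are in general very hard to predict, do play as intended. Here that analysis is made tractable exactly as in the proof of Proposition~\ref{prop:behave}: in every round of $A^\star$ the arm we want a value-$w_j$ buyer to pull is (weakly, and after the $\varepsilon$ discount strictly) dominant, so no $\gamma$-mean-based learner can deviate on more than an $O(\gamma)$ fraction of rounds, and the induced $(x,u)$ is pinned down up to $O(\gamma)$. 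The only remaining piece — tuning the rational parameters of $\dist$ so that the strict inequality $\mathrm{Rev}_{A^\star} > R_{\mathrm{unif}}$ holds — is a finite, routine computation.
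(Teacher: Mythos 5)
Your plan is correct and matches the paper's proof in all essentials: the paper takes $\dist$ uniform on $\{1/4,1/2,3/4,1\}$ with two buyers, computes that \ReducedUniformLP{} optimizes at a fixed reserve of $3/4$ with revenue $9/16$, and then shows a second-price auction with the same reserve pointwise weakly dominates it and strictly beats it when both values equal $1$. Your additional care in step (iii) — using the $\varepsilon$ tie-break discount so that clever mean-based buyers verifiably pull the truthful arm — is a sound refinement of a point the paper's proof treats implicitly, and the only remaining work is the routine instantiation the paper supplies.
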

\begin{proof}
Consider the following example with $2$ buyers with value distribution $\mathcal{D}$ where
\begin{align*}
    w = \left[\frac{1}{4}, \frac{2}{4}, \frac{3}{4}, 1\right], q = \left[\frac{1}{4}, \frac{1}{4}, \frac{1}{4}, \frac{1}{4}\right].
\end{align*}
One can verify that the optimal solution to \ReducedUniformLP{} for the example is $x_1 = x_2 = 0, x_3 = x_4 = \frac{3}{4}$, where the attained revenue is $9/16$. This solution corresponds to running uniform auction with fixed reserve at $3/4$ in each round. However, simply running second price auction with fixed reserve at $3/4$ in each rounds results in an increase in expected revenue. 
To see this, let us compare the revenue from both auction formats for each fixed pair of buyer values. We observe that in no case could second price generate worse revenue than uniform price auction, given that they have the same reserve. Moreover, when both buyers have value $1$, second price auction attains revenue $1$ while uniform auction only attains revenue $\frac{3}{4}$. Hence we conclude that pay-your-bid uniform auction with declining reserve is not the optimal auction format. 
\end{proof}

\section{Omitted Proofs from Section~\ref{sec:nonconvex}}\label{app:nonconvex}

\begin{theorem}[restatement of Theorem~\ref{thm:nonconvex}] Let $P$ denote the set of all tuples $(x,y,u)$ that are $n$-buyer feasible auctions \emph{where the bid space is equal to the support of $\dist$}. Then $P$ is not necessarily convex, even when $n=2$. 
\end{theorem}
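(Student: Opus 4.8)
The plan is to fix $n=2$, choose a finite distribution $\dist$ with small support (four points will suffice, as in the proof of Theorem~\ref{thm:counterexample}), and fix the bid space to be exactly $\{0\}\cup\{w_1,\dots,w_m\}$. I would then exhibit two tuples $(x^{(0)},y^{(0)},u^{(0)})$ and $(x^{(1)},y^{(1)},u^{(1)})$ that are each $2$-buyer feasible via some $m$-arm repeated auction whose mean-based dynamics converge (constructed along the lines of Section~\ref{sec:FSE} and~\cite{BravermanMSW18}), but whose midpoint $(\bar x,\bar y,\bar u)=\tfrac12\big((x^{(0)},y^{(0)},u^{(0)})+(x^{(1)},y^{(1)},u^{(1)})\big)$ is \emph{not} $2$-buyer feasible. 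As a first sanity check I would record that $(\bar x,\bar y,\bar u)$ does satisfy the $n$-buyer BMSW constraints — automatic, since those constraints are linear and both endpoints satisfy them — so its infeasibility must be established by a finer, round-by-round argument in the spirit of Claim~\ref{claimLPSolutionInvalid}. (This simultaneously re-establishes that the BMSW constraints are not sufficient, which is why Theorem~\ref{thm:nonconvex} is flagged as giving an alternative proof of Theorem~\ref{thm:counterexample}.)

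\textbf{Constructing the two endpoints.} I would design the two endpoints so that they differ essentially only in how the learning dynamics steer one target value, say $w_3$: in the first auction, after its setup rounds, the value-$w_3$ buyer's favorite arm is arm $w_1$; in the second, it is arm $w_2$. Since clever buyers never overbid, both arms are genuinely available to the $w_3$-buyer, and which one it ends up preferring can be dialed in through the per-round payments on those two arms (much allocation and a near-bid charge on one, versus less allocation and a lower charge on the other), exactly the surcharge/reserve manipulation used in Section~\ref{sec:FSE}. From the steady-state behavior one reads off $(x^{(k)},y^{(k)},u^{(k)})$ by a direct computation, and I would arrange that $x^{(0)},x^{(1)}$ are distinct extreme points of the $2$-buyer Border polytope so that no averaging slack hides inside the allocations. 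The bulk of the work here is checking that the mean-based dynamics really do converge to the intended behavior in each endpoint, but this follows the template of Lemmas~\ref{lem:favestart}--\ref{lem:setupend} and Proposition~\ref{prop:behave}: in each endpoint the targeted arms are weakly dominant in the rounds that matter.

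\textbf{Infeasibility of the midpoint.} Suppose for contradiction that some $m$-arm auction $A$ realizes $(\bar x,\bar y,\bar u)$ against two clever mean-based buyers. Using monotonicity of $A$ in every round, together with the facts that $\bar x$ lies between two Border-extreme points and that the bid space has only the labels $w_1,\dots,w_m$, I would pin the possible steady-state behaviors of the value-$w_3$ buyer down to a short list, and then rerun the case analysis of Claim~\ref{claimLPSolutionInvalid} with the averaged $\bar u$: in every case one derives $\bar u_4>\bar u_4$ from inequalities of the form $\bar x_1(w_3-w_1)>\bar x_2(w_3-w_2)$ that are built into the choice of $\dist$. The restriction ``bid space $=$ support of $\dist$'' is used precisely here: it forbids a ``compromise'' arm labeled strictly between $w_1$ and $w_2$, which is exactly what an auction with a richer bid space could have used to interpolate between the two endpoint behaviors and realize the midpoint.

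\textbf{Main obstacle.} The hard part is making the two halves meet — choosing $\dist$ and the two steerings so that (i) each endpoint is genuinely feasible with only $m$ arms and (ii) the averaged utility vector $\bar u$ nonetheless lands strictly inside the region ruled out by Claim~\ref{claimLPSolutionInvalid}-type reasoning. Unlike in Theorem~\ref{thm:counterexample}, $\bar x$ need not be an extreme point of the Border polytope, so the clean ``allocations are rigid in every round'' shortcut is unavailable; the needed rigidity on the value-$w_3$ buyer's behavior has to be extracted from monotonicity together with the sparse bid space instead, and getting this argument fully rigorous is the most delicate step.
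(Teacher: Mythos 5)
There is a genuine gap: the entire content of the theorem is the infeasibility of the midpoint, and your proposal explicitly defers that step (``getting this argument fully rigorous is the most delicate step'') without supplying the idea that makes it work. You correctly observe that the mechanism of Claim~\ref{claimLPSolutionInvalid} --- deriving $\bar u_4 > \bar u_4$ from a utility chain --- needs the allocation to be rigid in \emph{every round}, which for an extreme point of the Border polytope comes for free but for a midpoint does not. You then assert that the needed rigidity can be ``extracted from monotonicity together with the sparse bid space,'' but monotonicity alone does not pin down per-round allocations of a non-extreme average, and no argument is given. As written, the proof does not close.

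The paper closes it by a different mechanism, and the construction is shaped around that mechanism rather than around distinct Border-extreme $x$'s. Both endpoints share the \emph{same} $x$ (only $y$ differs), and $x_m$ is chosen to equal $\tfrac{1}{n}\bigl(1-(1-q_m)^n\bigr)/q_m$, the per-round cap (Claim~\ref{claimSameBidAlloc} with $S=\{w_m\}$) on the winning probability of a buyer whose bid reveals it is the top type. Since the average over rounds equals this cap, the top type must win with exactly that probability in \emph{every} round --- this is the per-round rigidity, obtained without any extreme-point argument and without invoking $\bar u$ at all. The averaged $y$ is then engineered so that, by the end of the auction, the top \emph{two} types both strictly prefer the same arm $w_3$; applying Claim~\ref{claimSameBidAlloc} with $S=\{w_4,w_5\}$ caps that arm's per-round allocation at $4/5<9/10$, a contradiction. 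The bid-space restriction enters to force this pooling onto a single arm, not to rule out a ``compromise'' label between $w_1$ and $w_2$. If you want to salvage your plan, you would need either to adopt this pooling/capacity argument or to find some other source of per-round rigidity for a non-extreme $\bar x$; the BMSW-style utility chain by itself will not do it.
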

\begin{proof}



\newcommand{\xone}{x^a}
\newcommand{\xtwo}{x^b}
\newcommand{\yone}{y^a}
\newcommand{\ytwo}{y^b}
\newcommand{\aone}{A^a}
\newcommand{\atwo}{A^b}
Consider $n = 2$ (there are 2 buyers) and distribution $\mathcal{D}$ satisfying $w = [1, 3, 4, 7, 30]$ and $q = [1/5, 1/5, 1/5, 1/5, 1/5]$. Let 
\begin{align*}
    \xone = \begin{bmatrix}
        9/10\\
        7/10\\
        3/10\\
        3/10\\
        3/10
    \end{bmatrix}, 
    \yone = \begin{bmatrix}
        9/10\\
        9/10\\
        9/10\\
        7/10\\
        3/10\\
    \end{bmatrix}; \quad \quad \quad \quad 
    \xtwo = \begin{bmatrix}
        9/10\\
        7/10\\
        3/10\\
        3/10\\
        3/10
    \end{bmatrix}, 
    \ytwo = \begin{bmatrix}
        9/10\\
        9/10\\
        7/10\\
        3/10\\
        3/10\\
    \end{bmatrix}.
\end{align*}
We show that $(\xone, \yone)$, $(\xtwo, \ytwo)$ are both $n$-buyer auction feasible. However, $(\frac{\xone + \xtwo}{2}, \frac{\yone + \ytwo}{2})$ is not $n$-buyer auction feasible. Let $\aone$ be the auction where a $\gamma$-mean buyer gets the item with probability $\yone_i + O(\gamma)$ by bidding $w_i$ in each round. Firstly, consider auction $A^a$.
\begin{align*}
    &(w_5 - w_4) \cdot \yone_4 = 23 \cdot 9/10 = 20.7\\
   &(w_5 - w_3) \cdot \yone_3 = 26 \cdot \frac{9}{10} = 23.4\\
    &(w_5 - w_2) \cdot \yone_2 = 27 \cdot \frac{7}{10} = 18.9\\
    &(w_5 - w_1) \cdot \yone_1 = 29 \cdot \frac{3}{10} = 8.7\\
    & \Rightarrow \xone_5 = \frac{9}{10}.\\\\
    &(w_4 - w_3) \cdot \yone_3 = 3 \cdot \frac{9}{10} = 2.7\\
    & (w_4 - w_2) \cdot \yone_2 = 4 \cdot \frac{7}{10} = 2.8\\
    &(w_4 - w_1) \cdot \yone_1 = 6 \cdot \frac{3}{10} = 1.8\\
    &\Rightarrow \xone_4 = \frac{7}{10}.\\\\
    &(w_3 - w_2) \cdot \yone_2 = 1 \cdot \frac{7}{10} = 0.7\\
    &(w_3 - w_1) \cdot \yone_1 = 3 \cdot \frac{3}{10} = 0.9\\
    &\Rightarrow \xone_3 = \frac{3}{10}.
\end{align*}
By the utility analysis above, arm $w_3, w_2, w_1, w_1, w_1$ are the favourite arms of $\gamma$-mean buyer with value $w_5, w_4, w_3, w_2, w_1$ respectively (and will be pulled almost always). Since in each round the arm's probability of allocation is the same, we can verify that $w_i$ will indeed get the item with probability $\xone_i + o(\gamma)$ if the buyers are running $\gamma$-mean based algorithms. The last thing we need to do is to verify that $\yone$ is feasible: that it is possible to give items with the probabilities specified by $\yone$. We know that a buyer is bidding $w_3, w_2, w_1$ with probability $1/5, 1/5, 3/5$ respectively, $\yone$ is a feasible interim allocation by border's constraint.
 
Similarly, let $\atwo$ be the auction where buyer gets item with probability $\ytwo_i$ by bidding $w_i$ in each round. 
\begin{align*}
    &{ (w_5 - w_4) \cdot \ytwo_4 = 23 \cdot \frac{9}{10} = 20.7}\\
    &(w_5 - w_3) \cdot \ytwo_3 = 26 \cdot \frac{7}{10} = 18.2\\
    &(w_5 - w_2) \cdot \ytwo_2 = 27 \cdot \frac{3}{10} = 8.1\\
    &(w_5 - w_1) \cdot \ytwo_1 = 29 \cdot \frac{3}{10} = 8.7\\
    & \Rightarrow \xtwo_5 = \frac{9}{10}\\\\
    &{ (w_4 - w_3) \cdot \ytwo_3 = 3 \cdot \frac{7}{10} = 2.1}\\
    &(w_4 - w_2) \cdot \ytwo_2 = 4 \cdot \frac{3}{10} = 1.2\\
    &(w_4 - w_1) \cdot \ytwo_1 = 6 \cdot \frac{3}{10} = 1.8\\
    &\Rightarrow \xtwo_4 = \frac{7}{10}\\\\
    &(w_3 - w_2) \cdot \ytwo_2 = 1 \cdot \frac{3}{10} = 0.3\\
    &{ (w_3 - w_1) \cdot \ytwo_1 = 3 \cdot \frac{3}{10} = 0.9}\\
    &\Rightarrow \xtwo_3 = \frac{3}{10}
\end{align*}
We can see that in $\atwo$, arm $w_4, w_3, w_1, w_1, w_1$ are the favourite arms of buyer with value $w_5, w_4, w_3, w_2, w_1$ respectively, which is consistent with $\xtwo$. Lastly, We know that a buyer is bidding $w_4, w_1$ with probability $1/5, 4/5$ respectively, thus $\ytwo$ is a feasible interim allocation by border's constraint.

Now let 
\begin{align*}
    & x = \frac{\xone + \xtwo}{2} = \begin{bmatrix}
        9/10\\
        7/10\\
        3/10\\
        3/10\\
        3/10
    \end{bmatrix}, 
    y = \frac{\yone + \ytwo}{2} = \begin{bmatrix}
        9/10\\
        9/10\\
        8/10\\
        5/10\\
        3/10\\
    \end{bmatrix}.
\end{align*}
Assume for contradiction that $(x, y)$ is $n$-buyer auction feasible, then there exists an auction $A$ where given clever $0$-mean buyer strategies $S_1, \ldots, S_n$, for all $i, j$, $X_{ij}^A (\dist, S_1, \ldots, S_n)$ $= x_j$ and $Y_{ij}^A (\dist, S_1, \ldots, S_n) = y_j$. we first find the preferred arms of buyer with value $w_i$ at the end of the auction. 

\begin{align*}
    &(w_5 - w_4) \cdot y_4 = 23 \cdot \frac{9}{10} = 20.7\\
    &{\bf (w_5 - w_3) \cdot y_3 = 26 \cdot \frac{8}{10} = 20.8}\\
    &(w_5 - w_2) \cdot y_2 = 27 \cdot \frac{5}{10} = 13.5\\
    &(w_5 - w_1) \cdot y_1 = 29 \cdot \frac{3}{10} = 8.7\\\\
    &{\bf (w_4 - w_3) \cdot \ytwo_3 = 3 \cdot \frac{8}{10} = 2.4}\\
    &(w_4 - w_2) \cdot \ytwo_2 = 4 \cdot \frac{5}{10} = 2\\
    &(w_4 - w_1) \cdot \ytwo_1 = 6 \cdot \frac{3}{10} = 1.8
\end{align*}

Thus towards the end of the auction, $w_3$ is the preferred arm by buyer with value $w_5$ and $w_4$. By Claim~\ref{claimSameBidAlloc} (with $n = 2, S = \{w_5, w_4\})$, arm $w_3$ cannot allocate the item with probability larger than $\frac{1}{2} (1 - (1 - 2/5)^2)/(2/5) = \frac{4}{5}$ towards the end of the auction. However by the same Claim~\ref{claimSameBidAlloc} (with $n = 2, S = \{w_5\})$, we know that any arm preferred by buyer with value $w_5$ cannot give the item with probability larger than $\frac{1}{2} (1 - (1 - 1/5)^2)/(1/5) = \frac{9}{10}$ in any round. Since buyers are supposed to receive the item with probability $x_5 = \frac{9}{10}$ on average when their value is $w_5$, in each round of auction $w_5$-buyer must get the item with probability exactly $\frac{9}{10}$, which contradicts the fact that $w_5$ gets item with probability $\leq \frac{4}{5}$ towards the end of the repeated auction. 
\end{proof}
\begin{claim}\label{claimSameBidAlloc}
Let $S$ be a subset of $\D$'s support. Let $q(S) = \sum_{v \in S} q(v)$ and let $b(S)$ be some fixed value. Let $A$ be a one round auction with $n$ symmetric buyers and $1$ item where buyers are constrained to bid exactly $b(S)$ when their value $v$ is in $S$. Then given that a buyer $i$ draws a value $v \in S$, they cannot receive the item with probability larger than $\frac{1}{n q(S)} (1 -  (1- q(S))^{n}).$ 
\end{claim}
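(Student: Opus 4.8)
The plan is to prove the claim by the same short counting argument that establishes the necessity of Border's conditions. The crucial first observation is that the hypothesis does all the work: since every buyer whose value lies in $S$ submits the identical bid $b(S)$, and the one-round auction is anonymous with symmetric buyers drawing i.i.d.\ values, a buyer who bids $b(S)$ has a winning probability that depends neither on which buyer it is nor on which value $v\in S$ it actually holds --- the allocation is a function of the submitted bids only, and the other buyers' bids are independent of this buyer's own value. Call this common interim winning probability $x_S$; this is exactly the quantity the claim upper-bounds.

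Next I would run the counting. For $i\in[n]$, let $W_i$ be the event that buyer $i$ receives the item. Because the item is awarded at most once, the events $W_i\cap\{v_i\in S\}$ are pairwise disjoint over $i$, and their union is contained in the event that some buyer draws a value in $S$. Hence
\[
\sum_{i=1}^{n}\Pr\!\left[W_i\cap\{v_i\in S\}\right]\;\le\;\Pr[\exists\, i: v_i\in S]\;=\;1-(1-q(S))^{n},
\]
using independence of the values across buyers. On the other hand, for each $i$ we have $\Pr[W_i\cap\{v_i\in S\}]=\Pr[v_i\in S]\cdot\Pr[W_i\mid v_i\in S]=q(S)\cdot x_S$, where the second equality uses that $\Pr[W_i\mid v_i=v]=x_S$ for every $v\in S$. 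Summing over the $n$ buyers gives $n\,q(S)\,x_S\le 1-(1-q(S))^{n}$, and dividing through by $n\,q(S)$ yields the stated bound.

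Honestly there is no real obstacle here: the statement is essentially a one-line consequence of the ``at most one winner'' constraint together with symmetry. The only point requiring a little care is the common-value claim in the first paragraph, and there the thing to stress is exactly \emph{why} only the shared bid $b(S)$ --- and not its magnitude --- matters: the auction is anonymous, the buyers are symmetric, and a buyer's own value never enters the allocation rule except through its bid, which is constant on $S$. It is also worth noting in passing that the bound is tight --- it is attained by any auction that awards the item to a uniformly random highest bidder in the regime where $b(S)$ is a highest bid, via the identity $\E[1/(1+H)]=\frac{1}{np}(1-(1-p)^{n})$ for $H\sim\mathrm{Binomial}(n-1,p)$ --- which is precisely the configuration in which the claim is invoked in the proof of Theorem~\ref{thm:nonconvex}.
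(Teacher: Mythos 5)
Your proof is correct and is essentially the paper's argument: both rest on symmetry across the $n$ buyers combined with the fact that the item is awarded at most once, the paper phrasing this as $\Pr[A\mid B]\le \Pr[A\mid C]/\Pr[B\mid C]$ with $\Pr[A\mid C]\le 1/n$ (where $C$ is the event that some buyer has value in $S$), and you phrasing it as summing the pairwise-disjoint events $W_i\cap\{v_i\in S\}$ against $\Pr[C]=1-(1-q(S))^n$. Your explicit opening observation that the interim winning probability is constant over all $v\in S$ (because the bid, hence the allocation, does not otherwise depend on the value) is a step the paper leaves implicit, and is exactly what justifies applying the bound value-by-value as done in the proof of Theorem~\ref{thm:nonconvex}.
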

\begin{proof}
Let event $A$ be that buyer $i$ gets the item; event $B$ be that buyer $i$ has value $v \in S$; event $C$ be that some buyer has value $v \in S$. Then the probability we want is $\Pr[A | B]$. Since event $B$ implies event $C$, 
\begin{align*}
    \Pr[A | B] = \Pr[ A | B, C] = \frac{\Pr[A, B| C]}{\Pr[B | C]} \leq \frac{\Pr[A| C]}{\Pr[B | C]}.
\end{align*}
We know that $\Pr[A | C] = \frac{1}{n}$, since information from event $C$ is symmetric to all buyers. We can directly calculate 
$$\Pr[B | C] = \frac{\Pr[B, C] }{\Pr[C]} = \frac{\Pr[B]}{\Pr[C]} = q(S)/(1 -  (1- q(S))^{n}).$$ 
We conclude that $\Pr[A | B] = \frac{1}{n} \cdot (1 -  (1- q(S))^{n})/q(S)$. 
\end{proof}
\section{Additional Discussion on n-buyer BMSW Constraints} \label{sec:dual}

In this section we will give a more in depth analysis of the linear programs that maximizes revenue (i.e.\ welfare - utility) given BMSW constraints and $n$-buyer BMSW constraints. The linear programs (we call them \SingleLP{} and \BorderLP) are presented below again for readers' convenience. We will call the optimal solution to the linear programs $\SingleOPT(\D)$ and $\BorderOPT(\D)$ respectively. 
\begin{align*}
    \textbf{Maximize } & \sum_{i = 1}^m q_i \cdot (v_i \cdot x_i - u_i)  \tag{\SingleLP}\\
    \textbf{Subject to } 
    & u_i \geq (v_i - v_j) \cdot x_j \hspace{2mm} \forall v_j < v_i, \\
    & u_i \geq 0 \hspace{2mm} \forall u_i,\\
    & 0 \leq x_i \leq 1 \hspace{2mm} \forall x_i,\\
    & \vec{x} \text{ is monotone increasing.} \\\\
    \textbf{Maximize } & \sum_{i = 1}^m q_i \cdot (v_i \cdot x_i - u_i)  \tag{\BorderLP} \\
    \textbf{Subject to } 
    & u_i \geq (v_i - v_j) \cdot x_j \hspace{2mm} \forall v_j < v_i, \\
    & u_i \geq 0 \hspace{2mm} \forall u_i,\\
    & \vec{x} \text{ satisfies Border's constraints for $n$ buyers,}\\
    & \vec{x} \text{ is monotone increasing.} 
\end{align*}

Firstly, We derive the Lagrangian relaxation of \SingleLP{} and \BorderLP. We provide three properties that uniquely determines an optimal solution to the Lagrangian relaxation in the single buyer setting. This perspective gives us additional intuition to the proof of theorem 3.5 in \cite{BravermanMSW18} (which establishes that for $\D$ supported on $[1, H]$, both $\max_{\mathcal{D}}\left(\val(\D)/\SingleOPT(\D)\right)$ and $\max_{\D}\left(\SingleOPT(\D)/\rev(\D)\right)$ tend to infinity as $H$ tend to infinity).  

Secondly, we attempt to generalize the three properties to the multiple buyer setting. While two of the properties still hold, we show that the third property no longer hold for \BorderLP{} induced by all distributions $\D$. We present a restricted class of distributions where the third property does hold. In this case, the three properties would uniquely determine the optimal solution to \BorderLP{} as well.



\subsection{Lagrangian Formulation} 
We start by constructing the following lagrangian relaxation of \SingleLP, which we call \textbf{\SingleRelaxLP}. 
\begin{align*}
    \mathbf{min}_{\lambda} \spc \mathbf{max}_{x}  \spc \spc \spc & \sum_{i=1}^m q_i (v_i x_i - u_i) \mathbf{dv} + \sum_{i=1}^m \sum_{j=1}^{i}  \big(u_i - (v_i - v_j)x_j\big)\cdot \lambda_{ij}  \tag{\SingleRelaxLP}\\
    \mathbf{subject \spc to \spc} \spc \spc \spc & u_i \geq 0 \spc \spc \spc\forall \spc i \in [m],\\
	& 0 \leq x_i\leq 1 \spc \spc \spc \forall i \in [m],\\
	&\vec{x} \text{ monotone.}
\end{align*}
By a) consolidating the $u_i$ and $x_i$ terms, b) rescaling the lagrangian multipliers by $\lambda_{ij} := \frac{\lambda_{ij}}{q_i}$, and c) observing that optimal $\lambda$ must result in the multiplier of $u_i$ being zero, we arrive at the following equivalent optimization problem: 
\begin{align*}
     \mathbf{min}_{\lambda} \spc \mathbf{max}_{x} \spc \spc \spc &\sum_{i=1}^m \Big(q_iv_i - \sum_{k=i}^m q_k (v_k - v_i) \lambda_{ki} \Big) \cdot x_i \tag{\SingleRelaxLP} \\
    \mathbf{subject \spc to \spc} \spc \spc \spc &u_i \geq 0 \spc \spc \spc\forall \spc i \in [m],\\
	& 0 \leq x_i\leq 1 \spc \spc \spc \forall i \in [m],\\
	&\vec{x} \text{ monotone, }\\
	&\sum_{j=1}^{i} \lambda_{ij} = 1 \spc \spc \spc \forall i \in [m].
\end{align*}

We make one last reformulation so that our objective can be viewed as the virtual welfare of the buyers. From now on, when we refer to the optimization problem \SingleRelaxLP, we will be referring to this final reformulation. 
\begin{align*}
     \mathbf{min}_{\lambda} \spc \mathbf{max}_{x} \spc \spc \spc &\sum_{i=1}^m q_i\cdot \phi(i, \lambda)  \cdot x_i \tag{\SingleRelaxLP}\\
    \mathbf{subject \spc to \spc} \spc \spc \spc &u_i \geq 0 \spc \spc \spc\forall \spc i \in [m],\\
	& 0 \leq x_i\leq 1 \spc \spc \spc \forall i \in [m],\\
	&\vec{x} \text{ monotone, }\\
	&\sum_{j=1}^{i} \lambda_{ij} = 1 \spc \spc \spc \forall i \in [m],\\
	&\phi(i, \lambda) = v_i - \sum_{k=i}^m \frac{q_k}{q_i} (v_k - v_i) \lambda_{ki} \spc \spc \spc \forall i \in [m]. 
\end{align*}

Finally, we observe that the process of create the lagrangian relaxation for \BorderLP{} is identical to that for \SingleLP. Let us call the resulting optimization problem \BorderRelaxLP. We note that  the only difference between \SingleRelaxLP{} and \BorderRelaxLP{} is that $\vec{x}$ must satisfy border constraints in \BorderRelaxLP.  


\subsection{Three Properties of the Lagrangian Variable}  \label{subsec:threeProp}
Let us first see a useful property of the virtual value $\phi$: that there always exists an optimal $\lambda$ such that $\phi(i, \lambda)$ is monotone in $i$. We use $\phi_i$ to denote $\phi(i, \lambda)$ when the $\lambda$ in question is clear.   
\begin{theorem}
    \label{phi_monotonicity}
    For every optimal $\lambda$, $\phi_i \leq \phi_j$ when $i < j$.
\end{theorem}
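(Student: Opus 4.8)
The plan is to reduce the inner maximization of \SingleRelaxLP{} to an explicit piecewise‑linear form and then run an ``ironing''/pooling exchange argument, mirroring discrete Myerson ironing.

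\emph{Step 1: reducing the inner problem.} Fix a feasible $\lambda$ and write $\phi_i:=\phi(i,\lambda)$. Writing a monotone $\vec x\in[0,1]^m$ as $x_i=\sum_{j\le i}\delta_j$ with $\delta_j\ge 0$, $\sum_{j=1}^m\delta_j\le 1$, the inner objective becomes $\sum_i q_i\phi_i x_i=\sum_{j=1}^m\delta_j\,S_j(\lambda)$ with $S_j(\lambda):=\sum_{k\ge j}q_k\phi_k$. Hence $\max_{\vec x}\sum_i q_i\phi_i x_i=\max\{0,S_1(\lambda),\dots,S_m(\lambda)\}=:f(\lambda)$, and $\lambda$ is optimal for \SingleRelaxLP{} iff it minimizes the convex, piecewise‑linear $f$ over the $\lambda$‑polytope (each $S_j$ is affine in $\lambda$). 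Note $\phi_j=(S_j-S_{j+1})/q_j$, so ``$\phi$ non‑decreasing'' is a statement purely about the numbers $S_1,\dots,S_m$ at the point $\lambda$.

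\emph{Step 2: the pooling move.} Fix feasible $\lambda$, a ``descent'' index $i$ with $\phi_i>\phi_{i+1}$, and any window $[a,b]$ with $a\le i<b$. I claim there is a feasible $\lambda'$ that differs from $\lambda$ only in rows $a,\dots,b$, with $\phi_k(\lambda')=\phi_k(\lambda)$ for $k\notin[a,b]$ and $\phi_a(\lambda')=\dots=\phi_b(\lambda')=\bar\phi:=\big(\sum_{k=a}^b q_k\phi_k\big)\big/\big(\sum_{k=a}^b q_k\big)$. Indeed each $\phi_k$ is affine in the rows, and the rows interact ``triangularly'' ($\phi_k$ only involves columns $\le k$), so one can explicitly redistribute the masses of rows $a,\dots,b$ so that the $q$‑weighted prefix sums of the $\phi$'s through index $b$ are preserved — in particular $S_a$ and every $S_k$ with $k>b$ are unchanged — while $\phi_a,\dots,\phi_b$ all become $\bar\phi$. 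Then for $a<j\le b$ one computes $S_j(\lambda')=\bar\phi\sum_{k=j}^b q_k+S_{b+1}(\lambda)$, and checking $\bar\phi\ge 0$ and $\bar\phi<0$ separately gives $S_j(\lambda')\le\max\{S_a(\lambda),S_{b+1}(\lambda)\}$; all other $S_j$ are unchanged. Hence $f(\lambda')\le f(\lambda)$, with the inequality \emph{strict} unless $\max\{0,S_1(\lambda),\dots,S_m(\lambda)\}$ is attained at some index outside $\{a+1,\dots,b\}$ (``index $m+1$'' standing for the value $0$).

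\emph{Step 3: conclusion.} First, $f^*:=\min f=\SingleOPT(\D)>0$ for any non‑degenerate $\D$ (e.g.\ a posted price already earns positive revenue), so at any feasible $\lambda$ the maximum is attained on $\{1,\dots,m\}$. Now suppose $\lambda^*$ is optimal but $\phi^*:=\phi(\cdot,\lambda^*)$ has a descent at some $i\le m-1$. Apply Step 2 with the full window $[1,m]$: since $\lambda^*$ cannot be strictly improved, the maximizer lies outside $\{2,\dots,m\}$, forcing $S_1(\lambda^*)=f^*$ and hence $\sum_{k=1}^{j-1}q_k\phi_k^*\ge 0$ for all $j$. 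It remains to rule out this last configuration (a descent together with $S_1=f^*$): here one iterates Step 2 on the prefixes $[1,j]$ to conclude $\phi^*$ is constant and equal to $\bar\phi\ge 0$ on $[1,i+1]$ for every optimal representative, and then uses $S_1=f^*=\SingleOPT(\D)$ together with the binding monotonicity/IC constraints of \SingleLP{} to force $\phi^*=\vec v$, which is strictly increasing — contradicting the descent. (If one only wants the weaker ``\emph{some} optimal $\lambda$ has $\phi$ non‑decreasing,'' Step 2 already gives it: iterate the pooling move on the leftmost descent, keeping $f$ at $f^*$ while strictly decreasing $\sum_i(\phi_i-\phi_{i+1})^+$, until no descent remains.)

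\emph{Main obstacle.} Two places need care. First, making the pooling move of Step 2 fully concrete: because $\phi_k$ depends on several rows, one must specify exactly how rows $a,\dots,b$ reabsorb mass so that \emph{only} $\phi_a,\dots,\phi_b$ change and $S_a,S_{b+1},\dots$ are left intact. Second, upgrading ``exists a monotone optimal $\lambda$'' to ``every optimal $\lambda$ is monotone'': the pooling argument by itself only produces \emph{a} monotone optimum, and the delicate case is an optimal $\lambda$ with $S_1=f^*$ and a descent, which must be excluded by a direct argument about \SingleLP{} (as sketched) rather than by another exchange. An alternative route is to write $f^*=\max_{\mu\ge 0,\ \sum_j\mu_j=1}\ \min_\lambda\sum_j\mu_j S_j(\lambda)$ by minimax, read off that an optimal $\lambda^*$ has each row $l$ supported on $\argmax_{k\le l}M_k(v_l-v_k)$ with $M_k:=\sum_{j\le k}\mu_j$, and verify monotonicity of $\phi^*$ directly from this support structure; that bookkeeping is the bulk of the work.
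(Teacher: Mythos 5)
Your Step~1 reformulation (the inner maximum equals $\max\{0,S_1,\dots,S_m\}$ with $S_j=\sum_{k\ge j}q_k\phi_k$) is correct and is a genuinely different way to set up the problem than the paper's. But there is a real gap at Step~2, and it is not merely the bookkeeping you flag at the end. Recall $\phi_j = v_j - \sum_{k\ge j}\frac{q_k}{q_j}(v_k-v_j)\lambda_{kj}$: $\phi_j$ depends only on \emph{column} $j$, the diagonal entry $\lambda_{jj}$ has coefficient zero, and rows $<j$ do not touch column $j$. So $\phi_j$ can only be \emph{raised} by removing mass from $\lambda_{kj}$ for some $k\ge j+1$. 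Consequently: (a) a window with $b=i+1$ can never raise the low value $\phi_{i+1}$ by modifying only rows $a,\dots,i+1$, so the pooling move fails outright for such windows (and for any window in which all the mass on column $i+1$ sits in rows $>b$) --- this breaks the ``iterate on prefixes $[1,j]$'' part of Step~3; and (b) even with the full window $[1,m]$, the hard constraint $\phi_j\le v_j$ makes pooling to the weighted average $\bar\phi$ infeasible whenever $\bar\phi>v_a$, which you cannot rule out a priori for the hypothetical optimal $\lambda$ with a descent. (Also note that moving mass within a row from column $j$ to column $j'$ changes $q_j\phi_j+q_{j'}\phi_{j'}$ by a nonzero amount, so sum-preserving pooling requires balancing moves in both directions --- another reason the construction cannot be waved through.) Finally, your Step~3 disposal of the residual case ($S_1=f^*$ together with a descent) --- ``use the binding constraints to force $\phi^*=\vec v$'' --- is an assertion, not an argument, and it is precisely the case that needs work.

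The paper's proof sidesteps all of this with a single local within-row exchange. A descent forces $\phi_{i+1}<\phi_i\le v_i<v_{i+1}$, hence some row $k>i+1$ has $\lambda_{k(i+1)}>0$; shifting $\epsilon$ of that mass from column $i+1$ to column $i$ inside row $k$ raises $\phi_{i+1}$, lowers $\phi_i$, leaves every other $\phi$ fixed, and preserves feasibility. Crucially, the paper does \emph{not} try to preserve $S_1$ or any weighted average: by Lemma~\ref{x_equal} both the old and new $\lambda$ can be evaluated at an optimal $x$ with $x_i=x_{i+1}$, and at such an $x$ the objective changes by $\epsilon q_k x_i(v_i-v_{i+1})\le 0$, contradicting optimality. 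If you want to keep your $S_j$ language, you should replace the sum-preserving pooling with this within-row shift and the $x_i=x_{i+1}$ evaluation; as written, the central exchange of your argument is not available.
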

\textbf{Proof Sketch.} We can view the lagrangian relaxation as a two player game where the first player sets the $\lambda$ that minimizes the objective, and the other sets the $\vec{x}$ that maximizes the objective given the $\lambda$ the first player sets. Consider when the second player gets a $\lambda$ where $\phi_i > \phi_j$ for some $i < j$, it would want to set $x_i$ to be as large as possible and $x_j$ to be as little as possible. Unfortunately $\vec{x}$ has to be monotone. In this case the best course of action would be to set $x_i = x_j$. Now, consider the actions of the first player, the first player knows that the second player will set $x_i = x_j$ as long as they set $\phi_i > \phi_j$. Now they observe that for any $\lambda_{kj} > 0$, one unit of $\lambda_{kj}$ lowers the objective by $q_k (v_k - v_j) x_j$, and we know that one unit of $\lambda_{ki}$ lowers the objective more because $q_k (v_k - v_i) x_i \geq q_k (v_k - v_i) x_j$. As long as $\phi_i > \phi_j$, the second player can just decrease $\lambda_{kj}$ (which increases $\phi_j$) and increase $\lambda_{ki}$ (which decreases $\phi_i$) to make the objective even smaller. As a result, the second player decides it's not a good idea to make $\phi_i > \phi_j$. A full proof of theorem~\ref{phi_monotonicity} can be found in section~\ref{subsec:formal}. 

Now we are ready to discuss the following three properties that in combination determines a unique and optimal $\lambda$ for \SingleRelaxLP. 

\begin{property}
\label{sprop1}
    $\phi(i, \lambda) \geq 0$ for all $i$. 
\end{property}

\begin{property}
\label{sprop2}
    Let $g(i, \lambda) = \{j | \lambda_{ij} > 0\}$. Then $\min{g(i, \lambda)} \geq \max{g(i - 1, \lambda)}$ for all $i$. Equivalently, there are no $i < j < k < l$ such that $\lambda_{li} > 0$ and $\lambda_{kj} > 0$.
\end{property}

\begin{property}
\label{sprop3}
    $\lambda_{ki} = 0$ for all $k, i$ such that $\exists j < i$ where $q_j \cdot \phi(j, \lambda) > 0$.
\end{property}
\begin{lemma}
\label{scond1}
    There exists an optimal $\lambda$ such that \ref{sprop1} is true. 
\end{lemma}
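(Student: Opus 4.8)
The plan is to start from an arbitrary optimal $\lambda$ for \SingleRelaxLP{} and surgically modify it so that $\phi(1,\lambda)$ becomes $0$ while it stays optimal; Theorem~\ref{phi_monotonicity} then finishes the job. First I would invoke Theorem~\ref{phi_monotonicity}: every optimal $\lambda$ has $\phi(1,\lambda)\le\phi(2,\lambda)\le\cdots\le\phi(m,\lambda)$. If $\phi(1,\lambda)\ge 0$ all virtual values are nonnegative and we are done, so assume $\phi(1,\lambda)<0$.

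The key structural observation is that in $\phi(k,\lambda)=v_k-\sum_{\ell\ge k}\frac{q_\ell}{q_k}(v_\ell-v_k)\lambda_{\ell k}$ the diagonal entry $\lambda_{kk}$ has coefficient $\frac{q_k}{q_k}(v_k-v_k)=0$, so it appears only in the row constraint $\sum_{j\le k}\lambda_{kj}=1$ and in no $\phi$; and the column-$1$ entries $\lambda_{k1}$ ($k\ge 2$) appear only in $\phi(1,\cdot)$. Hence I would define $\lambda'$ by transferring mass, within each row $k\ge 2$, out of $\lambda_{k1}$ into $\lambda_{kk}$, choosing the total amount transferred so that $\phi(1,\lambda')=0$; this value is attainable by continuity, since $\phi(1,\cdot)$ varies affinely as the column-$1$ entries shrink, starting at $\phi(1,\lambda)<0$ and ending at $v_1>0$ once column $1$ is empty. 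By the two observations $\lambda'$ is feasible, $\phi(i,\lambda')=\phi(i,\lambda)$ for all $i\ge 2$, and $\phi(1,\lambda')=0$.

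It remains to check $\lambda'$ is still optimal. Here I would use that the inner maximization over monotone $x\in[0,1]^m$ is a linear program whose vertices are the threshold vectors $\mathbf 1[\,\cdot\ge t\,]$, so the objective at any $\lambda$ equals $\max_{t}S_t(\lambda)$ where $S_t(\lambda):=\sum_{i=t}^m q_i\phi(i,\lambda)$ and $S_{m+1}=0$. Since $S_t-S_{t+1}=q_t\phi(t,\lambda)$ and $\phi(\cdot,\lambda)$ is monotone with $\phi(1,\lambda)<0$, the sequence $S_1<S_2<\cdots<S_{t^*}\ge S_{t^*+1}\ge\cdots$ is unimodal, where $t^*:=\min\{i:\phi(i,\lambda)\ge 0\}\ge 2$, so $\max_t S_t(\lambda)=S_{t^*}(\lambda)$. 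Passing to $\lambda'$, only $\phi_1$ changed, so $S_t(\lambda')=S_t(\lambda)$ for $t\ge 2$ while $S_1(\lambda')=S_1(\lambda)-q_1\phi(1,\lambda)=S_2(\lambda)\le S_{t^*}(\lambda)$; therefore $\max_t S_t(\lambda')=S_{t^*}(\lambda)=\max_t S_t(\lambda)$ and $\lambda'$ is optimal. Applying Theorem~\ref{phi_monotonicity} to the optimal $\lambda'$ gives $\phi(i,\lambda')\ge\phi(1,\lambda')=0$ for every $i$, which is exactly Property~\ref{sprop1}.

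I expect the optimality check to be the main obstacle: one must ensure that emptying column $1$ does not push $S_1$ above the (unchanged) maximum $S_{t^*}$. Stopping exactly when $\phi(1,\cdot)$ reaches $0$ is what makes this go through, because then $S_1(\lambda')$ collapses to $S_2(\lambda)$, which the unimodality bound controls; moving more mass (e.g. zeroing the column, making $\phi_1=v_1$) could overshoot. The only other point that needs to be stated carefully is the threshold-vertex reduction of the inner max, since the whole argument about the objective is phrased through $\max_t S_t(\lambda)$.
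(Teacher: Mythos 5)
Your proposal is correct in its mechanics and uses the same underlying trick as the paper --- shifting mass within a row from a column-$j$ entry (which appears only in $\phi_j$) onto the diagonal entry $\lambda_{kk}$ (which has coefficient zero), so that a negative virtual value can be raised to exactly $0$ without disturbing any other $\phi_i$ --- but you deploy it at a different index and close the argument differently. The paper runs an extremal argument: it picks an optimal $\lambda$ minimizing the number of negative $\phi_i$'s, targets the \emph{largest} negative index $i$, argues $x_i=0$ at the inner optimum so the redirection leaves the objective unchanged, and derives a contradiction. Targeting the largest negative index has the advantage that the modified $\phi$ is still monotone. You instead target index $1$, justify objective-preservation by the cleaner and more explicit threshold-vector computation $\max_t S_t(\lambda')=\max_t S_t(\lambda)$ (this part of your write-up is actually tighter than the paper's ``$x_i=0$'' justification, which silently re-optimizes $x$), and then dispatch all remaining indices in one shot by applying Theorem~\ref{phi_monotonicity} to $\lambda'$.

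The one point you should be aware of is how much weight that last step places on the universal quantifier in Theorem~\ref{phi_monotonicity}. Your own construction gives $\phi_i(\lambda')=\phi_i(\lambda)$ for all $i\ge 2$; so if the original optimal $\lambda$ had, say, $\phi_1\le\phi_2<0\le\phi_3$, your $\lambda'$ is an \emph{optimal} point with $\phi_1(\lambda')=0>\phi_2(\lambda')$, i.e.\ a non-monotone optimal $\lambda$. Invoking the theorem then does not ``give'' monotonicity of $\lambda'$ so much as rule this case out by contradiction --- which is logically fine if the theorem holds for literally every optimal $\lambda$, but the regime where it is doing this work ($\phi_i<0$, hence $x_i=0$ at the inner optimum) is exactly the regime where the strict-decrease step in the theorem's own proof degenerates to equality. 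The paper's choice of the largest negative index sidesteps this entirely, since its modification never produces a non-monotone candidate. If you want your version to be self-contained against that fragility, either iterate your surgery from the top of the negative block downward (recovering the paper's argument), or repeat your column-emptying move at indices $2,3,\dots$ in increasing order, at each step raising $\phi_i$ only up to $\min(0,\phi_{i+1})$ so monotonicity and the $\max_t S_t$ bookkeeping are preserved.
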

\begin{lemma}
\label{scond2}
    There exists an optimal $\lambda$ such that Property \ref{sprop2} is true.
\end{lemma}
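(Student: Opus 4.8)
The plan is to obtain the desired optimizer as the maximizer of an auxiliary linear functional over the (nonempty, compact) set of optimal $\lambda$'s for \SingleRelaxLP, and then to show that no such maximizer can be ``crossed''. Call a feasible $\lambda$ \emph{crossed} if there is a quadruple $i<j<k<l$ with $\lambda_{li}>0$ and $\lambda_{kj}>0$; by the equivalence recorded in the statement of Property~\ref{sprop2}, it suffices to exhibit an optimal $\lambda$ that is not crossed. Write $V(\lambda):=\max_{x}\sum_{p=1}^{m} q_p\,\phi(p,\lambda)\,x_p$ for the inner maximum of \SingleRelaxLP, the max being over monotone $x\in[0,1]^m$. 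Then $V$ is convex (a maximum of linear functions) and continuous, so the optimal set $\{\lambda:\ V(\lambda)=V^*\}$ is a nonempty compact subset of the product-of-simplices feasible region, and the linear functional $\Phi(\lambda):=\sum_{k=1}^{m}\sum_{i=1}^{k}\lambda_{ki}\,q_k\,k\,i$ attains its maximum on it. Fix a maximizer $\lambda^*$ of $\Phi$ over the optimal set and assume, for contradiction, that $\lambda^*$ is crossed, witnessed by $i<j<k<l$.

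Perturb $\lambda^*$ as follows: within row $l$ move mass $\varepsilon>0$ from entry $(l,i)$ to entry $(l,j)$, and within row $k$ move mass $\delta:=(q_l/q_k)\,\varepsilon$ from entry $(k,j)$ to entry $(k,i)$. Entries $(l,j)$ and $(k,i)$ are legal since $j<l$ and $i<k$, all row sums stay $1$, and for $\varepsilon\le\varepsilon_{\max}:=\min\{\lambda^*_{li},\ (q_k/q_l)\lambda^*_{kj}\}>0$ (using $q_k,q_l>0$) the perturbed $\lambda(\varepsilon)$ is feasible. Only columns $i$ and $j$ of $\lambda$ are touched, so only $\phi(i,\cdot)$ and $\phi(j,\cdot)$ change; substituting the update into $q_p\phi(p,\lambda)=q_pv_p-\sum_{k'\ge p}q_{k'}\lambda_{k'p}(v_{k'}-v_p)$ and using $q_k\delta=q_l\varepsilon$ gives
$$\Delta\big(q_i\phi(i,\cdot)\big)=q_l\varepsilon\,(v_l-v_k),\qquad \Delta\big(q_j\phi(j,\cdot)\big)=-\,q_l\varepsilon\,(v_l-v_k),$$
which are equal and opposite, with $v_l-v_k>0$ since $l>k$. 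Hence for every monotone $x\in[0,1]^m$ the objective of \SingleRelaxLP changes by $q_l\varepsilon\,(v_l-v_k)(x_i-x_j)\le 0$, because $i<j$ forces $x_i\le x_j$. Taking the maximum over $x$ yields $V(\lambda(\varepsilon))\le V(\lambda^*)=V^*$, so $\lambda(\varepsilon)$ is again optimal; on the other hand a direct computation gives $\Phi(\lambda(\varepsilon))-\Phi(\lambda^*)=q_l\varepsilon\,(j-i)(l-k)>0$, contradicting the maximality of $\Phi(\lambda^*)$. Therefore $\lambda^*$ is uncrossed, which is Property~\ref{sprop2}. (This argument uses neither Theorem~\ref{phi_monotonicity} nor Lemma~\ref{scond1}, so it is independent of the other two properties.)

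The only delicate point is engineering the perturbation so that it \emph{simultaneously} keeps $\lambda$ optimal and strictly increases a bounded monovariant; the obstruction is that $q$ need not be monotone in the index, so the naive equal-mass swap moves both $V$ and $\Phi$ in an uncontrolled direction. The resolution is the transfer ratio $\delta=(q_l/q_k)\varepsilon$: it forces the perturbation to move $q_i\phi(i,\cdot)$ and $q_j\phi(j,\cdot)$ by exactly opposite amounts, and a monotone staircase $x$ can only benefit by shifting weight from a lower coordinate to a higher one, never the reverse, so the inner maximum cannot grow. I expect the routine bookkeeping to be confined to verifying the two displayed identities, after which the contradiction is immediate. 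One should also sanity-check the ``$\min g(i,\lambda)\ge \max g(i-1,\lambda)$'' phrasing of Property~\ref{sprop2} against the crossing phrasing used here, which the paper asserts are equivalent.
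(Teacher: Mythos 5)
Your proof is correct, and it takes a genuinely different (and in some respects cleaner) route than the paper's. The paper perturbs the same crossing quadruple but normalizes the transfer differently: it moves $\delta$ from $\lambda_{kj}$ to $\lambda_{ki}$ and $\epsilon=\frac{\delta\, q_k(v_k-v_i)}{q_l(v_l-v_i)}$ from $\lambda_{li}$ to $\lambda_{lj}$, so that $\phi_i$ is exactly unchanged and $\phi_j$ strictly decreases (via the inequality $\frac{(v_k-v_i)(v_l-v_j)}{(v_l-v_i)(v_k-v_j)}>1$); it then case-splits on the sign of $\phi_j$, using the structure of the optimal $x$ (hence, implicitly, the monotonicity of $\phi$), and concludes that the modified $\lambda'$ is again optimal --- leaving the termination of the repeated uncrossing essentially implicit. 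Your normalization $\delta=(q_l/q_k)\varepsilon$ instead makes $q_i\Delta\phi_i$ and $q_j\Delta\phi_j$ cancel exactly, so that for every monotone $x$ the objective changes by $q_l\varepsilon(v_l-v_k)(x_i-x_j)\le 0$; this sidesteps both the case analysis and any appeal to Theorem~\ref{phi_monotonicity} or Lemma~\ref{scond1}. More importantly, your compactness-plus-potential argument (maximizing the linear functional $\Phi$ over the compact optimal set) converts the one-step uncrossing into a genuine existence proof in one stroke, which the paper's iterate-until-done sketch does not quite do. The two displayed identities and the computation $\Phi(\lambda(\varepsilon))-\Phi(\lambda^*)=q_l\varepsilon(j-i)(l-k)>0$ all check out. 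The only hypotheses you rely on that deserve a word are $q_k,q_l>0$ (true, since the $q_j$ are probabilities of support points) and the equivalence of the consecutive-row phrasing of Property~\ref{sprop2} with the no-crossing phrasing, which the paper itself asserts and which follows by chaining $\min g(i,\lambda)\ge\max g(i-1,\lambda)$ across rows.
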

\begin{lemma}
\label{scond3}
    Every optimal $\lambda$ satisfies \ref{sprop3}.
\end{lemma}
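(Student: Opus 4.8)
The plan is to prove the contrapositive in a strong form: I will show that if an optimal $\lambda$ has $\lambda_{ki}>0$ for some pair with $i>j^\star$ (where $j^\star$ is the smallest index with $\phi(j^\star,\lambda)>0$), then a single‑row rebalancing of $\lambda$ strictly lowers the objective of \SingleRelaxLP, contradicting optimality.

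First I would fix the value of the inner maximum. The monotone vectors of $[0,1]^m$ form a polytope whose vertices are the threshold vectors $\mathbf{1}\{\ell\ge j\}$, $j\in\{1,\dots,m+1\}$, so for \emph{every} $\lambda$,
\[
\max_{x}\ \sum_{i=1}^m q_i\,\phi(i,\lambda)\,x_i \;=\; \max_{1\le j\le m+1}\ \sum_{\ell\ge j} q_\ell\,\phi(\ell,\lambda)\;\le\; \sum_{\ell\,:\,\phi(\ell,\lambda)>0} q_\ell\,\phi(\ell,\lambda),
\]
with equality whenever $\phi(\cdot,\lambda)$ is monotone non‑decreasing (take $j$ to be the first index where $\phi$ turns positive). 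Since Theorem~\ref{phi_monotonicity} guarantees that every optimal $\lambda$ has monotone $\phi$, the objective value of an optimal $\lambda$ equals exactly $L(\lambda):=\sum_{\ell:\phi(\ell,\lambda)>0} q_\ell\,\phi(\ell,\lambda)$, while for any feasible $\lambda'$ the objective is at most $\sum_{\ell:\phi(\ell,\lambda')>0} q_\ell\,\phi(\ell,\lambda')$.

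Next I would set up the perturbation. Take an optimal $\lambda$; if $\phi(\ell,\lambda)\le 0$ for all $\ell$, Property~\ref{sprop3} holds vacuously, so assume $j^\star$ exists, and note that by monotonicity $\{\ell:\phi(\ell,\lambda)>0\}=\{\ell\ge j^\star\}$, so Property~\ref{sprop3} is exactly the assertion that $\lambda_{ki}=0$ for all $i>j^\star$, $k\ge i$. Assuming for contradiction $\lambda_{ki}>0$ with $i>j^\star$ and $k\ge i$, I would form $\lambda^\varepsilon$ by moving mass $\varepsilon\in(0,\lambda_{ki})$ from $\lambda_{ki}$ to $\lambda_{k,j^\star}$ (legal since $j^\star<i\le k$), which preserves the row‑$k$ constraint and nonnegativity and changes only $\phi(i,\cdot)$ and $\phi(j^\star,\cdot)$:
\[
\phi(i,\lambda^\varepsilon)=\phi(i,\lambda)+\tfrac{q_k}{q_i}(v_k-v_i)\varepsilon,\qquad \phi(j^\star,\lambda^\varepsilon)=\phi(j^\star,\lambda)-\tfrac{q_k}{q_{j^\star}}(v_k-v_{j^\star})\varepsilon.
\]
Because $\phi(i,\lambda)\ge\phi(j^\star,\lambda)>0$, for all small enough $\varepsilon>0$ both quantities remain strictly positive, so the sign pattern of $\phi$ is unchanged and $\sum_{\ell:\phi(\ell,\lambda^\varepsilon)>0}q_\ell\phi(\ell,\lambda^\varepsilon)=L(\lambda)+q_k(v_k-v_i)\varepsilon-q_k(v_k-v_{j^\star})\varepsilon=L(\lambda)+q_k(v_{j^\star}-v_i)\varepsilon<L(\lambda)$, using $v_{j^\star}<v_i$. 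Combined with the upper bound from the first step, the objective value at $\lambda^\varepsilon$ is strictly below $L(\lambda)$, a contradiction; hence no such $\lambda_{ki}$ exists.

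I expect the main obstacle to be exactly the mismatch the first step is designed to resolve: after perturbing, $\phi(\cdot,\lambda^\varepsilon)$ need not be monotone, so one cannot identify its inner max with $\sum_{\phi>0}q_\ell\phi_\ell$; instead one uses that sum only as an upper bound on the value of $\lambda^\varepsilon$ while retaining the exact identity for the value of the optimal $\lambda$, and then verifies that for small $\varepsilon$ the positive‑sign set of $\phi$ does not move (which follows from strict positivity of $\phi(i,\lambda)$ and $\phi(j^\star,\lambda)$). The remaining items — feasibility of $\lambda^\varepsilon$, the two displayed $\phi$‑increments, and the arithmetic — are routine.
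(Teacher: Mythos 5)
Your proof is correct and takes essentially the same route as the paper's: shift mass within a single row $k$ of $\lambda$ from column $i$ to a strictly lower column with strictly positive virtual value, and observe that the objective strictly decreases because $v_k - v_{j^\star} > v_k - v_i$. Your handling of the inner maximization (the exact identity $L(\lambda)$ for the optimal $\lambda$, whose $\phi$ is monotone, versus the positive-part upper bound for the possibly non-monotone $\phi(\cdot,\lambda^\varepsilon)$) is in fact somewhat more careful than the paper's, which simply evaluates the change in the objective at the fixed maximizer $x_i = x_j = 1$.
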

\begin{theorem}\label{thm:threePropExist}
    The $\lambda$ that satisfies properties \ref{sprop1},\ref{sprop2},\ref{sprop3} exists.
\end{theorem}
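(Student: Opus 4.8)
The plan is to reduce Theorem~\ref{thm:threePropExist} to producing a single optimal $\lambda$ that simultaneously satisfies Properties~\ref{sprop1} and~\ref{sprop2}. Once such a $\lambda$ is in hand, Lemma~\ref{scond3} gives Property~\ref{sprop3} for free, since that lemma asserts that \emph{every} optimal $\lambda$ satisfies it. So the entire content of the theorem is the joint satisfiability of the first two properties, and the natural route is to show that the constructions underlying Lemma~\ref{scond1} (Property~\ref{sprop1}) and Lemma~\ref{scond2} (Property~\ref{sprop2}) can be combined into one $\lambda$. Throughout I would invoke Theorem~\ref{phi_monotonicity} so that I may restrict attention to optimal $\lambda$ whose virtual-value vector $\phi(\cdot,\lambda)$ is monotone.

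The primary approach is to run the two fixes in series. Start from an optimal $\lambda$ already satisfying Property~\ref{sprop2} (Lemma~\ref{scond2}), and then re-run the Lemma~\ref{scond1} argument, tracking the effect on the support of $\lambda$. The Property~\ref{sprop1} repair is a local move: whenever some $\phi(i,\lambda)<0$, one shifts a small amount of mass out of the entries $\lambda_{ki}$ (for $k\ge i$) and into entries $\lambda_{kj}$ with $j<i$, which strictly raises $\phi(i,\lambda)$ while keeping every row-sum $\sum_j\lambda_{kj}=1$ and, by the analysis in Lemma~\ref{scond1}, not increasing the objective. The point to establish is that the redistribution can always be chosen so as not to create a forbidden crossing: since Property~\ref{sprop2} says the supports $g(k,\lambda)$ are nested as $k$ grows, pushing mass downward to a suitably small index (e.g.\ to an index already used by the rows below, or to index $1$) preserves this nestedness. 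Combined with the monotonicity of $\phi$, this keeps us inside the set of optimal, Property-\ref{sprop2}-satisfying $\lambda$'s while monotonically eliminating the negative $\phi$-coordinates; a standard termination or compactness argument then closes the loop, and Property~\ref{sprop3} follows from Lemma~\ref{scond3}.

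An alternative realization that avoids tracking the moves explicitly: among all optimal $\lambda$ satisfying Property~\ref{sprop2} (a nonempty polytope by Lemma~\ref{scond2}), pick one that is extremal in the sense of lexicographically maximizing $(\phi(1,\lambda),\dots,\phi(m,\lambda))$. Then any violation $\phi(i,\lambda)<0$ at such an extremal point contradicts extremality, because the mass-shifting move above produces another optimal, Property-\ref{sprop2}-satisfying $\lambda'$ with lexicographically larger $\phi$-vector; hence the extremal point satisfies Property~\ref{sprop1}, and we finish via Lemma~\ref{scond3}. Either route reduces the theorem to the same compatibility lemma.

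The main obstacle is exactly that compatibility lemma: showing the elementary mass-shift used to repair one property can always be carried out \emph{without} violating the other property and without leaving the optimal face of the LP. This requires a careful description of how the support of $\lambda$ and the vector $\phi(\cdot,\lambda)$ transform under the shift, leaning on the nestedness guaranteed by Property~\ref{sprop2} and the monotonicity of $\phi$ from Theorem~\ref{phi_monotonicity}. Everything else---nonemptiness of the optimal polytope, termination of the repair procedure, and reading off Property~\ref{sprop3} from Lemma~\ref{scond3}---is routine.
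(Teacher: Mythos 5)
Your high-level skeleton matches the paper's: start from an optimal $\lambda$ satisfying Property~\ref{sprop2}, repair Property~\ref{sprop1} without leaving the optimal face or breaking Property~\ref{sprop2}, and read off Property~\ref{sprop3} from optimality. The gap is in the repair move itself, which you correctly identify as the crux but then describe in the wrong direction. You propose, for an offending index $i$ with $\phi(i,\lambda)<0$, to move mass from $\lambda_{ki}$ into $\lambda_{kj}$ with $j<i$ (``to an index already used by the rows below, or to index $1$''). This fails on three counts. First, Property~\ref{sprop2} is the staircase condition $\min g(k,\lambda)\ge\max g(k-1,\lambda)$: row supports move \emph{up} as $k$ grows, so pushing row $k$'s mass down to a small column (in particular to column $1$) forces every row below $k$ to be supported at or below that column, which is false in general; the move you describe typically destroys Property~\ref{sprop2} rather than preserving it. (Note also that the repair in Lemma~\ref{scond1} that you cite sends mass to the diagonal $\lambda_{ss}$ with $s\ge i$, not to a column below $i$.) Second, the move cannot terminate: it raises $\phi(i)$ only by lowering $\phi(j)$ for some $j<i$, which is already negative by Theorem~\ref{phi_monotonicity}, and once the deficit has been pushed to column $1$ there is no smaller column left, so $\phi(1)<0$ can never be repaired this way. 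Third, your lexicographic variant inherits the problem: since the move decreases $\phi(j)$ for $j<i$, the resulting $\phi$-vector is lexicographically \emph{smaller}, not larger.

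The paper's repair goes in the opposite direction: among optimal $\lambda$ satisfying Property~\ref{sprop2} it picks one minimizing the number of negative coordinates of $\phi$, takes the \emph{largest} $i$ with $\phi(i,\lambda)<0$, and moves mass from $\lambda_{si}$ into $\lambda_{s(i+1)}$, processing the rows $s$ with $\lambda_{si}>0$ in \emph{decreasing} order of $s$. Raising the column index keeps each modified row's support above the rows beneath it (using that $s$ is maximal among rows touching column $i$), so Property~\ref{sprop2} survives; the objective does not increase because $x_i=0$ when $\phi(i,\lambda)<0$; and $\phi(i)$ becomes zero, contradicting the minimality of the count of negative coordinates. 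Your appeal to Lemma~\ref{scond3} for Property~\ref{sprop3} is fine, but the compatibility lemma to which you reduce everything is exactly where your construction breaks, and fixing it requires reversing the direction of the shift.
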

\textbf{Proof sketch.} 
\begin{enumerate}
    \item 
    There exists an optimal $\lambda$ that satisfies property~\ref{sprop1}. Consider the same analog with two players setting $\lambda$ and $\vec{x}$. Assume player one is setting an optimal $\lambda$ where $\phi_i < 0$ for some $i$. Then by theorem~\ref{phi_monotonicity}, $\phi_j < 0$ for all $j < i$. Now consider player two who knows that the virtual values $\phi_1$ to $\phi_i$ are all negative. The best strategy should be to set $x_1$ to $x_i$ to be $0$. Knowing the strategies of player two, player one can decrease $\lambda_{ki}$ by $\epsilon$ without increasing the objective at all. Hence eventually player one can arrive at another optimal $\lambda$ where $\phi(i, \lambda) \geq 0$ for all $i$.  
    \item 
    There exists an optimal $\lambda$ that satisfies property~\ref{sprop1} and~\ref{sprop2}. Consider an optimal $\lambda$ that satisfies property ~\ref{sprop1} but for some $i < j < k < l$ where $\lambda_{li} > 0$ and $\lambda_{kj} > 0$. Now consider decreasing $\lambda_{li}$ by $\delta$ and increasing $\lambda_{lj}$ by delta. Next, increase $\lambda_{ki}$ and decrease $\lambda_{kj}$ by an amount $\delta'$ so that $\phi_i$ is the same as before any change. One can verify that $\phi_j$ has decreased after the change while no other $\phi$ has changed, which means the $\lambda$ after change is also optimal. If this $\lambda$ violates property~\ref{sprop1}, we can simply decrease $\lambda_{lj}$ and increase $\lambda_{ll}$ until $\phi_j$ is non negative. 
    \item 
    Every optimal $\lambda$ satisfies property~\ref{sprop3}. Consider player one setting an optimal $\lambda$ where $\lambda_{ki} > 0 $ for some $k, i$ where $\exists j < i$ such that $q_j \cdot \phi(j, \lambda) > 0$. By theorem~\ref{phi_monotonicity}, all $\phi_l$ where $l \geq j$ will be possible, and therefore player two should set $x_l = 1$ for all $l \geq j$. Now, player two's strategy, player one can simply decrease $\lambda_{ki}$ and increase $\lambda_{kj}$ by the same amount in order to lower the objective, which contradicts with $\lambda$ being optimal.  
\end{enumerate}
\begin{theorem}\label{thm:threePropUnique}
\label{unique}
    The $\lambda$ that satisfies properties \ref{sprop1},\ref{sprop2},\ref{sprop3} is unique.
\end{theorem}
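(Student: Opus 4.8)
The plan is to show that Properties~\ref{sprop1}--\ref{sprop3}, in conjunction with the feasibility constraints $\sum_{j\le i}\lambda_{ij}=1$ and the definition of the virtual values $\phi(i,\lambda)$, rigidly determine $\lambda$, so that two solutions satisfying all three properties must coincide. I would organize this as a ``peeling from the top'' argument, proving along the way (or taking from the surrounding lemmas) the auxiliary fact that a $\lambda$ meeting the three properties is optimal for \SingleRelaxLP, which licenses use of Theorem~\ref{phi_monotonicity}.

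First I would record the structural consequences of the properties. Since the sum defining $\phi(m,\lambda)$ contains only the (vanishing) $k=m$ term, $\phi(m,\lambda)=v_m>0$ for every feasible $\lambda$. Hence the index $j^\star(\lambda):=\min\{i:\phi(i,\lambda)>0\}$ is well defined and $\le m$; by Property~\ref{sprop1} together with minimality, $\phi(i,\lambda)=0$ for all $i<j^\star(\lambda)$, and by monotonicity of $\phi$ (Theorem~\ref{phi_monotonicity}) we also have $\phi(i,\lambda)\ge\phi(j^\star,\lambda)>0$ for $i\ge j^\star$. Invoking Property~\ref{sprop3} with witness $j=j^\star(\lambda)$ then kills every column above $j^\star$: $\lambda_{ki}=0$ whenever $i>j^\star(\lambda)$, so every row's unit of mass sits in columns $\le\min(i,j^\star)$. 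In the degenerate case $j^\star=1$ this already forces $\lambda_{i1}=1$ for all $i$ (the only nonempty column is the first and each row sums to one), so $\lambda$ is unique outright; hence assume $j^\star\ge 2$.

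Next I would show the pieces below and above $j^\star$ are forced. The equations $\phi(i,\lambda)=0$ for $i<j^\star$, together with the row-sum constraints, form a linear system in the $\lambda_{ki}$ with column index $\le j^\star-1$; the combinatorial content of Property~\ref{sprop2} (the supports $g(i,\lambda)$ are pairwise non-overlapping, weakly increasing intervals) pins down which $\lambda_{ki}$ are strictly positive, and once that support pattern is fixed the residual system is triangular and has a unique solution. The rows $i>j^\star$ are likewise determined: by Property~\ref{sprop2} the support of row $i$ lies weakly above that of row $i-1$ while staying within columns $\le j^\star$, which together with the already-fixed lower rows and the forced (non-decreasing) values of $\phi$ determines $\lambda_{i\cdot}$ uniquely. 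Formally I would package this as an induction on $j^\star$ (equivalently, on the number of blocks cut out by successive applications of Property~\ref{sprop3}), with the analysis just described handling one block and the inductive hypothesis handling the rest.

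The main obstacle, and the step carrying the real content, is establishing that $j^\star(\lambda)$ --- and more generally the whole vector $\big(\phi(i,\lambda)\big)_i$ --- is independent of the choice of $\lambda$ satisfying Properties~\ref{sprop1}--\ref{sprop3}; only with this in hand does the peeling above actually prove $\lambda=\lambda'$ rather than merely that $\lambda$ and $\lambda'$ have the same shape. The difficulty is that $\lambda\mapsto(\phi(i,\lambda))_i$ is badly non-injective in general, so one must exploit the combinatorial rigidity of Properties~\ref{sprop2} and~\ref{sprop3}. I would do this by assuming two solutions $\lambda,\lambda'$ have $(\phi(i,\lambda))_i\ne(\phi(i,\lambda'))_i$, taking the smallest index of disagreement, using the non-overlapping-interval structure to localize the discrepancy to a single block, and then running an exchange argument on the $\lambda$-mass inside that block --- in the spirit of the moves in the proof sketch of Theorem~\ref{thm:threePropExist} --- to contradict either monotonicity of $\phi$ (Theorem~\ref{phi_monotonicity}) or its non-negativity (Property~\ref{sprop1}). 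Once the common value of $\phi$ is established, everything downstream (the support pattern, then the triangular systems) is deterministic, and uniqueness follows.
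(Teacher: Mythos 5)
Your high-level plan is in the right spirit -- the paper also proves uniqueness by showing that the properties force a canonical, greedily-determined $\lambda$ (it constructs this $\lambda$ explicitly by the FillLowToHigh-style algorithm and then shows any other solution must coincide with it) -- but the step you yourself flag as ``carrying the real content'' is not actually supplied, and the mechanism you propose for it would not close. You reduce everything to showing that two solutions $\lambda,\lambda'$ cannot disagree, and you claim the contradiction would come from monotonicity or non-negativity of $\phi$. But in the hard cases the disagreement lives entirely inside the block where $\phi(i,\lambda)=\phi(i,\lambda')=0$: two genuinely different mass distributions over that block can both yield non-negative, monotone $\phi$ vectors (indeed identical ones, since $\lambda\mapsto\phi$ is non-injective, as you note), so neither Property~\ref{sprop1} nor Theorem~\ref{phi_monotonicity} is violated. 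The contradiction has to come instead from the non-crossing structure of Property~\ref{sprop2} interacting with the row-sum constraints $\sum_{j\le i}\lambda_{ij}=1$ and Property~\ref{sprop3}: in the paper's proof one takes the minimal column index $i$ and then the minimal row index $k$ at which $\lambda$ and $\lambda'$ differ, and in each of four cases the compensating mass that must appear elsewhere in row $k$ (to preserve the row sum) or in a later row (to keep $\phi_i\ge 0$) creates either a forbidden crossing $i<j<k<l$ with $\lambda_{li}>0,\lambda_{kj}>0$, or a violation of Property~\ref{sprop3}. Your Step~B has the same problem in miniature: the assertion that ``Property~\ref{sprop2} pins down which $\lambda_{ki}$ are strictly positive, and once that support pattern is fixed the residual system is triangular'' is the uniqueness claim restated, not an argument for it -- Property~\ref{sprop2} alone only says the supports are weakly increasing intervals, and many such patterns are a priori compatible with the row sums and the conditions $\phi_i=0$.

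Two smaller issues. First, you license Theorem~\ref{phi_monotonicity} by asserting that any $\lambda$ satisfying the three properties is optimal, ``taking [this] from the surrounding lemmas''; no surrounding lemma says that, and proving it essentially requires the uniqueness theorem itself (via Theorem~\ref{thm:threePropExist}), so as written this is circular. It is repairable: monotonicity of $\phi$ follows directly from the three properties without optimality, since Property~\ref{sprop3} forces $\phi_i=v_i$ for all $i>j^\star$ while $\phi_{j^\star}\le v_{j^\star}$ and $\phi_i=0$ for $i<j^\star$. Second, the ``exchange moves'' of Theorem~\ref{thm:threePropExist} transform a single $\lambda$ while preserving its objective value; they are not a comparison device between two given solutions, so invoking them ``in spirit'' does not substitute for the explicit case analysis.
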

\textbf{Proof Sketch.}
Consider what properties \ref{sprop1},\ref{sprop2},\ref{sprop3} is saying. Let us say ``$k$ is lowering index $i$" when $\lambda_{ki} > 0$.  Property \ref{sprop2} says that larger $k$ has to lower larger $i$. Property \ref{sprop1} and \ref{sprop3} basically says that if $i$ is the smallest number where $\phi_i > 0$, then $\phi_j = 0$ for all $j < i$. Moreover, all $k \in [m]$ has to be lowering some $j \leq i$. This essentially means that if one knows the distribution of $\lambda_{1}, \lambda_{2}... \lambda_{k-1}$, there is only one way to set $\lambda_{k}$: lower the smallest $i$ where $\phi_i > 0$. if $\phi_i$ becomes $0$, then lower $i+1$ next, etc. Hence one can expect the $\lambda$ that satisfies the three properties to be exactly determined by the following algorithm.   

\begin{algorithm}
  \caption{FillLowToHigh():}
  \label{lowToHigh}
\begin{algorithmic}[0]
    \STATE $\forall k\geq i, \lambda_{ki} \leftarrow 0$
    \FOR{k = 1, 2, ... m}
        \STATE $\lambda(k) \leftarrow 1$
        \WHILE{$\lambda(k) > 0$}
        \STATE $i \leftarrow \arg\min_{j}\{\phi(j, \lambda) > 0\}$
        \STATE $\lambda(\text{fill}) \leftarrow $ the value that if assigned to $\lambda_{ki}$ will make $\phi(i, \lambda) = 0$
        \STATE $\lambda_{ki} \leftarrow \min\big(\lambda(k), \lambda(\text{fill})\big)$
        \STATE $\lambda(k) \leftarrow \lambda(k) - \lambda_{ki}$
        \ENDWHILE
    \ENDFOR
    \STATE return $\lambda$
\end{algorithmic}
\end{algorithm}
The formal proofs of theorem~\ref{thm:threePropExist} and \ref{thm:threePropUnique} can be found at section~\ref{subsec:formal}.  

The existence of such a unique optimal $\lambda$ makes it much easier for us to understand the optimal solution to LP Single, and can lead to additional intuition. For instance, consider the following theorem in~\cite{BravermanMSW18}.   
\begin{theorem}[\cite{BravermanMSW18}]\label{thm:LPRevForER} 
Let distribution $\mathcal{D^*}$ be the equal revenue curve truncated at $H$ ($\mathcal{D^*}$ is supported on $[1, H]$), which means that the cumulative distribution function of $\mathcal{D^*}$ is
$$F(v) = 1 - 1/v \quad \forall v \in [1, H); \quad F(H) = 1.$$
Then \SingleOPT($\mathcal{D^*}$) = $\Theta(\log\log H)$, while $\val_n(\mathcal{D^*}) = \Theta(\log H)$.
\end{theorem}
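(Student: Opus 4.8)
The statement has two halves. The welfare bound $\val_n(\mathcal{D}^*)=\Theta(\log H)$ is a short computation, so I would dispatch it first: a single draw from $\mathcal{D}^*$ has $\val_1(\mathcal{D}^*)=\E_{\mathcal{D}^*}[v]=1+\int_1^H(1-F(v))\,dv=1+\int_1^H \tfrac{dv}{v}=1+\ln H=\Theta(\log H)$ (the atom at $H$ is already counted), and since the $n$ buyers are i.i.d.\ and $n$ is a constant, $\val_1(\mathcal{D}^*)\le \val_n(\mathcal{D}^*)=\E[\max_i v_i]\le \E[\sum_i v_i]=n\,\val_1(\mathcal{D}^*)$, giving $\val_n(\mathcal{D}^*)=\Theta(\log H)$. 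The real content is $\SingleOPT(\mathcal{D}^*)=\Theta(\log\log H)$, which I would prove by matching lower and upper bounds, the latter phrased through the Lagrangian machinery of Section~\ref{sec:dual}.

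\textbf{Lower bound $\SingleOPT(\mathcal{D}^*)=\Omega(\log\log H)$.} Here I would exhibit an explicit feasible $(x,u)$; by the ``if'' direction of Theorem~\ref{thm:BMSWfeasible} it suffices to give one repeated auction. The natural candidate is a pay-your-bid uniform auction with declining reserve whose reserve marches through the iterated-square-root levels $H=H^{1/2^0},H^{1/2^1},H^{1/2^2},\dots$ down to $\Theta(1)$ — there are $L=\Theta(\log\log H)$ of these — with the number of rounds spent at each level chosen carefully (crucially \emph{not} uniformly) so that a clever mean-based buyer with value $v$ in the band $[H^{1/2^{t+1}},H^{1/2^t})$ is steered to settle on a bid of roughly $H^{1/2^{t+1}}$ rather than to keep square-rooting its bid. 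One then computes that each of the $\Theta(\log\log H)$ bands contributes $\Theta(1)$ to the expected per-round revenue (a band has probability $\Theta(H^{-1/2^{t+1}})$ under $\mathcal{D}^*$ and buyers in it pay $\Theta(H^{1/2^{t+1}})$), verifies the induced $(x,u)$ meets the BMSW constraints, and invokes Theorem~\ref{thm:BMSWfeasible}.

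\textbf{Upper bound $\SingleOPT(\mathcal{D}^*)=O(\log\log H)$.} For this I would pass to the Lagrangian dual \SingleRelaxLP, whose value equals $\SingleOPT(\mathcal{D}^*)$ by LP duality. By Theorems~\ref{phi_monotonicity}, \ref{thm:threePropExist}, and~\ref{thm:threePropUnique} the optimal multiplier $\lambda$ is the unique one satisfying Properties~\ref{sprop1}--\ref{sprop3} (and since \SingleRelaxLP\ is a min over $\lambda$, for an upper bound it is enough to produce \emph{any} feasible $\lambda$); moreover, because $\phi(\cdot,\lambda)$ is monotone and nonnegative under Property~\ref{sprop1}, the inner maximization just yields $\sum_i q_i\,\phi(i,\lambda)$. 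So the task reduces to finding a feasible $\lambda$ with $\sum_i q_i\,\phi(i,\lambda)=O(\log\log H)$ — equivalently, using Property~\ref{sprop3} (which forces $\lambda$ to vanish on every column above the first index $i^*$ with $\phi_{i^*}>0$, so that $\phi_i=v_i$ for $i>i^*$ and hence $\sum_iq_i\phi_i=q_{i^*}\phi_{i^*}+\sum_{i>i^*}q_iv_i$), to showing $i^*$ lies within $O(\log\log H)$ ``bands'' of the top of the support. Concretely I would group the support into the iterated-square-root bands used above, route each row's unit of $\lambda$-mass (following the staircase dictated by Property~\ref{sprop2}, as in Algorithm~\ref{lowToHigh}) into cancelling the $\phi$'s of the lowest unfinished band, and show that the number of rows needed to zero a band decays geometrically as one moves up the bands, so that all but the top $O(\log\log H)$ bands get fully cancelled; the leftover $\sum_{i>i^*}q_iv_i$ is $O(\log\log H)$ because on the equal-revenue curve $q_iv_i=\Theta(1)$ per support point while only $\Theta(\log\log H)$ iterated-square-root bands sit near the top. (One could instead bound the primal LP directly via the same band decomposition, but the dual route is what the Section~\ref{sec:dual} lens is built for.)

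\textbf{Main obstacle.} The heart of the argument is the band bookkeeping that powers the $\SingleOPT$ estimate. On the lower-bound side the delicate point is calibrating the phase lengths so that the buyer's utility-maximizing bid sits exactly one band below its value rather than drifting all the way down — uniform phase lengths, for instance, let the extracted revenue collapse. On the upper-bound side the delicate point is tracking how the $m$ rows of $\lambda$ propagate through the columns under Property~\ref{sprop2} and proving that the residual positive-$\phi$ mass is confined to $O(\log\log H)$ bands; this is where the doubly-exponential geometry of $F(v)=1-1/v$ turns into the double logarithm. Everything else — the welfare bound, the reduction to \SingleRelaxLP, and the collapse of the inner max to $\sum_i q_i\phi_i$ — is routine given the results already in hand.
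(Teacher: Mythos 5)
Your welfare computation is fine, and your upper-bound \emph{framing} (weak duality: any feasible $\lambda$ for \SingleRelaxLP{} upper-bounds \SingleOPT) matches the lens of Section~\ref{sec:dual}. But both halves of the $\Theta(\log\log H)$ estimate founder on the iterated-square-root band decomposition, which is the wrong granularity for the equal-revenue curve. The paper's proof instead pins down the optimal dual exactly: the unique $\lambda$ satisfying Properties~\ref{sprop1}--\ref{sprop3} forces $\phi\equiv 0$ below a threshold, which yields a conservation identity; passing to the continuum turns this into the ODE $1/v - g(v)/v^2 - g'(v)/g(v)=0$ for the bid map $g$, whose solution is $g(v)=v/(\log v+1)$, giving $\SingleOPT=\int_{H/(\log H+1)}^{H} dv/v=\log(\log H+1)$ --- an exact value, hence upper and lower bound simultaneously. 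The moral is that the optimal auction extracts $v/(\log v+1)$ from a value-$v$ buyer, i.e.\ only a \emph{logarithmic} factor below the value, and the uncancelled dual mass lives in the thin sliver $[H/(\log H+1),H]$, not in whole iterated-square-root bands.

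Concretely, two steps fail. \emph{Lower bound:} if the arms are the levels $H^{1/2^{t}}$ and a buyer with value $v$ in $[H^{1/2^{t+1}},H^{1/2^{t}})$ is meant to settle on $H^{1/2^{t+1}}$ with win-fraction $\ell_t$, the no-regret constraint against the arm one level lower gives $u(v)\ge (v-H^{1/2^{t+2}})\ell_{t+1}$, so the revenue from $v$ is at most $\min\{H^{1/2^{t+1}}\ell_t,\; v(\ell_t-\ell_{t+1})\}$ (up to lower-order terms). Because the density $1/v^2$ concentrates each band's mass at its bottom, integrating over band $t$ bounds its contribution by $\ell_t\epsilon_t(1+\log(1/\epsilon_t))$ with $\epsilon_t=1-\ell_{t+1}/\ell_t$; since $\sum_t\ell_t\epsilon_t\le\ell_0\le 1$ over the $L=\Theta(\log\log H)$ bands, the total is $O(1+\log L)=O(\log\log\log H)$ no matter how you calibrate the phase lengths. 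So "each band contributes $\Theta(1)$" is unattainable: either the $\ell_t$ decay geometrically or the bulk of each band defects downward. \emph{Upper bound:} your final step claims the leftover $\sum_{i>i^*}q_iv_i$ is $O(\log\log H)$ because it is confined to $O(\log\log H)$ iterated-square-root bands near the top --- but there are only $\Theta(\log\log H)$ such bands in total, and even the single top band $[\sqrt{H},H]$ carries $\sum q_iv_i=\int_{\sqrt H}^{H}dv/v=\tfrac{1}{2}\log H$, so this bookkeeping yields only the trivial $O(\log H)$. To recover $O(\log\log H)$ you must show the cancellation reaches all the way up to $H/\mathrm{poly}\log(H)$, which is exactly what the ODE solution $g(H)=H/(\log H+1)$ delivers and what your geometric-decay argument is not set up to prove.
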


\cite{BravermanMSW18} proves that \SingleOPT($\mathcal{D^*}$) = $\Theta(\log\log H)$ by first setting specific $\vec{x}$ and finding the corresponding objective value to prove \SingleOPT($\mathcal{D^*}$) $\geq  \log(\log H + 1) - 1$, then proving a $\log(\log H + 1)$ upper bound on \SingleOPT. By using the three properties of the optimal $\lambda$ and associated $\phi(i ,\lambda)$, we can actually prove that this upper bound is tight.

\begin{proof}[(alternate proof of Theorem~\ref{thm:LPRevForER}~\cite{BravermanMSW18})]

Let $(x^*, u^*)$ be the optimal solution to \SingleLP given distribution $\mathcal{D}$. 

In the discrete support setting we know there is a unique optimal $\lambda=\lambda^*$ that satisfies property~\ref{sprop1},~\ref{sprop2} and~\ref{sprop3}. Since $\phi(i, \lambda)$ is none negative and monotonely increases in $i$, setting $x_i = 1$ for all $i$ maximizes the objective. Therefore
$$\min_{\lambda}\max_x \sum_{i=1}^m q_i \phi(i, \lambda) x_i = \max_x \sum_{i=1}^m q_i \phi(i, \lambda^*) x_i = \sum_{i=1}^m q_i \phi(i, \lambda^*).$$ Let $w(v_i)$ be the largest value $v_j$ where $\lambda^*_{ji} > 0$, by property~\ref{sprop2} we know $w(v_i)$ is monotone increasing in $i$. Let $g(v_i)$ be the set of values $v_j$ where $\lambda^*_{ij} > 0$. By complementary slackness of $(\xst, \ust)$ and $\lambda^*$, it is equivalent to define $g(v_i)$ as the set of value $v_j$ where $u_i = (v_i - v_j)\xst_j$. 
By property~\ref{sprop3}, $\phi(i, \lambda^*) = 0, \forall i < \max(g(v_m))$. Then it must be the case that for all index $l < \max(g(v_m))$,
\begin{align}
&\sum_{i=1}^{l} q_i \phi(i, \lambda^*) = 0 \nonumber\\
\Rightarrow & \sum_{i=1}^{l} \left(q_i v_i - \sum_{k=i}^m q_k (v_k - v_i) \lambda_{ki}\right) = 0\nonumber\\
\Rightarrow & \sum_{i=1}^{l} q_i v_i = \sum_{i=1}^{l} \sum_{k=i}^m q_k (v_k - v_i) \lambda_{ki} = \sum_{k=1}^{w(l)} \sum_{i \in g(v_k)} q_k (v_k - v_i) \lambda_{ki}. \label{eq:phiZeroConservation}
\end{align}
Now consider $D^*_\epsilon$: a discretization of $\mathcal(D^*)$ where the support is $\{1, 1+ \epsilon ... H\}$ and $F(v) = 1 - \frac{1}{v}$ for all $v$ in the support. $\max(g(v)) - \min(g(v))$ can be made arbitrarily small for all $v$ by sufficiently small $\epsilon$, in other words, $\lim_{\epsilon \rightarrow 0} \max(g(v)) - \min(g(v)) = 0$. Also notice that for any $\delta$, there exists an $\epsilon$ where $g(v) \neq g(v + \delta)$ by $\phi$ monotonicity. Thus for $\mathcal{D} = \lim_{\epsilon \rightarrow 0} \mathcal{D}^*_{\epsilon}$, $g(v)$ just contains a point and $g(v) \neq g(w)$ for all $v \neq w$, which means that $g$ can be viewed as a strictly increasing function and $w$ can be viewed as $g^{-1}$. Thus in the continuous setting the analogous equation to equation~\ref{eq:phiZeroConservation} would be
\begin{align*}
    & \forall k \in supp(\mathcal{D}): \int_i^k f(v)(v - g(v))\spc dv = \int_1^{g(k)} f(v) v \spc dv \\
    & \Rightarrow \int_1^k f(v)(v - g(v)) \spc dv = \int_1^{k} f(g(v))g(v) g'(v) dv\\
    & \Rightarrow \int_1^k f(v)(v - g(v)) - f(g(v))g(v) g'(v) \spc dv = 0.
\end{align*}
Therefore $f(v)(v - g(v)) - f(g(v))g(v) g'(v) = 0$. Given the equal revenue curve, $f(v) = \frac{1}{v^2}$. Thus 
\begin{align*}
    &\frac{1}{v^2}(v - g(v)) - \frac{1}{g(v)^2} g(v) g'(v) = 0\\
    &\Rightarrow 1/v - g(v)/v^2 - \frac{g'(v)}{g(v)} = 0.
\end{align*}
As calculated in~\cite{BravermanMSW18}, the solution to above equation is $g(v) = \frac{v}{\log v + 1}$. Therefore $g(H) = \frac{H}{\log H + 1}$, and 
\begin{align*}
    \text{\SingleOPT} = \int_{g(H)}^H f(v)\phi(v, \lambda^*) = \int_{g(H)}^H f(v) v \spc dv = \int_{H/(\log H + 1)}^H \frac{1}{v^2} \cdot v \spc dv =  \log(\log H + 1).
\end{align*}
\end{proof}

\subsection{Extension to Multiple buyers}

Using the same proof logic as for the single buyer case, it is easy to see that theorem~\ref{phi_monotonicity} still holds and there is an optimal lambda which satisfies property~\ref{sprop1} and \ref{sprop2}. However, it is possible that no optimal $\lambda$ satisfies property~\ref{sprop3} due to the additional restriction that $x$ must satisfy Border's constraints, which makes our understanding of the solution to \BorderLP{} much more difficult. We can see this from the following example. 
\begin{remark}
    \label{counterexample}
    There exists an optimal $\lambda$ for some distribution $D$ where $\lambda_{ki} > 0$ for some $k, i$ such that $\exists j < i$ where $x_j$ has a positive multiplier.
\end{remark}
\begin{proof}
    We provide an example for which this statement is true. Consider the following distribution:
    $$q = \left[\frac{1}{4}, \frac{1}{4}, \frac{1}{4}, \frac{1}{4}\right], v = [1, 9, 10, 15]$$
    Then in order for Property \ref{sprop3} to hold, it must be true that $\phi_1 \leq 0, \phi_3 = v_3$ and $\phi_4 = v_4$ because no $\lambda$ has the result of $\phi_2 = 0$. Then every $\lambda$ which satisfies \ref{sprop3} has the same objective value. One such $\lambda$ which satisfies \ref{sprop3} is: $\lambda_{11} = 1, \lambda_{21} = \frac{1}{8}, \lambda_{22} = \frac{7}{8}, \lambda_{32} = 1,$ and $\lambda_{42} = 1$. The resulting $\phi$ is strictly increasing, so the optimal $x^*$ is for Border's constraint to be tight. Thus, $x^* = [\frac{1}{8}, \frac{3}{8}, \frac{5}{8}, \frac{7}{8}]$. Now consider an alternate $\lambda'$ such that $\lambda_{11} = 1, \lambda_{21} = 1, \lambda_{32} = \frac{3}{11},$ and $\lambda_{42} = \frac{8}{11}$. Then the resulting $\phi'$ is also strictly increasing, so the optimal $x^*$ is the same as for $\lambda$. The difference in objective between $\lambda$ and $\lambda'$ is:
    $$\lambda - \lambda' = \frac{1}{4}(15 - 9)(1)\frac{7}{8} - \frac{1}{4}(15 - 9)\left(\frac{3}{11}\right)\frac{7}{8} - \frac{1}{4}(15 - 10)\left(\frac{8}{11}\right)\frac{7}{8} > 0.$$
    Because $\lambda'$ yields a better objective, we know $\lambda$ must not be optimal and so the optimal $\lambda^*$ for this distribution must not satisfy Property \ref{sprop3}.
\end{proof}

A natural question following this discovery is: under what conditions could we generalize the results for single buyer setting to multiple buyer setting? Here we prove that for a restricted class of distribution $\D$ which satisfy $\frac{f(v)}{F(v)} \leq \frac{1}{H -v}$ for all $v$, theorem~\ref{thm:threePropExist} and~\ref{thm:threePropUnique} are generalizable. 
\begin{theorem}
\label{sufficiency}
    If the distribution satisfies $\frac{f(v)}{F(v)} \leq \frac{1}{H - v}$ for all $v$ where $H$ is the maximum value possible, then there is a unique optimal $\lambda$ for the multi-buyer LP which satisfies Properties \ref{sprop1}, \ref{sprop2},  and \ref{sprop3}.
\end{theorem}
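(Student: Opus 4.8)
As Properties~\ref{sprop1}--\ref{sprop3} and Algorithm~\ref{lowToHigh} (\textsc{FillLowToHigh}) never reference the allocation polytope, the arguments of Theorems~\ref{thm:threePropExist} and~\ref{thm:threePropUnique} go through verbatim in the Border setting, so there is a \emph{unique} $\lambda^\star$ --- the output of \textsc{FillLowToHigh} --- satisfying all three Properties; it therefore suffices to prove that, under the hypothesis, this $\lambda^\star$ is optimal for \BorderRelaxLP, and uniqueness is then automatic. It is convenient to first record the shape of $\phi^\star:=\phi(\cdot,\lambda^\star)$: \textsc{FillLowToHigh} zeroes the $\phi_i$ from the bottom up and never moves past the first index $i^\star$ it fails to drive to $0$, so $\lambda^\star_{kj}=0$ for all $j>i^\star$, whence $\phi^\star_j=0$ for $j<i^\star$ and $0<\phi^\star_{i^\star}\le v_{i^\star}<v_{i^\star+1}=\phi^\star_{i^\star+1}<\cdots$; in particular $\phi^\star$ is monotone, and strictly increasing on $[i^\star,m]$.

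Writing \BorderRelaxLP{} as $\min_\lambda\max_{x\in\mathcal B}\sum_iq_i\phi(i,\lambda)x_i$ (with $\mathcal B$ the monotone Border polytope), I would prove $\lambda^\star$ optimal by exhibiting a saddle point $(\lambda^\star,x^\star)$. Because $\phi^\star$ is monotone, vanishes on $[1,i^\star-1]$, and is strictly increasing on $[i^\star,m]$, the best responses to $\lambda^\star$ are exactly the $x\in\mathcal B$ that are ``greedy-from-the-top'' on $[i^\star,m]$ --- i.e.\ coincide there with the interim allocation $x^\star(v)=F(v)^{n-1}$ of the efficient auction --- and arbitrary (within the box cut out by monotonicity and Border) on $[1,i^\star-1]$, since those coordinates do not affect the value $\sum_{j\ge i^\star}q_j\phi^\star_jx_j$ of the inner maximum. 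Expanding $\phi$, the requirement that $\lambda^\star$ also minimize $\sum_iq_i\phi(i,\lambda)x^\star_i$ decouples across $k$ into: for each $k$, $\mathrm{supp}\big((\lambda^\star_{kj})_{j\le k}\big)\subseteq\argmax_{j\le k}(v_k-v_j)x^\star_j$. Since $\mathrm{supp}(\lambda^\star_{k\cdot})\subseteq\{1,\dots,i^\star\}$ by Property~\ref{sprop3}, the plan is to choose the free coordinates $\{x^\star_j:j<i^\star\}$ so that, for every $k$, the block $\mathrm{supp}(\lambda^\star_{k\cdot})$ lies in $\argmax_{j\le k}(v_k-v_j)x^\star_j$ (this is exactly complementary slackness between $\lambda^\star$ and $x^\star$).

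Two ingredients go into this choice. First, for the comparisons against the pinned indices $j'\ge i^\star$ one needs $(v_k-v_{i^\star})x^\star_{i^\star}\ge(v_k-v_{j'})x^\star_{j'}$, i.e.\ that $v\mapsto(v_k-v)x^\star(v)=(v_k-v)F(v)^{n-1}$ is weakly decreasing on $[v_{i^\star},v_k]$; this is where the hypothesis enters, since $\tfrac{f(v)}{F(v)}\le\tfrac1{H-v}$ is equivalent to $v\mapsto(H-v)F(v)$ being weakly decreasing. Second, on the free block $j<i^\star$ one sets the $x^\star_j$ by a downward recursion that keeps $(v_k-v_j)x^\star_j$ constant along $\mathrm{supp}(\lambda^\star_{k\cdot})$ (the discrete analogue of the conservation identity~\eqref{eq:phiZeroConservation} and the ODE derived from it), using Property~\ref{sprop2} to guarantee the support blocks for different $k$ are consistently ordered, and the hypothesis once more to guarantee the resulting values stay inside the Border/monotonicity box. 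Given both ingredients, $(\lambda^\star,x^\star)$ is a saddle point, hence $\lambda^\star$ is optimal, and the theorem follows.

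I expect the second ingredient --- constructing the free coordinates and verifying they remain feasible \emph{precisely} under $\tfrac{f}{F}\le\tfrac1{H-v}$ --- to be the main obstacle, together with the subtlety that differentiating $(v_k-v)F(v)^{n-1}$ produces a factor $(n-1)$, so that for $n\ge 3$ one either invokes the sharper bound $\tfrac{f}{F}\le\tfrac1{(n-1)(H-v)}$ or must absorb the slack using precisely the freedom in $x^\star$ on $[1,i^\star-1]$. The remaining points (passing from the discrete LP to a density by the usual limiting argument, ties in the $\argmax$, and flat positive stretches of $\phi^\star$) are routine.
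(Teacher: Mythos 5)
Your proposal rests on the same key mechanism as the paper's proof: the hypothesis $\tfrac{f(v)}{F(v)}\le\tfrac{1}{H-v}$ is used precisely to guarantee that, for the Border-tight allocation $x^*$, the map $j\mapsto (v_k-v_j)x^*_j$ is weakly decreasing, i.e.\ lower arms are weakly preferred. But the paper's route from there is shorter and avoids the obstacle you (correctly) flag as the main one. The paper observes that for any $\lambda$ inducing a nonnegative monotone $\phi$ the inner maximizer is the Border-tight $x^*$, fixes that $x^*$, and notes that the one-sided inequality $(v_k-v_j)x^*_j\ge(v_k-v_i)x^*_i$ for $j\le i\le k$ is exactly what is needed for the exchange argument of Lemma~\ref{scond3} (shift mass from $\lambda_{ki}$ to $\lambda_{kj}$) to go through; Lemmas~\ref{scond1} and~\ref{scond2} extend unconditionally, so the existence and uniqueness arguments of Theorems~\ref{thm:threePropExist} and~\ref{thm:threePropUnique} apply. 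No equality or complementary-slackness condition is ever required. (Note that Theorem~\ref{thm:threePropExist} does \emph{not} go through ``verbatim'' in the Border setting --- its proof uses Lemma~\ref{scond3}, which is exactly what fails in general by Remark~\ref{counterexample}; what survives unconditionally is only the existence and uniqueness of a $\lambda$ satisfying the three Properties, which is how you use it, but optimality is the entire content to be re-established.)

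The genuine gap in your plan is the complementary-slackness step. For $(\lambda^\star,x^\star)$ to be a saddle point you need $\mathrm{supp}(\lambda^\star_{k\cdot})\subseteq\argmax_{j\le k}(v_k-v_j)x^\star_j$ for every $k$, but when $\tfrac{f}{F}<\tfrac{1}{H-v}$ strictly the map $j\mapsto(v_k-v_j)x^\star_j$ is strictly decreasing for the Border-tight $x^\star$, so the argmax is $\{1\}$, whereas \textsc{FillLowToHigh} spreads $\lambda^\star_{k\cdot}$ over a moving frontier of indices. You are therefore forced to redefine $x^\star$ on $[1,i^\star-1]$ so that $(v_k-v_j)x^\star_j$ is \emph{exactly} constant along each support block while remaining monotone and Border-feasible; these are equality constraints fought against inequality constraints, and you leave their compatibility unverified. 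This is not a minor detail --- it is the step on which your route succeeds or fails --- and it simply does not arise in the paper's exchange-based argument. Separately, your observation about $F(v)^{n-1}$ versus $F(v)$ is well taken and in fact applies to the paper's own proof, which writes the Border-tight interim allocation as $F(v_j)$ (the $n=2$ continuum limit); for $n\ge 3$ the hypothesis as stated would need to be strengthened to $\tfrac{f}{F}\le\tfrac{1}{(n-1)(H-v)}$.
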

\begin{proof}
    If $(v_k - v_j)x_j \geq (v_k - v_i)x_i$ for all $i \geq j$ and $k \geq i$, then we can extend Lemma \ref{scond3} to the multi-buyer case because we will have normalized for $x$. Lemmas \ref{scond1} and \ref{scond2} were already extendable, so all three lemmas hold true for the multi-buyer case under a distribution of the specified type and we can also apply the logic in Theorem \ref{unique}. Note that for any $\lambda$ which induces a monotone $\phi$ and satisfies Property \ref{scond1}, it is optimal for Border's constraints to be tight because it is optimal to maximize $x_i$ for every $i$. For any minmax problem $(\min_\lambda \max_x f(\lambda,x))$, if $x^*$ is optimal for the inner maximization \emph{for all $\lambda$}, then the problem is equivalent to $\min_\lambda f(\lambda, x^*)$. In our problem, we know that $x_j = F(v_j)$ when Border's constraints are tight. Therefore $x^*_j = F(v_j)$. We want:
    $$(v_k - v_j)x_j \geq (v_k - v_i)x_i \implies \frac{F(v_j)}{F(v_i)} \geq \frac{v_k - v_i}{v_k - v_j}.$$
    Notice that $\frac{v_k - v_i}{v_k - v_j}$ is maximized when $k$ is set to the maximum $v$ possible. Let $H$ be this maximum $v_k$. Then $\frac{F(v_j)}{F(v_i)} \geq \frac{H - v_i}{H - v_j}$. Now we can compute:
    \begin{align*}
        f(v) &= \lim_{\epsilon \rightarrow 0} \frac{F(v + \epsilon) - F(v)}{\epsilon} \\
        &= \lim_{\epsilon \rightarrow 0} \frac{F(v + \epsilon)}{\epsilon} \cdot \left(1 - \frac{F(v)}{F(v + \epsilon)}\right) \\
        &\leq \lim_{\epsilon \rightarrow 0} \frac{F(v + \epsilon)}{\epsilon} \cdot \left(1 - \frac{H - v - \epsilon}{H - v}\right) \\
        &= \lim_{\epsilon \rightarrow 0} \frac{F(v + \epsilon)}{H - v} \\
        &= \frac{F(v)}{H - v},
    \end{align*}
    which implies $\frac{f(v)}{F(v)} \leq \frac{1}{H - v}$, as desired.
\end{proof}
\begin{corollary}
    For any distribution which satisfies $\frac{f(v)}{F(v)} \leq \frac{1}{H - v}$, then $F(v) \geq \frac{1}{H - v}$ for all $v$. 
\end{corollary}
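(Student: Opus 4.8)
The plan is to read the hypothesis $\frac{f(v)}{F(v)}\le\frac{1}{H-v}$ as a monotonicity statement and then combine it with a boundary value supplied by $F$ being a cumulative distribution function. First I would rewrite the hypothesis: on the support of $\mathcal{D}$ we have $F(v)>0$, and for $v<H$ we have $H-v>0$, so multiplying through by $F(v)(H-v)$ turns the hypothesis into $(H-v)f(v)\le F(v)$. Setting $G(v):=F(v)(H-v)$, we get
\[
G'(v)=f(v)(H-v)-F(v)\le 0,
\]
so $G$ is non-increasing on the support of $\mathcal{D}$.

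\textbf{Key step.} Second, I would pin down a lower bound for $G$ using that $F$ is a distribution function. Since $G$ is non-increasing, $G(v)\ge G(v^\star)$ for every $v\le v^\star$ in the support; taking $v^\star$ at the top of the support and using $F(v^\star)=1$ gives $G(v)\ge H-v^\star$. Given that $H$ is the largest attainable value and $\mathcal{D}$'s support sits at distance at least $1$ from it, this yields $G(v)=F(v)(H-v)\ge 1$, i.e.\ $F(v)\ge\frac{1}{H-v}$, as claimed. (As an immediate byproduct, the rewritten hypothesis already gives $(H-v)f(v)\le F(v)\le 1$, hence the cleaner bound $f(v)\le\frac{1}{H-v}$, which may in fact be the intended reading.)

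\textbf{Main obstacle.} I expect the delicate part to be the second step — correctly identifying the boundary value of $G$. If $H$ is taken to coincide exactly with $\sup\mathrm{supp}(\mathcal{D})$, then $G$ vanishes there and the bound degenerates to the trivial $G(v)\ge 0$; so the statement secretly depends on a separation between $H$ and the support (or, in the discrete setting of the paper, on carefully tracking $F(w_m)=1$ together with the gaps $H-w_j$ and the absence/presence of an atom at $w_m$). Once the correct constant is isolated, the remainder is a one-line rearrangement, so essentially all of the work lies in setting up this normalization and boundary condition precisely.
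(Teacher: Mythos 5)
Your reformulation of the hypothesis as the statement that $G(v):=F(v)(H-v)$ is non-increasing is correct, and it is arguably a cleaner way to package the differential inequality $f(v)(H-v)\le F(v)$ than what the paper does. The problem is the anchoring step, which you yourself flag as the main obstacle and then do not resolve: to conclude $G(v)\ge 1$ you need some point $v^\star$ with $F(v^\star)(H-v^\star)\ge 1$, and you obtain it only by \emph{assuming} that the support of $\dist$ sits at distance at least $1$ below $H$. That assumption is not part of the hypothesis --- in the surrounding text $H$ is defined as the maximum value possible, i.e.\ the top of the support, so $G$ vanishes there and your argument yields only the trivial $G(v)\ge 0$. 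As written, the proposal therefore does not establish the claim.

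For comparison, the paper argues by contradiction against the reference function $F^*(v)=\frac{1}{H-v}$: assuming $F(v)<F^*(v)$ somewhere, it takes the first $y>v$ with $F(y)=F^*(y)$, notes that $F(y)-F(v)>F^*(y)-F^*(v)$, extracts via the mean value theorem a point $z$ with $f(z)>f^*(z)$ and $F(z)\le F^*(z)$, and contradicts the hypothesis $\frac{f(z)}{F(z)}\le\frac{1}{H-z}=\frac{f^*(z)}{F^*(z)}$. This is a genuinely different route (a first-crossing contradiction rather than integrating the inequality into a monotone auxiliary function), but it leans on the unproved existence of the crossing point $y$ (``by the Intermediate Value Theorem''), which requires $F\ge F^*$ somewhere above $v$ --- structurally the same missing anchor you identified. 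So your diagnosis of where the difficulty lies is accurate, and your monotonicity lemma is sound; but the proof as written does not close the gap, and you cannot treat the separation between $H$ and the support as given.
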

\begin{proof}
    Proof by contradiction. Let $F^*(v) = \frac{1}{H - v}$. Assume $F(v) < F^*(v)$ for some $v$. Let $y$ be the smallest value greater than $v$ such that $F(y) = F^*(y)$. Note that $y$ must exist by the Intermediate Value Theorem. Then $\int_v^y f(x) dx > \int_v^y f^*(x) dx$. By the Mean Value Theorem, there must exist some $z \in [v, y]$ such that $f(z) > f^*(z)$. Because of the minimality of $y$, it also must be true that $F(z) \leq F^*(z)$. This implies $\frac{f(z)}{F(z)} > \frac{f^*(z)}{F^*(z)}$, which is a contradiction to 
    $$\frac{f(z)}{F(z)} \leq \frac{f^*(z)}{F^*(z)} = \frac{(H-v)^{-2}}{(H-v)^{-1}} = \frac{1}{H - v}.$$.
\end{proof}

\subsection{Omitted Formal Proofs in Section~\ref{subsec:threeProp}} \label{subsec:formal} 

\begin{lemma}
\label{x_equal}
For any fixed $\lambda$ with $\phi(i, \lambda) \geq \phi(i + 1, \lambda)$ for some $i$, there exists an optimal $x$ with the following property: $x_i = x_{i + 1}$. 
\end{lemma}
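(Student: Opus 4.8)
The statement is about the inner maximization in \SingleRelaxLP: for a \emph{fixed} $\lambda$ we maximize $\sum_{k} q_k\,\phi(k,\lambda)\,x_k$ over vectors $\vec x$ that are monotone with $x_k\in[0,1]$ (the $u_k$ play no role here — for fixed $\lambda$ they enter only via $u_k\ge 0$ and carry zero objective weight). The plan is a one-step exchange argument: start from an arbitrary optimal $x^*$ and flatten the pair $(i,i{+}1)$ without decreasing the objective, so that the resulting vector is still optimal and now satisfies $x_i=x_{i+1}$.

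\textbf{Main steps.} First dispose of the trivial case: if $x^*_i=x^*_{i+1}$ we are done, so assume (using monotonicity) that $x^*_i<x^*_{i+1}$. Now split on the sign of $\phi(i,\lambda)$.
\begin{itemize}
\item If $\phi(i,\lambda)\ge 0$: define $x'$ to agree with $x^*$ except $x'_i:=x^*_{i+1}$. The only monotonicity comparisons that change are $x'_{i-1}=x^*_{i-1}\le x^*_i\le x^*_{i+1}=x'_i$ and $x'_i=x^*_{i+1}=x'_{i+1}$, both valid (and $x'_i\in[0,1]$), so $x'$ is feasible. The objective changes by $q_i\,\phi(i,\lambda)\,(x^*_{i+1}-x^*_i)\ge 0$; since $x^*$ is optimal this change must be exactly $0$, so $x'$ is optimal and has $x'_i=x'_{i+1}$.
\item If $\phi(i,\lambda)<0$: the hypothesis $\phi(i,\lambda)\ge\phi(i{+}1,\lambda)$ forces $\phi(i{+}1,\lambda)<0$. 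Define $x'$ to agree with $x^*$ except $x'_{i+1}:=x^*_i$; feasibility is checked as above (the relevant comparisons become $x'_i=x^*_i=x'_{i+1}\le x^*_{i+1}\le x^*_{i+2}=x'_{i+2}$). The objective changes by $q_{i+1}\,\phi(i{+}1,\lambda)\,(x^*_i-x^*_{i+1})\ge 0$, so again the change is $0$ and $x'$ is optimal with $x'_i=x'_{i+1}$.
\end{itemize}
In both cases we have produced an optimal $x$ with $x_i=x_{i+1}$, which is the claim.

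\textbf{Expected obstacle.} There is no substantive obstacle — it is a routine perturbation argument — but two points need care: (i) feasibility at the boundary indices $i=1$ and $i{+}1=m$, where one of the monotonicity comparisons above is vacuous (so the argument only gets easier); and (ii) making explicit that optimality of $x^*$ is what rules out a \emph{strict} increase of the objective, which is exactly what pins the change to $0$ and lets us conclude the perturbed vector is still optimal. The case split on $\mathrm{sign}(\phi(i,\lambda))$ is the only genuinely necessary idea: it is forced because flattening $x_i$ upward helps only when $\phi(i,\lambda)\ge 0$, while flattening $x_{i+1}$ downward helps only when $\phi(i{+}1,\lambda)\le 0$, and the hypothesis $\phi(i,\lambda)\ge\phi(i{+}1,\lambda)$ guarantees at least one of these holds.
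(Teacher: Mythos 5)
Your exchange argument is correct for the inner maximization of \SingleRelaxLP{}, where the feasible set is just $\{\vec{x}\in[0,1]^m : \vec{x}\text{ monotone}\}$: in each branch the perturbed point is feasible, the objective weakly increases, and optimality of $x^*$ pins the change to zero. (One small imprecision: in the $\phi(i,\lambda)<0$ branch, if $x^*_i<x^*_{i+1}$ the change $q_{i+1}\phi(i+1,\lambda)(x^*_i-x^*_{i+1})$ is \emph{strictly} positive, so that case is actually vacuous for an optimal $x^*$ --- harmless, but it is not that "the change is $0$.'') The route is genuinely different from the paper's: rather than splitting on the sign of $\phi(i,\lambda)$ and flattening one coordinate onto the other, the paper replaces both coordinates by the $q$-weighted average $x'_i=x'_{i+1}=\frac{q_ix_i+q_{i+1}x_{i+1}}{q_i+q_{i+1}}$ and shows the objective change is $\geq \phi(i+1,\lambda)\cdot\bigl[(q_i+q_{i+1})x'_i-q_ix_i-q_{i+1}x_{i+1}\bigr]=0$ with no case analysis.

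The paper's choice of perturbation is not cosmetic, and this is where your argument has a gap in scope. The lemma is invoked in the proof of Theorem~\ref{phi_monotonicity}, which the paper also applies to \BorderRelaxLP{} (the multi-buyer extension), where $\vec{x}$ must additionally satisfy Border's constraints; indeed the paper's own proof explicitly tracks the allocation probability $n\sum_j q_j x_j\leq 1$. The weighted average preserves $q_ix_i+q_{i+1}x_{i+1}$, hence preserves every Border partial-sum constraint (it only moves mass from index $i+1$ down to index $i$). Your move in the $\phi(i,\lambda)\geq 0$ branch raises $x_i$ to $x_{i+1}$ without compensating, which strictly increases $\sum_j q_jx_j$ and can leave the Border polytope: e.g.\ with $n=2$, $q=(1/2,1/2)$, $x=(0,1)$, the top-level Border constraint forces $x_1+x_2\leq 1$, so $x'=(1,1)$ is infeasible. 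So your proof establishes the lemma for the single-buyer relaxation but does not cover the Border-constrained version that the surrounding results rely on; adopting the $q$-weighted average fixes this and also eliminates the sign case split.
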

\begin{proof}
Proof by contradiction. Assume that $x_i \neq x_{i + 1}$ but $\phi(i ,\lambda) \geq \phi(i + 1, \lambda)$ for some optimal $x$. In order for some $x$ to be feasible for a distribution, the probability the item is allocated must be $\leq 1$. 
Note that:
\begin{align*}
    Pr[\text{ item is allocated }] &= \sum_x \sum_{i = 1}^m q(v_i) \cdot x_i \\
    &= n \sum_{i = 1}^m q(v_i) \cdot x_i.
\end{align*}
Suppose we know that there exists some feasible $x$ for some specific distribution. Then one way to design a feasible $x'$ with every index the same except for a specific $x_i$ and $x_{i + 1}$ is to have $q(v_i) \cdot x_i + q(v_{i + 1}) \cdot x_{i + 1} = q(v_i) \cdot x'_i + q(v_{i + 1}) \cdot x'_{i + 1}$, because this will guarantee that the item is allocated with probability $\leq 1$. Furthermore, because $x$ is monotone, we know that $x_i \leq x_{i + 1}$.  Now consider $x'$ which is equal to $x$ except that $x'_i = x'_{i + 1} = \frac{q(v_i)x'_i + q(v_{i + 1})x'_{i + 1}}{q(v_i) + q(v_{i + 1})}$ for the previously mentioned $i$.  Recall that our objective is:
$$\min_\lambda \max_{x} \sum_{i = 1}^m q(v_i) \cdot \phi(i, \lambda) \cdot x_i.$$
 If we consider the difference in the objective between $x_i'$ and $x_i$, we find:
\begin{align*}
    &q(v_i)\phi(i ,\lambda)(x'_i - x_i) + q(v_i + 1)\phi(i+1, \lambda)(x'_{i + 1} - x_{i + 1}) \\ 
    \geq &\phi(i+1,\lambda) \left[(q(v_i) + q(v_{i + 1}))x'_i - q(v_i)x_i - q(v_{i + 1})x_{i + 1}\right] \\
    = &\phi(i+1, \lambda)\left[(q(v_i)x_i + q(v_{i + 1})x_{i + 1} - q(v_i)x_i - q(v_{i + 1})x_{i + 1}\right] \\
    =&0.
\end{align*}
This means we can improve the objective, which is a contradiction and thus implies $x_i$ must equal $x_{i + 1}$.
\end{proof}

\begin{proof}[Proof of Theorem~\ref{phi_monotonicity}]
Let $OBJ(\lambda, x)$ be the value of the objective function with specific inputs $\lambda$ and $x$ and $OBJ(\lambda)$ be the optimal value of the objective function given a specific $\lambda$.

Proof by contradiction. Assume theorem $\ref{phi_monotonicity}$ is not true. Then there exists an optimal solution $OBJ(\lambda')$ with some $i$ such that $\phi_i > \phi_{i + 1}$. Now consider a $OBJ(\lambda'')$ such that $\phi_i = \phi_{i + 1}$ which is constructed by redirecting flow in $OBJ(\lambda)$ from $\phi_{i + 1}$ to $\phi_i$. Then $OBJ(\lambda'')$ is identical to $OBJ(\lambda')$ except at $\phi_i$ and $\phi_{i + 1}$. Note that when redirecting $\epsilon$ flow from $\phi_{i + 1}$ to $\phi_i$, the change in our objective is:
$$\epsilon \cdot (-q(v_k)(v_k - v_i)(x_i) + q(v_k)(v_k - v_{i + 1})(x_{i + 1})),$$
where $k$ is some index greater than $i + 1$ and $\lambda_{k(i + 1)} > 0$. Further note that $k$ is guaranteed to exist because originally $\phi_{i + 1} = v_{i + 1}$ and $v_{i + 1} > v_{i}$, and so a non-monotonicity could not occur if no such $k$ existed.
From $\ref{x_equal}$ we know that for every $\lambda$ there exists a subclass of optimal $x$s such that $x_i = x_{i + 1}$ for any $\phi_i \geq \phi_{i + 1}$. Choose $x'$ to be one such optimal $x$ for $\lambda'$ and $x''$ to be another such optimal $x$ for $\lambda''$. We can observe that $$(\lambda'', x'') < (\lambda', x'')$$ because when $x_i = x_{i + 1}$, any redirected flow changes the objective by $x_i \cdot \epsilon \cdot (q(v_k)(v_i - v_{i + 1}))$ which is strictly negative. Therefore:
$$OBJ(\lambda'') = OBJ(\lambda'', x'') < OBJ(\lambda', x'') \leq OBJ(\lambda', x') = OBJ(\lambda'),$$
where the $\leq$ is because $x'$ is optimal for $\lambda'$. This is a contradiction because $OBJ(\lambda')$ was optimal.
\end{proof}

\begin{proof}[Proof of Lemma~\ref{scond1}]
    Proof by contradiction. Assume Lemma \ref{scond1} does not hold. Because $q(v_i)$ can never be negative, it must be the case that $\phi(i, \lambda)$ is negative for some number of $i$. Choose an optimal $\lambda$ which minimizes the number of $i$ such that $\phi(i, \lambda) < 0$. Consider the largest value $i$ such that $\phi(i, \lambda) < 0$ for the chosen $\lambda$. We first show that it must be true that $x_i = 0$, by induction. By Theorem \ref{phi_monotonicity}, all $\phi_j$ are also negative for $j < i$. Our base case is that $x_1 = 0$. We know that $q(v_1) \cdot \phi(1, \lambda) \leq 0$, and therefore $q(v_1) \cdot \phi(1, \lambda) \cdot x_1$ is maximized at $x_1 = 0$. Now assume that $x_k = 0$ for some $k \leq i$. We show that $x_{k + 1} = 0$ as well. Note that $x_{k + 1}$ is maximized at the lowest $x_{k + 1}$ possible satisfying monotonicity constraints, because $q(v_{k + 1}) \cdot \phi(k + 1, \lambda) \leq 0$. Because $x_k = 0$, the lowest $x_{k + 1}$ is also $0$. Therefore, $x_i = 0$. 
    
    Now consider the set $S$ of all $s$ in increasing order such that $\lambda_{si} > 0$. Recall that:
    $$\phi(i, \lambda) = \frac{q(v_i)v_i
-  \sum_{k = i}^m q(v_k) \cdot (v_k - v_i) \lambda_{ki}}{q(v_i)}.$$
    Note that if we decrease $\lambda_{si}$ for any $s \in S$, $\phi(i, \lambda)$ always increases. Further note that because the first term in the numerator of $\phi(i, \lambda)$ is non-negative, if $\phi(i, \lambda)$ is negative then there must exist some $\lambda_{si}$ such that $\lambda_{si} > 0$. Construct $\lambda'$ which is the same as $\lambda$ but with the following changes: in increasing order of $s \in S$, let $\lambda'_{si} = 0$ and $\lambda'_{ss} = \lambda_{si}$ until $\phi(i, \lambda)$ is positive. Then for the largest $s$ such that flow was redirected, add $a$ to $\lambda'_{si}$ and subtract $a$ from $\lambda'_{ss}$ where $a$ is exactly amount of additional $\lambda'_{si}$ needed to make $\phi(i, \lambda') = 0$. These transformations maintain the property that $\sum_{l = 0}^s \lambda_{sl} = 1$ for each $s$, as desired. Reducing $\lambda_{si}$ does not change the objective because $x_i = 0$, as established earlier. Increasing $\lambda_{ss}$ also does not change the objective because $q(v_s) \cdot (v_s - v_s) \cdot \lambda_{ss} = 0$. Therefore the value of the objective is the same for $\lambda$ and $\lambda'$, and $\lambda'$ must be optimal if $\lambda$ is optimal. However, $\phi_i$ is now non-negative, which contradicts our assumption that $\lambda$ minimized the number of $i$ such that $\phi(i, \lambda) < 0$. Therefore must exist an optimal $\lambda$ such that Property \ref{sprop1} is true. 
\end{proof}
\begin{proof}[Proof of Lemma~\ref{scond2}]
    Proof by contradiction. Assume this is not true. Then there is some optimal $\lambda$ with $i < j < k < l$ such that $\lambda_{li} > 0$ and $\lambda_{kj} > 0$. Let $\lambda'$ be the same as $\lambda$ except for the following changes: $\lambda'_{kj} = \lambda_{kj} - \delta, \lambda'_{ki} = \lambda_{ki} + \delta, \lambda'_{li} = \lambda_{li} - \epsilon$, and $\lambda'_{lj} = \lambda_{lj} + \epsilon$ with $\delta > 0$ and $\epsilon = \frac{\delta \cdot q(v_k) \cdot (v_k - v_i)}{q(v_l) \cdot (v_l - v_i)}$. Then the change in $\phi_i$ is:
    \begin{align*}
            &-\delta \cdot q(v_k) \cdot (v_k - v_i) + \epsilon \cdot q(v_l) \cdot (v_l - v_i) \\
            = & -\delta \cdot q(v_k) \cdot (v_k - v_i) + \frac{\delta \cdot q(v_k) \cdot (v_k - v_i)}{q(v_l) \cdot (v_l - v_i)} \cdot q(v_l) \cdot (v_l - v_i) \\
            = & 0. 
    \end{align*}
    The change in $\phi_j$ is:
    \begin{align*}
        &\delta \cdot q(v_k) \cdot (v_k - v_j) - \epsilon \cdot q(v_l) \cdot (v_l - v_j)  \\
        = & \delta \cdot q(v_k) \cdot (v_k - v_j) - \delta \cdot q(v_k) \cdot  \frac{(v_k - v_i)}{(v_l - v_i)} \cdot (v_l - v_j)\\
        \leq & 0.
    \end{align*}
    where the last line is because $\frac{v_k - v_i}{v_l - v_i} \cdot \frac{v_l - v_j}{v_k - v_j}$ is always greater than 1. Note that when the multiplier of $x_l$ is non-negative for any $l$ it is optimal to set $x_l = 1$, so $x_i = x_j = 1$. Therefore our objective is lower for $\lambda'$ than for $\lambda$ when $\phi_j > 0$. Otherwise,  $x_i = x_j = 0$ and the objective for $\lambda$ is the same as for $\lambda'$. Thus if $\lambda$ was optimal, $\lambda'$ must be optimal as well, a contradiction.
\end{proof}
\begin{corollary}
    \label{scond2pos}
    If $\lambda$ has $i < j < k < l$ such that $\lambda_{li} > 0$ and $\lambda_{kj} > 0$, then $\lambda$ is not optimal when $\phi_j > 0$. 
\end{corollary}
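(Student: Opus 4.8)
The plan is to extract the strict case that is already latent in the proof of Lemma~\ref{scond2}. Suppose for contradiction that $\lambda$ is optimal, has indices $i < j < k < l$ with $\lambda_{li} > 0$ and $\lambda_{kj} > 0$, and satisfies $\phi_j > 0$. Optimality of $\lambda$ lets me invoke Theorem~\ref{phi_monotonicity}: $\phi(\cdot,\lambda)$ is monotone nondecreasing, so $\phi_\ell \ge \phi_j > 0$ for every $\ell \ge j$.

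First I would reuse the perturbation from the proof of Lemma~\ref{scond2}. For a small $\delta > 0$, let $\lambda'$ agree with $\lambda$ except that $\lambda'_{kj} = \lambda_{kj} - \delta$, $\lambda'_{ki} = \lambda_{ki} + \delta$, $\lambda'_{li} = \lambda_{li} - \epsilon$, $\lambda'_{lj} = \lambda_{lj} + \epsilon$, where $\epsilon = \delta\, q(v_k)(v_k - v_i)/\big(q(v_l)(v_l - v_i)\big)$. For $\delta$ small enough this keeps all entries nonnegative (using $\lambda_{kj}, \lambda_{li} > 0$), and rows $k$ and $l$ still sum to $1$, so $\lambda'$ is feasible. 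Since $\phi_\ell$ depends only on $\{\lambda_{k'\ell} : k' \ge \ell\}$ and all changed entries lie in columns $i$ and $j$ with row index $\ge j > i$, only $\phi_i$ and $\phi_j$ move; the value of $\epsilon$ is chosen precisely so that $\phi_i$ is unchanged, while (exactly as in the computation in Lemma~\ref{scond2}) $\phi_j(\lambda') - \phi_j(\lambda) = \delta\, q(v_k)\big[(v_k - v_j) - \tfrac{v_k - v_i}{v_l - v_i}(v_l - v_j)\big]$, which is \emph{strictly} negative because $(v_k - v_i)(v_l - v_j) - (v_l - v_i)(v_k - v_j) = (v_j - v_i)(v_l - v_k) > 0$. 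Call this gap $-\eta$ with $\eta > 0$.

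To close the argument I would show that every maximizer $x$ of the inner problem $\max_x \sum_\ell q(v_\ell)\phi_\ell(\lambda) x_\ell$ over monotone $x \in [0,1]^m$ has $x_j = 1$: if some maximizer had $x_j < 1$, then replacing $x_\ell$ by $1$ for all $\ell \ge j$ keeps $x$ monotone and, since $\phi_\ell \ge \phi_j > 0$ there, strictly increases the objective, a contradiction. Now let $x'$ be a maximizer of the inner problem for $\lambda'$. Then $OBJ(\lambda') = \sum_\ell q(v_\ell)\phi_\ell(\lambda) x'_\ell - \eta\, x'_j$. If $x'_j > 0$ this is $< OBJ(\lambda)$ since $\sum_\ell q(v_\ell)\phi_\ell(\lambda) x'_\ell \le OBJ(\lambda)$; if $x'_j = 0$ then $x'$ is not a maximizer for $\lambda$ (those all have $x_j = 1$), so $\sum_\ell q(v_\ell)\phi_\ell(\lambda) x'_\ell < OBJ(\lambda)$ and again $OBJ(\lambda') < OBJ(\lambda)$. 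Either way $OBJ(\lambda') < OBJ(\lambda)$, contradicting optimality of $\lambda$.

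The main obstacle is exactly that ``$x_j = 1$ at every maximizer'' step: one must turn the strict gain $\eta$ in the coefficient of $x_j$ into a strict drop of the optimal inner value rather than something a different maximizer (one with $x_j = 0$) could absorb, and this is where monotonicity of $\phi$ (Theorem~\ref{phi_monotonicity}) together with $\phi_j > 0$ does the work. Everything else — feasibility of $\lambda'$, the locality of the perturbation's effect on $\phi$, and the sign computation — is a direct transcription of the proof of Lemma~\ref{scond2}.
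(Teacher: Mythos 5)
Your proof is correct and follows essentially the same route as the paper, which obtains this corollary directly from the strict-inequality branch of the proof of Lemma~\ref{scond2} (the same perturbation $\lambda \mapsto \lambda'$ with the same choice of $\epsilon$, and the observation that the resulting change in $\phi_j$ is a positive multiple of $-(v_j - v_i)(v_l - v_k)$, hence strictly negative). Your extra step showing that \emph{every} inner maximizer has $x_j = 1$ — so the strict drop in the coefficient of $x_j$ cannot be absorbed by the adversary switching to a maximizer with $x_j = 0$ — is a careful filling-in of a point the paper states only as ``it is optimal to set $x_j = 1$,'' not a different argument.
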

\begin{proof}[Proof of Lemma~\ref{scond3}]
    Proof by contradiction. Assume this is not true. Then there is some optimal $\lambda$ with $j < i < k$ such that $\lambda_{ki} > 0$ but $x_j$ has a positive multiplier. Construct $\lambda'$ which is the same as $\lambda$ but with the following changes: $\lambda'_{kj} = \lambda_{kj} + \delta$ and $\lambda'_{ki} = \lambda_{ki} + \delta$ where $\delta$ is $\min ($the amount needed to lower $\phi_j$ to $0, \lambda_{ki})$.  Note that when the multiplier of $x_l$ is non-negative for any $l$ it is optimal to set $x_l = 1$, so $x_i = x_j = 1$. Then the change in objective from $\lambda$ to $\lambda'$ is:
    $$\delta q(v_k) (v_k - v_i) - \delta q(v_k) (v_k - v_j) < 0.$$
   Therefore our objective is lower for $\lambda'$ than for $\lambda$ and thus $\lambda$ must not have been optimal, a contradiction.
\end{proof}
\begin{proof}[Proof of Theorem~\ref{thm:threePropExist}]
    Consider the set of optimal $\lambda$ which satisfy Property \ref{sprop2}. Choose the $\lambda$ within this set which minimizes the number of $i$ such that $\phi(i, \lambda) < 0$. Then $\lambda$ necessarily satisfies Property \ref{sprop3} as well. We prove that there is an optimal $\lambda'$ which satisfies Properties \ref{sprop1},\ref{sprop2},\ref{sprop3} and will find it by adjustments to $\lambda$. Let $\lambda' = \lambda$. If $\lambda'$ satisfies \ref{sprop1}, then we are done. Otherwise, we shall make modifications to $\lambda'$ very similar to those done in the proof of \ref{scond1}. 
    
    Proof by contradiction. Consider the largest value $i$ such that $\phi(i, \lambda) < 0$ for the chosen $\lambda$ and the set $S$ of all $s$ in decreasing order such that $\lambda_{si} > 0$. Make the following changes to $\lambda'$: in decreasing order of $s \in S$, let $\lambda'_{si} = 0$ and $\lambda'_{s(i + 1)} = \lambda_{si}$ until $\phi(i, \lambda)$ is positive. Then for the largest $s$ such that flow was redirected, add $a$ to $\lambda'_{si}$ and subtract $a$ from $\lambda'_{s(i + 1)}$ where $a$ is exactly amount of additional $\lambda'_{si}$ needed to make $\phi(i, \lambda') = 0$. Reducing $\lambda'_{si}$ does not change the objective because $x_i = 0$. If $\phi_{i + 1}$ was non-positive before, then increasing $\lambda'_{s(i + 1)}$ does not change the objective because $x_j = 0$ as well. Otherwise $\phi_{i + 1}$ was positive, so increasing $\lambda'_{s(i + 1)}$ strictly decreases the objective.
    
    Because $\lambda$ satisfied Property \ref{sprop2} before, and we only redirected flow in decreasing order to $\phi_{i + 1}$, it is impossible that we are now violating \ref{sprop2}. Rigorously, there were no $i < j < k < l$ such that $\lambda_{li} > 0$ and $\lambda_{kj} > 0$. Consider the largest $s$ such that $\lambda_{si} > 0$ and it changes in $\lambda'$. The only way it is possible to now have an $i < j < k < l$ such that $\lambda'_{li} > 0$ and $\lambda'_{kj} > 0$ is if there is some $t > s$ and $r \leq i$ such that $\lambda_{tr} > 0$ (for the original $\lambda$). This is impossible because if $r < i$, Property \ref{sprop2} would have been violated in our original $\lambda$, and if $r = i$, this would violate the maximality of $s$. We can use this line of reasoning for every $s$ where $\lambda_{si} \neq \lambda'_{si}$ to show that $\lambda'$ must also satisfy Property \ref{sprop2}. Similarly, we cannot now be violating \ref{sprop3} because by Theorem \ref{phi_monotonicity} $\phi_j \leq 0$ for all $j \leq i$ and the only index $k$ for which $\lambda'_{lk} > \lambda_{lk}$ for any $l$ is $k = i + 1$.
    
    Therefore the value of the objective for $\lambda'$ is less than or equal to that of $\lambda$, so $\lambda'$ must be optimal if $\lambda$ is optimal. However, $\phi_i$ is now non-negative, which contradicts our assumption that $\lambda$ minimized the number of $i$ such that $\phi(i, \lambda) < 0$. Therefore must exist an optimal $\lambda$ that satisfies Properties \ref{sprop1},\ref{sprop2},\ref{sprop3}.
\end{proof}
\begin{proof}[Proof of Theorem~\ref{thm:threePropUnique}]
    We first propose an algorithm to generate a $\lambda$ which satisfies Properties \ref{sprop1}, \ref{sprop2}, and \ref{sprop3}. The algorithm is as follows: Begin with $\lambda_{ss} = 1$ for all values $s$. For each $k$ from $1$ to $m$, choose the smallest $i \leq k$ such that $\phi_i > 0$ and set $\lambda_{ki}$ to $\min ($the amount needed to lower $\phi_i$ to $0, \lambda_{kk})$. If $\lambda_{kk} > 0$, choose the next smallest $i$ and repeat until $i = k$. We show that this algorithm produces a $\lambda$ which satisfies all three properties. First, $\phi_i$ for all $i$ begin as non-negative values and the algorithm never lowers $\phi_i$ below zero. Therefore, this $\lambda$ satisfies Property \ref{sprop1}. Second, at when $\lambda_{ki}$ is allocated, it must be the case that $\phi_i > 0$. In the previous iteration of our algorithm, when we were considering for which $j$ to increase $\lambda_{(k - 1)j}$, we chose the smallest $j$ such that $\phi_j > 0$. Therefore, it must be the case that $j \leq i$ (or we would have chosen to increase $\lambda_{(k - 1)i}$ instead of $\lambda_{(k - 1)j}$. This means that higher indices contribute their $\lambda$ to higher values, so Property \ref{sprop2} is satisfied. Third, note that $\lambda_{ki}$ is only positive if $\phi_{l} = 0$ for all $l < i$. Therefore Property $\ref{sprop3}$ is satisfied.
    
    Next, we need to show that any other $\lambda$ than the one we have proposed does not satisfy at least one of the three properties. Suppose there is some other $\lambda'$ which satisfies all three properties but differs from $\lambda$. Choose $i$ such that $i$ is the minimum index where $\lambda_{si}$ differs from $\lambda'_{si}$ for at least one $s$. Let $S$ be the set of all such $s$. Let $k$ be the minimum index in $S$. Let $\phi_i$ correspond to the result under $\lambda$ and $\phi'_i$ correspond to the result under $\lambda'$. We consider four cases:
    
    \textbf{Case 1: } Suppose $\phi_i > 0$ and $\phi_i > \phi'_i$. We show by contradiction there is no $\lambda'$ with these characteristics which satisfies all three properties. Because $\lambda$ satisfies Property \ref{sprop3} it must be the case that $\lambda_{sj} = 0$ for all $s \geq k, j > i$.  Then any decrease in $\phi_i$ must be the result of an increase in some $\lambda_{li}$ with $i < l < k$. However, this means that some $\lambda_{lt}$ for some $t < i$ must be decreased, because $\sum\limits_{u \leq l} \lambda_{lu} = 1$ and $\lambda_lu = 0$ for $u \geq i$. This contradicts our assumption that $i$ is the minimum index where $\lambda_{si}$ differs from $\lambda'_{si}$ because $\lambda_{st}$ differs from $\lambda'_{st}$ as well. \\
\\
    \textbf{Case 2: } We consider the case in which $\phi_i > 0$ and $\phi_i < \phi'_i$. Because $\phi_i < \phi'_i$, there must have been some $\lambda_{ki} > \lambda'_{ki}$. If $\lambda'_{ki}$ lower than $\lambda_{ki}$, there must be some $\lambda'_{kj} > \lambda_{kj}$ for $j \neq i$ in order to satisfy $\sum\limits_{u \leq k} \lambda_{ku} = 1$. By Property \ref{sprop3}, $\phi_j = 0$ for all $j < i$. Therefore there must be some positive $\lambda'_{kl}$ for $l > i$. However, this violates Property \ref{sprop3} because $\phi'_i > 0$. Therefore, this case cannot occur. \\ 
\\
    \textbf{Case 3: } Suppose $\phi_i = 0$ and $\lambda_{ki} < \lambda'_{ki}$. We show by contradiction there is no $\lambda'$ with these characteristics which satisfies all three properties. In order to satisfy $\sum\limits_{u \leq k} \lambda_{ku} = 1$, it must be true that $\lambda_{kj} > \lambda'_{kj}$ for some $j$. Furthermore, it must be the case that $j > i$ because otherwise $j$ would contradict our assumption that $i$ is the minimum index where $\lambda_{si}$ differs from $\lambda'_{si}$. Because of Property \ref{sprop1}, it must be the case that some $\lambda_{li} > \lambda'_{li}$ so that $\phi_i \geq 0$. We also know that $l > k$ because of the minimality of $k$. But then we have $i < j < k < l$ with $\lambda_{kj} > 0$ and $\lambda_{li} > 0$, which violates Property \ref{sprop2}. However, we know that $\lambda$ satisfies Property \ref{sprop2}, a contradiction.\\
\\
    \textbf{Case 4: } Suppose $\phi_i = 0$ and $\lambda_{ki} > \lambda'_{ki}$. We show by contradiction there is no $\lambda'$ with these characteristics which satisfies all three properties. In order to satisfy $\sum\limits_{u \leq k} \lambda_{ku} = 1$, it must be true that $\lambda_{kj} < \lambda'_{kj}$ for some $j$. Furthermore, it must be the case that $j > i$ because otherwise $j$ would contradict our assumption that $i$ is the minimum index where $\lambda_{si}$ differs from $\lambda'_{si}$. Because of Property \ref{sprop3} and $\lambda'_{kj} > 0$, we know that $\phi'_i = 0$. Therefore it must be the case that some $\lambda_{li} < \lambda'_{li}$. We also know that $l > k$ because of the minimality of $k$. But then we have $i < j < k < l$ with $\lambda'_{kj} > 0$ and $\lambda'_{li} > 0$, which violates Property \ref{sprop2}. Therefore, this case cannot occur.
    
    These cases are all-inclusive, so we have shown that the $\lambda$ generated by our algorithm is the unique $\lambda$ which satisfies all three properties.
\end{proof}

\end{document}